\newcommand{\op}{\texttt{(}}
\newcommand{\cl}{\texttt{)}}
\newcommand{\str}[2][]{\mathsf{P}_{#1}(#2)}
\newcommand{\T}{\mathsf{T}}
\newcommand{\F}{\mathsf{F}}
\newcommand{\G}{\mathsf{G}}
\newcommand{\Sub}{\mathsf{sub}}
\newcommand{\B}{\mathcal{B}}
\newcommand{\lab}{\lambda}
\newcommand{\pr}{\mathsf{parent}}
\newcommand{\hgt}{\mathsf{height}}
\newcommand{\chld}{\mathsf{children}}
\newcommand{\vV}{\bar{V}}
\newcommand{\sub}{\subseteq}
\newcommand{\sm}{\setminus}
\newcommand{\Oh}{\mathcal{O}}
\newcommand{\tOh}{\tilde{\Oh}}
\newcommand{\ed}{\mathsf{ed}}
\newcommand{\ted}{\mathsf{ted}}
\newcommand{\per}{\mathsf{per}}
\newcommand{\brkp}{\mathsf{B}}
\newcommand{\poly}{\mathrm{poly}}
\newcommand{\rot}{\mathsf{rot}}
\newcommand{\ga}{\mathsf{GA}}
\renewcommand{\aa}{\mathsf{A}}
\newcommand{\ta}{\mathsf{TA}}
\newcommand{\mtch}{\mathsf{M}}
\newcommand{\A}{\mathcal{A}}
\newcommand{\dd}{\mathinner{.\,.\allowbreak}}
\newcommand{\Exp}{\mathbb{E}}
\newcommand{\rev}[1]{\overline{#1}}
\newcommand{\Zz}{\mathbb{Z}_{\ge 0}}
\newcommand{\Zp}{\mathbb{Z}_{+}}
\DeclareMathOperator{\lcm}{lcm}
\newcommand{\ORS}{\mathsf{ORS}}
\newcommand{\hu}{\hat{u}}
\newcommand{\hv}{\hat{v}}
\newcommand{\hM}{\hat{M}}
\newcommand{\onto}{\to}
\newcommand{\hlab}{\hat{\lab}}
\newcommand{\hq}{\hat{q}}
\setlist[enumerate]{nosep, topsep=1ex}
\setlist[itemize]{nosep, topsep=1ex}
\setlist[description]{nosep,topsep=1ex}
\newtheorem{theorem}{Theorem}[section]
\newtheorem{corollary}[theorem]{Corollary}
\newtheorem{proposition}[theorem]{Proposition}
\newtheorem{lemma}[theorem]{Lemma}
\newtheorem{claim}[theorem]{Claim}
\newtheorem{observation}[theorem]{Observation}
\theoremstyle{definition}
\newtheorem{definition}[theorem]{Definition}
\theoremstyle{remark}
\title{$\tOh(n+\poly(k))$-time Algorithm for Bounded Tree Edit Distance}
\author[1]{Debarati Das}
\author[2]{Jacob Gilbert}
\author[2]{MohammadTaghi Hajiaghayi}
\author[3]{Tomasz Kociumaka}
\author[4]{Barna~Saha}
\author[2]{Hamed Saleh}
\affil[1]{Pennsylvania State University, United States}
\affil[ ]{\texttt{debaratix710@gmail.com}}
\affil[2]{University of Maryland, United States}
\affil[ ]{\texttt{jgilber8@umd.edu}\; \texttt{hajiaghayi@gmail.com}\; \texttt{hamed@cs.umd.edu}}
\affil[3]{Max Planck Institute for Informatics, Germany}
\affil[ ]{\texttt{tomasz.kociumaka@mpi-inf.mpg.de}}
\affil[4]{University of California, San Diego, United States}
\affil[ ]{\texttt{barnas@ucsd.edu}}
\date{}
\begin{document}

\maketitle

\begin{abstract}
Computing the \emph{edit distance} of two strings is one of the most basic problems in computer
science and combinatorial optimization. \emph{Tree edit distance} is a natural generalization of edit distance in which the task is to compute a measure of dissimilarity between two (unweighted) rooted trees with node labels. Perhaps the most notable recent application of tree edit distance is in NoSQL big databases, such as MongoDB, where each row of the database is a JSON document represented as a labeled rooted tree and finding dissimilarity between two rows is a basic operation.
Until recently, the fastest algorithm for tree edit distance ran in cubic time (Demaine, Mozes, Rossman, Weimann; TALG'10);
however, Mao (FOCS'21) broke the cubic barrier for the tree edit distance problem using fast matrix multiplication.

Given a parameter $k$ as an upper bound on the distance,
an $\Oh(n+k^2)$-time algorithm for edit distance has been known since the 1980s due to works of Myers (Algorithmica'86) and Landau and Vishkin (JCSS'88).
The existence of an $\tOh(n+\poly(k))$-time algorithm for tree edit distance has been posed as open question, e.g., by Akmal and Jin (ICALP'21), who give a state-of-the-art $\tOh(nk^2)$-time algorithm. 
In this paper, we answer this question positively.

\end{abstract}

\section{Introduction}
Computing the \emph{edit distance} of two strings is one of the most fundamental problems in theoretical computer
science and combinatorial optimization studied since the 1960s. \emph{Tree edit distance} is a natural generalization of edit distance in which the task is to compute a measure of dissimilarity between two (unweighted) rooted trees with node labels in which every insertion, deletion, or
relabeling operation has unit cost. 
Tree edit distance, first introduced by Selkow~\cite{SELKOW1977184}, extends the
applications of edit distance to areas such as computational biology (e.g., analysis of RNA molecules, where the secondary structure of RNA is represented as a rooted tree)~\cite{10.1016/j.tcs.2004.12.030,10.5555/262228,10.1137/0213024,DBLP:journals/bioinformatics/ShapiroZ90,10.5555/279082.279125}, structured data analysis (e.g., XML)~\cite{10.5555/1315451.1315465,DBLP:conf/vldb/Chawathe99,10.1145/1613676.1613680}, image analysis~\cite{10.1016/S0167-8655(97)00179-7}, and
compiler optimization~\cite{10.1145/1644015.1644017}. Perhaps the most notable recent application of tree edit distance is in NoSQL big databases, such as MongoDB~\cite{MongoDB}, where each row of the database is a JSON document represented as a labeled rooted tree, and finding dissimilarity between two rows is a basic operation.

The computational aspect of  tree edit distance is also widely studied.
Tai~\cite{10.1145/322139.322143} gave the first solution for tree edit distance that runs in
time $\Oh(n^6)$, where $n$ is the total number of nodes in both trees. 
This running time
was later improved in a series of works to $\Oh(n^4)$~\cite{10.1137/0218082},
$\Oh(n^3\log n)$~\cite{10.5555/647908.740125}, and $\Oh(n^3)$~\cite{10.1145/1644015.1644017}. 
Very recently, Mao~\cite{Xiao21} broke the cubic barrier by showing an $\Oh(n^{
2.9546})$-time algorithm via a reduction to max-plus product of bounded-difference matrices.
In a follow-up preprint, D\"{u}rr further improved the running time to $\Oh(n^{2.9149})$~\cite{Duerr2022}.

Boroujeni, Ghodsi, Hajiaghayi, and Seddighin~\cite{DBLP:conf/stoc/BoroujeniGHS19} presented a $(1+\epsilon)$-approximation algorithm for tree
edit distance that runs in $\tOh(n^2)$ time.\footnote{The $\tOh(\cdot)$ notation suppresses polylogarithmic factors.} They also obtained an $\Oh(\sqrt{n})$-factor approximation algorithm that runs in $\tOh(n)$ time. 
Very recently, Seddighin and Seddighin~\cite{DBLP:conf/innovations/SeddighinS22} gave an $\Oh(n^{1.99})$-time $(3+\epsilon)$-approximation algorithm
for tree~edit~distance.

The problem has also been considered from the lower-bound perspective. 
In particular, Bringmann, Gawrychowski, Mozes, and Weimann~\cite{10.1145/3381878} proved that the cubic running time barrier
for weighted tree edit distance cannot be broken unless APSP
admits a truly subcubic time solution and weighted $k$-clique
admits an $\Oh(n^{k-\epsilon})$-time solution. The existence of such a lower bound
was previously conjectured by Abboud~\cite{Amir14} in a collection of open problems in fine-grained complexity.

In contrast to
tree edit distance, approximation algorithms for string edit distance have
been subject to many studies~\cite{10.1109/FOCS.2010.43,10.1145/1536414.1536444,10.1109/FOCS.2004.14,10.5555/1109557.1109644,10.1145/3313276.3316371,10.5555/874063.875596, 10.5555/279082.279125,DBLP:journals/jacm/BoroujeniEGHS21,10.1145/3422823,DBLP:conf/focs/AndoniN20}. After a series of recent developments~\cite{DBLP:journals/jacm/BoroujeniEGHS21,10.1145/3422823,GRS:20,KS20,BR19},
the current best bound is an algorithm by Andoni and Nosatzki~\cite{DBLP:conf/focs/AndoniN20}, for any constant $\epsilon > 0$,
provides a constant-factor approximation of edit distance between strings of length $n$ in time~$\Oh(n^{1+\epsilon})$.

Given a parameter $k$ as an upper bound on the  distance,
an $\Oh(n+k^2)$-time algorithm for edit distance is known since the 1980s, due to  Myers~\cite{DBLP:journals/algorithmica/Meyers86} and Landau and Vishkin~\cite{LV88}, who combined suffix trees with an elegant greedy algorithm. 
The existence of such an $\tOh(n+\poly(k))$-time algorithm for tree edit distance (even for unlabeled trees) remained open despite persistent effort from researchers in the field. 
In particular, this question was posed by Mao~\cite{Xiao21} and Akmal and Jin~\cite{DBLP:conf/icalp/AkmalJ21}.
The current fastest algorithm for the problem runs in $\tOh(nk^2)$ time~\cite{DBLP:conf/icalp/AkmalJ21}, improving on the previous results of $\Oh(nk^3)$ time by Touzet~\cite{10.1007/11496656_29}.
In this paper, we answer this open question affirmatively by providing an $\tOh(n+k^{15})$-time algorithm for bounded tree edit distance. 

The previous algorithms for computing tree edit distance~\cite{10.1145/322139.322143,10.1137/0218082,10.5555/647908.740125,10.1145/1644015.1644017,Xiao21} are dynamic-programming-based procedures, and the solutions for bounded tree edit distance~\cite{10.1007/11496656_29,DBLP:conf/icalp/AkmalJ21} are obtained by appropriately pruning the set of states in earlier general-purpose algorithms.
Our strategy is very different: the main effort is to greedily match all but $\Oh(\poly(k))$ nodes of the input trees
so that any polynomial-time algorithm can be used to solve the residual instances of the problem.
The greedy approach is sufficiently powerful only for trees avoiding certain synchronized periodicity,
and therefore we start with a preprocessing step that eliminates appropriate periodic structures.
Although such an approach is fairly simple to realize for strings,
implementing it on trees requires several novel components of their own interest.
This is because, so far, the underlying techniques have not been used on trees,
and this setting brings many challenges absent in the context of strings.
For example, periodicity in trees comes in two flavors, which we name \emph{vertical} and \emph{horizontal}, and we provide efficient procedures detecting both kinds.
Another obstacle is that, whereas greedily matching two characters yields two independent instances of the string edit distance problem,
greedily matching two nodes does not produce two independent instances of the tree edit distance problem,
and thus we need a dedicated algorithm to optimally extend a partial alignment to a complete alignment of two trees.
For more details, see the technical overview in \cref{sec:overview}.

Finally, a problem related to (tree) edit distance is the \emph{Dyck edit distance} problem, in which, given a sequence of $n$ parentheses, the task is to find the minimum number of edits (character insertions, deletions, and substitutions) needed to make the sequence well-balanced. The Dyck edit distance has numerous applications~\cite{h:78,k:12} and has been subject to many theoretical studies designing exact~\cite{BGSW19,CDX22,BO16,F:22,Duerr2022} and approximation algorithms~\cite{s:14,DKS21}. It is also known that this problem is at least as hard as Boolean matrix multiplication~\cite{abw:15}. Though Dyck edit distance problem  has a different flavor than tree edit distance (since the goal is to find the minimum number of edits to completion, i.e., to a target of well-balanced parentheses), the existence of an $\tOh(n+\poly(k))$-time algorithm for the bounded version was still a very important open problem (motivated by fixing hierarchical data files, such as XML and JSON). 
Backurs and Onak~\cite{BO16} solved the open problem by providing an exact algorithm for Dyck edit distance that runs in $\Oh(n+k^{16})$ time, which has recently been improved by Fried, Golan, Kociumaka, Kopelowitz, Porat, and Starikovskaya~\cite{F:22} to run in $\Oh(n+k^5)$ time,
and further to $\tOh(n+k^{4.5442})$ using fast matrix multiplication~\cite{F22a,Duerr2022}.
Dyck edit distance falls under the umbrella of a general language edit distance problem~\cite{BGSW19},
which, however, is at least as hard as Boolean matrix multiplication already for $k=0$.

\subsection*{Our Result}

While in edit distance, the goal is to transform a string $S$ into
another string $S'$, in tree edit distance the goal is to transform a tree $T$ into another tree $T'$ using the least number of
edit operations. 
In the most common version of the problem, it is assumed that both trees $T$ and $T'$ are rooted and
that there is a left-to-right order between the children of any node. 
Moreover, each node has a label (independent of its the degree).
The elementary operations are node deletion, node insertion, and node
relabeling. In node deletion, we remove a node $v$ and replace it with all
of its children, preserving their order. The reverse of node deletion
is node insertion, which allows us to select a consecutive set of
siblings and bring them under a new node $v$ which appears at the
previous position of the relocated nodes. A node relabeling simply
modifies the label of an existing node.

In fact, we solve a slightly more general problem of computing the edit distance between two labeled forests,
which are defined as sequences of labeled, rooted, and ordered trees. For two labeled forests $F$ and $G$,
we denote their tree edit distance with $\ted(F,G)$.
Moreover, for a threshold $k$, we denote with $\ted_{\le k}(F,G)$ a value equal to $\ted(F,G)$ (if it is at most $k$)
or $\infty$ (otherwise). The main result of this paper is summarized in the following theorem:

\begin{restatable}{theorem}{main}\label{thm:main}
    There exists a randomized algorithm that, given forests $F,G$ of total size $n$ and an integer $k\in \Zp$,
    computes $\ted_{\le k}(F,G)$ in $\Oh(n\log n + k^{15}\log k \log n)$ time correctly with high probability.
\end{restatable}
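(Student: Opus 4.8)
The plan is to follow the high-level strategy outlined in the introduction: reduce a general instance to a collection of small residual instances via a greedy matching procedure, after a preprocessing step that destroys the periodic structures that would otherwise defeat the greedy approach. I would organize the argument into the following stages.

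\textbf{Stage 1: Periodicity elimination.} First I would preprocess the forests $F$ and $G$ to eliminate both \emph{vertical} and \emph{horizontal} periodicity. Vertical periodicity corresponds to long root-to-leaf paths (or, more generally, long vertical runs) that are periodic in the usual string sense; horizontal periodicity corresponds to long runs of consecutive siblings whose induced subtrees repeat periodically. For each such periodic region, I would argue — via an exchange argument on an optimal alignment — that one may assume without loss of generality that the alignment treats the region in a canonical ``aligned shift'' manner, so that the region can be compressed into a short gadget of size $\poly(k)$ while changing $\ted$ by a controlled amount (or not at all below the threshold $k$). Detecting these periodicities efficiently requires the dedicated vertical/horizontal periodicity-detection procedures promised earlier in the paper; I would invoke those as black boxes. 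This stage runs in $\tOh(n + \poly(k))$ time and produces forests $F', G'$ that are \emph{aperiodic} in the relevant sense and whose edit distance equals (or encodes) that of the original pair.

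\textbf{Stage 2: Greedy matching.} On the aperiodic forests, I would run the tree analogue of the Landau–Vishkin / Myers greedy ``diagonal extension'': repeatedly take the current partial alignment and greedily extend it by matching a maximal aligned block of nodes (using LCE / longest-common-extension queries adapted to the forest setting, powered by an $\RLSLP$ or suffix-tree-like index on the trees) until forced to perform an edit. The key combinatorial lemma is that, on aperiodic inputs, each of the $\Oh(k)$ ``edit events'' can only create a bounded amount of ambiguity, so that after greedily matching, at most $\Oh(\poly(k))$ nodes of $F'$ and $G'$ remain \emph{unmatched} or \emph{ambiguous}. This is where I expect the main obstacle to lie: unlike strings, greedily matching two nodes does not split the problem into two independent subproblems — the ``left'' and ``above'' parts of a forest alignment interact — so I cannot simply recurse. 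Instead I would need to show that the greedy matches carve the two forests into a bounded number of \emph{contracted regions}, each of size $\poly(k)$, connected along the matched skeleton, and that an optimal global alignment respects this skeleton (again via an exchange argument using aperiodicity to rule out ``crossing'' alignments that would bypass the greedy matches).

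\textbf{Stage 3: Solving the residual instance.} Once the forests are contracted to total size $\poly(k)$, I would invoke any polynomial-time tree-edit-distance algorithm — e.g.\ the $\Oh(n^3)$-time algorithm of \cite{10.1145/1644015.1644017} — on the residual instance, giving running time $\poly(k)$ for this stage. However, because greedy matching on trees does \emph{not} yield independent subinstances (as emphasized in the introduction), I would additionally need a dedicated routine that optimally extends the fixed partial alignment (the greedy skeleton) to a complete alignment; this is a constrained tree-edit-distance computation on the contracted forests, which can still be solved by a suitably modified dynamic program in $\poly(k)$ time. Combining, stages 2–3 contribute $\poly(k) \cdot \polylog$ time, and with the $\Oh(n\log n)$ cost of building the indexing structures in stage 1, the total is $\Oh(n\log n + k^{15}\log k\log n)$; the exponent $15$ arises from the sizes of the periodicity gadgets and the contracted residual instance compounded through the $\Oh(k)$ greedy rounds, and from the cost of the LCE/index data structures. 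The randomization and ``with high probability'' guarantee enter through the hashing/sketching used in the LCE queries and the $\RLSLP$ construction; a standard union bound over the $\poly(k)$ queries yields the high-probability bound. The hardest technical point throughout is making the exchange arguments in stages 1 and 2 rigorous — i.e.\ proving that aperiodicity genuinely forces every near-optimal alignment to agree with the greedy skeleton outside a $\poly(k)$-size region — since this is precisely where the tree setting departs from the string setting.
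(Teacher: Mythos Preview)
Your outline captures the broad shape of the argument but misses a crucial ingredient: the \emph{level sampling} step that bridges your Stages~1 and~2. The greedy matching in Stage~2 does not directly yield a $\poly(k)$-size residual on aperiodic forests of arbitrary shape. The reason is that for a greedy tree alignment to be optimal (Proposition~\ref{lem:optIsGreedy}), one must match two nodes only when their \emph{entire subtrees} coincide; this is implemented via a depth-$h$ look-ahead labeling, where $h$ is the forest height, and that refinement inflates the effective string edit distance from $2k$ to $O(hk)$ (Lemma~\ref{lem:boundLookahead}). Consequently the common-matching bound (Lemma~\ref{lem:almostEverythingIsCommon} and Claim~\ref{clm:shallow}) leaves $O(h^2k^4)$ nodes unmatched, not $O(\poly(k))$ independent of height. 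Your Stage~2 therefore only delivers the desired conclusion for \emph{shallow} forests (this is exactly Theorem~\ref{thm:smallh}), and the paper explicitly flags the obstruction: ``the challenge is that the height $h$ can be much larger than $k$.''

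The paper resolves this with a step you have not anticipated. The periodicity reduction (Proposition~\ref{prp:fur}) outputs not only aperiodic forests $F',G'$ but also a single string alignment $\A$ of their parenthesis representations that differs from every optimal tree alignment in at most $O(k^4)$ positions. One then samples a random residue $r\bmod h$ with $h=\Theta(k^4)$, marks all nodes at depths congruent to $r$, and lets $M$ consist of the pairs of marked nodes matched by~$\A$. With constant probability $M$ is a non-crossing matching consistent with the (unknown) optimal tree alignment, every depth-$h$ vertical path hits $M$, and $|M|=O(n/k^4)$; the partial-matching reduction (Corollary~\ref{cor:partial}) then produces forests of size $O(n)$ and height $O(k^4)$, to which the shallow-forest algorithm applies. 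Repeating $\Theta(\log n)$ times gives the high-probability guarantee (so the randomness is not merely from hashing, as you suggest), and the $k^{15}$ exponent arises as $(k^4)^2\cdot k^7$ from Theorem~\ref{thm:smallh} with $h=\Theta(k^4)$. Without this sampling step, your pipeline has no mechanism to bound the height, and the $\poly(k)$-residual claim in Stage~2 is unjustified.
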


\section{Technical Overview}\label{sec:overview}
Given two strings $X,Y\in \Sigma^{\le n}$ and a threshold $k\in \Zz$, the classic Landau--Vishkin algorithm~\cite{LV88} computes $\ed_{\le k}(X,Y)$ in time $\Oh(n+k^2)$.
The algorithm uses dynamic programming: for each $i\in [0 \dd k]$ and $j\in [-k \dd k]$, it computes an index $d_{i,j}=\max\{x: \ed(X[0 \dd x),Y[0 \dd x+j))\le i\}$.
In terms of the standard quadratic-size DP table, $d_{i,j}$ can be interpreted as the row of the farthest cell on the $j$th diagonal that can be reached with cost at most $i$.
After a linear-time reprocessing of $X,Y$, each value $d_{i,j}$ can be computed in $\Oh(1)$ time; thus, the algorithm takes $\Oh(n+k^2)$ time in total.
From the definition of $d_{i,j}$, we can observe that the alignments produced by the Landau--Vishkin algorithm satisfy the following greedy property:
if a prefix $X[0\dd x)$ is aligned to a prefix $Y[0\dd y)$ and the characters $X[x]=Y[y]$ match,
then these two characters are also aligned (instead of being deleted).
This greedy matching strategy is crucial in achieving $\Oh(n+\poly(k))$ time complexity as it allows the algorithm to focus on $\Oh(\poly(k))$
mismatches and quickly slide through the bulk of the input strings.

A natural question is whether a similar greedy strategy can be applied in the context of the edit distance of two labeled forests $F$ and $G$.
While there could be several ways to formalize such a greedy property, all definitions should capture the following scenario:
if the leftmost roots of $F$ and $G$ have the same label, we should be able to greedily match these roots.
Unfortunately, this is not the case, as illustrated in \cref{fig:nogreedy}.

\begin{figure}[ht]
    \centering
    \tikzset{every picture/.style={line width=0.75pt}} 

\begin{tikzpicture}[x=0.75pt,y=0.75pt,yscale=-1,xscale=1]

\draw  [color={rgb, 255:red, 155; green, 155; blue, 155 }  ,draw opacity=1 ] (173.08,57.25) -- (187.77,101.14) -- (158.4,101.14) -- cycle ;
\draw  [color={rgb, 255:red, 155; green, 155; blue, 155 }  ,draw opacity=1 ] (217.53,57.88) -- (240.38,128.01) -- (194.69,128.01) -- cycle ;
\draw  [color={rgb, 255:red, 155; green, 155; blue, 155 }  ,draw opacity=1 ][fill={rgb, 255:red, 255; green, 255; blue, 255 }  ,fill opacity=1 ] (163.42,57.25) .. controls (163.42,52.09) and (167.63,47.9) .. (172.81,47.9) .. controls (178,47.9) and (182.2,52.09) .. (182.2,57.25) .. controls (182.2,62.41) and (178,66.59) .. (172.81,66.59) .. controls (167.63,66.59) and (163.42,62.41) .. (163.42,57.25) -- cycle ;
\draw  [color={rgb, 255:red, 155; green, 155; blue, 155 }  ,draw opacity=1 ][fill={rgb, 255:red, 255; green, 255; blue, 255 }  ,fill opacity=1 ] (208.14,57.88) .. controls (208.14,52.72) and (212.35,48.53) .. (217.53,48.53) .. controls (222.72,48.53) and (226.92,52.72) .. (226.92,57.88) .. controls (226.92,63.04) and (222.72,67.23) .. (217.53,67.23) .. controls (212.35,67.23) and (208.14,63.04) .. (208.14,57.88) -- cycle ;
\draw  [color={rgb, 255:red, 155; green, 155; blue, 155 }  ,draw opacity=1 ] (367.93,59.4) -- (390.78,129.53) -- (345.09,129.53) -- cycle ;
\draw  [color={rgb, 255:red, 155; green, 155; blue, 155 }  ,draw opacity=1 ][fill={rgb, 255:red, 255; green, 255; blue, 255 }  ,fill opacity=1 ] (358.54,59.4) .. controls (358.54,54.24) and (362.75,50.05) .. (367.93,50.05) .. controls (373.12,50.05) and (377.32,54.24) .. (377.32,59.4) .. controls (377.32,64.56) and (373.12,68.75) .. (367.93,68.75) .. controls (362.75,68.75) and (358.54,64.56) .. (358.54,59.4) -- cycle ;
\draw [color={rgb, 255:red, 245; green, 166; blue, 35 }  ,draw opacity=1 ]   (180.4,51.14) .. controls (180.6,48.52) and (181.84,47.29) .. (184.12,47.46) .. controls (186.35,47.76) and (187.79,46.62) .. (188.44,44.04) .. controls (188.88,41.73) and (190.26,40.82) .. (192.59,41.33) .. controls (194.88,41.94) and (196.13,41.24) .. (196.34,39.23) .. controls (197.19,36.93) and (198.8,36.15) .. (201.16,36.88) .. controls (203.49,37.69) and (204.92,37.08) .. (205.43,35.07) .. controls (206.64,32.83) and (208.44,32.17) .. (210.83,33.09) .. controls (212.59,34.26) and (214.16,33.76) .. (215.54,31.59) .. controls (216.38,29.61) and (218.01,29.15) .. (220.43,30.22) .. controls (222.16,31.5) and (223.84,31.09) .. (225.47,28.99) .. controls (226.49,27.06) and (227.87,26.76) .. (229.6,28.11) .. controls (232,29.34) and (233.76,29.01) .. (234.87,27.12) .. controls (236.04,25.24) and (237.83,24.96) .. (240.26,26.27) .. controls (241.93,27.72) and (243.39,27.53) .. (244.64,25.69) .. controls (245.93,23.86) and (247.77,23.66) .. (250.17,25.09) .. controls (251.8,26.61) and (253.3,26.48) .. (254.65,24.7) .. controls (256.04,22.93) and (257.54,22.83) .. (259.16,24.4) .. controls (261.49,25.95) and (263.38,25.87) .. (264.82,24.15) .. controls (266.29,22.44) and (267.81,22.4) .. (269.36,24.04) .. controls (271.63,25.69) and (273.53,25.69) .. (275.05,24.03) .. controls (276.6,22.38) and (278.11,22.41) .. (279.59,24.12) .. controls (281.03,25.83) and (282.91,25.91) .. (285.23,24.35) .. controls (286.84,22.77) and (288.33,22.86) .. (289.72,24.63) .. controls (291.07,26.4) and (292.56,26.52) .. (294.17,24.99) .. controls (296.56,23.56) and (298.39,23.75) .. (299.67,25.57) .. controls (300.91,27.4) and (302.72,27.64) .. (305.09,26.28) .. controls (306.77,24.85) and (308.19,25.07) .. (309.34,26.94) .. controls (311.13,28.95) and (312.87,29.27) .. (314.56,27.89) .. controls (316.27,26.54) and (317.96,26.9) .. (319.64,28.97) .. controls (320.57,30.9) and (322.22,31.31) .. (324.57,30.19) .. controls (326.3,28.93) and (327.89,29.38) .. (329.33,31.53) .. controls (330.06,33.5) and (331.59,34) .. (333.91,33.01) .. controls (335.66,31.86) and (337.12,32.4) .. (338.28,34.62) .. controls (339.31,36.84) and (340.97,37.54) .. (343.24,36.73) .. controls (345.55,35.98) and (346.85,36.61) .. (347.12,38.62) .. controls (347.76,40.87) and (349.19,41.69) .. (351.42,41.07) .. controls (354.14,40.86) and (355.65,41.9) .. (355.95,44.17) .. controls (356.05,46.38) and (357.36,47.5) .. (359.88,47.52) -- (362.75,50.59) ;
\draw [color={rgb, 255:red, 74; green, 144; blue, 226 }  ,draw opacity=1 ] [dash pattern={on 4.5pt off 4.5pt}]  (226,62.34) .. controls (264.8,81.54) and (318.7,80.9) .. (361.5,63.7) ;
\draw   (156.25,132.59) .. controls (156.26,137.26) and (158.59,139.59) .. (163.26,139.59) -- (190.88,139.56) .. controls (197.55,139.55) and (200.88,141.88) .. (200.89,146.55) .. controls (200.88,141.88) and (204.21,139.55) .. (210.88,139.54)(207.88,139.54) -- (238.51,139.51) .. controls (243.18,139.5) and (245.51,137.17) .. (245.5,132.5) ;
\draw   (344.75,135.59) .. controls (344.75,140.26) and (347.08,142.59) .. (351.75,142.59) -- (358.75,142.59) .. controls (365.42,142.59) and (368.75,144.92) .. (368.75,149.59) .. controls (368.75,144.92) and (372.08,142.59) .. (378.75,142.59)(375.75,142.59) -- (383.75,142.59) .. controls (388.42,142.59) and (390.75,140.26) .. (390.75,135.59) ;

\draw (204.4,96.6) node [anchor=north west][inner sep=0.75pt]  [font=\footnotesize]  {$ \begin{array}{l}
T_{2}\\
\end{array}$};
\draw (160.4,78.2) node [anchor=north west][inner sep=0.75pt]  [font=\footnotesize]  {$ \begin{array}{l}
T_{1}\\
\end{array}$};
\draw (166.4,52.6) node [anchor=north west][inner sep=0.75pt]    {$a$};
\draw (212.3,50.2) node [anchor=north west][inner sep=0.75pt]    {$b$};
\draw (353.8,98.12) node [anchor=north west][inner sep=0.75pt]  [font=\footnotesize]  {$ \begin{array}{l}
T_{2}\\
\end{array}$};
\draw (362.7,54.72) node [anchor=north west][inner sep=0.75pt]    {$a$};
\draw (268,86.6) node [anchor=north west][inner sep=0.75pt]  [color={rgb, 255:red, 74; green, 144; blue, 226 }  ,opacity=1 ] [align=left] {{\footnotesize \textcolor[rgb]{0.29,0.56,0.89}{OPT}}};
\draw (193.5,151.5) node [anchor=north west][inner sep=0.75pt]    {${F}$};
\draw (358.5,152.84) node [anchor=north west][inner sep=0.75pt]    {${G}$};

\end{tikzpicture}
    \caption{An example of simple greedy strategy where the alignment matches the leftmost roots, both having labels $a$. However, the unique optimal alignment 
    deletes the entire tree $T_1$ and substitutes the label at the node of $T_2$ from $b$ to $a$.}\label{fig:nogreedy}
\end{figure}
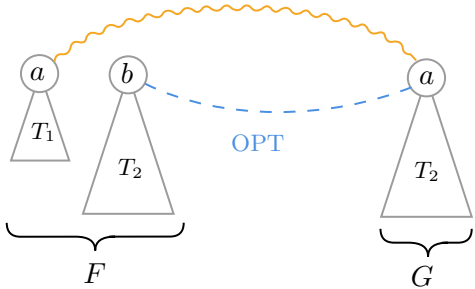

However, it is fairly obvious that if the entire leftmost trees of $F$ and $G$ are the same, then these trees can be matched greedily.
In general, as proved in \cref{lem:optIsGreedy}, if, while constructing an alignment, we are given a chance to match a node $u$ in $F$ 
with a node $v$ in $G$ (or to delete one or both of these nodes), then we can greedily match the two nodes 
provided that the entire subtrees rooted at $u$ and $v$, respectively, are identical.
We call alignments following this principle \emph{greedy alignments}.
Unfortunately, these subtrees can be large.
Thus, first we need to design a process that, for any node $u \in V_F$ (or $v\in V_G$), can efficiently encode the information (e.g., structure, labeling) of the subtree rooted at $u$. We do it using \emph{Look-Ahead Labeling} discussed next.

\paragraph{Look-Ahead Labelling:}
Given unlabeled forests $\F,\G$ and a labeling $\lab : V_\F, V_\G \to \Sigma$ (here, $V_\F$ stands for the node set of $\F$) and a parameter $d$, the depth-$d$ look-ahead labeling $\lambda'=L(\lambda,d)$ satisfies the following property for any pair of nodes $u,v\in V_F\cup V_G$: $\lambda'(u)=\lambda'(v)$ if and only if $\ted(\Sub_{\le d}(u),\Sub_{\le d}(v))=0$. Here, $\Sub_{\le d}(u)$ denotes the $\lambda$-labeled subtree rooted at $u$ trimmed to depth $d$.

To construct $\lambda'$, we first define the parentheses representation of $\F$ and $\G$ with respect to labeling $\lambda$.
For forest $\F$ with labeling $\lambda$,
   this is denoted as $\str[\lab]{\F}$ and can be defined using the following recursion:
    If $\F$ consists of trees $\T_1,\ldots,\T_m$, 
    then $\str[\lab]{\F}=\bigodot_{i=1}^m \str[\lab]{\T_i}$, where $\str[\lab]{\T_i} = \op_{\lab(v_i)} \cdot \str[\lab]{\F_i} \cdot \cl_{\lab(v_i)}$
    and $\cdot$ as well as $\bigodot$ denote concatenation. Here, $v_i$ is the root of $T_i$ and $\F_i$ is the forest obtained from $T_i$ by removal of $v_i$.
To create $\lambda'$ in linear time, we replace the label of each node $u\in V_\F\cup V_\G$ with a Karp--Rabin fingerprint of $P_\lambda(\Sub_{\le d}(v))$. Note by setting $d$ larger than the heights of $\F$ and $\G$, we can create a labeling that for each node encodes the entire subtree rooted at that node. The details can be found in Section~\ref{sec:lookahead}.

\vspace{3mm}

As an additional challenge, it turns out that this greedy approach is beneficial only if the input trees $F,G$ avoid certain periodic structures,
which may be present in the input forests.
Thus, we first design an $\tOh(n)$-time algorithm that, given arbitrary labelled forests $F,G$, produces a pair of forests $F',G'$ such that $\ted_{\le k} (F,G)=\ted_{\le k} (F',G')$ and both $F',G'$ avoid synchronized \emph{horizontal} and \emph{vertical} $k$-periodicity, the two types of periodicity which we describe in the following paragraphs. 
Moreover, with an $\tOh(n+k^3)$-time post-processing, we also compute an alignment $\tilde{\A}$ mapping $\str{F'}$ to $\str{G'}$ such that if we consider any optimal alignment $\B$ between $F'$ and $G'$ (and identify it with the corresponding alignment of $\str{F'}$ and $\str{G'}$), then $\tilde{\A}$ and $\B$ differ at $\Oh(k^4)$ positions. 
Next, we discuss the key ideas of the periodicity reduction process. We remark that this is the first algorithm that provides an efficient way of reducing periodicity in trees or forests, and it can be of independent interest. 

\paragraph{Periodicity Reduction:}
One of the most important novel contributions of our algorithm is the definition and identification of periodicity in forests. 
An integer $p$ is \emph{period} of a string $S$ if $S[i] = S[i + p]$ holds for all $i \in [0 \dd n-p)$; we then call $\frac{|S|}{p}$ the \emph{exponent} of this period of $S$. If $p \leq k$, then $S$ is called $k$-periodic. Ideally, we would like to identify and reduce the exponent of periodic regions of forests $F$ and $G$ somehow. Given two strings $S$ and $T$, if there exist periodic fragments $Q = S[\alpha_s \dd \beta_s) =  T[\alpha_t \dd \beta_t)$ such that $|\alpha_s - \alpha_t| \leq 2k$, we say that there are $k$-synchronized occurrences of $Q$ in $S$ and $T$. Synchronized periodicity in strings has previously been studied for string edit distance algorithms that take advantage of such periodicity when constructing optimal alignments (e.g., in~\cite{FOCS}), but never in forests. Unfortunately, reducing periodic fragments in the parentheses representations of forests does not preserve tree edit distance. 
In particular, if a periodic substring is unbalanced, then any edit to an opening parenthesis in the periodic region also needs to be applied to the corresponding closing parenthesis, which may lie outside the periodic region. 
We propose and identify two new types of ``balanced'' periodicity in forests, \emph{horizontal} and \emph{vertical} periodicity.  By constructing forests $F'$ and $G'$ which avoid $k$-synchronized horizontal and vertical $k$-periodicity in $F$ and $G$ without affecting the total edit distance between the forests, we prove that we also avoid any periodicity in the underlying parentheses representations of $F$ and $G$ (with an appropriate labeling).


In \cref{sec:hor}, the first type of periodicity we identify and reduce is \emph{horizontal periodicity}. If a sequence of subtrees of a given node repeats periodically, then we consider the repeated forest as a horizontal period. Therefore, horizontal periodicity appears in the parentheses representation of $F$ and $G$ as a balanced periodic fragment. Since horizontal periods are already balanced, any edits can be applied locally within the horizontally periodic fragment. By computing all maximal periodic fragments in the parentheses representations of $F$ and $G$, we can find and reduce all horizontally periodic regions at nearby locations in both forests. In order to avoid interactions between overlapping horizontally periodic regions, we are careful to only reduce horizontal periodicity with large exponent and small period. \cref{alg:SyncReductions} gives detailed pseudocode for the $\tilde{\Oh}(n)$-time construction of forests $F'$ and $G'$ which avoid $k$-synchronized horizontal $k$-periodicity from input forests $F$ and $G$.

In \cref{sec:ver}, the second type of periodicity we identify and reduce is \emph{vertical periodicity}. Vertical periodicity occurs in a forest when there is some path in which the subtrees to the left and right of the path repeat periodically at subsequent levels of the path (see Fig.~\ref{fig:vertical_diverge}). To find all vertical periods in a forest~$F$, we identify nodes whose corresponding opening and closing parenthesis are each contained in periodic substrings of the parentheses representation $\str[\lab]{\F}$. Once we find all vertical periods of $F$ and $G$, we use orthogonal range queries (in two dimensions) to find nodes $u \in V_F$ and $v \in V_G$ which are both contained in a vertically periodic region and whose opening parentheses and closing parentheses are in similar locations of  $\str[\lab]{\F}$ and  $\str[\lab]{\G}$. Once we find all such pairs of nodes, we can reduce the exponent of the periodic path as in the case of horizontal periods. Algorithm \ref{alg:VertSyncReductions} performs these vertical periodic reductions in $\tilde{\Oh}(n)$-time without introducing any new horizontal $k$-periodicity, and it outputs forests $F'$ and $G'$ which avoid both $k$-synchronized horizontal and vertical $k$-periodicity.

In \cref{sec:fur}, we prove that since $F'$ and $G'$ avoid $k$-synchronized horizontal and vertical $k$-periodicity of large exponent, then $\str[\lab]{\F'}$ and  $\str[\lab]{\G'}$ avoids all $k$-synchronized $2k$-periodicity of large exponent, where $\lab$ is an appropriate refinement of the depth-$8k$ look-ahead labeling. We show that any periodic substring of a forest's parentheses representation must be either horizontal or vertical period depending on whether the period string is balanced or unbalanced. At the end of the section, using an $\Oh(n + k^3)$-time DP algorithm described in \cref{sec:prelim}, we compute an alignment $\tilde{\A}$ of $\str[\lab]{F'}$ and $\str[\lab]{G'}$ that differs from any optimal tree alignment of $F'$ and $G'$ on at most $k^4$ characters.

\vspace{3mm}

Given the two trees $F'$ and $G'$ that avoid synchronized horizontal $k$-periodicity and synchronized vertical $k$-periodicity, in Section~\ref{thm:smallh} we design an algorithm using the greedy strategy that computes $\ted_{\le k}(F',G')$ in time $\tOh(n+h^2k^7)$, where $h$ is an upper bound on the height of $F'$ and $G'$. We provide a brief sketch of the algorithm in the following.

\paragraph{Tree Edit Distance of Shallow Forests:}
Here, we first show that if we consider any optimal alignment $\A$ between $F'$ and $G'$, then, for all but $2hk$ pairs of nodes $(u,v)$ where $u\in V_{F'}$ and $v\in V_{G'}$, $\A$ matches $u$ with $v$ and the subtree rooted at $u$ is identical as the one rooted at $v$. 
If we assume that $F'$ and $G'$ avoid synchronized horizontal $k$-periodicity,
this allows constructing a large set of matching pairs $M\sub \{(u,v):u\in V_F, v\in V_G\}$ (of size $|V_F|-\Oh(h^2k^4)$) that is common to all optimal greedy alignments; i.e., each pair $(u,v)\in M$ is matched by every greedy alignment and the subtrees rooted at $u$ and $v$ are at distance $0$. The time required to construct $M$ is $\Oh(n+hk^2)$.

Given this partial matching $M$ common to every optimal greedy alignment, our next objective is to extend it in order to construct an optimal alignment between $F'$ and $G'$. An analogous task is relatively straightforward for strings: given a partial matching, we can first partition the strings along this matching and then independently compute the optimal distance between subsequent pieces. 
This strategy fails for trees, but we still provide a linear-time algorithm that, given a pair of forests $F'$ and $G'$ and a non-crossing partial matching $M$ between them, constructs forests $F''$ and $G''$ such that $\ted^M_{\le k}(F',G')=\ted_{\le k}(F'',G'')$ and $|F''|+|G''|=\Oh(k(|F'|+|G'|-2|M|))$.
Here, $\ted^{M}_{\le k}(F',G')$ is the minimum of the cost of all alignments $\A$ between $F',G'$ such that each pair of nodes in $M$ is also matched by $\A$ (or $\infty$ if that cost exceeds $k$).

Thus, following this construction and using the fact that the partial matching $M$ is common to every optimal greedy alignment between $F',G'$ (thus $\ted^M_{\le k}(F',G')=\ted_{\le k}(F',G')$) and $|M|=|V_F|-\Oh(h^2k^4)$, in linear time we can construct trees $F'',G''$ such that $\ted_{\le k}(F',G')=\ted_{\le k}(F'',G'')$ and $|F''|+|G''|=\Oh(h^2k^5)$. Next using the $\tOh(nk^2)$ time algorithm of~\cite{DBLP:conf/icalp/AkmalJ21}, we compute $\ted_{\le k}(F'',G'')$ in time $\tOh(h^2k^7)$. Below, we discuss how to extend a partial matching to a complete alignment.

\paragraph{Partial Forest Matching:}
As mentioned earlier, here, given a pair of forests $F'$ and $G'$ and a non-crossing partial matching $M$, our objective is to construct forests $F''$ and $G''$ such that $\ted^M_{\le k}(F',G')=\ted_{\le k}(F'',G'')$ and $|F''|+|G''|=\Oh(k(|F'|+|G'|-2|M|))$. Also, if the height of $F''$ (or $G''$) is $h>2$, then there is a length $h-2$ top down-path in $F'$ avoiding nodes from $M$. Thus, if $M$ is large and hits all long paths, then this significantly reduces both the size and the depth of the input forests while preserving their distance. 

We start by partitioning $F'$ and $G'$ along the matching pairs in $M$, creating forests $\hat F$ and $\hat G$, and a set of partial matching pairs $\hat M$ between $\hat F,\hat G$. Our aim is to ensure that each node appearing in a matching pair of $\hat{M}$ is a leaf node of some tree in $\hat F$ (or $\hat G$) and $\ted^M(F',G')=\ted^{\hat M}(\hat F,\hat G)$.
For this, we mark all the nodes that are present in $M$ and assign each node to the nearest marked proper ancestor. This creates a partition of $V_F$ and of $V_G$. Next for each subset $S$ of nodes in the partition, we create a tree by deleting all the nodes in $V_F\setminus S$ from $F$, and we add the tree to $\hat F$ (while preserving the order of the nodes as present in $F$). Similarly create $\hat G$ from $G'$.
Note that, by construction, each node present in $M$ appears as a leaf node of some tree in $\hat F$ (or in $\hat G$).
We also copy all the matching pairs from $M$ to $\hat M$. Lastly to ensure $\ted^M(F',G')=\ted^{\hat M}(\hat F,\hat G)$,
between each pair of consecutive trees in $\hat F$ and their corresponding trees in $\hat G$,
we insert a pair of single-node trees whose roots are also added to $\hat M$.
Now, each node present in $\hat M$ appears as a leaf node of some tree in $\hat F$ (or in $\hat G$) and, if height of $\hat F$ (or $\hat G$) is $h>1$, then there is a length $h-1$ top down path in $F'$ (or $G'$) avoiding nodes from $M$.

In the second step we further reduce $\hat M$ by removing redundant matching pairs. The idea here is that, for each matching pair $(\hat u, \hat v)\in \hat M$, if their immediate left siblings $u$ and $v$ (respectively) are also matched by $\hat M$, i.e., $(u,v)\in \hat M$, then we can discard $(\hat u, \hat v)$ to create a new set $\bar M$ and create forests $\bar F$ and $\bar G$ by deleting all the nodes present in some discarded matching pair.
This is because $(u,v)$ serves as a representative of $(\hat u, \hat v)$ and thus reintroducing them would not violate the non-crossing property of the matching set, and we can ensure $\ted^{\bar M}(\bar F,\bar G)=\ted^{\hat M}(\hat F,\hat G)=\ted^M(F',G')$. We also show $|\hat M|\le 2/5(|\hat F|+|\hat G|+1)$. 

Lastly, using $\bar F$, $\bar G$, and $\bar M$, we construct forests $F'',G''$ such that $\ted^{\bar M}_{\le k}(\bar F,\bar G)=\ted_{\le k}(F'',G'')$.
To construct $F''$ and $G''$, we attach a gadget containing $k+1$ uniquely-labeled children to each node present in $\bar M$.
As the cost of an optimal alignment of $F''$ and $G''$ is bounded by $k$, it should match at least one node from each gadget; following this, we can show the alignment will indeed match all the nodes from the gadget and from set  $\bar M$ thus proving $\ted^{\bar M}_{\le k}(\bar F,\bar G)=\ted_{\le k}(F'',G'')$.
Moreover, from the above construction and using the bound of $|\hat M|$, we can show $|F''|+|G''|=\Oh(\min(|F'|+|G'|+k|M|,k(|F'|+|G'|-2|M|)))$.
Also, by the gadget construction, we ensure the height of $F''$ (or $G''$) is at most the height of $\bar F$ (or $\bar G$) plus one. Thus, if the height of $F''$ (or $G''$) is $h>2$, then there is a length $h-2$ top down path in $F'$ (or $G'$, respectively) avoiding $M$. We provide the details in Section~\ref{lem:partial}.

\vspace{3mm}

To summarize, given arbitrary labelled forests $F,G$ and a threshold $k$, we first design an algorithm that in time $\tOh(n)$ produces a pair of forests $F',G'$ such that $\ted_{\le k}(F,G)=\ted_{\le k}(F',G')$ and $F',G'$ avoid synchronized horizontal and vertical $k$-periodicity. 
At $\tOh(n+k^3)$ time, this algorithm also computes an alignment $\tilde{\A}$ between $\str{F'}$ and $\str{G'}$ such that, any optimal alignment $\B$ between $F'$ and $G'$, differs from $\tilde{\A}$ at $\Oh(k^4)$ positions. Next, using the greedy strategy and the partial matching technique, we design an algorithm that can compute $\ted_{\le k} (F',G')$ in time $\tOh(n+h^2k^7)$, given that the heights of $F'$ and $G'$ is at most $h$.
However, the challenge is that the height $h$ can be much larger than $k$. To solve this issue, instead of directly computing the distance between $F',G'$, we design \cref{sec:sampl}, which first creates a partial matching $\tilde{M}$ using random sampling within alignment $\tilde{\A}$, and then extends this partial matching to construct an optimal alignment between $F'$ and $G'$. The random sampling constructs $\tilde{M}$ in such a way so that, given $F', G'$, and $\tilde{M}$, the partial matching reduction generates forests $F'', G''$ where $\ted_{\le k}^{\tilde{M}}(F',G')=\ted_{\le k}(F'',G'')$  and the heights of $F''$ and $G''$ are bounded by $O(k^4)$. Thus, one can compute the distance in time $\tOh(n+k^{15})$. We now give a sketch of the sampling technique.

\paragraph{Level Sampling}
Set the height threshold $h=O(k^4)$. Select the sampling parameter $r$ from $[0 \dd h)$ uniformly at random. Mark all the nodes in $F'$ and $G'$ at depth congruent to $r$ modulo $h$. Next, we create a partial matching $\tilde{M}$ as follows: for each marked node $u\in V_{F'}$ (or in $V_{G'}$), add $(u,v)$ to $\tilde{M}$, where $v\in V_{G'}$ (or $v\in V_{F'}$) and $\tilde{\A}$ matches the parentheses representing $u$ with the parentheses representing $v$. Apply partial matching reduction on $F'$, $G'$ and $\tilde{M}$ to create forests $F'', G''$ given (i) every marked node is present in $\tilde{M}$ (ii) $|\tilde{M}|=O(n/k^4)$. Using the bound on $|\tilde{M}|$, we claim $|F''|+|G''|=O(n)$. Moreover, neither of $F'',G''$ has a top-down length-$h$ path avoiding $\tilde{M}$. Hence, the height of $F'',G''$ is at most $h+1=O(k^4)$. 
Thus, we can compute $\ted_{\le k}(F'',G'')$ using the shallow forest algorithm in time $\tOh(n+\poly(k))$. Note that the partial matching reduction ensures $\ted_{\le k}^{\tilde{M}}(F',G')=\ted_{\le k}(F'',G'')$. As $\ted (F',G')\le k$ and any optimal alignment between $F'$ and $G'$ and $\tilde{\A}$ differs in at most $O(k^4)$ matching pairs, following our sampling strategy we further can argue that $\ted_{\le k}(F',G')=\ted_{\le k}^{\tilde{M}}(F',G')$ holds with constant probability. To ensure concentration, we repeat this $\Theta(\log n)$ times and report the minimum distance computed.

\section{Preliminaries}
\label{sec:prelim}
\subsection{Strings}
A \emph{string} $Y\in \Sigma^n$ is a sequence of $|Y|:=n$ characters from an \emph{alphabet} $\Sigma$.
For $i\in [0\dd n)$, we denote the $i$th character of $Y$ with $Y[i]$. The \emph{reverse} of a string $Y$ is $\rev{Y}:=Y[n-1] Y[n-2] \cdots Y[0]$.
We say that a string $X$ \emph{occurs} as a \emph{substring} of a string $Y$ if $X=Y[i]\cdots Y[j-1]$ holds for some indices $0\le i \le j \le |Y|$.
We denote the underlying \emph{occurrence} of $X$ as $Y[i\dd j)$.
Formally, $Y[i\dd j)$ is a \emph{fragment} of $Y$ that can be represented using a reference to $Y$ as well as its endpoints $i,j$.
The fragment $Y[i\dd j)$ can be alternatively denoted as $Y[i\dd j-1]$, $Y(i-1\dd j-1]$, or $Y(i-1\dd j)$.
A fragment of the form  $Y[0\dd j)$ is a \emph{prefix} of $Y$, whereas a fragment of the form $Y[i\dd n)$ is a \emph{suffix} of $Y$.

An integer $p\in [1\dd n]$ is a \emph{period} of a string $Y\in \Sigma^n$ if $Y[i]=Y[i+p]$ holds for all $i\in [0\dd n-p)$.
In this case, the prefix $Y[0\dd p)$ is called a \emph{string period} of $Y$.
By $\per(Y)$ we denote the smallest period of $Y$. 
The exponent of a string $Y$ is defined as $\exp(Y):=\frac{|Y|}{\per(Y)}$,
and we say that a string $Y$ is \emph{periodic} if $\exp(Y)\ge 2$.

For a string $Y$ and an integer $m\ge 0$, we define the $m$th power of $Y$, denoted $Y^m$, as the concatenation of $m$ copies of~$Y$. 
For a string $Y\in \Sigma^n$, we define a \emph{forward rotation} $\rot(Y)=Y[1] \cdots Y[n-1]Y[0]$.
In general, a \emph{cyclic rotation} $\rot^s(Y)$ with \emph{shift} $s\in \mathbb{Z}$ is obtained by iterating $\rot$ or the inverse operation $\rot^{-1}$.
A non-empty string $Y\in \Sigma^n$ is \emph{primitive} if it is distinct from its non-trivial rotations, i.e., if $Y=\rot^s(Y)$ holds only when $s$ is a multiple of $n$.
A string $Y$ is primitive if and only if it cannot be expressed as $Y=X^m$ for some string $X$ and integer $m>1$.

\subsection{Edit-distance Alignments}
The \emph{edit distance} (also known as the \emph{Levenshtein distance}) $\ed(X, Y)$ between two strings $X$ and $Y$ is defined as the smallest number of character insertions, deletions, and substitutions required to transform $X$ to $Y$. 

Equivalently, the edit distance $\ed(X,Y)$ can be defined as the cost of the cheapest \emph{alignment} $\A : X \onto Y$.
There are many ways to formally define an alignment; all of them, however, need to identify which characters of $X$
are \emph{deleted} and which characters of $Y$ are \emph{inserted}. The remaining characters are \emph{aligned} 
(either \emph{substituted} or \emph{matched}) in the left-to-right order.
Since parts of this work rely on the techniques of~\cite{FOCS}, we chose to stick to their formalization.
\begin{definition}
  A sequence $\A = (x_t,y_t)_{t=0}^m$ is an \emph{alignment}
  of a string $X\in \Sigma^*$ onto a string $Y\in\Sigma^*$, denoted $\A : X \onto Y$, if $(x_0,y_0)=(0,0)$,
  $(x_m,y_m)=(|X|,|Y|)$, and $(x_{t+1},y_{t+1})\in \{(x_t+1,y_t+1),(x_t+1,y_t),(x_t,y_t+1)\}$ for $t\in [0\dd m)$.
\end{definition}
Given an alignment $\A = (x_t,y_t)_{t=0}^m : X \onto Y$, for every $t\in [0\dd m)$:
\begin{itemize}
  \item If $(x_{t+1},y_{t+1})=(x_t+1,y_t)$, we say that $\A$ \emph{deletes} $X[x_t]$.
  \item If $(x_{t+1},y_{t+1})=(x_t,y_t+1)$, we say that $\A$ \emph{inserts} $Y[y_t]$.
  \item If $(x_{t+1},y_{t+1})=(x_t+1,y_t+1)$, we say that $\A$ \emph{aligns} $X[x_t]$ and $Y[y_t]$, denoted $X[x_t] \sim_\A Y[y_t]$. If~additionally $X[x_t]= Y[y_t]$, we say that $\A$ \emph{matches} $X[x_t]$ and $Y[y_t]$, denoted $X[x_t] \simeq_\A Y[y_t]$.
  Otherwise, we say that $\A$ \emph{substitutes} $X[x_t]$ for $Y[y_t]$.
\end{itemize}

The \emph{cost} of an edit distance alignment $\A$ is the total number characters that $\A$ deletes, inserts, or substitutes.
We denote the cost by $\ed_\A(X,Y)$.
The cost of an alignment $\A=(x_t,y_t)_{t=0}^m$ is at least its \emph{width}, defined as $\max_{t=0}^m |x_t-y_t|$.
Observe that $\ed(X,Y)$ can be defined as the minimum cost of an alignment of $X$ and~$Y$,
that is, $\ed(X,Y) = \min_{\A : X \onto Y} \ed_{\A}(X,Y)$.
An alignment of $X$ and $Y$ is \emph{optimal} if its cost is equal to $\ed(X, Y)$.

Given an alignment $\A=(x_t,y_t)_{t=0}^m$ of $X,Y\in \Sigma^+$, we partition the elements $(x_t,y_t)$ of $\A$
into \emph{matches} (for which $X[x_t]\simeq_{\A} Y[y_t]$) and \emph{breakpoints} (the remaining elements).
We denote the set of matches and breakpoints by $\mtch_{\A}(X,Y)$ and $\brkp_{\A}(X,Y)$, respectively.
Observe that $|\brkp_{\A}(X,Y)|=1+\ed_\A(X,Y)$.

We say that a set $M\sub \Zz\times \Zz$ is \emph{non-crossing} if 
there are no distinct pairs $(x,y),(x',y')\in M$ with $x\le x'$ and $y\ge y'$.
Observe that, for every alignment $\A$ of $X,Y$, the set $\{(x,y)\in [0\dd |X|)\times [0\dd |Y|): X[x]\sim_{\A} Y[y]\}$
is non-crossing.
We further say that a non-crossing set $M\sub [0\dd |X|)\times [0\dd |Y|)$ 
is a \emph{non-crossing matching} of strings $X,Y$ if $X[x]=Y[y]$ holds for every $(x,y)\in M$.
Note that, for every alignment $\A$ of $X,Y$, the set $\mtch_{\A}(X,Y)$ is a non-crossing matching.

Given an alignment $\A= (x_t,y_t)_{t=0}^m$ of $X$ and $Y$, for every $\ell,r\in [0\dd m]$ with $\ell\le r$,
we say that $\A$ \emph{aligns} $X[x_\ell\dd x_{r})$ to $Y[y_\ell\dd y_{r})$, denoted $X[x_\ell\dd x_{r})\sim_\A Y[y_{\ell}\dd y_{r})$. If there is no breakpoint $(x_t,y_t)$ with $t\in [\ell\dd r)$,
we further say that $\A$ \emph{matches} $X[x_\ell\dd x_{r})$ to $Y[y_\ell\dd y_{r})$, denoted $X[x_\ell\dd x_{r})\simeq_\A Y[y_{\ell}\dd y_{r})$. 

\begin{definition}[Greedy alignment~\cite{FOCS}]\label{def:greedy}
    We say that an alignment $\A$ of two strings $X, Y\in \Sigma^*$ is \emph{greedy} if  $X[x] \ne Y[y]$ holds for every $(x,y)\in \brkp_{\A}(X,Y)$
    such that $x\ne |X|$ and $y\ne |Y|$.
\end{definition}
As proven in~\cite{FOCS}, any two strings $X,Y\in \Sigma^*$ have an optimal alignment that is also greedy.

For strings $X,Y\in \Sigma^*$ and integers $k\ge w\ge 0$, we denote by $\aa_{k,w}(X,Y)$ the family of alignments of $\A:X\onto Y$
whose cost is at most $k$ and whose width is at most $w$. Furthermore, we define $\ga_{k,w}(X,Y)$ as the set consisting of 
all greedy alignments in $\aa_{k,w}(X,Y)$.
The following result can be obtained using a straightforward adaptation of the Landau--Vishkin algorithm~\cite{LV88}.
The only difference compared to the baseline implementation is that the DP table
is artificially restricted to diagonals $[-w\dd w]$.

\begin{lemma}\label{lem:build_greedy}
Given strings $X,Y\in\Sigma^{\le n}$ and integers $k,w\in \Zz$,
one can in $\Oh(n+kw)$ time decide whether $\aa_{k,w}(X,Y)\ne \emptyset$.
If the answer is positive, the algorithm reports a witness alignment $\A \in \ga_{k,w}(X,Y)$.
\end{lemma}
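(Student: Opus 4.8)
The plan is to adapt the classical Landau--Vishkin algorithm~\cite{LV88}, restricting its dynamic-programming table to the diagonal band $[-w\dd w]$. Since every alignment $\A=(x_t,y_t)_{t=0}^m:X\onto Y$ satisfies $x_m-y_m=|X|-|Y|$ and has width $\max_{t}|x_t-y_t|$, we may immediately report $\aa_{k,w}(X,Y)=\emptyset$ whenever $||X|-|Y||>w$. Otherwise let $j^\star:=|Y|-|X|\in[-w\dd w]$, and preprocess $X$ and $Y$ in $\Oh(n)$ time for constant-time longest-common-extension queries $\lcp(X[p\dd|X|),Y[q\dd|Y|))$ (for instance, via a generalized suffix tree of $X$ and $Y$ equipped with a constant-time lowest-common-ancestor structure, or a suffix array with an LCP array and a range-minimum structure).

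Next, for $i\in[0\dd k]$ and $j\in[-w\dd w]$ I would compute $d_{i,j}$, defined as the largest $x$ with $0\le x\le|X|$ and $0\le x+j\le|Y|$ that admits an alignment of $X[0\dd x)$ onto $Y[0\dd x+j)$ of cost at most $i$ and width at most $w$ (and $d_{i,j}:=-\infty$ if there is none). The computation uses the standard recurrence: with the usual initialization for $i=0$, set $e:=\max\{d_{i-1,j-1},\ d_{i-1,j}+1,\ d_{i-1,j+1}+1\}$, where the out-of-band entries $d_{i,-w-1}$ and $d_{i,w+1}$ are treated as $-\infty$, and then extend by a maximal run of matches, $d_{i,j}:=\min\{|X|,\ |Y|-j,\ e+\lcp(X[e\dd|X|),Y[e+j\dd|Y|))\}$. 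Each entry costs one LCE query and $\Oh(1)$ time, and there are $\Oh(kw)$ entries, so the table is filled in $\Oh(n+kw)$ time total. A straightforward induction shows that the computed $d_{i,j}$ matches its definition; since a width-$\le w$ alignment ending at $(|X|,|Y|)$ never leaves the band $[-w\dd w]$, we conclude that $\aa_{k,w}(X,Y)\ne\emptyset$ if and only if $d_{i,j^\star}=|X|$ for some $i\in[0\dd k]$, which we record while filling the table.

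When the answer is positive, I would produce a witness by storing with each entry $d_{i,j}$ the maximizing transition in $e$ together with the length of its match run; tracing these pointers back from the smallest $i$ with $d_{i,j^\star}=|X|$ yields an alignment $\A:X\onto Y$ that emits matches along every extension segment and a single deletion, insertion, or substitution per transition. By construction $\ed_{\A}(X,Y)\le k$ and the width of $\A$ is at most $w$, so $\A\in\aa_{k,w}(X,Y)$. Moreover $\A$ is greedy in the sense of \cref{def:greedy}: every breakpoint $(x_t,y_t)\in\brkp_{\A}(X,Y)$ sits at a cell $(d_{i,j},d_{i,j}+j)$ where the match run was extended maximally, so either the length cap was binding --- which forces $x_t=|X|$ or $y_t=|Y|$ and is therefore exempt --- or $\lcp(X[x_t\dd|X|),Y[y_t\dd|Y|))=0$, i.e.\ $X[x_t]\ne Y[y_t]$. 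Hence $\A\in\ga_{k,w}(X,Y)$.

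No step here is genuinely hard --- the argument is essentially bookkeeping around the classical algorithm --- but two points warrant care. The \emph{main} one is making the band restriction precise: one must verify that treating the diagonals $\pm(w+1)$ as $-\infty$ discards no alignment of width at most $w$, so that the band-restricted $d_{i,j}$ still equals the intended quantity and the emptiness test stays correct; cleanly stating the invariant for $d_{i,j}$ (with its cost and width constraints) is what the whole proof rests on. The second, smaller point is confirming that the traced-back alignment literally satisfies \cref{def:greedy} --- that maximal $\lcp$ extension translates into ``no interior breakpoint is a match'' --- and not merely that it comes out of a ``kangaroo'' procedure; the case analysis on whether the length cap binds, sketched above, handles this.
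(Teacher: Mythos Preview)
Your proposal is correct and follows exactly the approach the paper indicates: the paper does not give a detailed proof, merely stating that the result is ``a straightforward adaptation of the Landau--Vishkin algorithm'' with the DP table ``artificially restricted to diagonals $[-w\dd w]$''. Your write-up supplies precisely those details (LCE preprocessing, band-restricted $d_{i,j}$ table, backtracking, and the greediness check via maximal extension), so there is nothing to compare.
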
 

\subsection{Trees and Forests}\label{sec:pr2}

\begin{definition}[Forest]
We say that a forest $\F$ is a (possibly empty) sequence of non-empty rooted ordered trees $\T_1,\ldots,\T_m$. Formally, each such tree $\T_i$ consists of a root node $v_i$ and a forest $\F_i$ representing the descendants of $v_i$.
We write $V_\F$ to denote the node set of $\F$.
\end{definition}

For a node $v\in V_{\F}$ in a forest $\F$, we denote the parent of $v$ by $\pr_\F(v)$.
If $v$ is a root, then we set $\pr_\F(v)=\bot_{\F}$; consistently with~\cite{Xiao21,DBLP:conf/icalp/AkmalJ21},
we refer to $\bot_{\F}$ as the \emph{virtual root} of $\F$. We denote $\vV_\F = V_\F \cup \{\bot_{\F}\}$
as the set of nodes in $\F$ including the virtual root.
For each node $v\in \vV_\F$, we denote the sequence of children of $v$ by $\chld_\F(v)$.
Observe that the forest $\F$ is uniquely specified by the lists $\chld_\F(v)$ for $v\in \vV_\F$
and these lists, in total, contain every node $u\in V_\F$ exactly once.

We define the \emph{height} of a forest $\F$, denoted $\hgt(\F)$, as the maximum number of nodes on the root-to-leaf path in $\F$.
Thus, an empty forest has height $0$, whereas a forest consisting of roots only has height $1$.

\begin{definition}[Labelled forest]
A \emph{labeled forest} $F$ consists of a forest $\F$ and a labeling function $\lab : V_{\F}\to \Sigma$
assigning labels (from an integer \emph{alphabet} $\Sigma=[0\dd \sigma)$) to nodes of $\F$.
\end{definition}

In this work, we typically consider two labeled forests $F,G$.
We then assume that the underlying unlabeled forests $\F,\G$ have disjoint node sets
and a joint labeling $\lab : V_{\F}\cup V_{\G}\to \Sigma$.

\begin{definition}[Tree edit distance]\label{def:treen}
The following operations are jointly called \emph{edits} of a labeled forest $F$:
\begin{description}
\item[Substitution (relabeling)] Change the label of a node in $F$ with another symbol from $\Sigma$.
\item[Deletion] Delete a node $v$ from $F$. As a result, the children of $v$ become the children of $\pr_F(v)$ while preserving their order.
Formally, we modify $\chld_F(\pr_F(v))$ by replacing  $v$ with $\chld_F(v)$.
\item[Insertion] Insert a labeled node $v$ into $F$ so that deletion of $v$ produces $F$ again.
 Formally, an insertion is specified by a $u\in \vV_\F$ (which will become the parent of $v$), a (possibly empty) fragment of $\chld_F(u)$ (identifying nodes which will become the children of $v$), and the label of the new node.
\end{description}
The tree edit distance $\ted(F,G)$ is defined as the minimum number of edits (listed above) required to transform one forest to the other.
\end{definition}

Just like the string edit distance, the tree edit distance admits an equivalent definition 
in terms of \emph{alignments} $\A : F \onto G$. 
Such an alignment must specify which nodes in $F$ are \emph{deleted} and which nodes in $G$ are \emph{inserted}.
The remaining nodes are \emph{aligned} (either \emph{relabeled} or \emph{matched}).
Unlike for string edit distance, where the alignment must only be consistent with the left-to-right ordering,
here, the alignment must the consistent with both the pre-order and the post-order forest traversal.

\subsection{Parenthesis Representation of Forests}
Aiming to adapt string techniques into the forest setting, we map each forest to a string through 
the \emph{parentheses representation} (closely related to the \emph{bi-order traversal} considered in~\cite{Xiao21}).
\begin{definition}[Parenthesis representation]
    For every unlabeled forest $\F$, we define the \emph{parentheses representation} of $\F$,
    denoted $\str{\F}\in \{\op,\cl\}^*$, using the following recursion:
    If $\F$ consists of trees $\T_1,\ldots,\T_m$, where each $\T_i$ consists of a root node $v_i$ and a forest $\F_i$ representing the descendants of $v_i$,
    then $\str{\F}=\bigodot_{i=1}^m \str{\T_i}$, where $\str{\T_i} = \op \cdot \str{\F_i} \cdot \cl$ and $\cdot$ as well as $\bigodot$ denote concatenation.

    For a labelling $\lab : V_\F \to \Sigma$, we further set $\str[\lab]{\F}\in (\{\op,\cl\}\times \Sigma)^*$ 
    so that $\str[\lab]{\F}=\bigodot_{i=1}^m \str[\lab]{\T_i}$
    and $\str[\lab]{\T_i} = \op_{\lab(v_i)} \cdot \str[\lab]{\F_i} \cdot \cl_{\lab(v_i)}$.
    For a labeled forest $F=(\F,\lab)$, we also write $\str{F}:=\str[\lab]{\F}$.
\end{definition}

Observe that $\str{\F}$ forms a well-parenthesized sequence and there is a bijection
between nodes of $\F$ and the pairs of matching parentheses in $\str{\F}$.
We define functions  $o_\F,c_\F : V_\F \to [0\dd 2|\F|)$
so that the opening and the closing parenthesis representing $u$
are located at positions $o_\F(u)$ and $c_\F(u)$, respectively.

\begin{definition}\label{def:ta}
    We say that $\A : \str{\F} \onto \str{\G}$ is a \emph{tree alignment} of forests $\F$ and $\G$, denoted $\A : \F \onto \G$
    if the following \emph{consistency} conditions are satisfied for each $u\in V_\F$:
    \begin{itemize}
        \item either $\A$ deletes both $\str{\F}[o_\F(u)]$ and $\str{\F}[c_\F(u)]$, or
        \item there exists $v\in V_\G$ such that $\str{\F}[o_\F(u)] \sim_\A \str{\G}[o_\G(v)]$ and $\str{\F}[c_\F(u)] \sim_\A \str{\G}[c_\G(v)]$.
    \end{itemize}
    We denote the set of all tree alignments $\A: \F \to \G$ with $\ta(\F,\G) \sub \aa(\str{\F},\str{\G})$.
\end{definition}


\begin{definition}\label{def:treep}
Consider a joint labeling $\lab$ of forests $\F,\G$ and an alignment $A\in \ta(\F,\G)$. 
We denote
$\ted_{\lab,\A}(\F,\G) := \frac12 \ed_{\A}(\str[\lab]{\F},\str[\lab]{\G})$.
In terms of the underlying labeled forests $F,G$, we express this as
$\ted_\A(F,G) := \frac12 \ed_{\A}(\str{F},\str{G})$.
\end{definition}

We observe that \cref{def:treen,def:treep} are equivalent as each tree edit operation can be represented using two string edit operations. For example, 
a deletion of a node $v\in V_F$ corresponds to deleting the characters at positions $o_F(u)$ and $c_F(u)$ in $P(F)$. 
\begin{observation}
For any two forests $F,G$, we have $\ted(F,G)=\min_{\A \in \ta(F,G)} \ted_\A(F,G)$.
\end{observation}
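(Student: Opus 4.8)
The statement to prove is the \emph{Observation}: for any two forests $F,G$, we have $\ted(F,G)=\min_{\A \in \ta(F,G)} \ted_\A(F,G)$.

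The plan is to establish the two inequalities separately, using the equivalence between the edit-operation definition (\cref{def:treen}) and the alignment-based definition (\cref{def:treep}), with the parentheses representation as the bridge. The key insight, already noted in the text preceding the Observation, is that each tree edit operation corresponds to exactly two string edit operations on $\str{F}$: a node deletion deletes the matching pair $\str{F}[o_F(u)],\str{F}[c_F(u)]$; a node insertion inserts a matching pair of parentheses (with the new label); and a relabeling substitutes both the opening and closing parenthesis of the node. So I would first prove that $\ted(F,G)\ge \min_{\A\in\ta(F,G)}\ted_\A(F,G)$ by taking an optimal sequence of $\ted(F,G)$ tree edits transforming $F$ into $G$, translating each into the corresponding pair of parenthesis-string edits, and verifying that the resulting string alignment $\A$ lies in $\ta(F,G)$ (i.e.\ satisfies the consistency conditions of \cref{def:ta}, since each node's two parentheses are treated as a unit) and has $\ed_\A(\str{F},\str{G}) = 2\ted(F,G)$, hence $\ted_\A(F,G)=\ted(F,G)$.

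For the reverse inequality, I would take an arbitrary $\A\in\ta(F,G)$ and show that $\ted(F,G)\le \ted_\A(F,G)$ by reading off a sequence of tree edits from $\A$. The consistency conditions guarantee that $\A$ either deletes both parentheses of each node $u\in V_F$ or aligns them with both parentheses of a unique node $v\in V_G$; symmetrically for inserted nodes of $G$. I would then argue that the deleted nodes of $F$, the inserted nodes of $G$, and the aligned-but-relabeled pairs give a valid edit script: deletions can be applied (children are promoted to the parent, consistent with how $\str{\F}$ encodes structure), insertions are the reverse, and relabelings fix the remaining label mismatches; the nodes matched by $\A$ (same label, aligned positions) need no edit. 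Counting, the number of tree edits is exactly $\frac12 \ed_\A(\str F,\str G) = \ted_\A(F,G)$ (each tree edit consumed two breakpoints in the string alignment). Combining the two directions yields the equality, and in particular the minimum on the right is attained.

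The main obstacle — though it is more a matter of careful bookkeeping than genuine difficulty — is verifying that the edit script extracted from an arbitrary tree alignment $\A$ is actually \emph{executable} as a sequence, i.e.\ that the promotions of children caused by deletions, the choice of sibling-fragments for insertions, and the ordering of operations are mutually consistent with the pre-order and post-order constraints that \cref{def:ta} encodes. Concretely, one must check that when $\A$ aligns $o_F(u)\sim o_G(v)$ and $c_F(u)\sim c_G(v)$, the nesting and sibling structure among such aligned pairs is preserved (a consequence of $\A$ being a single monotone string alignment restricted to well-parenthesized strings), so that performing all deletions first, then all insertions, then all relabelings, transforms $F$ into $G$. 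I expect this can be handled by induction on $|F|+|G|$ or on the structure of $\str F$, peeling off the first tree or the first node, exactly mirroring the recursive definition of $\str{\cdot}$; the base cases (one of the forests empty) are immediate.
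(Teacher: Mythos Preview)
Your proposal is correct and follows exactly the reasoning the paper gestures at: the text immediately preceding the Observation states that each tree edit corresponds to two string edits on the parentheses representation (giving deletion as the example), and then records the Observation without further proof. Your two-inequality argument via this correspondence is precisely the intended justification, just spelled out in more detail than the paper bothers to give.
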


Furthermore, for an integer $k\in \Zz$, we denote $\ta_k(F,G)=\{\A \in \ta(F,G) : \ted_{\A}(F,G)\le k\}$.
We also denote
\[\ted_{\le k}(F,G) = \begin{cases}
    \ted(F,G) & \text{if }\ted(F,G)\le k,\\
    \infty & \text{otherwise.}
\end{cases}
\]

\section{Labeling Refinements}
Consider two joint labelings $\lab,\lab' : V_{\F}\cup V_{\G} \to \Sigma$ of forests $\F,\G$.
We say that $\lab$ is a \emph{refinement} of $\lab'$ if $\lab'(u)=\lab'(v)$ holds for any two nodes $u,v\in V_\F\cup V_\G$
such that $\lab(u)=\lab(v)$.
If simultaneously $\lab$ is a refinement of $\lab'$ and $\lab'$ is a refinement of $\lab$, we say that $\lab$ is equivalent to $\lab'$.

\subsection{Look-Ahead Refinement}\label{sec:lookahead}

For a node $v$ of an unlabeled forest $\F$, let $\Sub(v)$ denote the subtree of $\F$ rooted at $v$.
Further, for an integer $d\in\Zp$, let $\Sub_{<d}(v)$ denote the subtree of $\Sub(v)$ consisting of nodes at distance less than $d$ from the root $v$.

\newcommand{\look}[2]{\ensuremath{\mathsf{L}(#1,#2)}}

\begin{definition}[Look-ahead refinement]
Let $\lab$ be a joint labeling of forests $\F,\G$.
We say that $\lab'$ is a \emph{depth-$d$ look-ahead refinement} of $\lab$
if, for any $u,v\in V_{\F}\cup V_{\G}$, we have $\lab'(u)=\lab'(v)$
if and only if $\str[\lab]{\Sub_{<d}(u)}=\str[\lab]{\Sub_{<d}(v)}$.
We use $\look{\lab}{d}$ to denote a depth-$d$ look-ahead refinement of $\lab$
(chosen arbitrarily up to equivalence of labelings).
\end{definition}

\begin{lemma}\label{lem:boundLookahead}
    Consider forests $\F,\G$, their joint labeling $\lab$, an alignment $\A \in \ta(\F,\G)$,
    and an integer $d\in \Zp$.
    Then, \[\ted_{\look{\lab}{d},\A}(\F,\G)\le d\cdot \ted_{\lab,\A}(\F,\G).\]
\end{lemma}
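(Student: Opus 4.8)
The plan is to work directly with the alignment $\A = (x_t, y_t)_{t=0}^m$ viewed as an alignment of the parentheses strings $\str[\lab]{\F}$ and $\str[\lab]{\G}$, and to bound the number of breakpoints of $\A$ with respect to the refined labeling $\look{\lab}{d}$ in terms of the number of breakpoints with respect to $\lab$. Recall $\ted_{\lab,\A}(\F,\G) = \tfrac12\ed_{\A}(\str[\lab]{\F},\str[\lab]{\G})$, so it suffices to show that every character that $\A$ deletes, inserts, or substitutes under the labeling $\look{\lab}{d}$ lies within distance $d-1$ (in an appropriate sense) of some character that $\A$ deletes, inserts, or substitutes under $\lab$; more precisely, I will charge each $\look{\lab}{d}$-breakpoint to a nearby $\lab$-breakpoint so that each $\lab$-breakpoint is charged at most $2d$ times (counting both strings), which yields the factor-$d$ bound after dividing by $2$.

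The key observation is the following. Since $\A$ is a \emph{tree} alignment, for each node $u \in V_\F$ either both parentheses of $u$ are deleted, or $u$ is aligned to some node $v \in V_\G$ with $\str{\F}[o_\F(u)] \sim_\A \str{\G}[o_\G(v)]$ and $\str{\F}[c_\F(u)] \sim_\A \str{\G}[c_\G(v)]$; symmetrically for nodes of $\G$. Now suppose a node $u \in V_\F$ is aligned to $v \in V_\G$, both its parentheses are aligned (not substituted) under $\lab$, and moreover $\A$ matches $\str[\lab]{\F}$ to $\str[\lab]{\G}$ (with no breakpoint) on the entire fragment strictly between $o_\F(u)$ and $c_\F(u)$ that maps to the fragment strictly between $o_\G(v)$ and $c_\G(v)$ — in fact it will be cleaner to impose a local condition. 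The right statement is: if the open parenthesis $o_\F(u)$ is matched under $\lab$, and every $\lab$-breakpoint among the characters $\str[\lab]{\F}[o_\F(u)\dd c_\F(u)]$ is "deep", then $\look{\lab}{d}(u) = \look{\lab}{d}(v)$, i.e.\ the open and close parentheses of $u$ are matched under $\look{\lab}{d}$ as well. Concretely: I claim that if $\str[\lab]{\Sub_{<d}(u)} \ne \str[\lab]{\Sub_{<d}(v)}$ while $u \sim_\A v$, then the restriction of $\A$ to the parentheses of $\Sub_{<d}(u)$ and $\Sub_{<d}(v)$ — which is itself a tree alignment of these trimmed subtrees (using that $\A$ is consistent with pre- and post-order) — must contain at least one $\lab$-breakpoint, because $\ed_{\lab,\A|}(\Sub_{<d}(u),\Sub_{<d}(v)) \ge \ted(\Sub_{<d}(u),\Sub_{<d}(v)) \ge 1$. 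This $\lab$-breakpoint involves a node at depth $< d$ below $u$ or below $v$.

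The charging scheme is then: to each $\look{\lab}{d}$-breakpoint, which corresponds either to a deleted/inserted node or to a substituted node $u \sim_\A v$ with $\look{\lab}{d}(u)\neq\look{\lab}{d}(v)$, associate a $\lab$-breakpoint. For a node deleted or inserted under $\look{\lab}{d}$: it is also deleted or inserted under $\lab$ (the structure of which parentheses are deleted does not depend on labels), so it is its own $\lab$-breakpoint — charge it to itself. For a substituted node $u \sim_\A v$ with $\look{\lab}{d}(u)\neq\look{\lab}{d}(v)$: if already $\lab(u)\neq\lab(v)$ charge to itself; otherwise $\str[\lab]{\Sub_{<d}(u)}\neq\str[\lab]{\Sub_{<d}(v)}$ so, by the claim above, there is a $\lab$-breakpoint among the nodes at depth $<d$ in $\Sub(u)\cup\Sub(v)$ — charge $u$ to one such breakpoint, chosen canonically. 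Finally I bound the multiplicity: a given $\lab$-breakpoint at node $w$ (say $w$ is deleted, or $w\sim_\A w'$) is charged by itself at most once, and is charged by substituted ancestors $u$ with $w \in \Sub_{<d}(u)$ — there are at most $d-1$ proper ancestors of $w$ within distance $d$, and by $\A$'s consistency the partner $w'$ has at most $d-1$ relevant ancestors as well; being slightly generous, each $\lab$-breakpoint is charged $\Oh(d)$ times, and a careful count gives the clean factor of $2d$ on the parenthesis strings, hence $d$ on the tree edit distances after halving.

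The main obstacle is making the "restriction of $\A$ to $\Sub_{<d}(u)$ and $\Sub_{<d}(v)$ is a tree alignment of the trimmed subtrees" step fully rigorous: one must check that deleting, on both the $\F$ and $\G$ side, all parentheses at depth $\ge d$ below $u$ resp.\ $v$ and then restricting $\A$ to the surviving parentheses yields a well-formed tree alignment in the sense of \cref{def:ta}, and that its cost under $\lab$ is a lower bound for $\ted(\Sub_{<d}(u),\Sub_{<d}(v))$ — which is $\ge 1$ precisely when the trimmed subtrees differ. This is where the two-sided (pre-order and post-order) consistency of tree alignments is essential, and where the bookkeeping for the charging multiplicity must be pinned down carefully to avoid an off-by-constant in the final bound; everything else is routine.
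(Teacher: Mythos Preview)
Your approach is the paper's, read in reverse: the paper marks, for every $\lab$-edit at a node $w$, the $\le d$ ancestors of $w$ within depth-distance $d$, and shows every $\look{\lab}{d}$-edit sits at a marked node; your charging from $\look{\lab}{d}$-edits down to nearby $\lab$-edits is the contrapositive, and the multiplicity count is the same.

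There is one genuine gap, at exactly the spot you flagged. What you need is: if $u\sim_\A v$ with $\look{\lab}{d}(u)\ne\look{\lab}{d}(v)$, then $\A$ itself has a $\lab$-edit at some node of $\Sub_{<d}(u)$ or $\Sub_{<d}(v)$. Your restriction of $\A$ to the trimmed subtrees does yield a valid tree alignment of positive cost, but a breakpoint of the \emph{restricted} alignment need not be a breakpoint of $\A$: a node $u'\in\Sub_{<d}(u)$ that $\A$ matches perfectly to some $v'$ at depth $\ge d$ in $\Sub(v)$ becomes deleted after trimming, yet it is not a $\lab$-edit of $\A$ and cannot serve as a charge target. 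To extract an \emph{original} edit here you must argue that some node on the $v$--$v'$ path at depth $<d$ is inserted---and that is precisely the direct argument the paper uses in place of the whole restriction detour: if $\A$ has no deletion or substitution in $\Sub_{<d}(u)$ and no insertion in $\Sub_{<d}(v)$, then $\A$ induces a label-preserving isomorphism $\Sub_{<d}(u)\cong\Sub_{<d}(v)$, a contradiction. With this direct form, the restriction machinery can be discarded. One small additional point: the ``$2d$ on parentheses, hence $d$ after halving'' reasoning is a non sequitur (both $\ted_{\lab,\A}$ and $\ted_{\look{\lab}{d},\A}$ already carry the same factor $\tfrac12$), but it is also unnecessary---the careful count (each $\lab$-edit is charged by at most $d-1$ proper aligned ancestors plus once by itself) gives the factor $d$ directly.
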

\begin{proof}
Let the \emph{level} of a node in $V_\F\cup V_\G$ denote its distance to the root.
We \emph{mark} some nodes of $\F,\G$ and prove that 
$\ted_{\look{\lab}{d},\A}(\F,\G)$ is bounded by the number of marks.
\begin{enumerate}
    \item If $\A$ deletes a node $u\in V_\F$ at level $\ell$
    or aligns such a node to a node $v\in V_\G$ with $\lab(u)\ne \lab(v)$, then we mark all ancestors of $u$ at levels in $(\ell-d\dd \ell]$.
    \item If $\A$ inserts a node $v\in V_\G$ at level $\ell$, then we mark all ancestors of $v$ at levels in $(\ell-d\dd \ell]$.
\end{enumerate}
Clearly, the total number of marks does not exceed $d\cdot \ted_{\lab,\A}(\F,\G)$.

It remains to prove that each unit of $\ted_{\look{\lab}{d},\A}(\F,\G)$ can be charged to a marked node.
If $\A$ deletes a node in $V_\F\cup V_\G$, then this node is already marked.
We shall prove that if $\A$ aligns a node $u\in V_\F$ with a node $v\in V_\G$ such that $\look{\lab}{d}(u)\ne \look{\lab}{d}(v)$, then either $u$ or $v$ is marked.
For a proof by contradiction, suppose that this is not the case.
In particular, this means that $\A$ does not delete nor insert any node in $\Sub_{<d}(u)$ and $\Sub_{<d}(v)$.
Consequently, these trees are isomorphic, i.e., $\str{\Sub_{<d}(u)}= \str{\Sub_{<d}(v)}$,
and $\A$ aligns nodes according to this isomorphism.
Moreover, $\lab$ assigns the same label to any two nodes matched by this isomporhism (otherwise, $u$ would have been marked), and thus $\str[\lab]{\Sub_{<d}(u)}= \str[\lab]{\Sub_{<d}(v)}$,
contradicting the assumption that $\look{\lab}{d}(u)\ne \look{\lab}{d}(v)$.
\end{proof}

\begin{lemma}\label{lem:buildLookahead}
Given forests $\F,\G$ of total size $n$, their joint labeling $\lab$, and an integer $d\in \Zz$, a labeling equivalent to $\look{\lab}{d}$
can be constructed in $\Oh(n)$ using a randomized algorithm correct with high probability.
\end{lemma}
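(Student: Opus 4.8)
The plan is to use the parenthesis representation together with Karp--Rabin fingerprints. First I would compute $\str[\lab]{\F}$ and $\str[\lab]{\G}$ explicitly in $\Oh(n)$ time; these are strings of length $2|\F|$ and $2|\G|$ over the alphabet $\{\op,\cl\}\times\Sigma$, and by the bijection between nodes and matching parenthesis pairs we know, for each $u$, the positions $o_\F(u)$ and $c_\F(u)$ (computable in $\Oh(n)$ by a stack sweep). The key observation is that $\str[\lab]{\Sub_{<d}(u)}$ is \emph{not} simply a fragment of $\str[\lab]{\F}$ when $\Sub(u)$ has depth $\ge d$, because the trimming at depth $d$ deletes the deep part of the subtree; so a direct ``fingerprint of a substring'' approach does not work verbatim. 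The fix is to observe that $\str[\lab]{\Sub_{<d}(u)}$ depends only on $\str[\lab]{\Sub(u)}$ (a genuine fragment, namely $\str[\lab]{\F}[o_\F(u)\dd c_\F(u)]$) with the characters at absolute parenthesis-depth $\ge d$ within that fragment removed. Equivalently, define for each position $i$ of $\str[\lab]{\F}$ its \emph{relative depth} with respect to $u$ and keep only positions of relative depth $<d$; I would precompute prefix sums of the $\pm1$ depth increments so that relative depths are available in $\Oh(1)$.

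Next I would build the trimmed strings implicitly. Rather than materializing each $\str[\lab]{\Sub_{<d}(u)}$ (whose total length can be $\Theta(nd)$), I would process the forests level by level from the bottom up, maintaining for every node $u$ a fingerprint $\phi(u)$ of $\str[\lab]{\Sub_{<d}(u)}$. The recursion is the natural one: if $u$ has children $w_1,\dots,w_t$ in order, then
\[
\str[\lab]{\Sub_{<d}(u)} \;=\; \op_{\lab(u)} \cdot \str[\lab]{\Sub_{<d-1}(w_1)} \cdots \str[\lab]{\Sub_{<d-1}(w_t)} \cdot \cl_{\lab(u)},
\]
so I would instead index by the pair $(u,e)$ where $e=\min(d,\hgt(\Sub(u)))$ is the effective remaining depth, and compute fingerprints $\phi_e(u)$ for the distinct values of $e$ that actually occur. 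The crucial point is that along any root-to-leaf path the effective depth strictly decreases by one at each step and is capped at $d$, so each node needs $\phi_e$ for only one relevant value of $e$ per ancestor distance; a careful bottom-up pass, concatenating children's fingerprints with the Karp--Rabin composition rule ($\mathrm{fp}(XY)=\mathrm{fp}(X)\cdot r^{|Y|}+\mathrm{fp}(Y)$ with the lengths also maintained), computes all needed fingerprints in $\Oh(n)$ total time. Finally I set $\lab'(u)$ to be the rank of $\phi_d(u)$ among all computed fingerprints (sort or hash in $\Oh(n)$), so that $\lab'(u)=\lab'(v)$ iff $\phi_d(u)=\phi_d(v)$, which, modulo fingerprint collisions, holds iff $\str[\lab]{\Sub_{<d}(u)}=\str[\lab]{\Sub_{<d}(v)}$; this is exactly the defining property of $\look{\lab}{d}$.

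For correctness with high probability I would choose the Karp--Rabin modulus and base from a polynomially large range so that the collision probability over all $\Oh(n^2)$ relevant pairs of subtree-strings is $n^{-\Omega(1)}$; since every $\str[\lab]{\Sub_{<d}(u)}$ has length at most $2n$, the standard union bound applies. I expect the main obstacle to be the accounting that makes the level-by-level fingerprint computation run in $\Oh(n)$ rather than $\Oh(nd)$: one must argue that although a naive definition wants a fingerprint of every depth-$d$ trimmed subtree, the bottom-up recursion only ever combines, at node $u$, the already-computed fingerprints of $u$'s children (each used once), so the work is $\Oh(1)$ per node plus the final $\Oh(n)$ renaming. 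Handling the boundary cases ($d=0$, where every $\str[\lab]{\Sub_{<0}(u)}$ is the empty string and so $\lab'$ is constant; and nodes whose subtree is shallower than $d$, where trimming is vacuous) is routine. The two forests $\F,\G$ are handled uniformly since they share the joint labeling and we simply run the procedure on their disjoint union.
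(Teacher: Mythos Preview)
Your accounting for the bottom-up recursion is incorrect, and this is a genuine gap. The recursion you write, $\phi_d(u)=\mathrm{combine}\bigl(\phi_{d-1}(w_1),\ldots,\phi_{d-1}(w_t)\bigr)$, needs the children's depth-$(d{-}1)$ fingerprints, whereas the value you ultimately want for each child (since \emph{every} node, not just the roots, must receive a $\look{\lab}{d}$ label) is $\phi_d(w_i)$. These coincide only when $\hgt(\Sub(w_i))\le d-1$. Whenever a chain of nodes all have subtree height $\ge d$, each node on the chain needs a different trim depth from its descendants. On a path of $n$ nodes with $d=\Theta(n)$, the node at depth $j$ requires $\phi_{d},\phi_{d-1},\ldots,\phi_{d-j}$, all genuinely distinct, so the total number of fingerprints is $\Theta(nd)$. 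Your sentence ``the bottom-up recursion only ever combines, at node $u$, the already-computed fingerprints of $u$'s children (each used once)'' silently identifies $\phi_{d}(w)$ with $\phi_{d-1}(w)$; the ``effective depth'' cap $e=\min(d,\hgt(\Sub(u)))$ does not help here because the problematic nodes are precisely those with $\hgt(\Sub(u))\ge d$, where the cap is inactive.

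The paper sidesteps recursion on the trim depth entirely. It observes that $\str[\lab]{\Sub_{<d}(v)}$ equals the contiguous fragment $\str[\lab]{\F}[o_\F(v)\dd c_\F(v)]$ with the subfragments $\str[\lab]{\Sub(v_i)}$ excised, where $v_1,\ldots,v_m$ are the descendants of $v$ at distance exactly $d$; hence it is a concatenation of $m{+}1$ fragments of $\str[\lab]{\F}$, and after standard Karp--Rabin preprocessing the fingerprint is assembled in $\Oh(m{+}1)$ time. Since every node has at most one ancestor at distance exactly $d$, we have $\sum_v m_v\le n$, giving $\Oh(n)$ total. The lists of depth-$d$ descendants are produced by a single traversal that maintains the current ancestor chain in an array indexed by level. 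This ``cut out the deep subtrees'' view is the missing idea in your plan.
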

\begin{proof}
We replace the label of each node $v\in V_{\F}\cup V_{\G}$ with the Karp--Rabin fingerprint~\cite{KR87} of $\str[\lab]{\Sub_{<d}(v)}$.
Choosing a sufficiently large prime number $p = n^{\Theta(1)}$ and a uniformly random seed $r\in [0\dd p)$,
we can guarantee that, with high probability, the fingerprints will have no collisions among the $n$ strings  $\str[\lab]{\Sub_{<d}(v)}$.

As far as the implementation is concerned, for each node $v$ at level $\ell$, we first identify all its descendants $v_1,\ldots,v_m$ at level $\ell+d$.
Such lists can be produced in linear time by a traversal of $F$ and $G$ that maintains the ancestors of the currently visited node 
in an array indexed by the level of the ancestor. When visiting a node at level $\ell\ge d$, we add it to the list of descendants of the ancestor at level $\ell-d$.
Next, we note that $\str[\lab]{\Sub_{<d}(v)}$ can be obtained from  $\str[\lab]{\Sub(v)}$ by cutting out the fragments representing $\str[\lab]{\Sub(v_i)}$. Thus, given the list $v_1,\ldots,v_m$, we can represent $\str[\lab]{\Sub_{<d}(v)}$ as the concatenation of $m+1$ fragments of $\str[\lab]{\Sub(v)}$.
The fingerprint of each fragment of $\str[\lab]{F}$ and $\str[\lab]{G}$ can be constructed in $\Oh(1)$ time and the Karp--Rabin fingerprints support concatenations in $\Oh(1)$ time.
Consequently, the desired fingerprints can be constructed in $\Oh(n)$ time in total.
\end{proof}

Next, we show that there exists an optimum alignment that 
becomes greedy if we refine the labeling using look-ahead $d\ge \max\{\hgt(F),\hgt(G)\}$.

\newcommand{\bx}{\bar{x}}
\newcommand{\by}{\bar{y}}

\begin{proposition}\label{lem:optIsGreedy}
    Let $\lab$ be a joint labeling of unlabeled forests $\F,\G$ of height at most $h$
    and let $\hlab = \look{\lab}{h}$.
    Then, there exists a tree alignment $\A \in \ta(\F,\G)$ of optimum cost $\ted_{\lab,\A}(\F,\G)=\ted_{\lab}(\F,\G)$ such that $\A \in \ga(\str[\hlab]{\F},\str[\hlab]{\G})$.
    \end{proposition}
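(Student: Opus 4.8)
We need to show there's an optimal tree alignment that becomes greedy once we refine the labeling using depth-$h$ look-ahead. Recall greedy (Definition \ref{def:greedy}) means: at every breakpoint $(x,y)$ with $x\ne|X|$, $y\ne|Y|$, the characters $X[x]\ne Y[y]$. So we need an optimal tree alignment $\A$ such that whenever $\A$ has a breakpoint at position $(x,y)$ in $\str[\hlab]{\F} \to \str[\hlab]{\G}$ (not at the very end), the two parenthesis-symbols disagree.

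The plan: Start with any optimal tree alignment $\A_0$. I want to modify it to eliminate "bad" breakpoints — positions where $\str[\hlab]{\F}[x] = \str[\hlab]{\G}[y]$ but $\A_0$ deletes/inserts rather than matches. Since these are refined-labeled parentheses, $\str[\hlab]{\F}[x] = \str[\hlab]{\G}[y]$ means both are opening (or both closing) parentheses carrying the same $\hlab$-label; because $\hlab = \look{\lab}{h}$ and the forest heights are at most $h$, an equal $\hlab$-label means $\str[\lab]{\Sub_{<h}(u)} = \str[\lab]{\Sub_{<h}(v)} = \str[\lab]{\Sub(u)} = \str[\lab]{\Sub(v)}$ — i.e., the entire subtrees rooted at $u$ and $v$ are identical as labeled forests. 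This is exactly the scenario described informally before the statement: when we have a chance to match $u$ and $v$ and their subtrees coincide, matching them is "free." The key structural fact I'd want to prove is a local exchange argument: if $\A_0$ is a tree alignment with a bad breakpoint at $(o_\F(u), o_\G(v))$ — so the opening parentheses of $u$ and $v$ are at "the current position" but not aligned — and $\Sub(u) \cong \Sub(v)$, then I can reroute $\A_0$ to instead match the whole subtree $\Sub(u)$ to $\Sub(v)$ identically, without increasing the cost, and this does not create any new bad breakpoint "to the left."

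More concretely, I would set up a potential-function / extremal argument: among all optimal tree alignments, pick $\A$ minimizing (say) the number of bad breakpoints, or lexicographically minimizing the sequence of breakpoint positions. Suppose for contradiction $\A$ still has a bad breakpoint; take the leftmost one, at $(x,y)$. Because $\A$ is a tree alignment, $x = o_\F(u)$ forces $y$ to be the position where $\str[\G]$ has an opening parenthesis as well (one has to argue the types match at a leftmost bad breakpoint — if $X[x]=Y[y]$ they're the same symbol, so same type), say $y = o_\G(v)$; similarly for closing parentheses. Equal $\hlab$-labels give $\Sub(u)\cong_{\lab}\Sub(v)$. Now I perform surgery: I claim $\A$ restricted to everything before reaching $u$'s subtree can be kept, then I insert the "identity" alignment of $\str[\hlab]{\Sub(u)}$ onto $\str[\hlab]{\Sub(v)}$ (which matches everything, cost $0$), then I need to reconnect to the remainder of $\A$ after both subtrees. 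The delicate point is that $\A$'s original handling of $\Sub(u)$ and $\Sub(v)$ may have been "entangled" with material outside — e.g. $\A$ might align part of $\Sub(u)$ to something outside $\Sub(v)$. I'd argue that the portion of $\str[\hlab]{\F}$ strictly inside $[o_\F(u), c_\F(u)]$ and the portion of $\str[\hlab]{\G}$ strictly inside $[o_\G(v), c_\G(v)]$ are each aligned-or-deleted/inserted by $\A$, and the consistency conditions of Definition \ref{def:ta} plus the non-crossing property let me "close off" these blocks: cost of $\A$ on the combined region (deletions inside $\Sub(u)$, insertions inside $\Sub(v)$, plus substitutions/mismatches) is at least the cost of replacing it with the identity match, which is $0$ plus whatever is forced outside — and replacing can only help. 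This yields a new optimal alignment with strictly fewer bad breakpoints (or lexicographically earlier breakpoint positions), contradicting minimality.

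The main obstacle is precisely this surgery step: showing that matching $\Sub(u)$ to $\Sub(v)$ identically is always at least as cheap as whatever $\A$ was doing, *and* that it does not introduce a new bad breakpoint before position $(x,y)$ (which is why choosing the *leftmost* bad breakpoint and keeping $\A$ intact to its left is essential). One has to handle carefully the case where $\A$, at the leftmost bad breakpoint, is deleting $\str[\hlab]{\F}[x]$ versus inserting $\str[\hlab]{\G}[y]$ versus substituting; in the substitution case the two symbols are equal so it's actually a match and not a breakpoint at all — so the bad breakpoint is genuinely a deletion or insertion, and WLOG (by symmetry) a deletion of an opening parenthesis $\str[\hlab]{\F}[o_\F(u)]$. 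Then by the tree-alignment consistency condition $\A$ also deletes $\str[\hlab]{\F}[c_\F(u)]$, so $\A$ deletes the root $u$; I then want to delete all of $\Sub(u)$ instead and insert all of $\Sub(v)$, but that costs $2|\Sub(u)| + 2|\Sub(v)|$ which is too much — so actually the right move is subtler: I should find the node $v$ in $G$ that $\A$ "uses" near position $y$ and argue either $\A$ also deletes a matching-structure block, or there's slack. This suggests the cleaner route is to follow the argument of \cite{FOCS} for strings and lift it: there, one shows greedy-ness can be enforced by an exchange that matches $X[x]$ with $Y[y]$ and "shifts" the breakpoint rightward; here the analogous shift must move a whole matching subtree-block rightward, using $\Sub(u)\cong\Sub(v)$ to guarantee the shift is cost-neutral and still a valid *tree* alignment (consistency preserved because we're moving a balanced block). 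I expect the write-up to mirror the string proof of \cite{FOCS} closely, with the subtree-isomorphism (guaranteed by the look-ahead refinement and $d \ge h$) playing the role that single-character equality plays in the string case.
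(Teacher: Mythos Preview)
Your overall strategy (start from an optimal tree alignment, locate the leftmost bad breakpoint, exploit that equal $\hlab$-labels force $\Sub(u)\cong_\lab\Sub(v)$, and perform surgery to match those subtrees perfectly) is essentially the paper's approach, and your cost intuition is right. But there are two genuine gaps.

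First, you never actually pin down the reconnection step. The surgery is not ``delete all of $\Sub(u)$ and insert all of $\Sub(v)$'' (which you correctly rejected as too expensive) nor anything subtler involving what $\A$ ``uses near $y$.'' It is simply: match $\str{\F}[o_\F(u)\dd c_\F(u)]$ to $\str{\G}[o_\G(v)\dd c_\G(v)]$ perfectly, then find the first index $q$ at which the original alignment has already passed both $c_\F(u)$ and $c_\G(v)$, delete or insert the slack (one of $\str{\F}(c_\F(u)\dd x_q)$, $\str{\G}(c_\G(v)\dd y_q)$ is empty), and resume $\A$ from $(x_q,y_q)$. The cost on the modified region is exactly $\big||[x\dd x_q)|-|[y\dd y_q)|\big|$, which lower-bounds what any alignment (hence $\A$) paid there. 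Showing the result is still a \emph{tree} alignment is the real work: a node $u'$ with one parenthesis in the deleted slack and the other in the untouched tail must have been deleted by $\A$ already, and this requires a short case analysis using non-crossing and the fact that $\Sub(v)$ is a contiguous block.

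Second, and more seriously, you dismiss the closing-parenthesis case with ``similarly,'' but it is \emph{not} symmetric. If the leftmost bad breakpoint sits at $(c_\F(u),c_\G(v))$, then the subtree you want to match perfectly lies to the \emph{left} of the breakpoint, so your plan to ``keep $\A$ intact to the left'' is impossible. The paper's surgery here reaches backward to a point $p$ before both $o_\F(u)$ and $o_\G(v)$, deletes/inserts the slack on the left, matches the subtrees, deletes/inserts slack on the right, and resumes. Crucially, this may create new breakpoints to the left of $(x,y)$, and proving those new breakpoints are not bad requires a separate argument that genuinely uses the inductive hypothesis (all earlier breakpoints were already greedy) together with periodicity of the matched block. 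This is the step your proposal is missing entirely.
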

    
    \begin{proof}
    Consider a tree alignment $\A\in \ta(\F,\G)$ of optimum cost $\ted_{\lab,\A}(\F,\G)=\ted_{\lab}(\F,\G)$.
    We perform the following steps repeatedly for $2|\F|$ rounds: at round $i$, we construct a tree alignment $\A_i\in \ta(\F,\G)$  such that $\ted_{\lambda,\A_i}(F,G)=\ted_{\lambda,\A}(F,G)$ and $\str[\hlab]{\F}[x]\neq \str[\hlab]{\G}[y]$ for every $(x,y)\in \brkp_{\A_i}(\str[\hlab]{\F},\str[\hlab]{\G}) \cap ([0\dd i)\times [0\dd 2|\G|))$. Trivially, $\A_0 := \A$ satisfies this condition for $i=0$.
    
    \vspace{3mm}
    \noindent
    \textbf{Round $i$.}
    At round $i$, we construct $\A_{i}$ given $\A_{i-1}= (x_t, y_t)_{t\in [0\dd m]}$. If $\str[\hlab]{\F}[x]\neq \str[\hlab]{\G}[y]$ already holds for every $(x,y)\in \brkp_{\A_{i-1}}(\str[\hlab]{\F},\str[\hlab]{\G}) \cap ([0\dd i)\times [0\dd 2|\G|))$, then we simply set $\A_{i} := \A_{i-1}$.
    Otherwise, let $(x,y)$ be the leftmost breakpoint with $\str[\hlab]{\F}[x]= \str[\hlab]{\G}[y]$.
    By the assumptions on $\A_{i-1}$, we must have $x=i-1$.

    \vspace{2mm}
    \noindent
    \textbf{Case 1.}
    First, we consider the case where $(x,y)=(o_\F(u),o_\G(v))$ for some $(u,v)\in V_\F\times V_\G$.
    Let $(\bx,\by)=(c_\F(u),c_\G(v))$ and note that, due to $\str[\hlab]{\F}[x]= \str[\hlab]{\G}[y]$ and $\hlab = \look{\lab}{h}$,
    we have $\str[\hlab]{\F}[x\dd \bx] = \str[\hlab]{\G}[y\dd \by]$.
    We define $q:=\min\{t : x_t > \bx \text{ and } y_t > \by \}$ and create $\A_i$ so that it:
    \begin{itemize}
        \item aligns $\str[\hlab]{\F}[0\dd x)$ against $\str[\hlab]{\G}[0\dd y)$ in the same way as $\A_{i-1}$ did,
        \item matches $\str[\hlab]{\F}[x\dd \bx]$ against $\str[\hlab]{\G}[y\dd \by]$,
        \item deletes $\str[\hlab]{\F}(\bx\dd x_q)$ and inserts $\str[\hlab]{\G}(\by\dd y_q)$ (at least one of these fragments is empty),
        \item aligns $\str[\hlab]{\F}[x_q\dd 2|\F|)$ against $\str[\hlab]{\G}[y_q\dd 2|\G|)$ in the same way as $\A_{i-1}$ did.
     \end{itemize}
    Formally, $\A_i$ is defined as follows, where $\odot$ denotes concatenation of sequences and $p\in [0\dd m]$ is such that $(x,y)=(x_{p},y_{p})$:
    \[\A_i= (x_t,y_t)_{t\in [0\dd p)}\odot (x+\delta,y+\delta)_{\delta\in [0\dd \bx-x]} \odot (\hat{x},y_q)_{\hat{x} \in (\bx\dd x_q)} \odot (x_q,\hat{y})_{\hat{y}\in (\by\dd y_q)}\odot (x_t,y_t)_{t\in [q\dd m]}.\]

    First, we show $\A_i\in \ta(\F,\G)$. For this, take a node $u'\in V_\F$ and consider several cases:
    \begin{itemize}
        \item{\boldmath$o_\F(u')\in [x\dd \bx]$ or $c_\F(u')\in [x\dd \bx]$.} In this case, $u'\in \Sub(u)$ and thus both $o_\F(u')\in [x\dd \bx]$ and $c_\F(u')\in [x\dd \bx]$. Moreover, due to $\str[\hlab]{\F}[x\dd \bx]\simeq_{\A_i} \str[\hlab]{\G}[y\dd \by]$, in this case, $\A_i$ matches $u'$ with the corresponding node of the subtree $\Sub(v)$ identical to $\Sub(u)$.
        \item {\boldmath$o_\F(u'),c_\F(u') \in (\bx\dd x_q)$.} In this case, $\A_i$ deletes both $\str[\hlab]{\F}[o_\F(u')]$ and $\str[\hlab]{\F}[c_\F(u')]$.
        \item {\boldmath$o_\F(u')\in (\bx\dd x_q)$ and $c_\F(u')\in [x_q\dd 2|\F|)$.} In this case, $\A_i$ deletes $\str[\hlab]{\F}[o_\F(u')]$ and handles $\str[\hlab]{\F}[c_\F(u')]$ in the same way as $\A_{i-1}$ did. Thus, we must prove that $\A_{i-1}$ deleted $u'$.
        For a proof by contradiction, suppose that $\A_{i-1}$ aligned $u'$ with some node $v'\in V_\G$. By the non-crossing property of $\A_{i-1}$,
        due to $(x,y),(x_q,y_q)\in \A_{i-1}$, we must have $o_\G(v')\in [y\dd y_q)$ and $c_\G(v')\in [y_q\dd 2|\G|)$. Moreover, since $(\bx\dd x_q)\ne \emptyset$, we must have $(\by\dd y_q)=\emptyset$,
        which means that $o_\G(v')\in [y\dd \by]=[y\dd y_q)$. Consequently, $v'\in \Sub(v)$ and hence $c_\G(v')\in [y\dd \by]=[y\dd y_q)$;
        this contradicts $c_\G(v')\in [y_q\dd 2|\G|)$.
        \item {\boldmath$o_\F(u')\in [0\dd x)$ and $c_\F(u')\in (\bx\dd x_q)$.} In this case, $\A_i$ deletes $\str[\hlab]{\F}[c_\F(u')]$ and handles $\str[\hlab]{\F}[o_\F(u')]$ in the same way as $\A_{i-1}$ did. Thus, we must prove that $\A_{i-1}$ deleted $u'$.
        For a proof by contradiction, suppose that $\A_{i-1}$ aligned $u'$ with some node $v'\in V_\G$. By the non-crossing property of $\A_{i-1}$,
        due to $(x,y),(x_q,y_q)\in \A_{i-1}$, we must have $o_\G(v')\in [0\dd y)$ and $c_\G(v')\in [y\dd y_q)$. Moreover, since $(\bx\dd x_q)\ne \emptyset$, we must have $(\by\dd y_q)=\emptyset$,
        which means that $c_\G(v')\in [y\dd \by]=[y\dd y_q)$. Consequently, $v'\in \Sub(v)$ and hence $o_\G(v')\in [y\dd \by]=[y\dd y_q)$;
        this contradicts $o_\G(v')\in [0\dd y)$.
        \item {\boldmath$o_\F(u'),c_\F(u')\in [0\dd x)\cup [x_q\dd 2|\F|)$.} In this case, $\A_i$ handles $u'$ in the same way as $\A_{i-1}$ did.
    \end{itemize}
    This case analysis above is exhaustive and, in all cases, the two parentheses corresponding to $u'$ are handled consistently.
    Consequently, we indeed have $\A_i\in \ta(\F,\G)$.

    Next we argue that $\ted_{\lambda,\A_i}(\F,\G)\le \ted_{\lambda,\A_{i-1}}(\F,\G)$
    or, equivalently, $\ed_{\A_i}(\str[\lab]{\F},\str[\lab]{G})\le \ed_{\A_{i-1}}(\str[\lab]{\F},\str[\lab]{G})$.
    The two alignments differ only on $\str[\lab]{\F}[x\dd x_q)$ and $\str[\lab]{\G}[y\dd y_q)$.
    The contribution of these fragments to the cost of $\A_i: \str[\lab]{\F}\leadsto \str[\lab]{\G}$ equals $|(\bx\dd x_q)|+|(\by\dd y_q)| = \big| |(\bx\dd x_q)|-|(\by\dd y_q)| \big|
    = \big||[x\dd x_q)-|[y\dd y_q)|\big|$. Due to $(x,y),(x_q,y_q)\in \A_{i-1}$,  the contribution of these fragments to the cost of $\A_{i-1}:\str[\lab]{\F}\leadsto \str[\lab]{\G}$ must be at least that large.

    Finally, we note that the breakpoints of $\A_i$ in $[0\dd i)\times [0\dd 2|\G|)$ satisfy the greedy property.
    For breakpoints to the left of $(x,y)$, this follows by definition of $(x,y)$.
    Moreover, due to $\str[\hlab]{\F}[x]\simeq \str[\hlab]{\G}[y]$, the pair $(x,y)$ is not a breakpoint in $\A_i$;
    in particular, the subsequent pair $(x+1,y+1)\in \A_i$ satisfies $x+1=i$.

    \newsavebox{\imagebox}

    \begin{figure}[ht]
        \centering
        \begin{subfigure}[b]{0.4\textwidth}
            \centering\usebox{\imagebox}
            \tikzset{every picture/.style={line width=0.75pt}} 

\begin{tikzpicture}[x=0.75pt,y=0.75pt,yscale=-.85,xscale=.85]

\draw [color={rgb, 255:red, 65; green, 117; blue, 5 }  ,draw opacity=1 ]   (121.75,120.59) -- (112.75,137.09) ;
\draw [color={rgb, 255:red, 65; green, 117; blue, 5 }  ,draw opacity=1 ]   (129.25,119.84) -- (120.25,136.59) ;
\draw [color={rgb, 255:red, 74; green, 144; blue, 226 }  ,draw opacity=1 ] [dash pattern={on 2.25pt off 1.5pt on 2.25pt off 1.5pt}]  (139.75,120.34) -- (130.75,137.09) ;
\draw [color={rgb, 255:red, 74; green, 144; blue, 226 }  ,draw opacity=1 ] [dash pattern={on 2.25pt off 1.5pt on 2.25pt off 1.5pt}]  (139.75,120.34) -- (140.75,140.59) ;
\draw [color={rgb, 255:red, 74; green, 144; blue, 226 }  ,draw opacity=1 ] [dash pattern={on 2.25pt off 1.5pt on 2.25pt off 1.5pt}]  (139.75,120.34) -- (146.25,139.09) ;
\draw [color={rgb, 255:red, 74; green, 144; blue, 226 }  ,draw opacity=1 ] [dash pattern={on 2.25pt off 1.5pt on 2.25pt off 1.5pt}]  (139.75,120.34) -- (154.75,138.09) ;
\draw [color={rgb, 255:red, 242; green, 68; blue, 59 }  ,draw opacity=1 ]   (139.75,120.34) -- (164.75,137.09) ;
\draw [color={rgb, 255:red, 242; green, 68; blue, 59 }  ,draw opacity=1 ]   (149.75,120.84) -- (174.75,137.59) ;
\draw [color={rgb, 255:red, 242; green, 68; blue, 59 }  ,draw opacity=1 ]   (157.25,120.84) -- (182.25,137.59) ;
\draw [color={rgb, 255:red, 242; green, 68; blue, 59 }  ,draw opacity=1 ]   (166.75,120.34) -- (191.75,137.09) ;
\draw [color={rgb, 255:red, 74; green, 144; blue, 226 }  ,draw opacity=1 ] [dash pattern={on 2.25pt off 1.5pt on 2.25pt off 1.5pt}]  (179.75,120.34) -- (201.25,137.59) ;
\draw [color={rgb, 255:red, 65; green, 117; blue, 5 }  ,draw opacity=1 ]   (201.25,137.59) -- (186.75,120.09) ;
\draw [color={rgb, 255:red, 65; green, 117; blue, 5 }  ,draw opacity=1 ]   (208.25,137.84) -- (193.75,120.34) ;
\draw [color={rgb, 255:red, 65; green, 117; blue, 5 }  ,draw opacity=1 ]   (121.75,221.59) -- (112.75,238.09) ;
\draw [color={rgb, 255:red, 65; green, 117; blue, 5 }  ,draw opacity=1 ]   (129.25,220.84) -- (120.25,237.59) ;
\draw [color={rgb, 255:red, 245; green, 166; blue, 35 }  ,draw opacity=1 ]   (139.75,221.34) -- (130.75,238.09) ;
\draw [color={rgb, 255:red, 189; green, 16; blue, 224 }  ,draw opacity=1 ] [dash pattern={on 2.25pt off 1.5pt on 2.25pt off 1.5pt}]  (175.39,220.13) -- (176.39,240.38) ;
\draw [color={rgb, 255:red, 189; green, 16; blue, 224 }  ,draw opacity=1 ] [dash pattern={on 2.25pt off 1.5pt on 2.25pt off 1.5pt}]  (175.39,220.13) -- (182.79,239.84) ;
\draw [color={rgb, 255:red, 189; green, 16; blue, 224 }  ,draw opacity=1 ] [dash pattern={on 2.25pt off 1.5pt on 2.25pt off 1.5pt}]  (175.39,220.13) -- (190.39,237.88) ;
\draw [color={rgb, 255:red, 74; green, 144; blue, 226 }  ,draw opacity=1 ] [dash pattern={on 2.25pt off 1.5pt on 2.25pt off 1.5pt}]  (175.39,220.13) -- (201.39,240.88) ;
\draw [color={rgb, 255:red, 65; green, 117; blue, 5 }  ,draw opacity=1 ]   (201.25,238.59) -- (186.75,221.09) ;
\draw [color={rgb, 255:red, 65; green, 117; blue, 5 }  ,draw opacity=1 ]   (208.25,238.84) -- (193.75,221.34) ;
\draw [color={rgb, 255:red, 245; green, 166; blue, 35 }  ,draw opacity=1 ]   (149.04,221.63) -- (140.04,238.38) ;
\draw [color={rgb, 255:red, 245; green, 166; blue, 35 }  ,draw opacity=1 ]   (156.61,221.77) -- (147.61,238.52) ;
\draw [color={rgb, 255:red, 245; green, 166; blue, 35 }  ,draw opacity=1 ]   (165.32,222.2) -- (156.32,238.95) ;
\draw [color={rgb, 255:red, 189; green, 16; blue, 224 }  ,draw opacity=1 ] [dash pattern={on 2.25pt off 1.5pt on 2.25pt off 1.5pt}]  (175.39,220.13) -- (166.21,239.7) ;


\draw (86.67,100.33) node [anchor=north west][inner sep=0.75pt]   [align=left][font=\small] {{\fontfamily{pcr}\selectfont \textbf{\textcolor[rgb]{0.31,0.89,0.76}{...}\textcolor[rgb]{0.29,0.29,0.29}{\{\}\{}\textcolor[rgb]{0.61,0.61,0.61}{()}\textcolor[rgb]{0.29,0.29,0.29}{\}}\textcolor[rgb]{0.61,0.61,0.61}{(}\textcolor[rgb]{0.29,0.29,0.29}{\{\}}\textcolor[rgb]{0.31,0.89,0.76}{....}}}};
\draw (86,138.33) node [anchor=north west][inner sep=0.75pt]   [align=left][font=\small] {{\fontfamily{pcr}\selectfont \textbf{\textcolor[rgb]{0.72,0.91,0.53}{..}\textcolor[rgb]{0.29,0.29,0.29}{\{\}\{}\textcolor[rgb]{0.61,0.61,0.61}{()}\textcolor[rgb]{0.29,0.29,0.29}{\}\{}\textcolor[rgb]{0.61,0.61,0.61}{()}\textcolor[rgb]{0.29,0.29,0.29}{\}\{\}}\textcolor[rgb]{0.72,0.91,0.53}{..}}}};
\draw (48.67,98.68) node [anchor=north west][inner sep=0.75pt]    {$\mathcal{A}_{i-1}$};
\draw (241,99.34) node [anchor=north west][inner sep=0.75pt]  [color={rgb, 255:red, 80; green, 227; blue, 194 }  ,opacity=1 ]  {$\textcolor[rgb]{0,0,0}{P_{\lambda }(} F\textcolor[rgb]{0,0,0}{)}$};
\draw (239.5,136.34) node [anchor=north west][inner sep=0.75pt]  [color={rgb, 255:red, 80; green, 227; blue, 194 }  ,opacity=1 ]  {$\textcolor[rgb]{0,0,0}{P_{\lambda }(}\textcolor[rgb]{0.72,0.91,0.53}{G}\textcolor[rgb]{0,0,0}{)}$};
\draw (140,94) node [anchor=mid][inner sep=0.75pt]  [font=\tiny]  {$x$};
\draw (165,94) node [anchor=mid][inner sep=0.75pt]  [font=\tiny]  {$\bx$};
\draw (180,94) node [anchor=mid][inner sep=0.75pt]  [font=\tiny]  {$x_q$};
\draw (130,159) node [anchor=mid][inner sep=0.75pt]  [font=\tiny]  {$y$};
\draw (155,159) node [anchor=mid][inner sep=0.75pt]  [font=\tiny]  {$\by$};
\draw (205,159) node [anchor=mid][inner sep=0.75pt]  [font=\tiny]  {$y_q$};
\draw (86.67,202.33) node [anchor=north west][inner sep=0.75pt]   [align=left][font=\small] {{\fontfamily{pcr}\selectfont \textbf{\textcolor[rgb]{0.31,0.89,0.76}{...}\textcolor[rgb]{0.29,0.29,0.29}{\{\}\{}\textcolor[rgb]{0.61,0.61,0.61}{()}\textcolor[rgb]{0.29,0.29,0.29}{\}}\textcolor[rgb]{0.61,0.61,0.61}{(}\textcolor[rgb]{0.29,0.29,0.29}{\{\}}\textcolor[rgb]{0.31,0.89,0.76}{....}}}};
\draw (86,238.33) node [anchor=north west][inner sep=0.75pt]   [align=left][font=\small] {{\fontfamily{pcr}\selectfont \textbf{\textcolor[rgb]{0.72,0.91,0.53}{..}\textcolor[rgb]{0.29,0.29,0.29}{\{\}\{}\textcolor[rgb]{0.61,0.61,0.61}{()}\textcolor[rgb]{0.29,0.29,0.29}{\}\{}\textcolor[rgb]{0.61,0.61,0.61}{()}\textcolor[rgb]{0.29,0.29,0.29}{\}\{\}}\textcolor[rgb]{0.72,0.91,0.53}{..}}}};
\draw (56.67,198.68) node [anchor=north west][inner sep=0.75pt]    {$\mathcal{A}_{i}$};
\draw (140,195) node [anchor=mid][inner sep=0.75pt]  [font=\tiny]  {$x$};
\draw (165,195) node [anchor=mid][inner sep=0.75pt]  [font=\tiny]  {$\bx$};
\draw (180,195) node [anchor=mid][inner sep=0.75pt]  [font=\tiny]  {$x_q$};
\draw (130,260) node [anchor=mid][inner sep=0.75pt]  [font=\tiny]  {$y$};
\draw (155,260) node [anchor=mid][inner sep=0.75pt]  [font=\tiny]  {$\by$};
\draw (205,260) node [anchor=mid][inner sep=0.75pt]  [font=\tiny]  {$y_q$};
\draw (238.3,236.74) node [anchor=north west][inner sep=0.75pt]  [color={rgb, 255:red, 80; green, 227; blue, 194 }  ,opacity=1 ]  {$\textcolor[rgb]{0,0,0}{P}\textcolor[rgb]{0,0,0}{_{\lambda }}\textcolor[rgb]{0,0,0}{(}\textcolor[rgb]{0.72,0.91,0.53}{G}\textcolor[rgb]{0,0,0}{)}$};
\draw (239.8,199.74) node [anchor=north west][inner sep=0.75pt]  [color={rgb, 255:red, 80; green, 227; blue, 194 }  ,opacity=1 ]  {$\textcolor[rgb]{0,0,0}{P}\textcolor[rgb]{0,0,0}{_{\lambda }}\textcolor[rgb]{0,0,0}{(} F\textcolor[rgb]{0,0,0}{)}$};

\end{tikzpicture}
            \caption{Case 1}
        \end{subfigure}    
        \begin{subfigure}[b]{0.4\textwidth}
            \centering\raisebox{\dimexpr\ht\imagebox-\height}{
            \tikzset{every picture/.style={line width=0.75pt}} 

\begin{tikzpicture}[x=0.75pt,y=0.75pt,yscale=-.85,xscale=.85]

\draw [color={rgb, 255:red, 65; green, 117; blue, 5 }  ,draw opacity=1 ]   (432.25,119.59) -- (432.25,137.09) ;
\draw [color={rgb, 255:red, 65; green, 117; blue, 5 }  ,draw opacity=1 ]   (439.75,118.84) -- (439.75,137.09) ;
\draw [color={rgb, 255:red, 74; green, 144; blue, 226 }  ,draw opacity=1 ] [dash pattern={on 2.25pt off 1.5pt on 2.25pt off 1.5pt}]  (460.25,119.84) -- (461.25,140.09) ;
\draw [color={rgb, 255:red, 242; green, 68; blue, 59 }  ,draw opacity=1 ]   (477.25,119.34) -- (492.75,136.59) ;
\draw [color={rgb, 255:red, 74; green, 144; blue, 226 }  ,draw opacity=1 ] [dash pattern={on 2.25pt off 1.5pt on 2.25pt off 1.5pt}]  (490.25,119.34) -- (502.75,136.09) ;
\draw [color={rgb, 255:red, 65; green, 117; blue, 5 }  ,draw opacity=1 ]   (502.75,136.09) -- (497.25,119.09) ;
\draw [color={rgb, 255:red, 65; green, 117; blue, 5 }  ,draw opacity=1 ]   (511.75,136.59) -- (504.25,119.34) ;
\draw [color={rgb, 255:red, 208; green, 2; blue, 27 }  ,draw opacity=1 ]   (450.25,119.34) -- (450.25,137.59) ;
\draw [color={rgb, 255:red, 65; green, 117; blue, 5 }  ,draw opacity=1 ]   (460.25,119.84) -- (466.75,138.59) ;
\draw [color={rgb, 255:red, 65; green, 117; blue, 5 }  ,draw opacity=1 ]   (467.75,119.84) -- (474.25,138.59) ;
\draw [color={rgb, 255:red, 74; green, 144; blue, 226 }  ,draw opacity=1 ] [dash pattern={on 2.25pt off 1.5pt on 2.25pt off 1.5pt}]  (477.25,119.34) -- (483.75,138.09) ;
\draw [color={rgb, 255:red, 65; green, 117; blue, 5 }  ,draw opacity=1 ]   (433.25,221.09) -- (433.25,238.59) ;
\draw [color={rgb, 255:red, 65; green, 117; blue, 5 }  ,draw opacity=1 ]   (440.75,220.34) -- (440.75,238.59) ;
\draw [color={rgb, 255:red, 74; green, 144; blue, 226 }  ,draw opacity=1 ] [dash pattern={on 2.25pt off 1.5pt on 2.25pt off 1.5pt}]  (491.25,220.84) -- (503.75,237.59) ;
\draw [color={rgb, 255:red, 65; green, 117; blue, 5 }  ,draw opacity=1 ]   (503.75,237.59) -- (498.25,220.59) ;
\draw [color={rgb, 255:red, 65; green, 117; blue, 5 }  ,draw opacity=1 ]   (512.75,238.09) -- (505.25,220.84) ;
\draw [color={rgb, 255:red, 65; green, 117; blue, 5 }  ,draw opacity=1 ]   (461.25,221.34) -- (467.75,240.09) ;
\draw [color={rgb, 255:red, 65; green, 117; blue, 5 }  ,draw opacity=1 ]   (468.75,221.34) -- (475.25,240.09) ;
\draw [color={rgb, 255:red, 245; green, 166; blue, 35 }  ,draw opacity=1 ]   (478.25,220.84) -- (484.75,239.59) ;
\draw [color={rgb, 255:red, 189; green, 16; blue, 224 }  ,draw opacity=1 ] [dash pattern={on 2.25pt off 1.5pt on 2.25pt off 1.5pt}]  (450.75,220.34) -- (450.75,239.09) ;
\draw [color={rgb, 255:red, 245; green, 166; blue, 35 }  ,draw opacity=1 ]   (450.75,220.34) -- (457.25,239.09) ;
\draw [color={rgb, 255:red, 189; green, 16; blue, 224 }  ,draw opacity=1 ] [dash pattern={on 2.25pt off 1.5pt on 2.25pt off 1.5pt}]  (491.25,220.84) -- (493.25,237.59) ;

\draw (397.17,98.33) node [anchor=north west][inner sep=0.75pt]   [align=left][font=\small] {{\fontfamily{pcr}\selectfont \textbf{\textcolor[rgb]{0.31,0.89,0.76}{...}\textcolor[rgb]{0.29,0.29,0.29}{\{\}\{}\textcolor[rgb]{0.61,0.61,0.61}{()}\textcolor[rgb]{0.29,0.29,0.29}{\}}\textcolor[rgb]{0.61,0.61,0.61}{(}\textcolor[rgb]{0.29,0.29,0.29}{\{\}}\textcolor[rgb]{0.31,0.89,0.76}{....}}}};
\draw (396.5,138.33) node [anchor=north west][inner sep=0.75pt]   [align=left][font=\small] {{\fontfamily{pcr}\selectfont \textbf{\textcolor[rgb]{0.72,0.91,0.53}{...}\textcolor[rgb]{0.29,0.29,0.29}{\{\}\{\{}\textcolor[rgb]{0.61,0.61,0.61}{()}\textcolor[rgb]{0.29,0.29,0.29}{\}\}\{\}}\textcolor[rgb]{0.72,0.91,0.53}{...}}}};
\draw (359.17,97.68) node [anchor=north west][inner sep=0.75pt]    {$\mathcal{A}_{i-1}$};
\draw (475,94) node [anchor=mid][inner sep=0.75pt]  [font=\tiny]  {$x$};
\draw (450,94) node [anchor=mid][inner sep=0.75pt]  [font=\tiny]  {$\bx$};
\draw (487.5,94) node [anchor=mid][inner sep=0.75pt]  [font=\tiny]  {$x_q$};
\draw (450,159) node [anchor=mid][inner sep=0.75pt]  [font=\tiny]  {$y_{p}$};
\draw (460,159) node [anchor=mid][inner sep=0.75pt]  [font=\tiny]  {$\by$};
\draw (485,159) node [anchor=mid][inner sep=0.75pt]  [font=\tiny]  {$y$};
\draw (507.5,159) node [anchor=mid][inner sep=0.75pt]  [font=\tiny]  {$y_q$};
\draw (398.17,201.83) node [anchor=north west][inner sep=0.75pt]   [align=left] [font=\small] {{\fontfamily{pcr}\selectfont \textbf{\textcolor[rgb]{0.31,0.89,0.76}{...}\textcolor[rgb]{0.29,0.29,0.29}{\{\}\{}\textcolor[rgb]{0.61,0.61,0.61}{()}\textcolor[rgb]{0.29,0.29,0.29}{\}}\textcolor[rgb]{0.61,0.61,0.61}{(}\textcolor[rgb]{0.29,0.29,0.29}{\{\}}\textcolor[rgb]{0.31,0.89,0.76}{....}}}};
\draw (397.5,238.83) node [anchor=north west][inner sep=0.75pt]   [align=left] [font=\small] {{\fontfamily{pcr}\selectfont \textbf{\textcolor[rgb]{0.72,0.91,0.53}{...}\textcolor[rgb]{0.29,0.29,0.29}{\{\}\{\{}\textcolor[rgb]{0.61,0.61,0.61}{()}\textcolor[rgb]{0.29,0.29,0.29}{\}\}\{\}}\textcolor[rgb]{0.72,0.91,0.53}{...}}}};
\draw (368.17,198.18) node [anchor=north west][inner sep=0.75pt]    {$\mathcal{A}_{i}$};
\draw (450,195) node [anchor=mid][inner sep=0.75pt]  [font=\tiny]  {$\bx$};
\draw (487.5,195) node [anchor=mid][inner sep=0.75pt]  [font=\tiny]  {$x_q$};
\draw (475,195) node [anchor=mid][inner sep=0.75pt]  [font=\tiny]  {$x$};
\draw (450,260) node [anchor=mid][inner sep=0.75pt]  [font=\tiny]  {$y_{p}$};
\draw (460,260) node [anchor=mid][inner sep=0.75pt]  [font=\tiny]  {$\by$};
\draw (485,260) node [anchor=mid][inner sep=0.75pt]  [font=\tiny]  {$y$};
\draw (507.5,260) node [anchor=mid][inner sep=0.75pt]  [font=\tiny]  {$y_q$};
\draw (550,99.34) node [anchor=north west][inner sep=0.75pt]  [color={rgb, 255:red, 80; green, 227; blue, 194 }  ,opacity=1 ]  {$\textcolor[rgb]{0,0,0}{P}\textcolor[rgb]{0,0,0}{_{\lambda }}\textcolor[rgb]{0,0,0}{(} F\textcolor[rgb]{0,0,0}{)}$};
\draw (548.5,136.34) node [anchor=north west][inner sep=0.75pt]  [color={rgb, 255:red, 80; green, 227; blue, 194 }  ,opacity=1 ]  {$\textcolor[rgb]{0,0,0}{P}\textcolor[rgb]{0,0,0}{_{\lambda }}\textcolor[rgb]{0,0,0}{(}\textcolor[rgb]{0.72,0.91,0.53}{G}\textcolor[rgb]{0,0,0}{)}$};
\draw (547.3,236.74) node [anchor=north west][inner sep=0.75pt]  [color={rgb, 255:red, 80; green, 227; blue, 194 }  ,opacity=1 ]  {$\textcolor[rgb]{0,0,0}{P}\textcolor[rgb]{0,0,0}{_{\lambda }}\textcolor[rgb]{0,0,0}{(}\textcolor[rgb]{0.72,0.91,0.53}{G}\textcolor[rgb]{0,0,0}{)}$};
\draw (548.8,199.74) node [anchor=north west][inner sep=0.75pt]  [color={rgb, 255:red, 80; green, 227; blue, 194 }  ,opacity=1 ]  {$\textcolor[rgb]{0,0,0}{P}\textcolor[rgb]{0,0,0}{_{\lambda }}\textcolor[rgb]{0,0,0}{(} F\textcolor[rgb]{0,0,0}{)}$};

\end{tikzpicture}}
            \caption{Case 2}
        \end{subfigure}
        \caption{Example of the construction of alignment $\A_i$ from alignment $\A_{i-1}$}
        \label{fig:my_label}
    \end{figure}
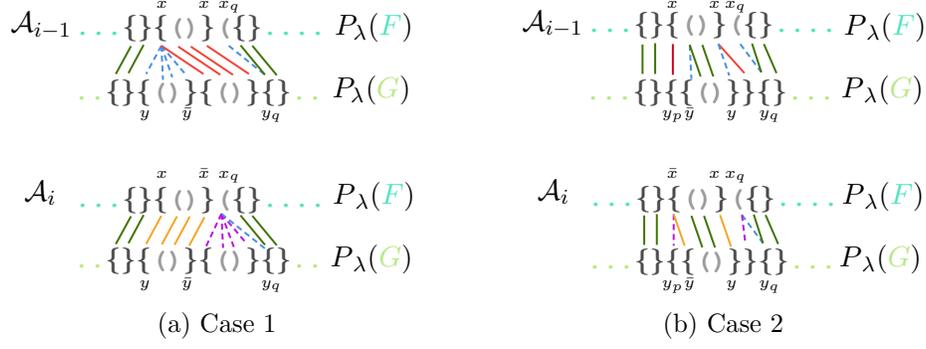
    
    \noindent
    \textbf{Case 2.}
    Next, we consider the case where $(x,y)=(c_\F(u),c_\G(v))$ for some $(u,v)\in V_\F\times V_\G$. Define $(\bx,\by)=(o_\F(u),o_\G(v))$
    and note that, due to $\str[\hlab]{\F}[x]= \str[\hlab]{\G}[y]$ and $\hlab = \look{\lab}{h}$,
    we have $\str[\hlab]{\F}[\bx\dd x] = \str[\hlab]{\G}[\by\dd y]$.
    Define $q = \min\{t : x_t > x \text{ and } y_t > y\}$
    and $p = \max\{t : x_t \le \bx \text{ and } y_t \le \by\}$.
    We create $\A_i$ so that it:
    \begin{itemize}
        \item aligns $\str[\hlab]{\F}[0\dd x_{p})$ against $\str[\hlab]{\G}[0\dd y_{p})$ in the same way as $\A_{i-1}$ did,
        \item deletes $\str[\hlab]{\F}[x_{p}\dd \bx)$ and inserts $\str[\hlab]{\G}[y_{p}\dd \by)$ (at least one of these fragments is empty),
        \item matches $\str[\hlab]{\F}[\bx\dd x]$ against $\str[\hlab]{\G}[\by\dd y]$,
        \item deletes $\str[\hlab]{\F}(x\dd x_q)$ and inserts $\str[\hlab]{\G}(y\dd y_q)$ (at least one of these fragments is empty),
        \item aligns $\str[\hlab]{\F}[x_q\dd 2|\F|)$ against $\str[\hlab]{\G}[y_q\dd 2|\G|)$ in the same way as $\A_{i-1}$ did.
     \end{itemize}
    Formally, $\A_i$ is defined as follows:    
    \begin{align*}
       \A_{i} =  (x_t,y_t)_{t\in[0\dd p)}&\odot (\hat{x},\by)_{\hat{x}\in [x_{p}\dd \bx)} \odot (\bx,\hat{y})_{\hat{y}\in [y_{p}\dd \by)} \odot (\bx+\delta,\by+\delta)_{\delta\in [0\dd x-\bx]}\\             &\odot (\hat{x},y_q)_{\hat{x} \in (x\dd x_q)} \odot (x_q,\hat{y})_{\hat{y} \in (y\dd y_q)} \odot (x_t,y_t)_{t\in [q\dd m]}.
       \end{align*}
    
    First, we show $\A_i\in \ta(\F,\G)$. For this, take a node $u'\in V_\F$ and consider several cases:
       \begin{itemize}
           \item{\boldmath$o_\F(u')\in [\bx\dd x]$ or $c_\F(u')\in [\bx\dd x]$.} In this case, $u'\in \Sub(u)$ and thus both $o_\F(u')\in [\bx\dd x]$ and $c_\F(u')\in [\bx\dd x]$. Moreover, due to $\str[\hlab]{\F}[\bx\dd x]\simeq_{\A_i} \str[\hlab]{\G}[\by\dd y]$, in this case, $\A_i$ matches $u'$ with the corresponding node of the subtree $\Sub(v)$ identical to $\Sub(u)$.
           \item {\boldmath$o_\F(u'),c_\F(u') \in [x_{p}\dd \bx)\cup (x\dd x_q)$}. In this case, $\A_i$ deletes both $\str[\hlab]{\F}[o_\F(u')]$ and $\str[\hlab]{\F}[c_\F(u')]$.
           \item {\boldmath$o_\F(u')\in [x_{p}\dd \bx)$ and $c_\F(u')\in [x_q\dd 2|\F|)$.} In this case, $\A_i$ deletes $\str[\hlab]{\F}[o_\F(u')]$ and handles $\str[\hlab]{\F}[c_\F(u')]$ in the same way as $\A_{i-1}$ did. Thus, we must prove that $\A_{i-1}$ deleted $u'$.
           For a proof by contradiction, suppose that $\A_{i-1}$ aligned $u'$ with some node $v'\in V_\G$. By the non-crossing property of $\A_{i-1}$,
           due to $(x_{p},y_{p}),(x,y),(x_q,y_q)\in \A_{i-1}$, we must have $o_\G(v')\in [y_{p}\dd y)$ and $c_\G(v')\in [y_q\dd 2|\G|)$. Moreover, since $[x_{p}\dd \bx)\ne \emptyset$, we must have $[y_{p}\dd \by)=\emptyset$, which means that $o_\G(v')\in [\by\dd y)=[y_{p}\dd y)$. Consequently, $v'\in \Sub(v)$ and hence $c_\G(v')\in [s\dd y]=[y_{p}\dd y]$; this contradicts  $c_\G(v')\in [y_q\dd 2|\G|)$ due to $y_q>y$.
           \item {\boldmath$o_\F(u')\in (x\dd x_q)$ and $c_\F(u')\in [x_q\dd 2|\F|)$.} In this case, $\A_i$ deletes $\str[\hlab]{\F}[o_\F(u')]$ and handles $\str[\hlab]{\F}[c_\F(u')]$ in the same way as $\A_{i-1}$ did. Thus, we must prove that $\A_{i-1}$ deleted $u'$.
           For a proof by contradiction, suppose that $\A_{i-1}$ aligned $u'$ with some node $v'\in V_\G$. By the non-crossing property of $\A_{i-1}$,
           due to $(x_{p},y_{p}),(x,y),(x_q,y_q)\in \A_{i-1}$, we must have $o_\G(v')\in [y\dd y_q)$ and $c_\G(v')\in [y_q\dd 2|\G|)$. Moreover, since $(x\dd x_q)\ne \emptyset$, we must have $(y\dd y_q)=\emptyset$,
           which means that $o_\G(v')=y$; this is a contradiction because $y=c_\G(v)$.
           \item {\boldmath$o_\F(u')\in [0\dd x_{p})$ and $c_\F(u')\in [x_{p}\dd \bx)$.} In this case, $\A_i$ deletes $\str[\hlab]{\F}[c_\F(u')]$ and handles $\str[\hlab]{\F}[o_\F(u')]$ in the same way as $\A_{i-1}$ did. Thus, we must prove that $\A_{i-1}$ deleted $u'$.
           For a proof by contradiction, suppose that $\A_{i-1}$ aligned $u'$ with some node $v'\in V_\G$. By the non-crossing property of $\A_{i-1}$,
           due to $(x_{p},y_{p}),(x,y),(x_q,y_q)\in \A_{i-1}$, we must have $o_\G(v')\in [0\dd y_{p})$ and $c_\G(v')\in [y_{p}\dd y)$. Moreover, since $[x_{p}\dd \bx)\ne \emptyset$, we must have $[y_q\dd \by)=\emptyset$,
           which means that $o_\G(v')\in [\by\dd y)=[y_p\dd y)$. Consequently, $v'\in \Sub(v)$ and hence $o_\G(v')\in [\by \dd y]=[y_{p}\dd y]$;
           this contradicts $o_\G(v')\in [0\dd y_{p})$.
           \item {\boldmath$o_\F(u')\in [0\dd x_p)$ and $c_\F(u')\in (x\dd x_q)$.} In this case, $\A_i$ deletes $\str[\hlab]{\F}[c_\F(u')]$ and handles $\str[\hlab]{\F}[o_\F(u')]$ in the same way as $\A_{i-1}$ did. Thus, we must prove that $\A_{i-1}$ deleted $u'$.
           For a proof by contradiction, suppose that $\A_{i-1}$ aligned $u'$ with some node $v'\in V_\G$. By the non-crossing property of $\A_{i-1}$,
           due to $(x_{p},y_{p}),(x,y),(x_q,y_q)\in \A_{i-1}$, we must have $o_\G(v')\in [0\dd y_{p})$ and $c_\G(v')\in [y\dd y_q)$. Moreover, since $(x\dd x_q)\ne \emptyset$, we must have $(y\dd y_q)=\emptyset$,
           which means that $c_\G(v')=y$. Consequently, $v=v'$ and hence $o_\G(v')=\by$; this contradicts $o_\G(v')\in [0\dd y_{p})$ due to $y_{p}\le \by$.
           \item {\boldmath$o_\F(u'),c_\F(u')\in [0\dd x_{p})\cup [x_q\dd 2|\F|)$.} In this case, $\A_i$ handles $u'$ in the same way as $\A_{i-1}$ did.
       \end{itemize}
       This case analysis above is exhaustive and, in all cases, the two parentheses corresponding to $u'$ are handled consistently.
       Consequently, we indeed have $\A_i\in \ta(\F,\G)$.

       Next we argue that $\ted_{\lambda,\A_i}(\F,\G)\le \ted_{\lambda,\A_{i-1}}(\F,\G)$
       or, equivalently, $\ed_{\A_i}(\str[\lab]{\F},\str[\lab]{G})\le \ed_{\A_{i-1}}(\str[\lab]{\F},\str[\lab]{G})$.
       The two alignments differ only in how on $\str[\lab]{\F}[x_{p}\dd x_q)$ and $\str[\lab]{\G}[y_{p}\dd y_q)$.
       The contribution of these fragments to the cost of $\A_i:\str[\lab]{\F}\leadsto \str[\lab]{\G}$ equals $|[x_{p}\dd \bx)|+|[y_{p}\dd \by)|+|(x\dd x_q)|+|(y\dd y_q)| =  \big| |[x_{p}\dd \bx)|-|[y_{p}\dd \by)| \big|+ \big| |(x\dd x_q)|-|(y\dd y_q)| \big|
    = \big||[x_{p}\dd x)-|[y_{p}\dd y)|\big|+\big||[x\dd x_q)-|[y\dd y_q)|\big|$. Due to $(x_{p},y_{p})(x,y),(x_q,y_q)\in \A_{i-1}$,  the contribution of these fragments to the cost of $\A_{i-1}:\str[\lab]{\F}\leadsto \str[\lab]{\G}$ must be at least that large.

    Finally, we shall prove that $\A_i$ is greedy up to index $i-1$.
    For breakpoints to the left of $(x_p,y_p)$, this follows by definition of $(x,y)$.
    Moreover, none of the pairs $(\bx+\delta,\by+\delta)_{\delta\in [0\dd x-\bx]}$ is a breakpoint and, in particular, the subsequent pair $(x+1,y+1)\in \A_i$ satisfies $x+1=i$.
    Hence, it suffices to consider breakpoints of the form $(\hat{x},\by)$ for some $\hat{x}\in [x_{p}\dd \bx)$ or $(\bx,\hat{y})$ for some ${\hat{y}\in [y_{p}\dd \by)}$.
    These two cases are symmetric, so let us consider $(\bx,\hat{y})$ for some $\hat{y}\in [y_{p}\dd \by)$.
    For a proof by contradiction, suppose that $\str[\hlab]{\F}[\bx]=\str[\hlab]{\G}[\hat{y}]$; consequently, $\str[\hlab]{\F}[\bx\dd x]=\str[\hlab]{\G}[\hat{y} \dd \hat{y}+x-\bx]$
    because $\hlab = \look{\lab}{h}$.
    Moreover, due to $[y_{p}\dd \by)\ne \emptyset$, so we must have $[x_p \dd \bx)=\emptyset$, i.e., $x_p=\bx$.
    Define $r = \min\{t \in [p\dd q) : y_t - x_t \ge \hat{y}-\bx\}$.
    Note that $y_p - x_p \le \hat{y}-\bx$ and $y-x = \by - \bx > \hat{y}-\bx$;
    hence, $x_r \in [x_p\dd x]=[\bx\dd x]$ and $y_r - x_r = \hat{y}-\bx$.
    Since all breakpoints to the left of $(x,y)$ satisfy the greedy property and since  $\str[\hlab]{\F}[\bx\dd x]=\str[\hlab]{\G}[\hat{y} \dd \hat{y}+x-\bx]$,
    we conclude that $\str[\hlab]{\F}[x_r\dd x]\simeq_{\A_{i-1}}\str[\hlab]{\G}[y_r\dd \hat{y}+x-\bx]$.
    In particular, $\str[\hlab]{\F}[x]\simeq_{\A_{i-1}} \str[\hlab]{\G}[\hat{y}+x-\bx]$, which contradicts $(x,y)\in \A_{i-1}$ by the non-crossing property of $\A_{i-1}$ because $\hat{y}+x-\bx < \by + x - \bx = y$.

    \vspace{3mm}
    
    After $2|\F|$ rounds, we construct a tree alignment $\A_{2|\F|}\in \ta(\F,\G)$  such that $\ted_{\lambda,\A_{2|\F|}}(F,G)=\ted_{\lambda,\A}(F,G)$ and  $\str[\hlab]{\F}[x]\neq \str[\hlab]{\G}[y]$ holds for every $(x,y)\in \brkp_{\A_{2|\F|}}(\str[\hlab]{\F},\str[\hlab]{\G}) \cap ([0\dd 2|\F|-1]\times [0\dd 2|\G|))$. Thus, $\A_{2|\F|}\in \ta(\F,\G)$ is an optimum tree alignment and $\A_{2|\F|}\in \ga(\str[\hlab]{\F},\str[\hlab]{\G})$.
    \end{proof}

\subsection{Compatibility Refinement}

\begin{definition}[Compatibility]
    Let $\lab$ be a joint labeling of forests $\F,\G$, and let $w\in \Zz$.
    We say that nodes $u\in V_{\F}$ and $v\in V_{\G}$ are $w$-\emph{compatible}
    if $\lab(u)=\lab(v)$, $|o_\F(u)-o_\G(v)|\le w$, and $|c_\F(u)-c_\G(v)|\le w$.
    \end{definition}
    
    \newcommand{\hF}{\hat{F}}
    \newcommand{\hG}{\hat{G}}
    \newcommand{\bF}{\bar{F}}
    \newcommand{\bG}{\bar{G}}
    \newcommand{\bM}{\bar{M}}
    \newcommand{\hQ}{\hat{Q}}
    \newcommand{\CR}[2]{\mathsf{C}(#1,#2)}

    \begin{definition}[Compatibilty refinement]
    Let $\lab$ be a joint labeling of forests $\F,\G$, and let $w\in \Zz$.
    We say that $\lab'$ is a \emph{$w$-compatibility refinement} of $\lab$
    if, for any $u,v\in V_{\F}\cup V_{\G}$, we have $\lab'(u)=\lab'(v)$
    if and only if $(u,v)$ belongs to the transitive closure of the $w$-compatibility relation.
    We use $\CR{\lab}{w}$ to denote a \emph{$w$-compatibility refinement} of $\lab$ (chosen arbitrarily up to equivalence of labelings).
    \end{definition}
    
    \begin{lemma}\label{lem:buildC}
    Given forests $\F,\G$ of total size $n$, their joint labeling $\lab$,
    and an integer $w$, a $w$-compatibility refinement labeling $\lab' = \CR{\lab}{w}$ 
    can be constructed in $\Oh(n\log n)$ time.
    \end{lemma}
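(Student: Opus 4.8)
The plan is to recast the $w$-compatibility relation geometrically. To each node $u\in V_\F$ I associate the point $p(u)=(o_\F(u),c_\F(u))\in\Zz^2$, and to each $v\in V_\G$ the point $p(v)=(o_\G(v),c_\G(v))$. Then $u$ and $v$ are $w$-compatible exactly when $\lab(u)=\lab(v)$ and $\|p(u)-p(v)\|_\infty\le w$, so the classes of $\CR{\lab}{w}$ are precisely the connected components of the graph on $V_\F\cup V_\G$ whose edges are the $w$-compatible pairs. Since this graph only has edges between $\F$-nodes and $\G$-nodes of equal label, it suffices to handle each label $a$ separately: compute the connected components of the bipartite graph $H_a$ on the point sets $\{p(u):u\in V_\F,\ \lab(u)=a\}$ and $\{p(v):v\in V_\G,\ \lab(v)=a\}$ with edges joining points at $L_\infty$-distance at most $w$. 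First I would sort all nodes by label to form these buckets, and coordinate-compress the $o$- and $c$-values; this costs $\Oh(n\log n)$ and $\Oh(n)$ respectively, which is within budget. Finally, after computing the components, relabel every node by the identifier of its component.

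\textbf{Sweep-line algorithm.} Within one bucket I would process the points by a sweep over the first coordinate $o$, maintaining a balanced search tree $\mathcal{D}$ of the currently \emph{active} points, keyed by the second coordinate $c$ and split into an $\F$-part and a $\G$-part, together with a union--find structure over all points of the bucket. When a point $z$ enters (at $o$-time $o(z)$): (i) locate in $\mathcal{D}$ all opposite-side active points whose $c$-coordinate lies in $[c(z)-w\dd c(z)+w]$ and union $z$ with each of them (and hence with one another); (ii) insert $z$ into $\mathcal{D}$; (iii) expire every point whose $o$-coordinate has fallen below $o(z)-w$. Since every active point has $o$-coordinate within the last $w$ positions of one string, $\mathcal{D}$ holds $\Oh(w)$ points at all times, and the $c$-range query returns exactly the opposite-side points that are $w$-compatible with $z$ among those still within the $o$-window. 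Correctness of the partition follows because a $w$-compatible pair $(u,v)$ is discovered when processing whichever of $u,v$ has the larger $o$-coordinate: at that moment the other point is still inside the $o$-window and has $c$-coordinate within $w$, so the two are unioned. Transitivity is then automatic from the union--find.

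\textbf{Efficiency and the main obstacle.} A verbatim implementation of the above is only $\Oh(nw)$, because one event can touch $\Omega(w)$ active points, and re-touch many of them that are already in the same class. To reach $\Oh(n\log n)$ I would keep $\mathcal{D}$ in \emph{collapsed} form: whenever a union merges several active points of the same bucket into one class, replace them in $\mathcal{D}$ by a bounded set of representatives that still faithfully records, for every future query, whether that query reaches the class. Granting this, each union is charged either to a drop in the number of distinct active classes (at most $n$ over the whole sweep) or to a point expiry (at most $n$), giving $\Oh(n\log n)$ total with $\Oh(\log n)$ per tree operation. I expect the collapsing step to be the technical heart: because $w$-compatibility constrains \emph{both} coordinates, the ``reach'' of an active point $(o,c)$ onto future queries is a box $[c-w\dd c+w]\times[\text{now}\dd o+w]$, and one such box need not contain another, so the representatives that must be retained form a monotone staircase in the $(c,o)$-plane rather than a single point; one must show this staircase can be updated with $\Oh(\log n)$ amortized work per event so that the expiry window keeps it from accumulating, and one must check that collapsing never fuses two classes that should remain separate. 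This parallels classical $\Oh(n\log n)$ sweep-line algorithms for connected components of intersection graphs of equal-size axis-parallel boxes; here the laminarity of the parenthesis intervals (nested nodes have monotonically related $o$- and $c$-coordinates) is exactly the extra structure I would use to pin down which representatives in a merged block are redundant.
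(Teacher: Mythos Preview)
Your geometric reformulation is exactly right and matches the paper: nodes become points $(o(\cdot),c(\cdot))$, $w$-compatibility is $L_\infty$-proximity, and the refinement classes are connected components of the resulting bipartite proximity graph, processed one label bucket at a time.

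The gap is in your sweep. You correctly identify that the naive version is $\Oh(nw)$, and you propose to fix it by ``collapsing'' merged classes to a monotone staircase of representatives. But you do not actually carry this out: you say yourself that ``one must show this staircase can be updated with $\Oh(\log n)$ amortized work per event'' and that ``one must check that collapsing never fuses two classes that should remain separate.'' These are precisely the nontrivial points. A representative $(o,c)$ of a merged class is relevant for future queries via the box $[c-w\dd c+w]\times[\text{now}\dd o+w]$; within a single class these boxes are pairwise incomparable in general, so it is not obvious that the staircase stays small or can be maintained cheaply. The appeal to laminarity of parenthesis intervals is suggestive but not an argument: laminarity relates $o$ and $c$ \emph{within one forest}, whereas your edges go across forests and the merged class can contain points from both sides with no monotone relationship. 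As written, the proof is incomplete at exactly the step you flag as its technical heart.

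The paper sidesteps this entirely with a much simpler device: run a BFS over the compatibility graph using a \emph{dynamic orthogonal range reporting} structure (one per side). To expand a vertex $z$, query the opposite-side structure for all points in the box $[o(z)-w\dd o(z)+w]\times[c(z)-w\dd c(z)+w]$; every reported point is an unvisited neighbor, which you immediately mark visited and \emph{delete} from the structure before enqueuing it. Because each point is reported at most once over the whole run, the total output size across all queries is $\Oh(n)$, and with $\Oh(\log n)$ per insert/delete/report the bound is $\Oh(n\log n)$. This is the standard ``BFS with deletions'' trick for connected components of implicitly represented geometric graphs; it removes the need for any collapsing, staircases, or amortization argument. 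If you want to repair your approach, the cleanest fix is to abandon the sweep and adopt this BFS-with-deletion scheme.
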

    \begin{proof}
    We process nodes according to the original label.
    For each label $\ell$, we use a BFS-based algorithm to compute connected components with respect to the $w$-compatibility relation. While doing so, we maintain two instances of a data structure for dynamic orthogonal range reporting~\cite{M06},
    one with points $(o_\F(u),c_\F(u))$ for all unvisited nodes $u\in V_\F$ with label $\ell$,
    and one with points $(o_\G(v),c_\G(v))$ for all unvisited nodes $v\in V_\G$ with label $\ell$.
    While we do not have direct access to edges incident to any vertex, an orthogonal range reporting query allows
    listing all unvisited neighbors of any vertex. Each listed node is immediately visited (and thus the corresponding point is removed from the range reporting data structure), so, in total, we have $\Oh(n)$ queries producing output of total size $\Oh(n)$. Thus, the algorithm works in $\Oh(n\log n)$ time.
    \end{proof}
    
\begin{observation}\label{obs:boundCompatibility}
    Consider a joint labeling $\lab$ of forests $\F,\G$,
     an integer $w\in \Zz$, and an alignment $\A\in \ta(\F,\G)$ of width at most $w$.
     Then, $\ted_{\CR{\lab}{w},\A}(\F,\G)=\ted_{\lab,\A}(\F,\G)$.
\end{observation}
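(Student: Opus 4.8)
The plan is to prove the two inequalities $\ted_{\lab,\A}(\F,\G)\le\ted_{\CR{\lab}{w},\A}(\F,\G)$ and $\ted_{\CR{\lab}{w},\A}(\F,\G)\le\ted_{\lab,\A}(\F,\G)$ separately, writing $\lab':=\CR{\lab}{w}$. By \cref{def:treep}, both quantities equal $\tfrac12\ed_\A$ of the respective parentheses strings, and since $\lab'$ only relabels nodes (it does not change the underlying forests) the alignments $\A:\str[\lab]{\F}\onto\str[\lab]{\G}$ and $\A:\str[\lab']{\F}\onto\str[\lab']{\G}$ are literally the same index sequence $(x_t,y_t)_{t=0}^m$ reinterpreted over strings of the same lengths. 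Hence their deletions and insertions coincide, and the only thing that can differ is how the aligned pairs split into matches and substitutions. So it suffices to show that, for each $t$, $\A$ matches $\str[\lab]{\F}[x_t]$ with $\str[\lab]{\G}[y_t]$ if and only if it matches $\str[\lab']{\F}[x_t]$ with $\str[\lab']{\G}[y_t]$.

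The first direction is immediate from the fact that $\lab'$ is a refinement of $\lab$ (every chain of $w$-compatible pairs has constant $\lab$-value, so $\lab'(u)=\lab'(v)\Rightarrow\lab(u)=\lab(v)$). Indeed, if a pair of parentheses carries equal $\lab'$-labels it also carries equal $\lab$-labels and the same parenthesis type, so every pair matched by $\A$ under $\lab'$ is matched under $\lab$; thus $\ed_\A(\str[\lab']{\F},\str[\lab']{\G})\ge\ed_\A(\str[\lab]{\F},\str[\lab]{\G})$, giving $\ted_{\lab,\A}(\F,\G)\le\ted_{\lab',\A}(\F,\G)$.

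For the converse, suppose $\A$ matches $\str[\lab]{\F}[x_t]$ with $\str[\lab]{\G}[y_t]$. The common character is an opening or a closing parenthesis; by symmetry assume it is opening, so $x_t=o_\F(u)$ and $y_t=o_\G(v)$ for some $u\in V_\F$, $v\in V_\G$, and equality of the characters forces $\lab(u)=\lab(v)$. Since $\A\in\ta(\F,\G)$ does not delete $\str{\F}[o_\F(u)]$, the consistency condition of \cref{def:ta} forces $\A$ to also align $\str{\F}[c_\F(u)]$ with $\str{\G}[c_\G(v)]$. Now the width bound enters: $\A$ aligns position $o_\F(u)$ with position $o_\G(v)$, so $|o_\F(u)-o_\G(v)|=|x_t-y_t|\le w$, and likewise $|c_\F(u)-c_\G(v)|\le w$. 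Hence $u$ and $v$ are $w$-compatible, so $\lab'(u)=\lab'(v)$ by the definition of the $w$-compatibility refinement (a directly $w$-compatible pair already lies in its transitive closure), whence $\str[\lab']{\F}[x_t]=\str[\lab']{\G}[y_t]$. Thus every pair matched by $\A$ under $\lab$ is matched under $\lab'$, so $\ed_\A(\str[\lab]{\F},\str[\lab]{\G})\ge\ed_\A(\str[\lab']{\F},\str[\lab']{\G})$, i.e.\ $\ted_{\lab',\A}(\F,\G)\le\ted_{\lab,\A}(\F,\G)$. Combining the two inequalities gives the claim.

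There is essentially no obstacle here: this is a short bookkeeping argument with the definitions of refinement, width, and $w$-compatibility. The one point that needs a moment's care is pairing the opening and closing parentheses of an aligned node, so that both coordinate bounds $|o_\F(u)-o_\G(v)|\le w$ and $|c_\F(u)-c_\G(v)|\le w$ become available simultaneously — and that is exactly what the tree-alignment consistency condition of \cref{def:ta} supplies.
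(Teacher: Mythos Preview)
Your proof is correct and is exactly the argument the paper leaves implicit by stating this as an unproven \emph{Observation}: since $\CR{\lab}{w}$ refines $\lab$, one inequality is immediate, and for the other you use that any pair of nodes aligned by a width-$w$ tree alignment is directly $w$-compatible (via \cref{def:ta} to get both the opening and closing coordinate bounds), hence receives the same $\CR{\lab}{w}$-label. There is nothing to add.
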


\section{Periodicity Reduction in Strings} 
In this section, we analyze the structure of alignments $\A \in \ga_{k,w}(X,Y)$ 
under the assumption that $X,Y$ avoid certain periodic structure. 
This is loosely inspired by the proof of~\cite[Lemma III.10]{FOCS}.

\begin{definition}
    We say that a pattern $P$ has \emph{$s$-synchronized occurrences}
    in strings $X,Y$ if $P = X[x\dd x+|P|) = Y[y\dd y+|P|)$ holds for some positions $x,y$ satisfying $|x-y|\le s$.
\end{definition}
    
\begin{definition}
    Consider integers $s,e,\ell\in \Zp$.
    We say that strings $X,Y\in \Sigma^*$ \emph{avoid $s$-synchronized $e$-powers with root at most $\ell$}
    if there is no non-empty string $Q\in \Sigma^{\le \ell}$ such that $Q^e$ has $s$-synchronized occurrences in $X,Y$.
\end{definition}

\begin{lemma}\label{lem:sheep}
    Consider strings $X,Y$, integers $e,k,w\in \Zp$, and two alignments $\A,\A' \in \aa_{k,w}(X,Y)$.
    If $X,Y$ avoid $w$-synchronized $e$-powers with root at most $2w$, then $|\A \sm \A'|\le 7wke$.
    \end{lemma}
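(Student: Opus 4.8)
The plan is to charge every pair of $\mathcal{A}$ missing from $\mathcal{A}'$ either to a breakpoint of $\mathcal{A}$ or $\mathcal{A}'$, or to a short stretch that the forbidden periodicity rules out. First I would split the pairs of $\mathcal{A}$ into its breakpoints, of which there are $|\brkp_{\mathcal{A}}(X,Y)| = 1 + \ed_{\mathcal{A}}(X,Y) \le k+1$, and its matched pairs, which partition into maximal \emph{matching runs}: maximal blocks of consecutive matched diagonal steps. All pairs $(x,y)$ of such a run $R$ share the value $x - y =: \delta_R$ (so $|\delta_R| \le w$ since $\mathcal{A}$ has width at most $w$), the run occupies a column interval $[a_R \dd b_R)$, and it witnesses $X[a_R \dd b_R) = Y[a_R - \delta_R \dd b_R - \delta_R)$. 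Consequently $|\mathcal{A} \setminus \mathcal{A}'| \le (k+1) + \sum_R |R \setminus \mathcal{A}'|$, the sum ranging over matching runs $R$ of $\mathcal{A}$.

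The crux is the following periodicity claim: if a matching run $R$ of $\mathcal{A}$ on diagonal $\delta_R$ and a matching run $R'$ of $\mathcal{A}'$ on a diagonal $\delta_{R'} \neq \delta_R$ have overlapping column intervals, then the overlap $[c \dd d)$ (with $c = \max(a_R,a_{R'})$, $d = \min(b_R,b_{R'})$) satisfies $d - c < 2we$. Indeed, on the shared columns both runs match $X[c \dd d)$ to a fragment of $Y$, so $Y[c - \delta_R \dd d - \delta_R) = Y[c - \delta_{R'} \dd d - \delta_{R'})$, which makes $p := |\delta_R - \delta_{R'}| \in [1 \dd 2w]$ a period of a length-$(d-c)$ fragment of $Y$. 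If $d - c \ge 2we \ge pe$, then, writing $Q$ for the length-$p$ prefix of that fragment, $Q^e$ is a prefix of it, hence occurs in $Y$ at position $c - \delta_R$, and — pulling it back through $R$, since $X[c \dd d) = Y[c-\delta_R \dd d-\delta_R)$ — also in $X$ at position $c$; these occurrences are $w$-synchronized because $|\delta_R| \le w$, and the root has length $p \le 2w$, contradicting the hypothesis. Hence $d - c < 2we$.

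With the claim in hand, I classify each column $x \in [a_R \dd b_R)$ with $(x, x-\delta_R) \notin \mathcal{A}'$: the step of $\mathcal{A}'$ leaving column $x$ cannot be a matched diagonal step on diagonal $\delta_R$ (that would force $(x,x-\delta_R) \in \mathcal{A}'$), so it is either (a) a deletion or substitution step of $\mathcal{A}'$, or (b) a matched diagonal step of $\mathcal{A}'$ on some diagonal $\delta' \neq \delta_R$, in which case $x$ lies in the overlap of $R$ with the matching run $R'$ of $\mathcal{A}'$ carrying that step. Type-(a) columns, over all runs, inject into the deletion/substitution steps of $\mathcal{A}'$, so there are at most $\ed_{\mathcal{A}'}(X,Y) \le k$ of them. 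Type-(b) columns split by the pair $(R,R')$; by the claim each class has at most $2we - 1$ columns, and the number of pairs of overlapping runs is at most $2k+1$ (the $\le k+1$ matching runs of $\mathcal{A}'$ have interior-disjoint column intervals, and within each the matching runs of $\mathcal{A}$ it meets are separated by non-matching steps of $\mathcal{A}$, of which $\mathcal{A}$ has at most $k$ in total). Combining, $|\mathcal{A} \setminus \mathcal{A}'| \le (k+1) + k + (2we-1)(2k+1) = 4wke + 2we \le 7wke$, the last step using $k \ge 1$.

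The main obstacle is pinning down the periodicity claim in exactly this form — extracting a genuine $w$-synchronized $e$-power with root at most $2w$ from two matching runs on distinct diagonals — together with the bookkeeping needed to bound the number of overlapping run-pairs and to absorb the $O(k)$ error terms into the clean constant; in particular one must use the strict inequality $d - c < 2we$ rather than $\le 2we$ to land on $7wke$.
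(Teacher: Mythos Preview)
Your proof is correct and follows essentially the same approach as the paper. The paper partitions $X$ directly into the maximal fragments that \emph{both} $\A$ and $\A'$ match perfectly (together with the at most $2k$ columns carrying a deletion or substitution of either alignment), whereas you first decompose $\A$ into its matching runs and then intersect with the matching runs of $\A'$; these two bookkeeping schemes describe the same family of at most $2k+1$ overlap intervals, and both invoke the identical periodicity argument (overlap on distinct diagonals forces a period $\le 2w$, hence a $w$-synchronized $e$-power if the overlap has length $\ge 2we$). Your final arithmetic $4wke+2we$ is in fact one $k$ tighter than the paper's $4wke+2we+k$, but both are bounded by $7wke$.
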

    \begin{proof}
    Let us partition $X$ into individual characters representing deletions or substitutions of $\A$ or $\A'$
    and maximal fragments $X[x\dd x+\ell)$ that $\A$ and $\A'$ match perfectly: $X[x\dd x+\ell)\simeq_{\A} Y[y\dd y+\ell)$
    and $X[x\dd x+\ell)\simeq_{\A'} Y[y'\dd y'+\ell)$ for some $y,y'\in [0\dd |Y|-\ell]\cap [x-w\dd x+w]$.
    Observe that the number of such maximal fragments is at most $2k+1$ and their total length is at least $|X|-2k$.
    
    If $y= y'$, then $(x+\delta,y+\delta)\in \A \cap \A'$ holds for all $\delta\in [0\dd \ell]$.
    Otherwise, $P:=X[x\dd x+\ell)$ has $w$-synchronized occurrences in $X,Y$ (starting at positions $x,y$)
    and $\per(P) \le |y-y'| \le 2w$. Denote $Q:= X[x\dd x+\per(P))$ and observe that, if $\ell \ge 2we$, then $Q^e$ is a prefix of $P$ and thus also has $w$-synchronized occurrences in $X,Y$, contradicting our assumption.
    Hence,  $y\ne y'$ is only possible if $\ell < 2we$.
    In either case, $X[x\dd x+\ell)$ contributes at least $\ell+1-2we$ elements to $\A \cap \A'$.
    
    Overall, the fragments contribute at least $|X|-2k+(1-2we)(2k+1)\ge |X|-4wke -2we +1$ elements to $\A \cap \A'$.
    Consequently,
    $|\A\sm \A'| \le |\A|-|\A \cap \A'| \le |X|+1+k - (|X|-4wke -2we +1)
    = 4wke + 2we+k \le 7wke$.
    \end{proof}
    
    \begin{lemma}[{Compare~\cite[Lemma III.10]{FOCS}}]\label{lem:almostEverythingIsCommon}
        Consider strings $X,Y\in \Sigma^{\le n}$ and integers $e,k,w\in \Zp$.
        If $X,Y$ avoid $w$-synchronized $e$-powers with root at most $2w$, then 
        there exists a set $M$ of size $|M|\ge |X|-15wk^2e$
        such that $M\sub \mtch_{\A}(X,Y)$ holds for every alignment $\A \in \ga_{k,w}(X,Y)$.
        Moreover, such a set $M$ can be constructed in $\Oh(n+kw)$ time.
    \end{lemma}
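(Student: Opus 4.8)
The plan is to combine \cref{lem:build_greedy} with \cref{lem:sheep} and a pigeonhole-style argument over the (at most) $2w+1$ diagonals. First I would invoke \cref{lem:build_greedy} to decide whether $\ga_{k,w}(X,Y)=\emptyset$; if so, take $M=\emptyset$ and we are done, and otherwise obtain a concrete witness alignment $\A_0\in \ga_{k,w}(X,Y)$ in $\Oh(n+kw)$ time. The key claim is that $\A_0$ already ``almost'' agrees with every other greedy alignment. Concretely, by \cref{lem:sheep} applied to $\A_0$ and any other $\A\in \ga_{k,w}(X,Y)\sub \aa_{k,w}(X,Y)$, we get $|\A_0\sm \A|\le 7wke$ for \emph{each} such $\A$; the difficulty is that the sets $\A_0\sm \A$ can be different for different $\A$, so a naive union over all (exponentially many) greedy alignments is too lossy.

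The core idea to get a uniform set $M$ is that for greedy alignments, a matched position ``propagates'': once two greedy alignments agree on some diagonal $j$ at position $t$ and match the same character, the greedy property forces them to keep matching until a breakpoint of one of them. So I would argue as follows. Take the witness $\A_0=(x_t,y_t)_{t=0}^{m}$ and let $M_0:=\mtch_{\A_0}(X,Y)$, viewed as a non-crossing matching. For any other $\A\in \ga_{k,w}(X,Y)$, \cref{lem:sheep} gives a bound on how much $\A$ differs from $\A_0$, but to make this robust I would instead define $M\sub M_0$ by \emph{removing} from $M_0$ every match $(x,x+j)$ that lies ``too close'' to a breakpoint — formally, a short window (of length $\Oh(we)$) of positions on each diagonal around each breakpoint of $\A_0$, of which there are at most $k+1$, wiping out $\Oh(wke)$ positions total. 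The surviving matches form long runs $X[x\dd x+\ell)\simeq_{\A_0} Y[y\dd y+\ell)$ of length $\ell=\Omega(we)$ well inside a maximal matched block of $\A_0$. For such a run, I claim every $\A\in \ga_{k,w}(X,Y)$ also matches it: were $\A$ to have a breakpoint inside, the fragment around it would be $w$-synchronized (both $\A_0$ and $\A$ match nearby so the offsets $y-x$ and $y'-x'$ differ by at most $2w$) and long enough to contain a $w$-synchronized $e$-power with root $\le 2w$ — contradicting the hypothesis — unless the two offsets coincide, in which case $\A$ matches exactly the same pairs as $\A_0$ there, in particular the run. This is precisely the argument already isolated in the proof of \cref{lem:sheep}, reused per-run rather than per-alignment.

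Counting: $M_0$ has at least $|X|-k$ elements (the matched blocks of $\A_0$ have total length $\ge |X|-2k$, but each block of length $\ell$ contributes $\ell$ matches; being slightly careful gives $|M_0|\ge |X|-2k$). We delete $\Oh(wke)$ matches near breakpoints, so $|M|\ge |X|-\Oh(wke)-\Oh(k)\ge |X|-15wk^2e$ after fixing the constants (the $k^2$ is the slack; a factor $w$ and a factor $e$ come from the window length, and a factor $k$ from the number of breakpoints, with plenty of room to spare). For the running time, all of this is computable from $\A_0$ alone: build $\A_0$ via \cref{lem:build_greedy} in $\Oh(n+kw)$ time, then in one pass over $\A_0$ identify its $\le k+1$ breakpoints, mark the $\Oh(wke)$ forbidden match positions, and output the remaining matches as $M$ — total $\Oh(n+kw)$ time (note $\Oh(wke)\le\Oh(n)$ since matches live inside $X$).

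The main obstacle I anticipate is the propagation argument: making precise that a \emph{greedy} alignment $\A$ cannot introduce a breakpoint in the interior of a long block matched by $\A_0$ without either creating a forbidden $w$-synchronized $e$-power or being forced (by greediness plus the non-crossing property) to realign exactly onto $\A_0$'s diagonal there. The subtlety is boundary bookkeeping — one must ensure the window removed around each breakpoint of $\A_0$ is wide enough (about $2we$ on each side, on each of the $\le 2w+1$ diagonals, or more simply $2we$ in index space) that any surviving run, combined with the $\le 2w$ diagonal offset between $\A$ and $\A_0$, still leaves room for a full $e$-th power of a root of length $\le 2w$. Once the window width is chosen as $\Theta(we)$ this goes through, and the constant $15$ in the statement absorbs the resulting slack.
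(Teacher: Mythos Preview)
Your high-level plan is right and close to the paper's: fix a witness $\A_0\in\ga_{k,w}(X,Y)$ via \cref{lem:build_greedy}, trim a prefix from each maximal matched block of $\A_0$, and use greedy propagation to argue that every $\A\in\ga_{k,w}(X,Y)$ agrees with $\A_0$ on the remainder. The gap is the window size. Your window of $\Theta(we)$ is too small by a factor of $k$: a competing greedy alignment $\A$ may place \emph{all} of its (up to $k$) breakpoints inside a single $\A_0$-block. Between consecutive breakpoints of $\A$ inside that block, $\A$ sits on some diagonal $d$; if $d\ne d_0:=y-x$ then the argument from \cref{lem:sheep} bounds that matched sub-fragment by $<2we$, but there can be up to $k{+}1$ such sub-fragments before $\A$ ever lands on $d_0$. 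Hence $\A$ can avoid diagonal $d_0$ for $\Theta(wke)$ positions at the start of a single block, and a window of $\Theta(we)$ does not force agreement. Your claim ``were $\A$ to have a breakpoint inside, the fragment around it would \ldots\ contain a $w$-synchronized $e$-power'' fails precisely because the matched fragment of $\A$ adjacent to that breakpoint may be arbitrarily short; only the \emph{total} length over all of $\A$'s off-$d_0$ fragments is controlled.

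The paper closes this cleanly by invoking \cref{lem:sheep} as a black box rather than reproving it per run: since $|\A_0\setminus\A|\le 7wke$, the first $7wke$ match-points of any $\A_0$-block already contain a point of $\A$ by pigeonhole, and greedy propagation then gives the rest of the block. Removing the first $7wke$ positions from each of the $\le k{+}1$ blocks yields $|M|\ge |X|-k-(k{+}1)\cdot 7wke\ge |X|-15wk^2e$. So the extra factor of $k$ you dismissed as ``slack'' is in fact load-bearing: with the correct $\Theta(wke)$ window your plan becomes exactly the paper's proof, and the counting then genuinely needs the $k^2$.
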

    \begin{proof}
        Let us fix an alignment $\A \in \ga_{k,w}(X,Y)$ and partition $X$ into individual characters representing deletions or substitutions of $\A$ and maximal fragments $X[x\dd x+\ell)$ that $\A$ matches perfectly: $X[x\dd x+\ell)\simeq_{\A} Y[y\dd y+\ell)$ for some $y\in [0\dd |Y|-\ell]$. Observe that the number of such maximal fragments is at most $k+1$ and their total length it at least $|X|-k$. 
        
        Let us fix a maximal fragment $X[x\dd x+\ell)\simeq_{\A} Y[y\dd y+\ell)$.
        We claim that $(x+\delta,y+\delta)$ can be added to $M$ for every $\delta \in [7wke\dd \ell)$.
        For this, consider another alignment $\A'\in \ga_{k,w}(X,Y)$.
        By \cref{lem:sheep}, we have $|\A \sm \A'|\le 7wke$.
        By the pigeonhole principle, this means that $(x+\delta,y+\delta)\in \A'$ holds for some $\delta \in [0\dd 7wke)$.
        Then, the greedy nature of $\A'$ guarantees that $\A'$ must greedily match $X[x+\delta\dd x+\ell)$ and $Y[y+\delta\dd y+\ell)$
        and, in particular, $X[x+7wke\dd x+\ell)$ and $Y[y+7wke\dd y+\ell)$.
        Hence, $X[x\dd x+\ell)$ contributes at least $\ell-7wke$ pairs to $M$.
        The total contribution of all maximal fragments $X[x\dd x+\ell)$ is $|X|-k-(k+1)\cdot 7wke \ge |X|-15wk^2e$.

        As for the efficient algorithm, we use \cref{lem:build_greedy} to build $\A$ (if one exists)
        in $\Oh(n+kw)$ time.
        The remaining steps of the construction above can be easily implemented in $\Oh(n)$ time.
    \end{proof}

\section{Horizontal Periodicity Reduction}\label{sec:hor}

In this section, we discuss periodic sections of the input forests $F$ and $G$.  Specifically, we look at ``horizontal periodicity,'' which refers to the case when the children of a given node are a periodic sequence of subtrees.  If a horizontal period is repeated a large number of times in both $F$ and $G$, we know that any minimum cost alignment of $F$ and $G$ must align the periodic subtrees in a predictable way. We then argue that we can consider two forests $F'$ and $G'$ that are the exact same as $F$ and $G$ with a bound on the number of repetitions of horizontal periodic subtrees.

For ease of analysis, we do these periodic reductions on the parentheses representations of $\str{F}$ and $\str{G}$ with tree alignments. Furthermore, we borrow some useful definitions and results from~\cite{DBLP:journals/siamcomp/BannaiIINTT17} on maximal periodic substrings in strings, called runs.

\begin{definition}[Runs,~\cite{DBLP:journals/siamcomp/BannaiIINTT17}]
    A \emph{run} in a string $S$ is a triple $(i, j, p)$ such that $S[i\dd j)$ is periodic with $\per(S[i\dd j)) = p$ and maximal in length, i.e., $S[i-1] \ne S[i - 1 + p]$ and $S[j] \ne S[j - p]$. We call $\frac{j - i}{p}$ the power of the run.
    Let $\textsf{Runs}(S)$ denote the entire list of runs in $S$.
\end{definition}

\begin{theorem}[Theorem 9 of~\cite{DBLP:journals/siamcomp/BannaiIINTT17}]\label{thm:numruns}

    The number of runs in a string $S$ of length $n$ is less than~$n$.
\end{theorem}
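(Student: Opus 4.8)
The plan is to prove the runs theorem by the Lyndon-root argument. Fix two opposite linear orders on the alphabet: an arbitrary order $<_0$ and its reverse $<_1$. For a primitive string $w$ and $b\in\{0,1\}$, let $\lambda_b(w)$ be the unique cyclic rotation of $w$ that is a Lyndon word under $<_b$. For a run $r=(i,j,p)$ in $S$, the period string $\pi_r:=S[i\dd i+p)$ is primitive, and we call an occurrence of $\lambda_b(\pi_r)$ as a length-$p$ factor inside $S[i\dd j)$ an \emph{$L$-root} of $r$ for order $b$; by periodicity such occurrences appear exactly at positions $i+c_b,\,i+c_b+p,\dots$ within $[i\dd j-p]$ for a single offset $c_b\in[0\dd p)$.

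The first step is to pick, for each run $r$, a good order $b_r$. By left-maximality $S[i-1]\ne S[i+p-1]$ (a sentinel handles the boundary $i=0$, which is routine), so there is a unique $b_r\in\{0,1\}$ with $S[i+p-1]<_{b_r}S[i-1]$. The claim is that $c_{b_r}\ne 0$ for this choice, i.e.\ the leftmost $L$-root of $r$ for order $b_r$ begins at a position $\mathrm{beg}(r)$ with $i<\mathrm{beg}(r)<i+p$; equivalently, $\pi_r$ is not already the Lyndon rotation under $<_{b_r}$. This uses the characterization of Lyndon words as being strictly smaller than all their proper suffixes: if $\pi_r$ were Lyndon under $<_{b_r}$, then combining $S[i+p-1]<_{b_r}S[i-1]$ with the periodicity of the run exhibits a rotation of $\pi_r$ that is $<_{b_r}$-smaller than $\pi_r$, a contradiction.

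Next I set up an injection $r\mapsto\mathrm{beg}(r)$ into $[1\dd n-1]$: every value lies there since $\mathrm{beg}(r)>i\ge 0$ and $\mathrm{beg}(r)+p\le j\le n$ (there is room because the exponent is at least $2$), so $\mathrm{beg}(r)\le n-1$. For injectivity, suppose distinct runs $r_1\ne r_2$ have their assigned $L$-roots beginning at the same position $t$. Because $t$ lies strictly inside each run, periodicity forces $S[t-1]$ to equal the last character of the corresponding $L$-root, while the Lyndon property forces that last character to be strictly larger (under the corresponding order) than the first character $S[t]$; hence $S[t-1]>_{b_{r_1}}S[t]$ and $S[t-1]>_{b_{r_2}}S[t]$. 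If $b_{r_1}\ne b_{r_2}$, these say $S[t-1]>_{b_{r_1}}S[t]$ and $S[t-1]<_{b_{r_1}}S[t]$, an immediate contradiction. If $b_{r_1}=b_{r_2}$ then $p_1\ne p_2$ (two distinct runs cannot share an occurrence of their period), say $p_1<p_2$, so the shorter $L$-root is a proper prefix of the longer one; analyzing how run $r_1$ overlaps the longer $L$-root, together with primitivity of Lyndon words, shows that either the longer $L$-root inherits period $p_1$ (forcing it to equal the shorter one, a contradiction) or right-maximality of a run is violated. Hence $r\mapsto\mathrm{beg}(r)$ is injective, and the number of runs is at most $n-1<n$.

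The step I expect to be the main obstacle is this injectivity lemma, and specifically the same-order, unequal-period case, where one must reconcile two overlapping $L$-roots, one a proper prefix of the other, with the maximality constraints of their runs. By comparison, the per-run choice of order, the opposite-order case, and the final counting are all short once that lemma is in hand, and the $i=0$ boundary is only a minor bookkeeping nuisance handled by a sentinel.
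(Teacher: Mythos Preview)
The paper does not prove this theorem; it is quoted verbatim as Theorem~9 of Bannai, I, Inenaga, Nakashima, Takeda, and Tsuruta, and used as a black box. Your sketch is precisely the Lyndon-root argument of that cited reference: choose for each run the order under which the mismatch at the left boundary goes the ``wrong'' way, so that the leftmost Lyndon root of the run under that order begins strictly inside the run, and then show that the map from runs to these starting positions is injective into $[1\dd n-1]$. So you are reproducing the source proof rather than comparing against anything in the present paper.

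Your outline is faithful to Bannai et al.\ and the parts you flag as routine (the choice of $b_r$, the opposite-order collision, the sentinel at $i=0$) are indeed straightforward there. You are also right that the same-order, unequal-period collision is the crux; in the original, this is handled by the observation that if two Lyndon words under the same order start at the same position with one a proper prefix of the other, then the longer one, combined with the periodicity of the shorter run extending to the right, forces the longer Lyndon word to have the shorter as a border, contradicting the border-freeness of Lyndon words. Your description of this step (``either the longer $L$-root inherits period $p_1$ \ldots\ or right-maximality is violated'') is in the right spirit but would need to be tightened to exactly that border-freeness contradiction to be a complete proof.
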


\begin{theorem}[Algorithm 1 of~\cite{DBLP:journals/siamcomp/BannaiIINTT17}]\label{thm:computeruns}
    Given a string $S$ with $n = |S|$, it is possible to compute all runs of $S$ in $\Oh(n)$ time.
\end{theorem}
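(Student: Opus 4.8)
The plan is to follow the classical strategy of Kolpakov and Kucherov, and the Lyndon-array variant used in \cite{DBLP:journals/siamcomp/BannaiIINTT17}: reduce the task to (i) a linear-time preprocessing supporting $\Oh(1)$-time \emph{longest common extension} (LCE) queries, and (ii) a combinatorial lemma that \emph{anchors} every run near one of only $\Oh(n)$ distinguished positions, so that each run can be recovered by a constant number of LCE queries.

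First I would build, in $\Oh(n)$ time over the (linearly sortable) integer alphabet, the suffix array and $\lcp$ array of $S$ and of $\rev{S}$, augmented with an $\Oh(1)$-time range-minimum structure. This supports $\Oh(1)$-time queries returning the length of the longest common prefix of $S[i\dd n)$ and $S[j\dd n)$ (forward LCE) and, symmetrically via $\rev{S}$, the longest common suffix of $S[0\dd i)$ and $S[0\dd j)$ (backward LCE). From the suffix array I would also compute the Lempel--Ziv factorization $S=f_1 f_2\cdots f_z$ in $\Oh(n)$ time; recall $z\le n$. Then I would iterate over the $z$ phrase boundaries. The key combinatorial claim is that for every run $(a,b,p)$ there is a boundary $t\in(a\dd b]$ and a shift from a set of only constantly many candidates (determined by $t$ and by the source of the phrase starting at $t$) with $p$ equal to that shift; this rests on the standard fact that the leftmost occurrence of any primitively rooted square must straddle a phrase boundary, together with a short case analysis for runs whose period is large relative to the neighbouring phrases. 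Granting the claim, at each boundary I test the constantly many candidate periods; for each I extend left via backward LCE and right via forward LCE to obtain a maximal periodic fragment, and check primitivity of its root with one more $\lcp$ query. This costs $\Oh(1)$ queries per boundary, hence $\Oh(n)$ time in total, and produces every run (possibly with repetitions). Finally, since each run is an integer triple with entries in $[0\dd n]$ and, by \cref{thm:numruns}, there are fewer than $n$ of them, I would radix-sort the produced list by $(a,b,p)$ and delete duplicates in $\Oh(n)$ time, outputting $\textsf{Runs}(S)$.

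The main obstacle is the anchoring claim: proving that a constant number of candidate periods per phrase boundary suffices -- in particular ruling out runs ``hidden'' strictly inside a single phrase -- and that the total number of (boundary, period) pairs examined is $\Oh(n)$ rather than $\Oh(n\log n)$. An equivalent route that avoids Lempel--Ziv, and matches the algorithm actually cited here, replaces the phrase boundaries by the at most $n$ positions flagged in the Lyndon arrays of $S$ with respect to two opposite orders on $\Sigma$; there the obstacle becomes computing these Lyndon arrays in linear time and the lemma that each run's Lyndon root is pinned down by a ``next smaller suffix'' pointer. In either formulation the algorithmic ingredients (suffix structures, LCE, radix sort) are routine, and the work lies in the combinatorial analysis.
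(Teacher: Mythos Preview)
The paper does not prove this theorem at all: it is quoted verbatim as an external result (``Algorithm 1 of~\cite{DBLP:journals/siamcomp/BannaiIINTT17}'') and used as a black box. So there is nothing to compare your proposal against in this paper.

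That said, your sketch is a reasonable high-level account of the two standard linear-time runs algorithms, and you correctly identify that the cited reference uses the Lyndon-array route rather than the Kolpakov--Kucherov/LZ route. One caution: your first outline via LZ phrase boundaries with ``constantly many candidate periods per boundary'' is not quite how the Kolpakov--Kucherov algorithm works; there one tests, for each block of length $2^j$ in a logarithmic hierarchy, all periods $p\in[2^{j-1}\dd 2^j)$ anchored at the block boundary, and the $\Oh(n)$ bound comes from a global amortization (the sum over all runs of $\exp$ being $\Oh(n)$), not from a constant number of candidates per boundary. The Lyndon-array algorithm you mention second is closer to a genuine ``$\Oh(1)$ candidates per position'' argument. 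If you were actually asked to supply a proof here, the cleanest option is to point to~\cite{DBLP:journals/siamcomp/BannaiIINTT17} and leave it at that, which is exactly what the paper does.
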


Using the results of the above theorems, we can find all horizontal periodic substrings of $\str{F}$ and $\str{G}$ in $\Oh(n)$ time. Ideally, we could just iterate through a list of computed runs and reduce the exponent of each run in constant time. However, some runs may be overlapping, and so by reducing one run, we may significantly change the length of a previously reduced run, or even worse introduce a new run to the string. So, by requiring that the period of runs that we reduce is small, we also can guarantee that the overlap of runs will be manageable.

\begin{lemma}[Fact 2.2.4,~\cite{phd}]\label{lem:runoverlap}
    Any two distinct runs $(i_1, j_1, p_1)$ and $(i_2, j_2, p_2)$ in a string $S$ satisfy $|[i_1 \dd j_1) \cap [i_2 \dd j_2)| < p_1 + p_2 - \gcd(p_1, p_2)$.
\end{lemma}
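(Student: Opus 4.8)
The plan is to argue by contradiction, combining the Fine--Wilf periodicity lemma with the left- and right-maximality built into the definition of a run. Suppose, towards a contradiction, that $(i_1,j_1,p_1)$ and $(i_2,j_2,p_2)$ are distinct runs whose overlap $I := [i_1\dd j_1)\cap[i_2\dd j_2)$ satisfies $|I|\ge p_1+p_2-g$, where $g:=\gcd(p_1,p_2)$. Since $g\mid p_1$ and $g\mid p_2$ we have $g\le\min(p_1,p_2)$, so in particular $|I|\ge\max(p_1,p_2)$. The first step is to note that $S[I]$ is a fragment of $S[i_1\dd j_1)$ and also of $S[i_2\dd j_2)$, hence has both period $p_1$ and period $p_2$; as $|S[I]|\ge p_1+p_2-g$, Fine--Wilf gives that $S[I]$ has period $g$.

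Next I would normalize by symmetry (the statement is symmetric in the two runs), assuming $i_1\le i_2$, so that $I=[i_2\dd\min(j_1,j_2))$, and then split on the right endpoints. If $j_1<j_2$, then $j_1$ is interior to the second run, so I can ``walk'' around it: period $p_2$ of the second run gives $S[j_1]=S[j_1-p_2]$; the positions $j_1-p_1$ and $j_1-p_2$ both lie in $I$ and are congruent modulo $g$ (since $g$ divides both $p_1$ and $p_2$), so the $g$-periodicity of $S[I]$ yields $S[j_1-p_2]=S[j_1-p_1]$, whence $S[j_1]=S[j_1-p_1]$, contradicting the right-maximality of the first run. The case $j_1=j_2$ is symmetric: using left-maximality of the second run at position $i_2-1$ and walking by $p_1$ then by $p_2$ inside $I$ gives $S[i_2-1]=S[i_2-1+p_2]$, again a contradiction. (Here $i_1=i_2$ together with $j_1=j_2$ would force the two runs to coincide, so in fact $i_1<i_2$, which also guarantees $i_2\ge 1$ so that the left-maximality statement applies.) These two cases are routine periodicity bookkeeping once the observations above are in place.

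The delicate case --- and the step I expect to be the main obstacle --- is $j_1>j_2$, where the second run is strictly nested in the first and Fine--Wilf alone does not suffice. Now $I=[i_2\dd j_2)$ is the whole second run, so $j_2-i_2\ge p_1+p_2-g\ge p_1$, and the second run then has period $p_1$ (inherited from the first run) in addition to period $p_2$; Fine--Wilf forces its shortest period $p_2$ to divide $g$, i.e.\ $p_2\mid p_1$. We cannot have $p_1=p_2$ in this configuration, since then the first run would extend the second while preserving period $p_1$, contradicting maximality of the second run; hence $p_2<p_1$. The finishing move is a primitivity argument: the length-$p_1$ window $S[i_2\dd i_2+p_1)$ lies inside the second run (as $j_2-i_2\ge p_1$) and inside the first run (as $i_2+p_1\le j_2<j_1$), so on one hand it has period $p_2\mid p_1$ and is therefore a $(p_1/p_2)$-th power of a string of length $p_2$, while on the other hand it is a cyclic rotation $\rot^{c}(W)$ of the length-$p_1$ string period $W$ of the first run. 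Since $\rot^{c}(Q^{m})=(\rot^{c}Q)^{m}$ for any string $Q$, a cyclic rotation of a $(p_1/p_2)$-th power of a length-$p_2$ string is again a $(p_1/p_2)$-th power of a length-$p_2$ string, so $W$ itself has period $p_2$; consequently the first run, being a prefix of arbitrarily long powers of $W$, has period $p_2<p_1$, contradicting $\per(S[i_1\dd j_1))=p_1$. All three cases give a contradiction, which proves the bound.
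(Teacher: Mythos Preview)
The paper does not prove this lemma; it is quoted from an external source (Fact~2.2.4 of~\cite{phd}) and used as a black box, so there is no in-paper proof to compare against.

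Your argument is correct. The Fine--Wilf step and the two non-nested cases are clean and exactly the standard route. In the nested case ($j_1>j_2$) your primitivity detour works, but it is more elaborate than necessary: the same template as your first case applies with the roles of the two runs swapped. Concretely, since $j_2<j_1$, position $j_2$ lies inside the first run, so $S[j_2]=S[j_2-p_1]$; both $j_2-p_1$ and $j_2-p_2$ lie in $I$ (using $|I|\ge p_1$) and are congruent modulo $g$, so $g$-periodicity of $S[I]$ gives $S[j_2-p_1]=S[j_2-p_2]$, whence $S[j_2]=S[j_2-p_2]$, contradicting right-maximality of the second run. This avoids the $p_2\mid p_1$ step and the rotation argument entirely and makes all three cases uniform: whenever one run has an endpoint strictly inside the other, walk by the other run's period and then shift inside $I$ by $g$-periodicity.
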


\begin{corollary}\label{cor:boundedperiodoverlap}
    Consider two distinct runs $(i_1, j_1, p_1)$ and $(i_2, j_2, p_2)$ in a string $S$ with periods $p_1,p_2\le 4k$
    and exponents $\frac{j_1-i_1}{p_1}, \frac{j_2-i_2}{p_2}\ge 8k$. 
    If $i_1\le i_2$, then $i_2 - 8k < j_1 < j_2$.
\end{corollary}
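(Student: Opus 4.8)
\textbf{Proof plan for \cref{cor:boundedperiodoverlap}.}

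The plan is to derive both inequalities from \cref{lem:runoverlap} together with the hypotheses on periods and exponents, treating them separately.

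For the inequality $j_1 < j_2$: suppose instead that $j_2 \le j_1$. Since $i_1 \le i_2$ and $j_2 \le j_1$, the interval $[i_2 \dd j_2)$ is contained in $[i_1 \dd j_1)$, so the overlap $[i_1 \dd j_1) \cap [i_2 \dd j_2) = [i_2 \dd j_2)$ has size $j_2 - i_2 \ge 8k p_2$. On the other hand, since the two runs are distinct, \cref{lem:runoverlap} gives $j_2 - i_2 < p_1 + p_2 - \gcd(p_1,p_2) \le p_1 + p_2 \le 8k$. But $8k p_2 \ge 8k$ since $p_2 \ge 1$, and in fact $8kp_2 \ge 8k > p_1+p_2-\gcd(p_1,p_2)$, a contradiction. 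Hence $j_1 < j_2$.

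For the inequality $i_2 - 8k < j_1$: the key observation is that $j_1$ cannot be much smaller than $i_2$, i.e., the two runs cannot be too far apart, because the overlap of the intervals cannot be negative in a way that matters — rather, the point is that if $j_1 \le i_2$ the runs would be disjoint, which is perfectly allowed, so $i_2 - 8k < j_1$ is \emph{not} something one should try to prove; rather the corollary is only meaningful when the two runs genuinely overlap. Re-reading: the statement asserts $i_2 - 8k < j_1 < j_2$; since $i_1 \le i_2$, if the runs overlap at all then $i_2 < j_1$ and $i_2 - 8k < j_1$ holds trivially; if they do not overlap ($j_1 \le i_2$), the corollary is vacuous in content but we must still produce the bound. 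Actually, disjoint runs are fine and then $j_1 \le i_2$ can be arbitrarily far below $i_2$, so $i_2-8k<j_1$ can genuinely fail. Therefore the hidden assumption must be that the runs \emph{do} overlap — this is the intended reading (the corollary is applied to overlapping runs). Under that reading, $j_1 > i_2 \ge i_2 - 8k$, done. I would state this hypothesis explicitly: assuming $[i_1 \dd j_1) \cap [i_2 \dd j_2) \ne \emptyset$ together with $i_1 \le i_2$ gives $i_2 < j_1$, hence $i_2 - 8k < j_1$.

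The main obstacle is the first inequality $j_1 < j_2$: one must carefully check that the large-exponent hypothesis forces $j_2 - i_2$ (or symmetrically $j_1 - i_1$) to exceed the overlap bound $p_1 + p_2 - \gcd(p_1,p_2)$ from \cref{lem:runoverlap}, using $8k p_i \ge 8k > p_1 + p_2 \ge p_1 + p_2 - \gcd(p_1,p_2)$ when $p_1, p_2 \le 4k$. The inequality $i_2 - 8k < j_1$ is then immediate once one records that the corollary concerns overlapping runs (so $i_2 < j_1$), and I would make that precise in the statement or proof rather than attempting to derive it from \cref{lem:runoverlap}, since for disjoint runs it is false.
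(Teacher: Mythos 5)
Your proof of $j_1<j_2$ is correct and is exactly the paper's argument: if $j_2\le j_1$ then $[i_2\dd j_2)\sub[i_1\dd j_1)$, so the overlap has size $j_2-i_2\ge 8kp_2\ge 8k$, contradicting the bound $p_1+p_2-\gcd(p_1,p_2)<8k$ from \cref{lem:runoverlap}.

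For the other inequality there is a genuine gap, though one provoked by a typo in the statement. You correctly observe that $i_2-8k<j_1$ as literally printed fails for disjoint runs; but your repair (assume the runs overlap, conclude $i_2<j_1\ge i_2-8k$ trivially) recovers the wrong statement. The intended inequality is the mirror image $j_1-8k<i_2$, i.e., $j_1<i_2+8k$: this is what the paper's proof establishes, via
$j_1-i_2=\min(j_1,j_2)-\max(i_1,i_2)\le|[i_1\dd j_1)\cap[i_2\dd j_2)|<p_1+p_2-\gcd(p_1,p_2)<8k$,
which needs no overlap hypothesis because the left-hand side is nonpositive when the runs are disjoint. It is also the form used downstream: \cref{prp:hor} cites the corollary as giving $j_G'<i_G+8k$, and the $\Oh(n/k)$ bound on the number of filtered runs in \cref{lem:hoccurRuntime} relies on consecutive runs overlapping in fewer than $8k$ positions so that each new run contributes at least $8k$ fresh positions. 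Your version gives no upper bound on the overlap at all, so it would not support any of these applications. The substantive content you should be proving is ``two distinct filtered runs overlap in fewer than $8k$ positions, hence the earlier one ends before position $i_2+8k$,'' and it follows in one line from \cref{lem:runoverlap} exactly as above.
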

\begin{proof}
    By \cref{lem:runoverlap}, we have $|[i_1 \dd j_1) \cap [i_2 \dd j_2)|< p_1 + p_2 - \gcd(p_1, p_2) < 8k$.
    Due to $|[i_2 \dd j_2)| \ge 8kp_2 \ge 8k$, this means that $[i_2 \dd j_2)$ is not contained in $[i_1 \dd j_1)$, i.e., $j_1 < j_2$.
    Moreover, $j_1 - i_2 = \min(j_1,j_2)-\max(i_1,i_2) \leq |[i_1 \dd j_1) \cap [i_2 \dd j_2)|\le 8k$.
\end{proof}

Since horizontal periodic reductions take place on subtrees, we require that any periodic substring we reduce has a balanced string period in the parentheses representation.  For a string $X$, we define a balance function $\sigma(X)$ which is mapped to the minimum number of rotations of $X$ needed to make it balanced. If $X$ cannot be balanced, $\sigma(X) := -1$.  For example, let $X = \texttt{)][()(}$, then $\sigma(X) = 2$ since $\rot^2(X) = \texttt{[()()]}$; let $Y = \texttt{(((((}$, then $\sigma(Y) = -1$ since there is no rotation that balances $Y$. We give the following lemma for computing the $\sigma$ function.

\begin{lemma}
    $\sigma(X)$ can be computed in time $\Oh(|X|)$.
    \label{lem:balanceX}
\end{lemma}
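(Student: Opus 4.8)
The plan is to compute $\sigma(X)$ by reducing it to a question about prefix sums of the $\pm1$ sequence obtained from the parentheses. First I would fix an arbitrary pairing of the two parenthesis types; since we only care about whether a rotation is balanced (well-formed), I would map each opening parenthesis (of either type \texttt{(} or \texttt{[}) to $+1$ and each closing parenthesis to $-1$, obtaining a sequence $a_0,\dots,a_{n-1}\in\{+1,-1\}$ where $n=|X|$. A necessary condition for \emph{any} rotation to be balanceable is that the total sum $\sum_i a_i$ equals $0$ (equal numbers of opening and closing brackets) and, separately, that the multiset of bracket types is consistent — but actually for the notion of ``balanced'' used here (matching \texttt{()} and \texttt{[]} pairs, not just counting) one must be a little more careful. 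I would first check the coarse necessary condition $\sum a_i = 0$; if it fails, return $-1$ immediately.

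Next, assuming $\sum a_i = 0$, I would recall the standard fact that a balanced-parenthesis string of length $n$ (ignoring types) corresponds exactly to a rotation whose prefix sums are all nonnegative, and that among the $n$ rotations of a sequence summing to zero, exactly one rotation — the one starting just after a position achieving the global minimum prefix sum — has all prefix sums nonnegative (this is the classical cycle lemma / Dvoretzky–Motzkin argument). So the candidate shift $s$ is determined in $\Oh(n)$ time: compute prefix sums $S_0,\dots,S_n$ with $S_0=0$, $S_t = S_{t-1}+a_{t-1}$, find the smallest index $t^\star$ attaining $\min_t S_t$, and set the candidate $s = t^\star \bmod n$. Then I would \emph{verify} that $\rot^s(X)$ is genuinely balanced with respect to the two bracket types (using a single stack pass in $\Oh(n)$), since the $\pm1$ reduction only guarantees the shape is a valid Dyck shape, not that \texttt{(} is matched with \texttt{)} rather than \texttt{]}. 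If the stack verification succeeds, return $s$; otherwise, since the Dyck shape is unique, no rotation can be balanced, so return $-1$. Every step — building $a$, prefix sums, locating the minimum, the rotation, and the stack check — is a single linear scan, giving the claimed $\Oh(|X|)$ bound; I should be mildly careful that ``minimum number of rotations'' refers to the forward-rotation count $s\in[0\dd n)$ as in the examples ($\rot^2$), which is exactly $t^\star \bmod n$ computed above.

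The main obstacle I anticipate is the interplay between the two parenthesis types and the $\pm1$ abstraction: the cycle-lemma argument cleanly identifies the unique shift producing a nonnegative-prefix-sum (hence Dyck-shaped) rotation, but a Dyck shape need not be \emph{type-balanced}. The key insight making the algorithm still linear is that there is at most \emph{one} candidate shift to test — so a single $\Oh(n)$ stack verification suffices, and we never need to try all $n$ rotations. I would want to state (perhaps as a small sub-claim within the proof) precisely why the nonnegative-prefix-sum rotation is unique when $\sum a_i = 0$: if two distinct rotations both had all prefix sums $\ge 0$, concatenating the argument around the cycle forces the prefix-sum sequence to be constant on an interval in a way that contradicts the $a_i\in\{\pm1\}$ assumption unless the two shifts coincide. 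With that in hand, the proof is a short assembly of: (i) check $\sum a_i=0$; (ii) locate $t^\star$; (iii) stack-verify $\rot^{t^\star}(X)$; (iv) return $t^\star$ or $-1$ accordingly — each phase linear, total $\Oh(|X|)$.
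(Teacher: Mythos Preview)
Your approach is essentially the paper's: the paper runs a stack scan to locate the last unmatched closing parenthesis at index $m$ and then verifies the single candidate $\rot^{m+1}(X)$; your prefix-sum formulation yields the same candidate, since $m+1$ is exactly the first index at which the running sum $S_t$ attains its global minimum.

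There is, however, a gap in your justification. Your uniqueness claim is false: the rotations with all prefix sums $\ge 0$ are exactly those starting at indices where $S_t$ equals the global minimum, and there may be several (for instance $X=\texttt{()[]}$ admits Dyck-shaped rotations at both $s=0$ and $s=2$). The proposed sub-claim that two such rotations would force the $a_i$ to be constant on an interval simply does not hold. Consequently the step ``since the Dyck shape is unique, no rotation can be balanced, so return $-1$'' is not justified as written. The fix is short: if $s_1<s_2$ both achieve the minimum, then $\rot^{s_1}(X)=A\cdot B$ with $|A|=s_2-s_1$ and both $A$ and $B$ Dyck-shaped, while $\rot^{s_2}(X)=B\cdot A$; in either decomposition every matched pair lies entirely within $A$ or within $B$, so the two rotations induce the \emph{same} set of matched position-pairs on the cyclic string. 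Hence one Dyck-shaped rotation is type-balanced iff all of them are, and checking only your $t^\star$ (the smallest minimiser, hence the smallest rotation) suffices. Incidentally, the paper's own argument for why rotations beyond $m+1$ cannot work is similarly informal on exactly this point.
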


\begin{proof}
    To compute $\sigma(X)$, we run a folklore stack matching algorithm for checking balance on $X$. The algorithm iterates across the characters of $X$ and every opening parenthesis the algorithm comes across gets pushed onto the stack.  When the algorithm reaches a closing parenthesis, it pops an opening parenthesis from the stack and matches the two.  We modify this basic algorithm so that if the stack is empty when a closing parenthesis is reached, it ignores the failed match and continues to the next character. Additionally, the algorithm will keep track of the largest index $m < |X|$ such that $X[m]$ is a closing parenthesis which had no match due to the stack being empty when this parenthesis was reached. Clearly we must rotate $X$ at least $m + 1$ times in order for closing parenthesis $X[m]$ to potentially have an opening parenthesis before it in $X$ to match to. Moreover, since $m$ is the index of the last unmatched closing parenthesis in $X$ then for any index $i > m$ there can only be unmatched opening parenthesis. Therefore, rotating more than $m + 1$ times will only place unmatched opening parenthesis at the end of $X$ with no potential matches afterwards. Thus, we rotate exactly $m+1$ times, and run the stack matching algorithm on $\rot^{m+1}(X)$.  If the rotated string is balanced and all matching parentheses share the same parenthesis type, then $\sigma(X) = m + 1$.  Otherwise, $\sigma(X) := -1$.
\end{proof}

Note that the above lemma shows that checking balance and finding the minimum number of rotations to balance a string only takes linear time. Since we only want to find the balance of periodic substrings of length at most $4k$, then computing $\sigma$ will only take time $\Oh(k)$.

From the above definition and results, we can now give our algorithm to reduce horizontal periodicity seen below in Algorithms~\ref{alg:FilterRuns}, ~\ref{alg:SyncOccur}, and~\ref{alg:SyncReductions}. The goal of \cref{alg:SyncReductions} is to take two input forests $F$ and $G$ and output forests $F'$ and $G'$ which are the same as the input forests except any long horizontal periodic synchronized occurrences are reduced to an exponent of at most $18k$. First, \cref{alg:SyncOccur} uses \cref{alg:FilterRuns} as a subroutine to compute all runs of the parentheses representations of input forests $F$ and $G$, sorted by starting index. Before the algorithm can actually reduce any pair of runs, it must make sure the runs adhere to strict requirements. First, to make sure the overlap between runs that we want to reduce is not too large compared to the run size, we only reduce runs whose period is at most $4k$ in length and whose exponent is larger than $16k$. \cref{alg:FilterRuns} does exactly this and returns a filtered list of runs that meet these criteria. Afterwards, \cref{alg:SyncOccur} iterates over pairs of runs, one from $\str{F}$ and one from $\str{G}$ in sorted order by their starting index.  Note that we only want to reduce periodic substrings who have $2k$-synchronized occurrences; otherwise we cannot align the matching periodic subtrees with less than $k$ edits. If we find a very large run in one input forest with no synchronized occurrence in the other input forest, the algorithm should not reduce these occurrences. 

After checking for these properties in each run, \cref{alg:SyncOccur} does one final check that the string period of the run is a balanced parentheses string, i.e., it actually corresponds to a sequence of subtrees in the original forests $F$ and $G$.  If a run's string period is not balanced, we do not actually have horizontal periodicity and cannot reduce these runs as easily (these runs are instead handled in Sections~\ref{sec:ver} and~\ref{sec:fur}). If a run is not balanced, did not have a small enough period, or did not have a large enough exponent, the run is simply copied over to the corresponding output forest $F'$ or $G'$ without any changes. In the case that \cref{alg:SyncOccur} does find a pair of runs representing $2k$-synchronized occurrences of large horizontal periodicity, a triple representing the pair of runs and their common period and exponent is added to a set $S$.  Then, \cref{alg:SyncReductions} iterates across set $S$ and reduces all found $2k$-synchronized occurrences to only $14k$ repetitions of the string period. \cref{alg:SyncReductions} copies all characters in $\str{F}$ to a new string $\str{F'}$ except any substrings contained in $2k$-synchronized occurrences of horizontal periodicity, in which instead the algorithm skips all but the last $14k$ repetitions of the string period. Once we have the output forests $F'$ and $G'$, it is still necessary to show that reduced runs have not changed the tree edit distance at all nor introduced any new horizontal periodicity. The rest of the section is devoted to the analysis of these three algorithms and their outputs. 

 \begin{algorithm}
    $R \gets \textsf{SortByStartingIndex}(\textsf{Runs}(S))$\;
    $R' \gets \textrm{empty list}$\;
    \For{$(i,j,p) \in R$}{
        \If{$p \leq 4k$ \KwSty{and} $\frac{j-i}{p} \geq 16k$}{
            $R'.\textsf{append}\big((i,j,p)\big)$\;
        }
    }
    \KwRet{$R'$}\;
     \caption{$\mathsf{FilterRuns}(S)$. }
     \label{alg:FilterRuns}
 \end{algorithm}

 \begin{algorithm}
    $R_F \gets \textsf{FilterRuns}(\str{F}), R_G \gets \textsf{FilterRuns}(\str{G})$\;\label{ln:2.1}
    $\ell_F, \ell_G \leftarrow 0$\;
    $S \leftarrow$ empty list\;
    \While{$\ell_F < |R_F|$ \KwSty{and} $\ell_G < |R_G|$}{\label{ln:2.4}
        $(i_F, j_F, p_F) \leftarrow R_F[\ell_F], (i_G, j_G, p_G) \leftarrow R_G[\ell_G]$\;
         $X \leftarrow \str{F}[i_F \dd i_F + p_F), Y \leftarrow \str{G}[i_G \dd i_G + p_G)$\;
        $e \leftarrow \lfloor\frac{|[i_F \dd j_F) \cap [i_G \dd j_G)|}{p_F}\rfloor$\;
         \uIf{$e \geq 16k$ \KwSty{and} $\exists \alpha \leq p_F$, $rot^\alpha(X) = Y$ \KwSty{and} $\sigma(X) \ge 0$}
         {\label{ln:2.8}
            \tcp{Add synchronized occurrences to set $S$}
            $S.append((\max(i_F, i_G), p_F, e-2k))$\;\label{ln:2.9}
         }
        \tcp{Go to a new run}
        \If{$j_F < j_G$}
        {
            $\ell_F \leftarrow \ell_F + 1$\;\label{ln:2.11}
        }
        \Else{
            $\ell_G \leftarrow \ell_G + 1$\;\label{ln:2.13}
        }
     }
     \KwRet{$S$}\;\label{ln:2.18}
     \caption{$\mathsf{SyncOccurrences}(F, G)$. }
     \label{alg:SyncOccur}
 \end{algorithm}

We begin by proving that the runtime of \cref{alg:SyncOccur} is linear. Since we know that computing all runs in a string only takes $\Oh(n)$ time from \cref{thm:computeruns}, the call to \cref{alg:FilterRuns} do not cause any issues. As for the main loop of \cref{alg:SyncOccur}, either the algorithm finds a reason that a pair of runs do not contain $2k$-synchronized occurrences and moves on to the next run, or the algorithm creates a synchronized occurrence triple and adds it to the output list $S$. Therefore, we must be careful that all the checks done in each case do not take superlinear total time.

\begin{lemma}
    Given forests $F$ and $G$ of total size $n$, \cref{alg:SyncOccur} runs in time $\Oh(n)$.
    \label{lem:hoccurRuntime}
\end{lemma}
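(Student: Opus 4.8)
The plan is to separate the running time of \cref{alg:SyncOccur} into three parts: the two invocations of $\mathsf{FilterRuns}$, the $\Oh(1)$ bookkeeping performed in every iteration of the main loop, and the two ``expensive'' tests inside the \textbf{if} condition --- the rotation test $\exists\alpha\le p_F:\rot^\alpha(X)=Y$ and the balance test $\sigma(X)\ge 0$ --- each of which costs $\Oh(p_F)=\Oh(k)$. The first part is $\Oh(n)$: by \cref{thm:computeruns} all runs of $\str{F}$ and of $\str{G}$ are computed in $\Oh(n)$ time, by \cref{thm:numruns} there are $\Oh(n)$ of them, and sorting them by starting index together with the linear scan of \cref{alg:FilterRuns} is $\Oh(n)$. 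For the main loop, every iteration increments $\ell_F$ or $\ell_G$, so it runs at most $|R_F|+|R_G|<|\str{F}|+|\str{G}|=2n$ times; in each iteration, reading the two current runs, representing $X$ and $Y$ as \emph{fragments} of $\str{F}$ and $\str{G}$ (pointers with endpoints, not materialized strings), computing $e$ from the overlap length, and testing $e\ge 16k$ take $\Oh(1)$ time. The expensive tests are reached only when $e\ge 16k$: the balance test then costs $\Oh(|X|)=\Oh(k)$ by \cref{lem:balanceX}, and the rotation test costs $\Oh(k)$ as well (first check $p_F=p_G$, and if so run a linear-time pattern matching of $Y$ inside $X\cdot X$). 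Thus the entire running time is $\Oh(n)$ plus $\Oh(k)$ times the number of iterations with $e\ge 16k$.

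The heart of the argument is to show that $e\ge 16k$ occurs in only $\Oh(n/k)$ iterations. Here I would use \cref{cor:boundedperiodoverlap}: since $R_F$ and $R_G$ contain only filtered runs (period $\le 4k$, exponent $\ge 16k$), any two distinct runs in the same string overlap in fewer than $8k$ positions (immediate from \cref{lem:runoverlap}), and when sorted by starting index they are also sorted by ending index. Fix a $G$-run $r_G=(i_G,j_G,p_G)\in R_G$. Over the contiguous block of iterations in which $r_G$ is the current $G$-run, the pointer $\ell_F$ sweeps through a sequence of pairwise distinct $F$-runs; among those, each one that yields $e\ge 16k$ overlaps $[i_G\dd j_G)$ in a subinterval of length at least $16k\,p_F\ge 16k$, and consecutive such subintervals overlap in fewer than $8k$ positions, so by a telescoping length count there are fewer than $\tfrac{j_G-i_G}{8k}$ of them. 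Since distinct iterations with $r_G$ current carry distinct $F$-runs, summing over $r_G\in R_G$ bounds the number of iterations with $e\ge 16k$ by $\tfrac{1}{8k}\sum_{r_G\in R_G}(j_G-i_G)$.

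Finally, $\sum_{r_G\in R_G}(j_G-i_G)=\Oh(n)$, again by \cref{cor:boundedperiodoverlap}: the filtered runs of $\str{G}$ pairwise overlap in fewer than $8k$ positions and each has length at least $16k$, so there are $\Oh(n/k)$ of them and their total length telescopes to $\Oh(n)$ (and likewise for $R_F$). Hence there are $\Oh(n/k)$ iterations with $e\ge 16k$, contributing $\Oh(n/k)\cdot\Oh(k)=\Oh(n)$, which together with the $\Oh(n)$ for $\mathsf{FilterRuns}$ and the $\Oh(n)$ for the $\Oh(1)$-per-iteration work yields the claimed $\Oh(n)$ bound. I expect the amortization in the previous paragraph to be the main obstacle: a per-iteration accounting only gives $\Oh(nk)$, and obtaining $\Oh(n)$ requires the combinatorial fact that ``long, small-period'' runs are nearly disjoint, so that precisely the pairs that trigger the $\Theta(k)$-time tests are scarce. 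A minor technicality to handle is the last iteration of each $r_G$-block, where $\ell_G$ rather than $\ell_F$ is incremented; its $F$-run is still among the distinct ones counted above, so it is already accounted for.
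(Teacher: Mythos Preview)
Your proof is correct, but it takes a detour that the paper avoids. The paper's argument is simply: since every run surviving $\mathsf{FilterRuns}$ has length at least $16k$ and, by \cref{cor:boundedperiodoverlap}, any two of them overlap in fewer than $8k$ positions, we have $|R_F|,|R_G|=\Oh(n/k)$. Hence the main loop runs at most $|R_F|+|R_G|=\Oh(n/k)$ times, and since \emph{every} iteration costs $\Oh(k)$ (the rotation test via a rolling hash, the balance test via \cref{lem:balanceX}), the total is $\Oh(n)$.

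You actually prove $|R_G|=\Oh(n/k)$ in your last paragraph, but you never feed this back into the iteration count; instead you bound the iterations by $2n$ and then carry out a separate amortization to show only $\Oh(n/k)$ of them trigger the expensive tests. That amortization is valid, and the telescoping over the overlaps inside a fixed $r_G$-block is a nice argument, but it is unnecessary once you know the filtered run lists themselves have size $\Oh(n/k)$. Using that directly collapses your three-stage analysis into a single line.
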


\begin{proof}
    The runtime of \cref{alg:SyncOccur} is mostly affected by the call to \cref{alg:FilterRuns} for sorting and computation of runs in step \ref{ln:2.1} as well as the number of iterations and work done in each iteration of the $\KwSty{while}$ loop spanning steps~\ref{ln:2.4}--\ref{ln:2.18}. First in step \ref{ln:2.1}, by \cref{thm:computeruns}, computing all runs of both forests can be done in time $\Oh(n)$, and there are less than $2n$ runs total by \cref{thm:numruns}. Therefore, sorting all $\Oh(n)$ runs by their starting index can be done using a radix sort in time $\Oh(n)$. The rest of \cref{alg:FilterRuns} filters runs in linear time to make sure the period is small enough and the exponent is large enough. Due to these filtering steps and Corollary~\ref{cor:boundedperiodoverlap}, we can observe that $ |R_F|, |R_G| = \Oh(n/k)$.  
    
    We look individually at the work done in the subsequent iterations of the $\KwSty{while}$ loop. An important observation is that one of the run counters $\ell_F$ or $\ell_G$ always increments by 1 (steps \ref{ln:2.11} and~\ref{ln:2.13}) per iteration of the loop.
    Line~\ref{ln:2.8} involves three checks.  First, the overlap of the runs is computed, which takes constant time.  Then, we check if a rotation of string period $X$ is equal to a rotation of string period $Y$. Since $p_F \leq 4k$, checking all rotations takes time $\Oh(k)$ using a rolling hash function~\cite{KR87}. The last check just requires computing $\sigma(X)$, which by Lemma~\ref{lem:balanceX} takes time $\Oh(|X|) = \Oh(k)$. The remaining instructions within each iteration of the $\KwSty{while}$ loop take $\Oh(1)$ time, so each iteration costs $\Oh(k)$ time in total. As mentioned earlier, $ |R_F|, |R_G| = \Oh(n/k)$, and hence the entire algorithm is done in linear time.
\end{proof}

Now, we prove that \cref{alg:SyncOccur} indeed finds $2k$-synchronized occurrences of horizontal periodicity with large exponent in $F$ and $G$. Any triple $(i, p, e)$ added to set $S$ by   \cref{alg:SyncOccur} satisfies these requirements as per the checks in step \ref{ln:2.8}. Therefore, the following proof is fairly straightforward and mostly just formalizes this intuition.

\begin{lemma}
    Given forests $F, G$, let $S = \mathsf{SyncOccurrences}(F, G)$.  Then for any $(i, p, e) \in S$, there must be runs $(i_F, j_F, p)$ in $\str{F}$, $(i_G, j_G, p)$ in $\str{G}$ each containing $2k$-synchronized occurrences of string $Q^e$ in $\str{F}$ and $\str{G}$ where $|Q| = p \leq 4k$, $e \geq 14k$, $i = \max(i_F, i_G)$, and $Q$ is balanced.
    \label{lem:SyncOccur}
\end{lemma}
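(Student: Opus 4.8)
The plan is to trace through \cref{alg:SyncOccur} and verify that every triple it appends to $S$ has all the asserted properties; all but one of these are immediate bookkeeping, and the real content is exhibiting the $2k$-synchronized occurrences of $Q^e$.

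First I would unpack the append. Suppose $(i,p,e)$ is appended, and let $(i_F,j_F,p_F)=R_F[\ell_F]$ and $(i_G,j_G,p_G)=R_G[\ell_G]$ be the runs the loop inspects at that step, with the preceding guard satisfied. Then $i=\max(i_F,i_G)$, $p=p_F$, and $e=e_0-2k$, where $e_0\ge 16k$ is the value $\lfloor\,|[i_F\dd j_F)\cap[i_G\dd j_G)|/p_F\,\rfloor$ computed just before the guard. Since $R_F=\mathsf{FilterRuns}(\str{F})$ and $R_G=\mathsf{FilterRuns}(\str{G})$, the pairs $(i_F,j_F,p_F)$ and $(i_G,j_G,p_G)$ are genuine runs of $\str{F}$ and $\str{G}$ with $p_F,p_G\le 4k$, by definition of \cref{alg:FilterRuns}. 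The guard supplies $\alpha\le p_F$ with $\rot^\alpha(X)=Y$, where $X=\str{F}[i_F\dd i_F+p_F)$ and $Y=\str{G}[i_G\dd i_G+p_G)$; as $\rot$ preserves length, $p_F=p_G=:p\le 4k$, so both runs have period $p$. The guard also gives $\sigma(X)\ge 0$, so $Q:=\rot^{\sigma(X)}(X)$ is well defined and balanced, with $|Q|=p\le 4k$, and $e=e_0-2k\ge 14k$. Everything except the occurrence claim is now settled.

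For the occurrences, put $N:=|[i_F\dd j_F)\cap[i_G\dd j_G)|=\min(j_F,j_G)-i$ (the intersection is nonempty because $e_0\ge 16k>0$). If $[z\dd z+ep)\subseteq[i_F\dd j_F)$ then periodicity of the run gives $\str{F}[z\dd z+p)=\rot^{(z-i_F)\bmod p}(X)$, so when $z\equiv i_F+\sigma(X)\pmod p$ this window equals $Q$ and hence, being of length $ep$ with period $p$ and prefix $Q$, $\str{F}[z\dd z+ep)=Q^e$; symmetrically, since $Q=\rot^{\sigma(X)-\alpha}(Y)$ is a rotation of $Y$, we get $\str{G}[z\dd z+ep)=Q^e$ whenever $z\equiv i_G+\sigma(X)-\alpha\pmod p$ and $[z\dd z+ep)\subseteq[i_G\dd j_G)$. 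Thus it suffices to find $z_F,z_G\in[i\dd i+N-ep]$ lying in these two residue classes modulo $p$ with $|z_F-z_G|\le 2k$: then $[z_F\dd z_F+ep)\subseteq[i\dd\min(j_F,j_G))=[i_F\dd j_F)\cap[i_G\dd j_G)$, symmetrically for $z_G$, so $Q^e$ occurs inside the $\str{F}$-run at $z_F$ and inside the $\str{G}$-run at $z_G$, and the two occurrences are $2k$-synchronized. Now $[i\dd i+N-ep]$ has length $N-ep\ge(e_0-e)\,p=2kp\ge 2p$, the minimum distance between the two residue classes mod $p$ is some $g\le\lfloor p/2\rfloor\le 2k$, and any block of $\ge 2p$ consecutive integers contains one representative of each class at distance exactly $g$: pick $z_F$ in a length-$p$ subblock whose $g$-neighbourhood still lies in the block, then take whichever of $z_F\pm g$ belongs to the other class. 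This produces the desired $z_F,z_G$ and completes the verification.

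I expect the last step to be the only delicate point. Since $X$ and $Y$ are in general \emph{different} rotations of one another, the occurrences of $Q$ in $\str{F}$ and in $\str{G}$ sit in different residue classes mod $p$ and cannot be aligned exactly; the argument must exploit the slack deliberately built into the algorithm, namely that the exponent was cut from $e_0$ down to $e_0-2k$ precisely so that the $2kp\ge 2p$ surplus of the overlap absorbs both the $\le 2k$ gap between the chosen starting positions and the wrap-around forced by requiring those positions to be $\ge i$. All remaining ingredients — period equality via length-preservation of $\rot$, the bounds $|Q|=p\le 4k$ and $e\ge 14k$, and the balance of $Q$ — drop straight out of the guard and of \cref{alg:FilterRuns}.
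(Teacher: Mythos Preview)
Your proof is correct and follows the same overall approach as the paper: trace through \cref{alg:SyncOccur} and read off the asserted properties from the guard conditions and from \cref{alg:FilterRuns}. The one substantive difference is how the $2k$-synchronization bound is obtained. The paper fixes the $\str{G}$-occurrence at $i_G$ and picks the $\str{F}$-occurrence at the unique $i_F'=i_F+\alpha+pm\in[i_G,i_G+p]$, which as written only yields $|i_F'-i_G|\le p\le 4k$ before asserting ``$2k$-synchronized''; your residue-class argument, using that any two classes modulo $p$ are at circular distance at most $\lfloor p/2\rfloor\le 2k$ and that the $2kp$ slack left by setting $e=e_0-2k$ gives enough room to realise that distance inside the overlap, actually delivers the stated $2k$ bound. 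So your version is essentially the paper's argument with this loose step tightened.
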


\begin{proof}
    First, we observe that triples $(i, p, e)$ are only inserted to $S$ in step~\ref{ln:2.9} of \cref{alg:SyncOccur}. Let $(i_F, j_F, p_F)$ and $(i_G, j_G, p_G)$ be the runs considered by the algorithm during such an insertion, and note that $p = p_F=p_G \le 4k$ and $i = \max(i_F, i_G)$. 
    Moreover, from the checks done in step~\ref{ln:2.8}, the string periods $X = \str{F}[i_F \dd i_F + p)$ and $Y = \str{G}[i_G \dd i_G + p)$ are the same up to rotation, i.e., $\rot^{\alpha}(X) = Y$ for some $\alpha \in [0\dd p)$. Hence, $\str{F}[i_F + \alpha \dd i_F + \alpha + p) = \str{G}[i_G \dd i_G + p)$. From step~\ref{ln:2.8}, we also know that $e + 2k \geq 16k$,  which implies that the length of the set of overlapping indices between the two runs has a lower bound of
    \[
       \left| [i_F \dd j_F) \cap [i_G \dd j_G) \right|\geq p_F \left\lfloor\tfrac{|[i_F \dd j_F) \cap [i_G \dd j_G)|}{p_F}\right\rfloor = p (e+2k) \geq 16kp.
    \]
      Without loss of generality, we assume that $i_F \leq i_G$. Then, there exists some $m \in \mathbb{N}$ such that $i_G \leq i_F' = i_F + \alpha + pm \leq i_G + p$ and $\str{F}[i_F' \dd i_F' + (e+k)p) = \str{G}[i_G \dd i_G + (e+k)p)$. Clearly, we have $2k$-synchronized occurrences of $Y^{e+k}$, with $e+k\geq 15k$. The check in step \ref{ln:2.8} guarantees that $\sigma(X) \ne -1$, and by extension, $\sigma(Y) \ne -1$. Recall that $\sigma(Y)$ is the minimum number of rotations needed for $Y$ to be a balanced parentheses string, i.e. $Q:=\rot^{\sigma(Y)}(Y) = \str{G}[i_G + \sigma(Y) \dd i_G + \sigma(Y) + p)$ is balanced. Since $\sigma(Y) \leq |Y| = p$, we have that $\str{F}[i_F' + \sigma(Y) \dd i_F' + \sigma(Y) + ep) = \str{G}[i_G + \sigma(Y) \dd i_G + \sigma(Y) + ep)$ are $2k$-synchronized occurrences of $Q^e$ in $\str{F}$ and $\str{G}$ where $|Q| = p \leq 4k$ and $e \geq 14k$.
\end{proof}

\begin{algorithm}
    $S \leftarrow \mathsf{SyncOccurrences}(F, G)$\;\label{ln:3.1}
    $s_F, s_G \leftarrow \varepsilon$\;
    $i \leftarrow 0$\;
    \For{$(i', p', e')\in S$}
    {\label{ln:3.4}
        \tcp{Copy from start of previous synchronized occurrences to start of next synchronized occurrences}
        $s_F \leftarrow s_F \cdot \str{F}[i \dd i')$\;
        $s_G \leftarrow s_G \cdot \str{G}[i \dd i')$\;
        \tcp{Reduce synchronized occurrences to exponent of $14k$}
        $i \leftarrow i' + p'(e' - 14k)$\;\label{ln:3.7}
    }
    $s_F \leftarrow s_F \cdot \str{F}[i \dd |\str{F}|)$\;
    $s_G \leftarrow s_G \cdot \str{G}[i \dd |\str{G}|)$\;
    \KwRet{$s_F, s_G$}\;
    \caption{$\mathsf{SyncReductions}(F, G)$}
    \label{alg:SyncReductions}
\end{algorithm}

Next, we prove the main statement that we need in order to show that the outputted forests $F'$ and $G'$ from \cref{alg:SyncReductions} have the same edit distance as the original forests $F$ and $G$.  Conceptually, the proof just shows that for long synchronized occurrences of a run in both forests, any two minimal cost alignments of $F$ and $G$ must match long segments of these periodic sections together.  In fact, we show that given any minimal cost tree alignment $\A$ of $F$ and $G$, we can build a minimal cost tree alignment of $F'$ and $G'$ that follows $\A$ almost exactly.



\begin{lemma}\label{lem:horizreductions}
    Consider forests $F, G$ such that $\str{F}[\alpha_F \dd \beta_F) = \str{G}[\alpha_G \dd \beta_G) = Q^e$ 
    for a balanced string $Q$ of length $0 < |Q| \leq 4k$, an integer exponent $e \geq 6k$, and indices $\alpha_F,\alpha_G,\beta_F,\beta_G$ satisfying 
    $|\alpha_F - \alpha_G| \leq 2k$. 
    Let $F',G'$ be forests such that $\str{F'} = \str{F}[0 \dd \alpha_F) \cdot Q^{e'} \cdot \str{F}[\beta_F \dd |\str{F}|)$ and 
    $\str{G'} = \str{G}[0 \dd \alpha_G) \cdot Q^{e'} \cdot \str{G}[\beta_G \dd |\str{G}|)$ for some integer exponent $e'\ge 6k$.
    Then, $\ted_{\le k}(F, G) = \ted_{\le k}(F', G')$.
\end{lemma}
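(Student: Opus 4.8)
The plan is to establish both inequalities $\ted_{\le k}(F,G)\le\ted_{\le k}(F',G')$ and $\ted_{\le k}(F',G')\le\ted_{\le k}(F,G)$. Observe that passing from $(F,G)$ to $(F',G')$ has exactly the same shape as passing from $(F',G')$ to $(F,G)$: both exponents $e,e'$ are at least $6k$, $0<|Q|\le 4k$, and the offset $|\alpha_F-\alpha_G|\le 2k$ is unchanged since $\str{F}[0\dd\alpha_F)$ and $\str{G}[0\dd\alpha_G)$ remain prefixes of $\str{F'}$ and $\str{G'}$. Hence it suffices to prove the single implication: if $\ted(F,G)\le k$ then $\ted(F',G')\le\ted(F,G)$. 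Indeed, applying this implication also with the roles of $(F,G)$ and $(F',G')$ exchanged yields $\ted(F,G)=\ted(F',G')$ whenever either side is at most $k$, and if both sides exceed $k$ the asserted equality of capped distances is immediate. Before proving the implication, I would reduce to the case that $Q$ is primitive: if $Q=R^t$ with $R$ primitive, then $R$ is again balanced (the parenthesis-depth profile of $R^t$ stays nonnegative only if that of $R$ does, and matched parentheses inside $R$ inherit equal labels), while $Q^e=R^{te}$ and $Q^{e'}=R^{te'}$ with $|R|\le 4k$ and $te,te'\ge 6k$; so the primitive case applied to $R$ gives the claim for $Q$.

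Now fix an optimal tree alignment $\A$ of $F,G$, so $d:=\ted_\A(F,G)=\ted(F,G)\le k$; viewed as a string alignment $\A:\str{F}\to\str{G}$ it has cost $2d$, hence width at most $2d\le 2k$ and at most $2d+1$ breakpoints. Because $Q$ is balanced, $\str{F}[\alpha_F\dd\beta_F)=Q^e$ is a concatenation of $e$ consecutive groups of complete subtrees (all children of one common node), so each block boundary $\alpha_F+i|Q|$ with $i\in[0\dd e]$ lies strictly between two consecutive siblings; the same holds for $G$. The key structural claim I would prove is that all but $\Oh(k)$ of the $e$ copies of $Q$ inside $\str{F}[\alpha_F\dd\beta_F)$ are \emph{cleanly matched} by $\A$, meaning $\A$ matches such a copy perfectly onto a length-$|Q|$ fragment of $\str{G}$, which by construction is an occurrence of $Q$. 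This holds because each breakpoint of $\A$ meets at most two of the $e$ closed blocks of $F$'s periodic region, so at most $2(2d+1)$ copies are spoilt; and for a cleanly matched copy $i$ not within $\Oh(k)$ of either end of the region, the width bound together with $|\alpha_F-\alpha_G|\le 2k$ forces its matched occurrence of $Q$ to lie inside $\str{G}[\alpha_G\dd\beta_G)=Q^e$, whereupon primitivity of $Q$ forces it to coincide with one of the blocks $[\alpha_G+j|Q|\dd\alpha_G+(j+1)|Q|)$, with $|i-j|=\Oh(k)$. By the non-crossing property these block-to-block pairings are order-preserving, in particular injective.

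Given this, the surgery is routine. If $e\ge e'$, I would pick $e-e'$ of the cleanly matched (hence block-to-block matched) copies of $F$'s region --- enough of them exist because only $\Oh(k)$ copies are excluded while $e,e'\ge 6k$ --- and delete from $\A$ each chosen copy together with its matched block of $G$; since only matched pairs are removed, the resulting alignment $\A'$ has the same cost as $\A$. As each deleted copy and each deleted block is a run of complete subtrees cut at block boundaries (this is where balancedness of $Q$ is used), the strings obtained from $\str{F}$ and $\str{G}$ by these deletions are exactly $\str{F'}$ and $\str{G'}$, and $\A'$ is still a tree alignment: a node $u$ of $F'$ that $\A$ matched to a node $v$ could have $v$ inside a deleted block of $G$ only if $u$ were inside the copy perfectly matched to that block, which it is not. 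If instead $e<e'$, I would take one cleanly matched copy, with matched block boundaries $p_F$ in $F$'s region and $p_G$ in $G$'s region, and splice $Q^{e'-e}$ into $\str{F}$ at $p_F$ and into $\str{G}$ at $p_G$, matching the two inserted powers to each other; this again preserves cost and tree-consistency and produces exactly $\str{F'}$ and $\str{G'}$. In either case $\ted(F',G')\le\ted_{\A'}(F',G')=\ted_\A(F,G)=\ted(F,G)$.

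The main obstacle I anticipate is the structural claim and, in particular, the constant bookkeeping that makes the hypothesis $e,e'\ge 6k$ suffice: one must carefully account for how many copies of $Q$ are spoilt by breakpoints versus lost because they lie near an end of the periodic region, using only the width bound $\le 2d\le 2k$ and $|\alpha_F-\alpha_G|\le 2k$. A second delicate point is the verification that $\A'$ remains a valid tree alignment after the surgery; this is where balancedness (edits occur between siblings, at block boundaries) and the reduction to primitive $Q$ (so matched occurrences of $Q$ genuinely are blocks) are essential, and it calls for checking every node whose two parentheses straddle a removed region --- of which there are none except the common ancestors of the whole periodic region, whose subtrees merely shrink.
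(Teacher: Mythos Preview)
Your overall plan coincides with the paper's: reduce to primitive $Q$, observe that each $Q$-copy is a run of complete sibling subtrees, fix an optimal tree alignment $\A$ of cost at most $k$, and exploit perfectly matched copies inside the periodic region to pass from exponent $e$ to $e'$ without changing the cost. The symmetry reduction and the primitivity reduction are handled exactly as in the paper.

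Where you diverge is in the modification step. You propose to \emph{count} the $F$-copies that are cleanly matched to $G$-blocks and then surgically delete (or splice in) $|e-e'|$ of them. The paper instead proves a sharper two-anchor claim: there exist $i_F,i_G\in[0\dd 3k]$ with $(\alpha_F+i_F|Q|,\alpha_G+i_G|Q|)\in\A$, and symmetrically $j_F,j_G\in[0\dd 3k]$ at the right end; the argument is that among any $k{+}1$ consecutive $Q$-copies at least one is matched exactly (each non-exactly-matched copy forces an edit, since $Q$ is balanced), and then width plus primitivity pins the matched fragment to a $G$-block within $3k$ of the corresponding end. Having the two anchors, the paper \emph{rebuilds} the alignment between them from scratch: it perfectly matches $Q^{d}$ to $Q^{d}$ with $d=\min(e-i_F-j_F,\,e-i_G-j_G)\ge e-6k$ and deletes the excess $Q^{|d_F-d_G|}$. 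This cannot increase the cost (any alignment of $Q^{d_F}$ to $Q^{d_G}$ costs at least $|d_F-d_G|\cdot|Q|$), and changing $e$ to $e'$ simply replaces $d$ by $d+(e'-e)\ge 0$, handling both directions at once.

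The practical payoff of the paper's route is precisely the constant you flagged as the main obstacle. Your counting argument loses on two fronts: at most $2d{+}1$ copies are spoilt by breakpoints, and roughly $2k$ copies at each end have matched fragments that may spill outside $G$'s periodic region; together this leaves about $e-6k-\Theta(1)$ guaranteed block-to-block matches, which is one or two short when $e'=6k$ and $e-e'$ is maximal. Your surgery and the verification that tree-consistency survives (both parentheses of any node lie on the same side of each deleted balanced block) are correct; the only gap is producing enough block-to-block matches under the stated hypothesis $e,e'\ge 6k$. Replacing the per-copy deletion with the paper's anchor-and-rebuild step closes that gap without touching the rest of your argument.
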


\begin{figure}
    \centering
    \input{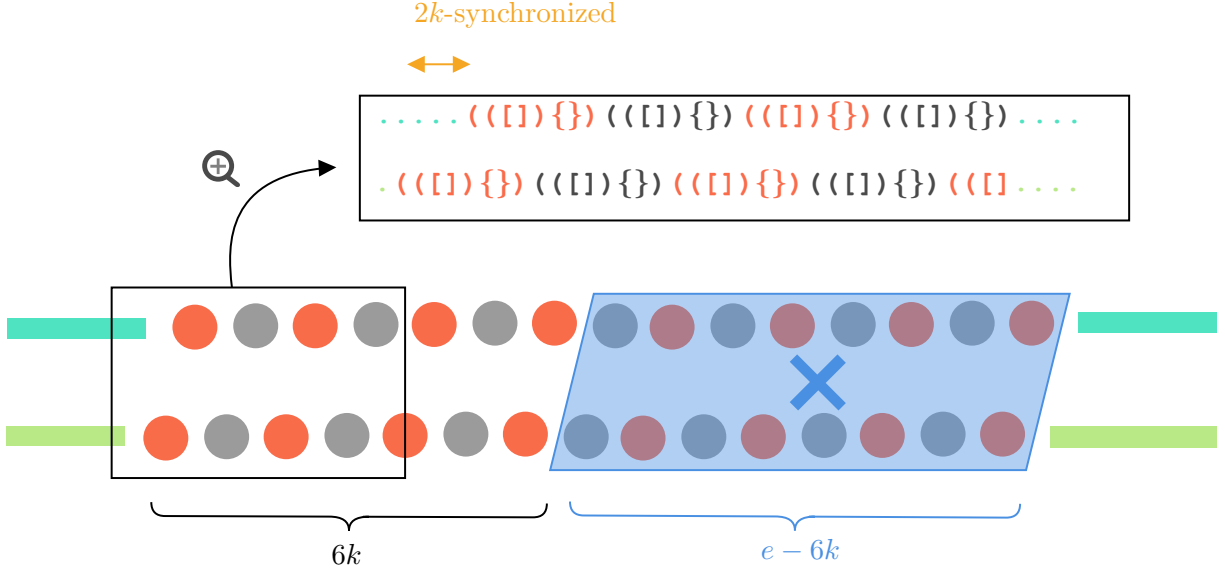}
    \caption{A $2k$-synchronized horizontal periodicity with $|Q| = 8$ is demonstrated. Each gray and red circle correspond to a balanced period block in $\str{F}$ and $\str{G}$ similar to the zoomed portion. According to Lemma~\ref{lem:horizreductions} we can reduce the period exponent to $6k$ by removing the blue part of $\str{F}$ and $\str{G}$ so that the tree edit distance $\ted_{\le k}(F,G)$ remains unchanged.}
    \label{fig:horizontal_reduction}
\end{figure}

\begin{proof}
    We assume without loss of generality that $Q$ is primitive (otherwise, we replace $Q$ by its primitive root)
    and denote $p := |Q| \leq 4k$. Let $\A$ be an optimal tree alignment such that $\ted(F,G)= \ted_\A(F,G) \le k$. 

    \begin{claim}
        There exist $i_F,i_G\in [0\dd 3k]$ such that $(\alpha_F+i_F\cdot p, \alpha_G+i_G\cdot p)\in \A$ and $j_F,j_G\in [0\dd 3k]$ such that $(\beta_F-j_F\cdot p, \beta_G-j_G\cdot p)\in \A$.
    \end{claim}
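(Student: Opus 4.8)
The plan is to prove the two halves of the claim by the same argument: first the existence of small $i_F,i_G$ near the \emph{left} ends $\alpha_F,\alpha_G$ of the two periodic regions, and then --- by running the identical argument on the ``reverse-complements'' of $\str{F}$ and $\str{G}$ (reverse each string and swap $\op$ with $\cl$) --- the existence of small $j_F,j_G$ near the right ends $\beta_F,\beta_G$. A reverse-complemented string again represents a forest, $\A$ read backwards is again a tree alignment of the same cost, $Q^e$ becomes $\rev{Q}^{\,e}$ with $\rev{Q}$ still balanced and primitive, and $\beta_F$ plays the role of a left end. Because $e\ge 6k$, the first $3k$ and the last $3k$ blocks of $Q^e$ are disjoint, so the two instances do not interfere. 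Thus it is enough to exhibit $i_F,i_G\in[0\dd 3k]$ with $(\alpha_F+i_Fp,\alpha_G+i_Gp)\in\A$.

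The first step is to extract the width bound: since $\ted_\A(F,G)=\tfrac12\ed_\A(\str{F},\str{G})\le k$, the alignment $\A$ has cost $\ed_\A(\str{F},\str{G})\le 2k$, so it performs at most $2k$ non-match steps and has width at most $2k$, i.e.\ $|x_t-y_t|\le 2k$ for every $(x_t,y_t)\in\A$. I then cut $[\alpha_F\dd\beta_F)$ into its $e$ period blocks $B_\ell=[\alpha_F+\ell p\dd\alpha_F+(\ell+1)p)$, each spelling $Q$. Every non-match step occurs at a single value of $x$ and hence falls into at most one block, so at most $2k$ blocks are ``dirty''; consequently there is an edit-free block with small index, and I will choose one, $B_j$, whose index $j$ is simultaneously not too small and at most $3k$ (such a block can be located among $B_0,\dots,B_{3k}$ by the pigeonhole count). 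Since $B_j$ is edit-free, $\A$ crosses it by $p$ consecutive match steps, so $\str{F}[\alpha_F+jp\dd\alpha_F+(j+1)p)=Q$ is matched perfectly onto $\str{G}[c\dd c+p)$, where $c$ is the $\str{G}$-coordinate of $\A$ at $x=\alpha_F+jp$; thus $\str{G}[c\dd c+p)=Q$ and $|c-(\alpha_F+jp)|\le 2k$. Using $|\alpha_F-\alpha_G|\le 2k$ and $e\ge 6k$ one verifies $c+p\le\beta_G$, so the matched window does not overhang the right end of $\str{G}$'s periodic region.

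The crux is the rigidity of primitive words. If moreover $c\ge\alpha_G$, then $\str{G}[c\dd c+p)=\rot^{(c-\alpha_G)\bmod p}(Q)$, and since $Q$ is primitive its $p$ cyclic rotations are pairwise distinct; as this rotation equals $Q$, necessarily $c\equiv\alpha_G\pmod p$, say $c=\alpha_G+i_Gp$ with $i_G\ge 0$. Then $(\alpha_F+jp,\ \alpha_G+i_Gp)\in\A$, and with $i_F:=j$ the identity $(i_G-i_F)p=\bigl(c-(\alpha_F+jp)\bigr)+(\alpha_F-\alpha_G)$ together with $|c-(\alpha_F+jp)|\le 2k$ and $|\alpha_F-\alpha_G|\le 2k$ gives $|i_G-i_F|\,p\le 4k$; combining this with the bounds on $j$ (and with $i_G\ge 0$, and with $p\ge 2$, indeed $p$ even since $Q$ is balanced) keeps $i_G$ inside $[0\dd 3k]$ too. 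Feeding $\rev{Q}^{\,e}$ through the same machinery yields the required $j_F,j_G\in[0\dd 3k]$ with $(\beta_F-j_Fp,\beta_G-j_Gp)\in\A$.

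The step I expect to be the main obstacle --- and the reason for the care in choosing $B_j$ --- is ruling out that the matched window overhangs the \emph{left} end of $\str{G}$'s periodic region: $c$ can a priori be as small as $\alpha_F+jp-2k\ge\alpha_G-4k$, and when $c<\alpha_G$ one cannot read off a block boundary, because the overhanging characters of $\str{G}$ lie outside $Q^e$ and the perfect match only certifies that they spell a prefix of $Q$. Since $c<\alpha_G$ can be shown to force $jp<2k-(\alpha_F-\alpha_G)\le 4k$, taking the index $j$ large enough (roughly $j\ge 2k/p$) forces $c\ge\alpha_G$ whenever $p$ is not tiny; the one leftover case $p=2$ (so $Q=\op_a\cl_a$, a lone labelled leaf) I plan to dispose of by a separate elementary argument that exploits the balance of $\str{G}$ together with the consistency of the tree alignment $\A$ on the $e$ singleton blocks. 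The remaining bookkeeping --- verifying that every shift introduced along the way stays within $3k$ --- is routine, and it is precisely where the hypothesis $e\ge 6k$ is spent, since it is what supplies enough periodic material on both ends for these shifts to be absorbed.
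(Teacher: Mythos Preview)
Your overall plan---locate a perfectly matched copy of $Q$ inside the periodic region, invoke primitivity to snap the $G$-coordinate onto a block boundary, and handle the $(j_F,j_G)$ half by reversal---is exactly the paper's plan. The difficulty you correctly isolate, namely the possibility that the matched window overhangs the left end $\alpha_G$, is also the crux. However, your proposed resolution of that difficulty does not close.

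The tension is quantitative. To rule out the overhang via the width bound you need $jp\ge 4k$, i.e.\ $j\ge 4k/p$. To land $i_G$ in $[0\dd 3k]$ you need $i_G\le j+4k/p\le 3k$, i.e.\ $j\le 3k-4k/p$. Thus you must find a clean block with index in $[4k/p,\,3k-4k/p]$, an interval of at most $3k-8k/p+1$ integers. With your string-level count of $\le 2k$ dirty blocks, pigeonhole requires this interval to contain $\ge 2k+1$ integers, which forces $p\ge 8$. So the problem is not confined to $p=2$: your argument breaks for every $p\in\{2,4,6\}$, and the ``separate elementary argument'' you sketch for $p=2$ is both unsubstantiated and insufficient in scope. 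Even ignoring the overhang, taking $j$ as large as $3k$ only yields $i_G\le 3k+4k/p$, which for $p=2$ is $5k$, not $3k$.

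The paper sidesteps all of this with a single observation you are missing. Let $(x_F,x_G)$ be the \emph{leftmost} pair of $\A$ with $x_F\ge\alpha_F$ and $x_G\ge\alpha_G$; by symmetry between $F$ and $G$ one may assume $x_F=\alpha_F$. Then for \emph{any} clean block $B_{i_F}$ with $i_F\ge 0$, monotonicity of $\A$ forces the matched $G$-position $y_G$ to satisfy $y_G\ge x_G\ge\alpha_G$: the overhang simply cannot occur. This lets the paper search among the first $k+1$ blocks only, and since $Q$ is balanced each of the at most $k$ tree edits can spoil at most one such subforest-block, so some $i_F\in[0\dd k]$ is clean. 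With $i_F\le k$ and $p\ge 2$ one gets $y_G\le \alpha_G+kp+4k\le\alpha_G+3kp$, hence $i_G\le 3k$. Note that both ingredients---the anchor and the tree-level (rather than string-level) edit count---are needed to hit the stated constant $3k$.
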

    \begin{proof} Let $(x_F,x_G)\in \A$ be the leftmost element of $\A$ such that $x_F \ge \alpha_F$ and $x_G \ge \alpha_G$.
    By symmetry between $F$ and $G$, we assume without loss of generality that $x_F = \alpha_F$.
    Consider the $k+1$ occurrences of $Q$ starting at positions $\alpha_F+i\cdot p$ for $i\in [0\dd k]$.
    Since $Q$ is balanced, the alignment $\A$ (of cost at most $k$) matches at least one of them exactly;
    we can thus define $i_F\in [0\dd k]$ so that $\A$ matches $\str{F}[\alpha_F+i_F\cdot  p\dd \alpha_F+(i_F+1)\cdot p)$ exactly to some fragment $\str{G}[y_G\dd y_G+p)$.
    Due to $(x_F,x_G)\in \A$, the non-crossing property of $\A$ implies that $y_G \ge \alpha_G$.
    Moreover, since $\ted_\A(F, G) \leq k$ and $\str{F}[\alpha_F + i_F \cdot p] \sim_\A \str{G}[y_G]$, we have $y_G \le (\alpha_F+i_F \cdot p)+2k \le \alpha_F + kp+2k \le \alpha_G + kp + 4k \le \alpha_G + 3kp$,
    where the last inequality follows from $p\ge 2$ (recall that $Q$ is balanced, so its length is even).
    Furthermore, since $Q$ is primitive (i.e., distinct from all its non-trivial cyclic rotations), we conclude that $y_G = \alpha_G+i_G \cdot p$ for some $i_G\in [0\dd 3k]$. 
    The second claim is symmetric (with respect to reversal).
    \end{proof}

    Observe that $\str{F}[\alpha_F+i_F p\dd \beta_F-j_F p)=Q^{d_F}$ for $d_F := e-j_F -i_F$
    and, symmetrically, $\str{G}[\alpha_G+i_G p\dd \beta_G-j_G  p)=Q^{d_G}$ for $d_G := e-j_G -i_G$.
    We denote $d= \min(d_F,d_G)$, observe that $d\ge e-6k\ge 0$, and construct a tree alignment $\A'$ so that it
    \begin{itemize}
        \item aligns $\str{F}[0\dd \alpha_F+i_F  p)$ with $\str{G}[0\dd \alpha_G+i_G  p)$ in the same way as $\A$ does;
        \item matches $\str{F}[\alpha_F+i_F p\dd \alpha_F+(i_F+d)p)=Q^d$ with $\str{G}[\alpha_G+i_G p\dd \alpha_G+(i_G+d)p)=Q^d$;
        \item deletes $\str{F}[\alpha_F+(i_F+d)p \dd \beta_F-j_Fp)=Q^{d_F-d}$ and $\str{G}[\alpha_G+(i_G+d)p \dd \beta_G-j_Gp)=Q^{d_G-d}$;
        \item aligns $\str{F}[\beta_F-j_Fp\dd |\str{F}|)$ with $\str{G}[\beta_G-j_Gp\dd |\str{G}|)$ in the same way as $\A$ does.
    \end{itemize}
    Note that $\A'$ is a tree alignment: for any node of $F$, the corresponding parentheses are either both outside $\str{F}[\alpha_F\dd \beta_F)$ (and then they are handled as in $\A$) or both contained in a single copy of $Q$ (which is either deleted or matched perfectly to a copy of $Q$ in $\str{G}$).
    Moreover, the cost of $\A'$ does not exceed the cost of $\A$: the two alignments only differ in how they align $Q^{d_F}$ with $Q^{d_G}$,
    and $\A'$ provides an optimum alignment of these fragments. 

    Now, if the exponent $e$ of $Q^e=\str{F}[\alpha_F\dd \beta_F)=\str{G}[\alpha_G\dd \beta_G)$ is modified to $e'\ge 6k$,
    we can interpret this as modifying exponent $d$ of the fragments $Q^d$ matched perfectly by $\A'$ to $d'=d+e'-e \ge 0$.
    Thus, $\A'$ can be trivially adapted without modifying its cost and hence $\ted(F',G')\le \ted_{\A'}(F,G)=\ted(F,G)$.
    The converse inequality follows by symmetry between $(F,G)$ and $(F',G')$.
\end{proof}

From the previous lemma, it is clear that reducing a long run does not affect edit distance. Utilizing this idea, we finally prove that the forests outputted by \cref{alg:SyncReductions} have the same edit distance as the input forests and avoid $2k$-synchronized runs with an exponent more than $14k$ without changing the edit distance.

\begin{definition}[Synchronized horizontal periodicity]
    We say that forests $F,G$ \emph{avoid synchronized horizontal $k$-periodicity} if there is no non-empty balanced string $Q$ of length $|Q|\le 4k$
    such that $Q^{18k}$ has $2k$-synchronized occurrences in $\str{F},\str{G}$.
\end{definition}

\begin{proposition}[Avoiding Horizontal $k$-Periodicity]\label{prp:hor}
    There exists an $\Oh(n)$-time algorithm that, given labeled forests $F,G$ of total size $n$ and an integer $k\in \Zp$, produces labeled forests $F',G'$
    that satisfy $\ted_{\le k}(F,G)=\ted_{\le k}(F',G')$ and, moreover, avoid synchronized horizontal $k$-periodicity.
\end{proposition}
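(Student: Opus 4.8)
The plan is to define $F',G'$ as the labeled forests whose parentheses representations $\str{F'},\str{G'}$ are returned by $\mathsf{SyncReductions}(F,G)$ (\cref{alg:SyncReductions}), with each node inheriting the label carried by its surviving pair of matching parentheses, and then to establish three things: well-definedness together with the $\Oh(n)$ time bound, preservation of $\ted_{\le k}$, and the absence of synchronized horizontal $k$-periodicity.

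\paragraph{Well-definedness and running time.}
First I would check that $\str{F'},\str{G'}$ really are labeled-forest representations: by \cref{lem:SyncOccur} every triple $(i',p',e')\in S$ carries a \emph{balanced} string $Q$ of length $p'$, so excising whole copies of $Q$ removes complete subtrees and keeps the sequences well parenthesized with a consistent labeling. Next I would verify that the intervals $[i'\dd i'+p'(e'-14k))$ removed by successive triples are pairwise disjoint and increasing in both $\str{F}$ and $\str{G}$, so that the concatenations in \cref{alg:SyncReductions} are coherent; this uses the two-pointer structure of \cref{alg:SyncOccur} (each iteration advances $\ell_F$ or $\ell_G$, so successive triples stem from runs lying further right in $\str{F}$ or in $\str{G}$) together with \cref{cor:boundedperiodoverlap}, which caps the overlap of two filtered runs at $8k$ while we always retain $14k$ whole periods. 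For the running time, \cref{lem:hoccurRuntime} gives $\Oh(n)$ for computing $S$; \cref{alg:FilterRuns} and \cref{cor:boundedperiodoverlap} force $|S|\le|R_F|+|R_G|=\Oh(n/k)$, and the concatenations in \cref{alg:SyncReductions} read each position of $\str{F},\str{G}$ at most once, so the total is $\Oh(n)$.

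\paragraph{Preserving the edit distance.}
I would apply the reductions one triple at a time. For $(i',p',e')\in S$, \cref{lem:SyncOccur} supplies a balanced primitive $Q$ with $|Q|=p'\le 4k$ such that $Q^{e}$, where $e=e'+2k\ge 14k$, has $2k$-synchronized occurrences in the current pair of forests; shrinking that occurrence to $Q^{14k}$ (and $14k\ge 6k$) leaves the tree edit distance unchanged by \cref{lem:horizreductions}. Because the reduced windows are disjoint, each reduction leaves every other targeted periodic region untouched, so \cref{lem:horizreductions} applies at every step and chaining the equalities gives $\ted_{\le k}(F,G)=\ted_{\le k}(F',G')$.

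\paragraph{Absence of synchronized horizontal $k$-periodicity --- the main obstacle.}
Suppose towards a contradiction that some non-empty balanced $Q^\ast$ with $p^\ast:=|Q^\ast|\le 4k$ has $Q^{\ast\,18k}$ occurring in $\str{F'}$ at $x'$ and in $\str{G'}$ at $y'$ with $|x'-y'|\le 2k$; we may take $Q^\ast$ primitive, so $\str{F'}$ and $\str{G'}$ each carry a run of primitive period $p^\ast$ and exponent $\ge 18k$ covering these occurrences. I would trace each run back through the reductions. If both runs lie inside fragments copied verbatim from $\str{F}$ and $\str{G}$, then --- since reductions delete matching index ranges and thus place all junctions at the same positions, and since an occurrence of length $\ge 36k$ around $x'$ or $y'$ cannot avoid a junction unless no junction lies within $\Oh(k)$ of it --- their preimages are $2k$-synchronized runs of $\str{F},\str{G}$ of period $p^\ast\le 4k$ and exponent $\ge 18k>16k$, hence filtered runs whose period strings are rotations of one another with a balanced rotation; the crucial claim is then that the two-pointer loop of \cref{alg:SyncOccur} actually inspects this particular pair of filtered runs (advancing past a filtered run never skips a $2k$-synchronized partner, by \cref{cor:boundedperiodoverlap}), so the algorithm would have emitted a triple reducing this periodicity to exponent $14k$, contradicting the surviving exponent $\ge 18k$. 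If instead a run straddles a reduction junction, I would argue with \cref{lem:runoverlap}: if its period equals the period $p$ of the reduced block then the run is the retained $Q^{14k}$-block and has exponent exactly $14k<18k$; otherwise it overlaps the period-$p$ material flanking the junction by fewer than $p+p^\ast-\gcd(p,p^\ast)<8k$ positions on each side, so either it is too short to have exponent $\ge 18k$ or its $2k$-synchronized partner is squeezed between a period-$p$ prefix of the other forest's run and the retained block and becomes too short --- a contradiction in every case. I expect this last paragraph to carry the real work: confirming that the two-pointer scan never skips a $2k$-synchronized pair of filtered runs, and that no reduction manufactures fresh long synchronized periodicity, both demand a meticulous case analysis on how runs sit relative to junctions and careful bookkeeping of the constants $14k<16k<18k$, even though all the ingredients are present in \cref{cor:boundedperiodoverlap,lem:runoverlap,lem:SyncOccur,lem:horizreductions}.
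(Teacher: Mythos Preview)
Your proposal follows essentially the same route as the paper: run \cref{alg:SyncReductions}, invoke \cref{lem:SyncOccur} and \cref{lem:horizreductions} for correctness of each reduction, and then argue by contradiction that no $2k$-synchronized $Q^{18k}$ with balanced $|Q|\le 4k$ survives, splitting into the case where the offending pair was already present in $\str{F},\str{G}$ (and would have been caught by the two-pointer scan, using \cref{cor:boundedperiodoverlap}) versus the case where it was manufactured by a reduction.

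The one place where your sketch diverges and goes wrong is the junction-straddling case. Your claim that ``if its period equals the period $p$ of the reduced block then the run is the retained $Q^{14k}$-block and has exponent exactly $14k$'' is false: the deletion window starts at $i'=\max(i_F,i_G)$, which may strictly exceed $i_F$, so in $\str{F'}$ the retained block is glued to the leftover prefix $\str{F}[i_F\dd i')$ of the original run, and the resulting run in $\str{F'}$ can have exponent well above $14k$. Trying to salvage this via length constraints on both sides gets messy. The paper's handling of this case is cleaner and avoids the issue entirely: it observes that any run $r_Q'$ in $\str{F'}$ of period $\le 4k$ and exponent $\ge 18k$ overlaps at most two reduced runs (by \cref{cor:boundedperiodoverlap}), each overlap being at most $8k$ while $14k$ whole periods of each reduced run are retained; hence the entire substring $\str{F'}[i_Q'\dd j_Q')$ equals a substring of $\str{F}$, so the run was already present in $\str{F}$. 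This reduces the second case to the first, where the two-pointer argument applies. Replacing your length-constraint sketch with this ``pull back to $\str{F}$'' step would make your argument go through.
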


\begin{proof}
    Let $s_F, s_G = \mathsf{SyncReductions}(F, G)$.  By Lemma~\ref{lem:SyncOccur}, for any triple $(i, p', e') \in S$ of \cref{alg:SyncReductions}, there is  $2k$-synchronized occurrences in $F, G$ of some $Q^{e'}$ with $|Q| = p$ satisfying the constraints of Lemma~\ref{lem:horizreductions} within runs $r_1 = (i_1, j_1, p)$ and $r_2 = (i_2, j_2, p)$ of $F$ and $G$, respectively. In steps 7 and 8, we move the start of the synchronized occurrences forward by $(e' - 14k)p$ and do not copy the skipped over indices to $s_F$ and $s_G$.  This is equivalent to reducing the exponent of $r_1$ and $r_2$ by $e' - 14k$ since $\frac{j_1- (i_1 + (e'-14k)p)}{p} = \frac{j_1 - i_1}{p} - (e' - 14k)$. Furthermore, since the $2k$-synchronized occurrences with exponent $e'$ lies completely within $r_1$ and $r_2$, steps 9 and 10 are actually equivalent to reducing the synchronized occurrences of $Q^{e'}$ to $Q^{14k}$ in each forest. Note that since $Q$ is balanced, there exist forests $F', G'$ such that $\str{F'} = s_F$ and $\str{G'} = s_G$. Therefore, by Lemma~\ref{lem:horizreductions}, we have that $\ted_{\le k}(F,G)=\ted_{\le k}(F',G')$.
    
    Now, we must show that $F', G'$ avoid synchronized horizontal $k$-periodicity.  Assume for contradiction that there exists $2k$-synchronized occurrences of some $Q^{18k}$ with balanced string period $Q$ such that $|Q| \leq 4k$ in $F'$ and $G'$.  If $Q^{18k}$ was initially found in $2k$-synchronized occurrences in $F$ and $G$, then since it is periodic, it must be the case that it was contained in runs $(i_F, j_F, p_F), (i_G, j_G, p_G)$, in $F$ and $G$ respectively, with $p_F, p_G \leq |Q| \leq 4k$ and $\frac{j_F -i_F}{p_F}, \frac{j_G - i_G}{p_G} \geq 18k$. Without loss of generality, let $j_F < j_G$. Since $Q^{18k}$ is a $2k$-synchronized occurrence, it must be that the overlap of these runs is at least $18k - 2k = 16k$ and so, $j_F > i_G + 8k$. Let $(i_G', j_G', p')$ be the run in $R_G$ preceding $(i_G, j_G, p_G)$.  Note that by Corollary~\ref{cor:boundedperiodoverlap}, $j_G' < i_G + 8k < j_F$. Furthermore, we only increment the run counter of the forest whose current run ends before the other forest's current run. Since $j_G' < j_F$ and $j_F < j_G$, we will always reach an iteration of \cref{alg:SyncOccur} that considers the pair of runs $(i_F, j_F, p_F), (i_G, j_G, p_G)$. Since the overlap of the two runs is at least $16k$, \cref{alg:SyncOccur} will therefore add triple $(\max(i_F, i_G), p_F, e - 2k)$ to $S$ for some $e \geq 16k$. Then \cref{alg:SyncReductions} reduces the $2k$-synchronized occurrences containing $Q^{16k}$ to an exponent of at most $14k$ in steps 9 and 10. Therefore, if any $2k$-synchronized occurrences with exponent at least $18k$ and period at most $4k$ is present in $F$ and $G$, it will be reduced to an exponent of at most $14k$ in $F', G'$. 
    
    Now, if $F', G'$ do not avoid synchronized horizontal $k$-periodicity it must be the case that by reducing the exponent of overlapping runs, we created new $2k$-synchronized occurrences $Q^{18k}$ where $|Q| \leq 4k$. Clearly, $Q^{18k}$ must lie in some new run $r_Q' = (i_Q', j_Q', p_Q')$ in $F'$ since it is periodic. We will show that this is not possible for such a run to form due to the small overlap between periodic substrings with period at most $4k$. By Corollary~\ref{cor:boundedperiodoverlap}, the overlap between $r_Q'$ and any other reduced runs is at most $8k$. Since $r_Q'$ has length at least $18k$, $r_Q'$ may overlap at most two reduced runs. We refer to the two reduced runs as $r_1' = (i_1', j_1', p_1')$ and $r_2' = (i_2', j_2', p_2')$ and without loss of generality assume they are in $\str{F'}$. Let $\str{F'}[j_1' - \beta_1 \dd j_1']$ be the overlap between $r_1'$ and $r_Q'$, and similarly let $\str{F'}[i_2' \dd i_2' + \beta_2]$ be the overlap between $r_2'$ and $r_Q'$. Since $r_1'$ and $r_2'$ are reduced, there are two corresponding runs $r_1 = (i_1, j_1, p_1), r_2 = (i_1, j_2, p_1)$ in $\str{F}$, such that $\str{F}[j_1 - \beta_1 \dd j_1) = \str{F'}[j_1'-\beta_1 \dd j_1')$ and $\str{F}[i_2 \dd i_2 + \beta_2) = \str{F'}[i_2' \dd i_2' + \beta_2)$. Note that since we do not change any characters between any runs, we also know that the middle substrings $\str{F'}[j_1' \dd i_2'] = \str{F}[j_1 \dd i_2]$ are equal as well.  Combining these three substrings we have that $\str{F'}[i_Q' \dd j_Q') = \str{F}[j_1 - \beta_1 \dd i_2 + \beta_2)$, which implies that $Q^{18k}$ is a periodic substring of $\str{F}$. In other words, $Q^{18k}$ is contained in a run in $\str{F}$ before any reductions occur, which is a contradiction.
    
    Finally, we discuss the runtime of \cref{alg:SyncReductions}.  First \cref{alg:SyncReductions} calls \cref{alg:SyncOccur} in step~\ref{ln:3.1}, and by Lemma~\ref{lem:hoccurRuntime}, this takes time $\Oh(n)$. Now, we consider the loop in steps~\ref{ln:3.4}--\ref{ln:3.7}. We copy substrings from $\str{F}$ and $\str{G}$ of the start of one pair of synchronized occurrences to the start of the next pair of synchronized occurrences.  Note that synchronized occurrences we copy have periods at most $4k$ and exponents at least $14k$, and so by \cref{cor:boundedperiodoverlap}, we know that no two synchronized occurrences will start at the same index. Therefore, we only copy each character of $\str{F}$ and $\str{G}$ at most once across the entire algorithm, and so \cref{alg:SyncReductions} takes time $\Oh(n + |\str{F}| + |\str{G}|) = \Oh(n)$.    
\end{proof}

\section{Vertical Periodicity Reduction}\label{sec:ver}
In order to remove periodicity from regions of $\str{F}, \str{G}$ which may be unbalanced, we consider a second type of periodicity, \emph{vertical} periodicity, in addition to horizontal periodicity.  Avoiding horizontal periodicity allows us to reduce large powers of repeated balanced substrings; in this section, we essentially aim to reduce large powers of pairs of periodic substrings which are balanced together but may be unbalanced separately. We do so by finding paths of nodes in forests $F$ and $G$ such that the children to the left and right of the path of each node in the path are the same, which we call vertical periodicity. At the bottom of the path, we may no longer have vertical periodicity, and so in a parentheses representation of the forest we will have two separate periodic substrings, one to the left of the path and one to the right of the path. We define some useful notation for vertical periodicity as follows:

\newcommand{\cnt}[1]{\langle#1\rangle}
\begin{definition}[Context]
We define a \emph{context} as a pair $C=(C_L,C_R)$ such that $C_L\cdot C_R=\str{T}$ for some labeled forest $T$.
\end{definition}

\begin{definition}[Vertical composition]
For a context $C=(C_L,C_R)$ and a labeled forest $F$, we denote by $C\cnt{F}$ the labeled forest $H$ such that $\str{H}=C_L\cdot \str{F}\cdot C_R$.
Similarly, for two contexts $C=(C_L,C_R)$ and $D=(D_L,D_R)$, we denote $C\cnt{D}=(C_LD_L,D_RC_R)$.
\end{definition}

Observe that the vertical composition of contexts is associative.
For a context $C$ and an integer $e\in \Zp$, we define $C^e$ as the context obtained by vertical composition of $e$ copies of $C$.
We say a forest context $C$ \emph{occurs} at node $u$ of a labeled forest $F$
if the subtree of $F$ rooted at node $u$ is of the form $C\cnt{H}$ for some labeled forest $H$.

We say that context $C$ has $s$-synchronized occurrences in labeled forests $F,G$
if $C$ occurs at a node $u$ of $F$ and at a node $v$ of $G$ such that $|o_F(u)-o_G(v)|\le s$ and $|c_F(u)-c_G(v)|\le s$.

\begin{definition}[Synchronized vertical periodicity]
We say that forests $F,G$ \emph{avoid synchronized vertical $k$-periodicity} if there is no context $C=(C_L,C_R)$ with $|C_L|,|C_R|\le 4k$
such that $C^{16k}$ has $2k$-synchronized occurrences in $F,G$.
\end{definition}

To avoid synchronized vertical $k$-periodicity, we first compute periodic contexts which occur in forests $F$ and $G$ individually without concern for synchronicity. Note that if a forest has such a periodic context $C^{16k} = (C_L, C_R)^{16k}$, then that forest has two separate periodic substrings we want to identify, namely $C_L^{16k}$ and $C_R^{16k}$. Additionally, from the definition of context it is clear that while $C_L$ and $C_R$ do not need to be balanced parentheses strings, $C_L \cdot C_R$ does have to be balanced. Therefore, $C_L$ cannot begin with a closing parenthesis and $C_R$ cannot end with an opening parenthesis since such parentheses would have no match in $C_L \cdot C_R$. For this reason, we will want to find periodic substrings starting at opening parentheses in $\str{F}, \str{G}$ as well as periodic substrings ending at closing parentheses 

For a node $u$ in a forest $F$,
consider $q_L \in [1\dd  4k]$ and $e \in \mathbb{Q}$ such that $q_L$ is a period of $\str{F}[o(u) \dd o(u)+q_L \cdot e)$ and $e$ is maximized. 
In other words, we want to find the longest periodic substring starting at $o(u)$ with period at most $4k$ and exponent at least $16k$. If $e \geq 16k$, we may have vertical periodicity that we want to avoid, and so we define an array $Q_F$ to store these values. Let $Q_F[o(u)] := (q_L, o(u) + q_Le)$. If $e < 16k$, we do not need to worry about reducing any substring starting at $o(u)$ and so, we set a default value $Q_F[o(u)] := (1, o(u))$. Since we want to find periodic substrings ending in closing parentheses as well, we define $Q_F[c(u)] := (q_R, i)$ where $\str{F}(c(u) - i \dd c(u)]$ is the longest periodic substring ending at $c(u)$ with a period $q_R \leq 4k$ and exponent at least $16k$.  Again if the exponent is less than $16k$, we define $Q_F[c(u)] := (1, c(u))$.

\begin{lemma}\label{lem:computeQ}
    Given a forest $F$, $Q_F$ can be computed in time $\Oh(n)$.
\end{lemma}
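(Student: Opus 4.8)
The plan is to reduce the computation of $Q_F$ to computing all runs of $\str{F}$. By \cref{thm:computeruns} I can compute $\textsf{Runs}(\str{F})$ in $\Oh(n)$ time, and by \cref{thm:numruns} there are fewer than $2n$ of them. First I would initialise every entry of $Q_F$ to its default value. Then, for each run $(i,j,p)$, I would spend $\Oh(1)$ time checking whether $p\le 4k$ and $j-i\ge 16kp$ (i.e.\ the exponent is at least $16k$); if not, the run is discarded. For a run passing this test I would scan the opening-parenthesis positions $a\in[i\dd j-16kp]$ and set $Q_F[o(u)]:=(p,j)$ whenever $a=o(u)$ for some node $u$, and symmetrically scan the closing-parenthesis positions $b\in[i+16kp-1\dd j-1]$ and set $Q_F[c(u)]:=(p,\,c(u)+1-i)$ (where here $i$ is the left endpoint of the run) whenever $b=c(u)$. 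Every position not touched in this way keeps its default value. Intuitively, a position $a$ is rewritten precisely by the run that certifies the longest periodic fragment with period $\le 4k$ starting (resp.\ ending) at $a$, because extending such a fragment as far as possible lands exactly on the endpoint of that run.

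The heart of the correctness argument is that this assignment never writes the same position twice, and in fact the unique run writing a position $a$ is the one realising the exponent-maximising period $q_L\le 4k$ demanded by the definition of $Q_F$. For the first point, suppose $a$ lay in the opening-scan ranges of two distinct qualifying runs $(i_1,j_1,p_1)$ and $(i_2,j_2,p_2)$. Then $i_t\le a$ and $j_t\ge a+16kp_t\ge a+16k$ for $t=1,2$, so $[a\dd a+16k)\subseteq[i_1\dd j_1)\cap[i_2\dd j_2)$, and \cref{lem:runoverlap} bounds this overlap by $p_1+p_2-\gcd(p_1,p_2)<8k$, a contradiction; the closing case is symmetric. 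For the second point, if $a=o(u)$ is written by the run $(i,j,p)$, then $\str{F}[a\dd j)=Q^{(j-a)/p}$ for $|Q|=p\le 4k$ with $(j-a)/p\ge 16k$, and it is right-maximal (since $j$ is a run endpoint). Moreover, passing to a strictly smaller period can only lengthen the period-$q$ extension starting at $a$ and hence strictly increase its exponent, so the exponent-maximising period is the smallest one, which is exactly the period of a genuine run; by the uniqueness just shown, that run must be $(i,j,p)$, so the recorded pair is correct. Conversely, if $a=o(u)$ is never written, then no period $q\le 4k$ yields exponent $\ge 16k$ from $a$, because any such periodic fragment (having exponent $\ge 2$) lies inside some run $(i',j',q')$ with $q'\le q\le 4k$ and $j'-a\ge 16kq'$, which would have written $a$. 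The argument for $c(u)$ is the mirror image, reading $\str{F}$ from right to left.

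For the running time: computing the runs costs $\Oh(n)$; testing each of the fewer than $2n$ runs costs $\Oh(n)$ in total; and, by the disjointness established above, the opening-scan ranges of the qualifying runs are pairwise disjoint subintervals of $[0\dd 2n)$, so all opening scans together touch $\Oh(n)$ positions, and likewise for the closing scans. Hence the whole procedure runs in $\Oh(n)$ time.

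I expect the main obstacle to be the second correctness point — arguing that the one run whose scan range contains $a$ is exactly the run attaining the maximum exponent, not merely some period $\le 4k$ that happens to work. This relies on combining the run-overlap bound of \cref{lem:runoverlap} (equivalently \cref{cor:boundedperiodoverlap}) with the observation that shrinking the period can only increase the attained exponent, so the optimum is reached at the minimal period, which is necessarily the period of an actual run. Converting the run data into the exact pair formats $(q_L,\cdot)$ for opening and $(q_R,\cdot)$ for closing parentheses, and bookkeeping which scanned positions are openings versus closings, is then routine.
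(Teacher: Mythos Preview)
Your approach is essentially identical to the paper's: filter the runs to those with period at most $4k$ and exponent at least $16k$, scan each qualifying run, and use the overlap bound (\cref{lem:runoverlap}/\cref{cor:boundedperiodoverlap}) to argue that each position is updated at most once, giving $\Oh(n)$ total work. The paper iterates over all of $[i\dd j)$ and says each index is \emph{considered} at most twice but \emph{updated} at most once; your restriction to the sub-range $[i\dd j-16kp]$ with the disjointness argument amounts to the same thing.

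There is, however, a real gap exactly where you anticipated one. The assertion that ``passing to a strictly smaller period can only lengthen the period-$q$ extension starting at $a$'' is false: in $\mathtt{abab}$, the period-$2$ extension from position $0$ has length $4$, whereas the period-$1$ extension has length $1$. The correct argument is subtler. Suppose $q^*\le 4k$ attains the maximum exponent $e^*\ge 16k$ at $a$, and let $p'$ be the smallest period of $\str{F}[a\dd a+q^*e^*)$. Since $e^*\ge 2$, Fine and Wilf give $p'\mid q^*$, so the period-$p'$ extension has length $L_{p'}\ge q^*e^*$ and exponent $L_{p'}/p'\ge (q^*/p')e^*\ge e^*$; maximality forces $p'=q^*$. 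Hence $q^*$ is the period of a genuine run, that run qualifies (its exponent from $a$ is $\ge 16k$, so its total exponent is too), and by your uniqueness argument it must be the run $(i,j,p)$ that wrote $a$. The paper's proof does not spell this out either, so you were right to identify it as the crux; the fix is just to replace your monotonicity claim with the Fine--Wilf step above.
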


\begin{proof}
    Initially, we set $Q_F[\ell] := (1, \ell)$ as the default value for all $\ell\in [0\dd 2|F|)$
    and use \cref{alg:FilterRuns} to compute the set $R_F$ of runs $(i,j,p)$ of $\str{F}$ 
    with period $p\le 4k$ and exponent $\frac{j-i}{p}\ge 16k$; as discussed in \cref{sec:hor}, this costs $\Oh(n)$ time.
    For every run $(i, j, p)\in R_F$, take any index $\ell \in [i\dd j)$ such that $\str{F}[\ell]$ is an opening parenthesis. 
    If the exponent of the substring $\str{F}[\ell \dd j)$, that is, $\frac{j-\ell}{p}$, is at least $16k$, we update $Q_F[\ell]$ to $(p, j)$. 
    By iterating over all indices in each run, we will find the longest periodic substring starting at every opening parenthesis in $Q_F$ with exponent at least $16k$. Note that any two runs with period at most $4k$ has less than $8k$ overlapping characters by \cref{cor:boundedperiodoverlap}, and therefore, we only consider each index $\ell$ twice.  Furthermore, we update each value $Q_F[\ell]$ at most once because the indices in the overlap of the runs cannot have an exponent of at least $16k$ in each run (otherwise the overlap would exceed $8k$).
    
    To finish the computation of $Q_F$, we do the analogous steps for closing parentheses.  For every run $(i,j,p)\in R_F$,
    we take any index $\ell\in [i\dd j)$. If the exponent of the substring $\str{F}[i\dd \ell]=\str{F}(i-1\dd \ell]$, that is, $\frac{\ell-(i-1)}{p}$, is at least $16k$,
    we update $Q_F[\ell]$ to $(p, i-1)$.
    Now, note that by Corollary~\ref{cor:boundedperiodoverlap}, runs with period at most $4k$ can only overlap in at most $8k$ indices. Furthermore, since we only update $Q_F$ if the exponent of a run is at least $16k = 2(8k)$, any index $\ell \leq |\str{F}|$ can be contained in at most two runs of period at most $4k$ and exponent at least $16k$. Therefore, iterating through all such runs of $\str{F}$ and computing $Q_F$ for each index takes time $\Oh(n)$.
\end{proof}

 \begin{algorithm}
    $R_F \gets \textsf{FilterRuns}(\str{F})$\;
    $Q_F[\ell] \gets (1, \ell)\quad\forall{\ell \in [|\str{F}|]}$\;
    \For{$(i, j, p) \in R_F$}{
        \For{$\ell \in [i\dd j)$}{
            \If{$\str{\F}[\ell]$ is an opening parenthesis \KwSty{and} $\frac{j - \ell}{p} \geq 16k$}{
                $Q_F[\ell] \gets (p, j)$\;
            }            
            \If{$\str{\F}[\ell]$ is a closing parenthesis \KwSty{and} $\frac{\ell - (i-1)}{p} \geq 16k$}{
                $Q_F[\ell] \gets (p, i-1)$\;
            }
        }
    }    
     \caption{$\mathsf{ComputeQ}(F)$. }
     \label{alg:ComputeQ}
 \end{algorithm}


We give \cref{alg:ComputeQ} for pseudocode on computing $Q_F$ from forest $F$. Now that we have computed $Q_F$, we can iterate over all nodes of forest $F$ to compute the maximal periodic context that occur at each node $u \in F$. Let $Q_F[o(u)] = (q_L, j_L)$ and $Q_F[c(u)] = (q_R, j_R)$, and denote the string periods by $P_L = \str{F}[o(u) \dd o(u)+q_L)$ and $P_R = \str{F}(c(u) - q_R \dd c(u)]$. Assume that there is a periodic context $C^e$ that occurs at $u$ such that $C = (C_L, C_R)$, $e$ is maximized, and the subtree rooted at $u$ is of form $C^e\langle H \rangle$. This means that for two positive integers $r_L, r_R \in \mathbb{Z}_+$, $|C_L| = r_Lq_L$ and $|C_R|=r_Rq_R$ since $C_L$ and $C_R$ could be periodic themselves. To find the correct coefficients $r_L$ and $r_R$, we need to compute the number of unmatched opening parentheses in $P_L$ and the number of unmatched closing parentheses in $P_R$, denoted by $d_L$ and $d_R$ respectively. Values $d_L$ and $d_R$ correspond to the depth that $P_L$ and $P_R$ go down in the tree. The highest node in which the runs starting at $o(u)$ and ending at $c(u)$ synchronize is located at depth $d:=\lcm(d_L,d_R)$. By synchronizing we mean that the unmatched opening and closing parentheses are matched so that $C_L\cdot C_R$ is balanced. Hence, $C_L = P_L^{d/d_L}$ and $C_R = P_R^{d/d_R}$. After this step we need to check whether the conditions $|C_L| \leq 4k$, $|C_R| \leq 4k$, and $e \geq 16k$ still hold. 

The next step is to find the maximum exponent $e$ such that $C^e$ occurs at node $u$. There is no guarantee that $\str{F}[j_L\dd j_R]$ forms a balanced substring. For example, the left run could finish earlier than the right run, or the two runs could \emph{diverge} as illustrated in Figure~\ref{fig:vertical_diverge}. To fix this issue, we first find the vertices $v_L$ and $v_R$ corresponding to indices $j_L$ and $j_R$ in $\str{F}$ respectively. If $v^*$ is the \emph{Lowest Common Ancestor} (LCA) of $v_L$ and $v_R$, then \[e = \left\lfloor{\min\left\{\tfrac{j_L-o(u)}{|C_L|},\tfrac{c(u)-j_R}{|C_R|},\tfrac{c(u)-o(u)+1}{|C_L|+|C_R|}, \tfrac{D[o(v^*)]-D[o(u)]+1}{d} \right\}}\right\rfloor\] where the value $D[o(v)] = D[c(v)]$
is the depth of node $v$ (i.e., the distance to the root of the corresponding tree). The arguments of the $\min$ function indicate that the left run is long enough, the right run is long enough, the subtree of $u$ is large enough (which is needed if the left run overlaps the right run), and that the runs do not diverge too early.
Again, we need to check whether $e \geq 16k$.

\begin{figure}
    \centering
    \input{figs/vertical_diverge}
    \caption{An example tree in which the left run starting at $o(u)$ and the right run ending at $c(u)$ diverge. The depth of the runs are $d_L=2$ and $d_R=1$. Therefore, the depth of the context is $\lcm(d_L, d_R)=2$, and the left and right runs end at $v_L$ and $v_R$ respectively. In this case, the LCA $v^*$ of $v_L$ and $v_R$ is at distance $5$ from $u$, which indicates the exponent $e=\lfloor{\frac{5+1}{2}}\rfloor=3$ of context $C^e$.
    }
    \label{fig:vertical_diverge}
\end{figure}

\begin{definition}
    Given a forest $F$, let $\mathcal{C}(F)$ be a set of quadruples $(u, q_L, q_R, e)$ such that 
    \begin{enumerate}
        \item $u$ is a node in $F$
        \item $q_L, q_R \leq 4k$, $e \geq 16k$.
        \item $C_L = \str{F}[o(u) \dd o(u) + q_L), C_R = \str{F}(c(u) - q_R \dd c(u)]$ form context $C = (C_L, C_R)$ in the subtree rooted at $u$.
        \item $C^e$ is a maximal context, i.e., $C^{e+1}$ does not occur at $u$. 
    \end{enumerate}
\end{definition}

\begin{lemma}
    $\mathcal{C}(F)$ can be computed in $\Oh(n\log n)$ time.
    \label{lem:computeC}
\end{lemma}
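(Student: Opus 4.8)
The plan is to turn the informal description preceding the lemma into a precise algorithm and then check that it outputs exactly the quadruples of the statement within the claimed time. First I would precompute $\str{F}$ and, via \cref{alg:ComputeQ}, the array $Q_F$, which by \cref{lem:computeQ} costs $\Oh(n)$ time; I would also build (i) a range-minimum oracle over the prefix-balance array of $\str{F}$ (the array whose $i$-th entry is the number of opening minus closing parentheses in $\str{F}[0\dd i)$), (ii) a lowest-common-ancestor oracle for $F$, and (iii) the map sending each position of $\str{F}$ to the node it represents, together with the depth array $D$; each of these is constructed in $\Oh(n\log n)$ time (a sparse-table RMQ already suffices; $\Oh(n)$ is possible with more care). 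Then, for each node $u$, I would: read $(q_L,j_L)=Q_F[o(u)]$ and $(q_R,j_R)=Q_F[c(u)]$, skipping $u$ if either is the default entry; set $P_L:=\str{F}[o(u)\dd o(u)+q_L)$ and $P_R:=\str{F}(c(u)-q_R\dd c(u)]$; use the RMQ oracle to read off $d_L$, the number of unmatched opening parentheses of $P_L$ (equal to the balance of $P_L$ minus the least prefix-balance of $P_L$, hence one query), and symmetrically $d_R$; skip $u$ if $d_L=0$ or $d_R=0$; put $d:=\lcm(d_L,d_R)$, $C_L:=P_L^{d/d_L}$, $C_R:=P_R^{d/d_R}$, and skip $u$ if $\max(|C_L|,|C_R|)>4k$; finally, with $v_L,v_R$ the nodes of $F$ corresponding to the terminal indices $j_L,j_R$ and $v^*:=\mathrm{LCA}(v_L,v_R)$, compute
\[
  e:=\Big\lfloor \min\Big\{\tfrac{j_L-o(u)}{|C_L|},\ \tfrac{c(u)-j_R}{|C_R|},\ \tfrac{c(u)-o(u)+1}{|C_L|+|C_R|},\ \tfrac{D[o(v^*)]-D[o(u)]+1}{d}\Big\}\Big\rfloor,
\]
and add $(u,|C_L|,|C_R|,e)$ to the output if $e\ge 16k$ (and nothing for $u$ otherwise). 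Each node costs $\Oh(1)$ work plus $\Oh(\log n)$ for the oracle queries, so the algorithm runs in $\Oh(n\log n)$ time and produces $\Oh(n)$ quadruples.

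For correctness I would establish two claims. The first is that $C=(C_L,C_R)$ above is the unique primitive context that can be the base of a high-exponent context at $u$: any context $C'=(C'_L,C'_R)$ with $|C'_L|,|C'_R|\le 4k$ and exponent at least $16k$ occurring at $u$ has $C'_L$ appearing as a long periodic prefix of $\str{F}[o(u)\dd\,\cdot\,)$, so $C'_L$ is a power of the period $P_L$ recorded in $Q_F$ (and symmetrically $C'_R$ is a power of $P_R$); since $C'_L\cdot C'_R$ is balanced, the net descent of $C'_L$, a multiple of $d_L$, equals the net ascent of $C'_R$, a multiple of $d_R$, hence a common multiple of $d_L$ and $d_R$, and the smallest such context is exactly $C$. (If one does not insist on primitivity, the remaining valid quadruples at $u$ are the powers of $C$, all subsumed by the one we output, so reporting primitive representatives is enough.) The second claim is that the computed $e$ is the largest exponent with $C^e$ occurring at $u$: each of the four arguments of the $\min$ is an upper bound on that exponent — the first two say $C_L^e$ must fit inside the left run and $C_R^e$ inside the right run, the third says the whole pattern $C^e\langle H\rangle$ must fit inside the subtree of $u$ (the binding constraint when the two runs overlap in a small subtree), and the fourth says the descending path traced by the left run and the ascending path traced by the right run must still coincide at depth $D[o(u)]+e\cdot d$, forcing $e\cdot d\le D[o(v^*)]-D[o(u)]$ — and, conversely, once $e$ is below all four bounds, one checks directly that the subtree rooted at $u$ equals $C^e\langle H\rangle$ for the residual forest $H$, so $C^e$ occurs and $C^{e+1}$ does not.

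The hard part will be this last verification — the divergence argument. One has to move carefully between the string picture (two runs of $\str{F}$, one starting at $o(u)$ and one ending at $c(u)$, with terminal indices $j_L,j_R$) and the tree picture (a single descending path of $F$ along which the contexts stack), argue that $v^*=\mathrm{LCA}(v_L,v_R)$ is precisely the lowest node at which these two runs still describe the same path, and then confirm that the floor of the four-way minimum is simultaneously an upper bound on and attained by the maximal exponent, handling the corner cases (the left run overlapping the right run inside a small subtree, or one run terminating strictly above $v^*$) where a different one of the four terms becomes binding. Everything else — the $\Oh(n)$ construction of $Q_F$, the standard RMQ and LCA oracles, the $\Oh(\log n)$-per-node arithmetic with $d_L,d_R,d$, and the length and exponent thresholds — is routine.
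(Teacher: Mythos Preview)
Your proposal is correct and follows essentially the same approach as the paper: precompute $Q_F$, then for each node $u$ derive the candidate primitive context $(C_L,C_R)$ from the left and right run periods, cap the exponent by the four-way minimum, and filter by the length and exponent thresholds. The one noteworthy difference is that you compute $d_L$ and $d_R$ via an RMQ over the prefix-balance array, whereas the paper simply reads them off as depth differences, $d_L=D[o(u)+q_L]-D[o(u)]$ and $d_R=D[c(u)-q_R]-D[c(u)]$; this works because $P_L$ is a prefix of the balanced substring $\str{F}[o(u)\dd c(u)]$, so it has no unmatched closing parentheses and its balance already equals the number of unmatched opening parentheses (and symmetrically for $P_R$). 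With that observation the RMQ oracle is unnecessary, the per-node cost drops to $\Oh(\log k)$ for the $\lcm$ plus $\Oh(1)$ for the constant-time LCA query, and the whole procedure is in fact $\Oh(n\log k)$; your version is not wrong, just slightly heavier than needed. Your correctness discussion (primitivity of $C$, the four binding constraints on $e$, and the divergence argument via $v^*$) is more detailed than what the paper spells out and is sound.
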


 \begin{algorithm}
    $Q_F \gets \textsf{ComputeQ}(\str{F})$\;
    $C \gets \textrm{empty set}$\;
    \For{$u \in V_F$}{ 
        $(q_L,j_L) \gets Q_F[o(u)], (q_R,j_R) \gets Q_F[c(u)]$\;
        \tcp{Check for dummy $Q_F$ values and make sure $o(u)$ and $c(u)$ are not in the same period}
        \If{$j_L \ne o(u)$ \KwSty{and} $j_R \ne c(u)$ \KwSty{and} $(c(u) - o(u)) \geq \max\{q_L, q_R\}$}{
        $d_L \gets D[o(u)+q_L]-D[o(u)]$\;
        $d_R \gets D[c(u)-q_R]-D[c(u)]$\;
        $d \gets \lcm(d_L,d_R)$\;
        $C_L \gets \str{F}[o(u) \dd o(u)+q_L)^{d/d_L}$\;
        $C_R \gets \str{F}(c(u)-q_R \dd c(u)]^{d/d_R}$\;
        \If{$|C_L| > 4k$ \KwSty{or} $|C_R| > 4k$}{
            \KwSty{continue}\;
        }
        \tcp{Let $v_L, v_R$ be the nodes corresponding to $j_L, j_R$}
        $v^* \gets \textsf{LCA}(v_L, v_R)$\;
        $e \gets \lfloor{\min\{\frac{j_L-o(u)}{|C_L|},\frac{c(u)-j_R}{|C_R|},\frac{c(u)-o(u)+1}{|C_L|+|C_R|}, \frac{D[o(v^*)]-D[o(u)]+1}{d} \}}\rfloor$\;
        \If{$e \geq 16k$}{
            $C.\textsf{insert}\big((u, |C_L|, |C_R|, e)\big)$\;
        }
        }
    }    
    $\textsf{return}\; C$\;
     \caption{$\mathsf{ComputeContexts}(F, D)$. }
     \label{alg:ComputeContext}
 \end{algorithm}

\begin{proof}
Given forest $F$, we showed that we can find $Q_F$ in $\Oh(n)$ time in Lemma~\ref{lem:computeQ}.
In the rest of the algorithm, we iterate over the nodes of $F$. For each node $u \in V_F$, we check $Q_F[o(u)]$ and $Q_F[c(u)]$ to find a potential periodic context rooted at vertex $u$. If the desired context exists with period at most $4k$ and exponent at least $16k$, it satisfies the required conditions to enter $\mathcal{C}(F)$. 

In order to find the period of a maximal context, we first define array $D[0\dd 2|F|)$ so that, for each node $u\in V_F$, the value $D[o(u)]=D[c(u)]$
is the depth of node $u$ (i.e., the distance to the root of the corresponding tree). Recall that $d_L$ is the number of unmatched opening parentheses in $P_L$, the shortest period starting from $o(u)$ which is stored in array $Q_F$, and similarly $d_R$ is the number of the number of unmatched closing parenthesis in $P_R$. The shortest period of a maximal context is equal to $\lcm(d_L, d_R)$, which can be found in $\Oh(\log k)$ time. 
\begin{align}
    d_L &= D[o(u)+q_L] - D[o(u)],\label{eq:dL}\\
    d_R &= D[c(u)-q_R] - D[c(u)].\label{eq:dR}
\end{align}

To find the correct exponent, we query the LCA of two nodes in the tree which indicate the end of $Q_F[o(u)]$ and $Q_F[c(u)]$;
this operation takes $\Oh(1)$ time after $\Oh(n)$-time preprocessing of the forest $F$~\cite{10.1137/0213024,Bender2005}. Once we have the LCA $v^*$, 
we compute the exponent in $\Oh(1)$ time using the $D$ array.
\end{proof}

Once we compute the set of contexts of each node in forests $F$ and $G$, we next find which contexts have $2k$-synchronized occurrences that we should reduce. We can use 2-D range queries to find such synchronized occurrences since we need to check both if the opening parentheses indices and closing parenthesis indices are within $2k$ of each other.

\begin{definition}
    Given a set of points $S$ and a query rectangle $Q$ in the plane, \emph{orthogonal range successor} ($\ORS$) is the problem of finding the point with smallest $y$-coordinate in $S \cap Q$. Given a quadruple $(u, q_L, q_R, e) \in \mathcal{C}(F)$ with context $C = (\str{F}[o(u) \dd o(u) + q_L), \str{F}(c(u) - q_R \dd c(u)])$ we let $\ORS_\mathcal{C}(u)$ denote an ORS query which returns a node $v \in G$ such that $(v, q_L, q_R, e') \in \mathcal{C}(G)$, $(\str{G}[o(v) \dd o(v) + q_L), \str{G}(c(v) - q_R \dd c(v)]) = C$ and $|o(u) - o(v)| \leq 2k, |c(u) - c(v)| \leq 2k$. If no such query exists, we return $(0, 0, 0, 0)$.
\end{definition}

\begin{theorem}[Linear time $\ORS$ queries~\cite{DBLP:conf/esa/Gao0N20}]
    Given a set of $n$ points in the plane, a data structure can be computed in time $\Oh(n \sqrt{\log n})$ to answer $\ORS$ queries in $\Oh(\lg \lg n)$ time.
    \label{thm:linearors}
\end{theorem}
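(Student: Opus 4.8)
The plan is to recall the construction from~\cite{DBLP:conf/esa/Gao0N20} underlying this result; I will only sketch why it meets the stated bounds. First I would reduce to \emph{rank space}. In every application of the theorem (in particular ours, where the points are pairs of positions in strings of length $\Oh(n)$) the coordinates already lie on an integer grid of side $\Oh(n)$, so sorting the points by $x$- and by $y$-coordinate costs $\Oh(n)$ via radix sort; replacing each coordinate by its rank turns the input into a permutation $A[0\dd n)$, where $A[i]$ is the $y$-rank of the point of $x$-rank $i$. An $\ORS$ query then becomes: given $i\le j$ and a value $v$, report the position $\ell\in[i\dd j)$ minimizing $A[\ell]$ subject to $A[\ell]\ge v$. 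Translating a query rectangle into rank space costs one predecessor search per side, which fits the target query time once a $y$-fast trie is stored on each coordinate set.

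Second, over $A$ I would build a wavelet tree with branching factor $\Theta(\sqrt{\log n})$, hence of depth $\Theta(\log n/\log\log n)$: each internal node stores, for the contiguous block of $A$ it owns, the packed sequence of child indices with constant-time rank/select support, plus a tiny predecessor structure over its $\Theta(\sqrt{\log n})$-element set of children. A range-successor query descends from the root, at each node using rank to narrow $[i\dd j)$ to the relevant sub-block, discarding children whose $y$-interval lies entirely below $v$, and using the node's predecessor structure to jump in $\Oh(1)$ time to the leftmost surviving child that is non-empty on the narrowed interval (recursing into the child whose interval contains $v$, or otherwise stepping to the next child). After $\Oh(\log n/\log\log n)$ such constant-time steps it reaches a leaf, and a ball-inheritance lookup maps the leaf position back to an index of $A$ in $\Oh(1)$ time, for $\Oh(\log\log n)$ query time overall.

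The crux --- and the step I expect to be the real obstacle --- is the \textbf{preprocessing time}: building this tree level by level in the textbook way costs $\Theta(n)$ per level, i.e.\ $\Theta(n\log n/\log\log n)$, and the ball-inheritance navigation tables are an even heavier bottleneck, so a naive implementation misses the $\Oh(n\sqrt{\log n})$ target. The remedy is a bottom-up, merge-based construction that exploits word-level parallelism together with sparse (every-$\Theta(\sqrt{\log n})$-levels) navigation tables, arranged so that each of the $\Oh(\log n)$ merge stages --- and the per-node rank/select, predecessor, and ball-inheritance data --- is produced in $\Oh(n/\sqrt{\log n})$ amortized time, for $\Oh(n\sqrt{\log n})$ in total. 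The delicate point is to check that this accelerated construction still materializes exactly the navigation invariants that make each query step cost $\Oh(1)$ --- in particular that the constant-size predecessor structures on the $\Theta(\sqrt{\log n})$-element child universes are themselves built within the batched budget --- which is precisely the technical content of~\cite{DBLP:conf/esa/Gao0N20}.
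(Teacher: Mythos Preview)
The paper does not prove this theorem at all: it is stated as a black-box citation of an external result, with no proof or sketch given. So there is no ``paper's own proof'' to compare your proposal against; anything you write here is exposition of the cited work, not a gap to be filled in the present paper.

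That said, your sketch contains an internal inconsistency worth flagging. You set the branching factor to $\Theta(\sqrt{\log n})$, which gives a tree of depth $\Theta(\log n/\log\log n)$, and you claim each descent step costs $\Oh(1)$; that yields query time $\Theta(\log n/\log\log n)$, not $\Oh(\log\log n)$ as the theorem asserts. To hit $\Oh(\log\log n)$ queries one needs either a much larger branching factor (so the tree has depth $\Oh(\log\log n)$, which in turn requires nontrivial per-node work to stay $\Oh(1)$) or a different mechanism altogether, such as the shallow-cutting/fractional-cascading style shortcuts that the cited paper actually develops. If you intend to include a sketch, you should revisit the parameter choice and the per-level cost so that the arithmetic matches the claimed bounds; otherwise, since the present paper only uses the theorem as a black box, simply citing it (as the paper does) is sufficient.
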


\begin{definition}
    Given a string $s$, a \emph{fingerprint hash} of $s$ is a function $h_s : s \rightarrow [|s|^3]$ such that $h_s(s[i_1 \dd j_1]) = h_s(s[i_2 \dd j_2])$ if and only if $s[i_1 \dd j_1 ] = s[i_2 \dd j_2]$.
\end{definition}

\begin{theorem}[Linear time fingerprint hash~\cite{DBLP:conf/esa/Gawrychowski11} ]
    Given a string $s$ with $n = |s|$, after $\Oh(n)$ preprocessing time, fingerprint hash values can be computed in constant time for substrings of $s$.
    \label{thm:linearfingerprint}
\end{theorem}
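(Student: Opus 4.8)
The plan is to recover this standard fact from a suffix tree equipped for weighted-ancestor queries, with the hash value being a succinct encoding of where a substring ``lives'' in the tree. First I would append a fresh sentinel to $s$ and build its suffix tree $\mathcal T$ in $\Oh(n)$ time via the linear-time construction for polynomially-bounded integer alphabets, storing for every position $i$ a pointer $\ell_i$ to the leaf representing $s[i\dd n)$, and for every explicit node $v$ its string depth $\mathsf{sd}(v)$ (the number of characters from the root) together with a distinct identifier $\mathsf{id}(v)\in[0\dd 2n)$. The combinatorial core is the following bijection: a substring $t=s[i\dd j]$ with $\ell:=j-i+1\ge 1$ is spelled by the root-to-$\ell_i$ path truncated at string depth $\ell$, so if $w$ denotes the shallowest explicit ancestor of $\ell_i$ with $\mathsf{sd}(w)\ge\ell$, then $t$ is obtained from $w$ by climbing up exactly $\mathsf{sd}(w)-\ell$ characters; hence the pair $(\mathsf{id}(w),\ell)$ determines $t$, and conversely the locus of $t$ --- and therefore the edge or explicit node carrying it, and thus $w$ and $\ell$ --- is determined by $t$. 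It follows that $s[i_1\dd j_1]=s[i_2\dd j_2]$ holds if and only if the two pairs coincide, so I would set $h_s(s[i\dd j]):=\mathsf{id}(w)\cdot(n+1)+\ell$; this value is $\Oh(n^2)$ and hence lies in $[|s|^3]$ once $n$ exceeds a small absolute constant (for tiny $n$ one can enumerate the $\Oh(n^2)$ substrings and assign identifiers by hand).

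Given this, each evaluation of $h_s$ performs one array lookup $i\mapsto\ell_i$, one weighted-ancestor query on $\mathcal T$ returning $w$, and $\Oh(1)$ arithmetic, so the only ingredient left is a weighted-ancestor structure on $\mathcal T$ with $\Oh(n)$ preprocessing and the claimed $\Oh(1)$ query time. I would invoke the known structure for weighted ancestors in suffix trees: the feature that makes constant query time attainable (unlike for arbitrary weighted trees) is that string depths strictly increase along every root-to-leaf path, which permits combining a heavy-path (micro/macro) decomposition of $\mathcal T$ with level-ancestor queries and small-universe predecessor search over the depths on each path. This is precisely the delicate point, and I expect it to be the main obstacle to a fully self-contained proof; if one is willing to accept $\Oh(\log\log n)$ per evaluation, an ordinary predecessor structure on the sorted depths of each heavy path already suffices, and that weaker bound is still enough for every downstream use here.

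For completeness, an equivalent implementation avoids suffix trees entirely: build the suffix array, its inverse, and the $\mathrm{LCP}$ array in $\Oh(n)$ time, and for a query $(i,\ell)$ locate the maximal suffix-array interval $[a\dd b]\ni \mathrm{SA}^{-1}[i]$ all of whose suffixes share a length-$\ell$ prefix, using range-threshold (orthogonal range successor / predecessor) queries in the spirit of \cref{thm:linearors}; then $(a,b,\ell)$ plays the role of $(\mathsf{id}(w),\ell)$ and injectivity modulo substring equality follows as before.
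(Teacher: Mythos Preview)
The paper does not prove this theorem; it is stated as a black-box result imported from the cited reference, so there is no ``paper's own proof'' to compare against. Your reconstruction is correct and is in fact the standard route: locate the locus of $s[i\dd j]$ in the suffix tree via a weighted-ancestor query from the leaf $\ell_i$, and encode the pair $(\mathsf{id}(w),\ell)$ as an integer. The bijection you state between substrings and such pairs is exactly right, and the packing into $[|s|^3]$ works. You also correctly identify the one genuinely delicate step, namely $\Oh(1)$-time weighted-ancestor queries in suffix trees after linear preprocessing; this is a nontrivial structural fact specific to suffix trees (strictly increasing string depths along root-to-leaf paths are what make it possible), and your $\Oh(\log\log n)$ fallback via heavy paths plus predecessor search is sound and would indeed suffice for every use in this paper. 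The suffix-array alternative you sketch is viable as well, but as written it inherits the $\Oh(\log\log n)$ cost of \cref{thm:linearors} rather than achieving constant time.
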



\begin{lemma}
    Given forests $F, G$ and quadruple $(u, q_L, q_R, e) \in \mathcal{C}(F)$ with context \\ $C = (\str{F}[o(u) \dd o(u) + qL), \str{F}(c(u) - q_R \dd c(u)])$, $\ORS_\mathcal{C}(u) \in G$ can be computed in $\Oh(\lg \lg n)$ time with $\Oh(n \sqrt{\lg n})$ preprocessing.
    \label{lem:ORS}
\end{lemma}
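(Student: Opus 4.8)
The plan is to reduce each $\ORS_\mathcal{C}$ query to a plain orthogonal‑range‑successor query on one of $\Oh(n)$ point sets, indexed by a \emph{context signature}, and to route every query to the right point set in $\Oh(1)$ time. During preprocessing I first build the fingerprint hash of \cref{thm:linearfingerprint} on the concatenation of $\str{F}$ and $\str{G}$ (separated by a fresh symbol); this takes $\Oh(n)$ time and yields, in $\Oh(1)$ time per call, a collision‑free hash $h(\cdot)$ of any fragment of $\str{F}$ or of $\str{G}$. To a quadruple $(v,q_L,q_R,e')\in\mathcal{C}(G)$ I assign the signature $\kappa(v) = (q_L,\,q_R,\,h(\str{G}[o_G(v)\dd o_G(v)+q_L)),\,h(\str{G}(c_G(v)-q_R\dd c_G(v)]))$, and to a quadruple $(u,q_L,q_R,e)\in\mathcal{C}(F)$ with context $C=(C_L,C_R)$ the signature $\kappa(u)=(q_L,q_R,h(C_L),h(C_R))$. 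All signature components are integers bounded by $\poly(n)$, so one radix sort of the $\Oh(n)$ quadruples of $\mathcal{C}(F)\cup\mathcal{C}(G)$ groups $\mathcal{C}(G)$ into signature classes and, simultaneously, precomputes for every $u\in\mathcal{C}(F)$ the identifier of the class of $\mathcal{C}(G)$ with signature $\kappa(u)$ (or a null identifier if none exists); this costs $\Oh(n)$. Finally, for each signature class $g$ I build, using \cref{thm:linearors}, an $\ORS$ structure over the point set $\{(o_G(v),c_G(v)) : v\in g\}$; since $\sum_g |g| \le |\mathcal{C}(G)| = \Oh(n)$, the total build time is $\sum_g \Oh(|g|\sqrt{\lg |g|}) = \Oh(n\sqrt{\lg n})$. (The sets $\mathcal{C}(F)$ and $\mathcal{C}(G)$ themselves are assumed available via \cref{lem:computeC}; the structures above are the only extra ones required.)

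To answer a query for $u$, I read the precomputed class identifier of $u$ in $\Oh(1)$ time; if it is null I return $(0,0,0,0)$, and otherwise I issue one $\ORS$ query on the structure of that class with the rectangle $[o_F(u)-2k,\,o_F(u)+2k]\times[c_F(u)-2k,\,c_F(u)+2k]$, which takes $\Oh(\lg\lg n)$ time by \cref{thm:linearors}; an empty answer is reported as $(0,0,0,0)$, otherwise the returned point $(o_G(v),c_G(v))$ is mapped back to its node $v$. For correctness, note that any $v$ kept in class $g$ satisfies $(v,q_L,q_R,e')\in\mathcal{C}(G)$ for the same pair $q_L,q_R$; the equality $\kappa(v)=\kappa(u)$ together with injectivity of the fingerprint hash on fragments of the concatenated string forces $\str{G}[o_G(v)\dd o_G(v)+q_L) = C_L$ and $\str{G}(c_G(v)-q_R\dd c_G(v)] = C_R$, i.e.\ the context at $v$ is exactly $C$; and $(o_G(v),c_G(v))$ lying in the query rectangle is precisely $|o_F(u)-o_G(v)|\le 2k$ and $|c_F(u)-c_G(v)|\le 2k$. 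Conversely, any valid answer $v$ necessarily lies in class $g$ and its point in the rectangle, so the $\ORS$ call returns a point whenever one exists.

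I expect the only real subtlety to be making the fingerprints comparable between the two forests, which is why the hash of \cref{thm:linearfingerprint} is built once on the concatenation of $\str{F}$ and $\str{G}$ rather than separately; its injectivity on fragments of that single string is exactly what upgrades ``the signatures agree'' to ``the context strings agree''. A second minor point, handled during preprocessing by sorting the signatures of $\mathcal{C}(F)$ and $\mathcal{C}(G)$ together, is turning the per‑query signature lookup into an $\Oh(1)$‑time table access so that no logarithmic factor sneaks into the query time; and the fact that $\ORS$ returns the point of smallest $y$‑coordinate rather than an arbitrary one is harmless here, since every point of $g$ inside the rectangle is an acceptable answer.
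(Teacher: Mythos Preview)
Your proposal is correct and follows essentially the same approach as the paper: partition the quadruples of $\mathcal{C}(G)$ by context using the deterministic fingerprint hash of \cref{thm:linearfingerprint}, build one $\ORS$ structure per class via \cref{thm:linearors}, and answer each query by routing to the appropriate class and issuing a single rectangle query. Your write-up is in fact a bit more careful than the paper's on two points the paper glosses over---building the fingerprint hash on the concatenation of $\str{F}$ and $\str{G}$ so that fragments of the two forests are comparable, and precomputing the class identifier for each $u\in\mathcal{C}(F)$ by a radix sort so the per-query routing is truly $\Oh(1)$---but the underlying idea is identical.
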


\begin{proof}
    We will build a separate $\ORS$ data structure $\mathcal{D}_C$ for each context $C = (C_L, C_R)$ that corresponds to a quadruple of $\mathcal{C}(F)$. Any node $v \in G$ will be contained in data structure $\mathcal{D}_C$ if $(v, q_L, q_R, e) \in \mathcal{C}(G)$ corresponds has context $C$. To determine which point belongs to which data structure, we first preprocess $\str{F}, \str{G}$ and compute a fingerprint hash $h_{F, G}$ in $\Oh(n)$ time by \cref{thm:linearfingerprint}. Then, given a quadruple $(v, q_L, q_R, e) \in \mathcal{C}(G)$, determining which data structure $v$ belongs to can be done in constant time using $h_{F, G}$. Since there is only one quadruple per node of $F$ and $G$ in $\mathcal{C}(F), \mathcal{C}(G)$ and $n$ nodes total per forest, we can construct all data structures of contexts in $\mathcal{C}(F)$ in time $\Oh(n \sqrt{\lg n})$ according to \cref{thm:linearors}. Input queries to these data structures will be the $2k \times 2k$ rectangle surrounding a point $(o(u), c(u))$ such that $(u, q_L, q_R, e) \in \mathcal{C}(F)$ and outputs will be quadruples $(v, q_L, q_R, e') \in \mathcal{C}(G)$ where $(o(v), c(v))$ is the orthogonal range successor of $(o(u), c(u))$ in $\mathcal{D}_C$. Again, by \cref{thm:linearors}, these queries can be answered in time $\Oh(\lg \lg n)$.
\end{proof}

\begin{algorithm}
    $i \leftarrow -1$\;
    $S \leftarrow \emptyset$\;
    $\mathcal{C}(F) \leftarrow \mathsf{SortByStartingIndex}(\mathcal{C}(F))$\;
    \For{$(u_F, q_L^F, q_R^F, e_F) \in \mathcal{C}(F)$}{
        \If{$o(u_F) > i$}
        {
            $C \leftarrow (\str{F}[o(u_F) \dd o(u_F) + q_L^F), \str{F}(c(u_F) - q_R^F \dd c(u_F)])$\;
            $(v, q_L^F, q_R^G, e_G) \leftarrow \ORS_{C}(u)$\;\label{ln:6.7}
            \If{$e_G \ne 0$}{            
                $e' \leftarrow \min(e_F, e_G)$\;\label{ln:6.9}
                $S \leftarrow S \cup (u_F, u_G, q_L^F, q_R^F, e')$\;\label{ln:6.10}
                $i \leftarrow o(u) + (e' - 8k)q_L^F$\;\label{ln:6.11}
            }

        }
    }
    \KwRet{$S$}\;
    \caption{$\mathsf{VertPeriods}(F, G)$}
    \label{alg:VertSyncOccurrences}
\end{algorithm}


\cref{alg:VertSyncOccurrences} and \cref{alg:VertSyncReductions} identify and output forests $F', G'$ with reduced vertical periodicity. \Cref{alg:VertSyncOccurrences} simply iterates through all nodes $u$ of forest $F$ that are the beginning of a vertical period with power at least $16k$. Then using $\ORS$ queries, we find any nodes $v$ in $G$ with a matching context to that of $u$ and add the $2k$-synchronized occurrence to set $S$ to be reduced in \cref{alg:VertSyncReductions}. \Cref{alg:VertSyncReductions} is fairly straightforward and simply outputs $\str{F'}$ and $\str{G'}$ by copying all characters of $\str{F}, \str{G}$ and skipping all but $14k$ repetitions of any $2k$-synchronized vertical periods of set $S$. We now show that these algorithms do correctly identify vertical periods and that reducing vertical period powers does not change tree edit distance. Most of the following proofs mimic the same structures as \cref{sec:hor} with more details since each context $C^e = (C_L, C_R)^e$ has a left $C_L^e$ and right $C_R^e$ part we must consider rather than a single substring $Q^e$.

\begin{lemma}
    Given forests $F, G$ and $S = \mathsf{VertPeriods}(F, G)$, then for any $(u, v, q_L, q_R, e) \in S$ there is $2k$-synchronized occurrences of $C^e$ at nodes $u$ in $F$, $v$ in $G$ where $C = (\str{F}[o(u) \dd o(u) + q_L), \str{F}(c(u) - q_R \dd c(u)])$ and $e \geq 16k$.
    \label{lem:vertperset}
\end{lemma}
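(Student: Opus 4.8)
The plan is to unfold \cref{alg:VertSyncOccurrences} and check that every precondition demanded by the conclusion is already guaranteed by the objects the algorithm manipulates. Elements are inserted into $S$ only at \cref{ln:6.10}, inside the iteration over a quadruple $(u_F, q_L^F, q_R^F, e_F)\in \mathcal{C}(F)$; write $u := u_F$, $q_L := q_L^F$, $q_R := q_R^F$, and set $C := (\str{F}[o(u)\dd o(u)+q_L),\, \str{F}(c(u)-q_R\dd c(u)])$. By the definition of $\mathcal{C}(F)$ we immediately obtain that $q_L, q_R \le 4k$, that $C$ is a genuine context, and that $C^{e_F}$ occurs at $u$ with $e_F \ge 16k$.

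First I would analyze the $\ORS$ call on \cref{ln:6.7}, which returns $(v, q_L, q_R, e_G) \leftarrow \ORS_{C}(u)$. Since an insertion into $S$ happens only when $e_G \ne 0$, the query did not fall through to its default value, so by the specification of $\ORS_\mathcal{C}$ (realized via \cref{lem:ORS}) we get $(v, q_L, q_R, e_G)\in \mathcal{C}(G)$, the context of $v$ for these period lengths equals $C$, i.e.\ $(\str{G}[o(v)\dd o(v)+q_L),\, \str{G}(c(v)-q_R\dd c(v)]) = C$, and moreover $|o(u)-o(v)|\le 2k$ and $|c(u)-c(v)|\le 2k$. In particular $e_G \ge 16k$ and $C^{e_G}$ occurs at $v$.

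Then I would combine the two facts. On \cref{ln:6.9} we set $e' = \min(e_F,e_G)$, so $e' \ge 16k$. Since $C^{e_F}$ occurs at $u$ and $e' \le e_F$, writing $C^{e_F}\cnt{H} = C^{e'}\cnt{C^{e_F-e'}\cnt{H}}$ shows $C^{e'}$ also occurs at $u$; symmetrically $C^{e'}$ occurs at $v$. Together with $|o(u)-o(v)|\le 2k$ and $|c(u)-c(v)|\le 2k$, this is exactly the statement that $C$ — and hence $C^{e'}$ — has $2k$-synchronized occurrences at $u$ in $F$ and at $v$ in $G$, which is what the lemma claims.

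I expect the only delicate point to be bookkeeping: confirming that the period-length parameters returned by the $\ORS$ query coincide with $q_L,q_R$ (so the contexts at $u$ and $v$ are the same string pair, not merely equal up to rotation), and spelling out the "lowering the exponent" identity $C^{e_F}\cnt{H} = C^{e'}\cnt{C^{e_F-e'}\cnt{H}}$ — both routine, but worth stating because later sections rely on the precise shape of $S$. There is no substantive obstacle here; the real content of the section lives in \cref{lem:computeC} and \cref{lem:ORS}, which this lemma merely packages together.
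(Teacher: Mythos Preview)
Your proposal is correct and follows essentially the same approach as the paper's proof: both unfold \cref{alg:VertSyncOccurrences}, use the membership $(u,q_L,q_R,e_F)\in\mathcal{C}(F)$ and the specification of $\ORS_\mathcal{C}$ to obtain a matching quadruple in $\mathcal{C}(G)$ with the required synchronization bounds, and conclude via $e'=\min(e_F,e_G)\ge 16k$. If anything, your version is slightly more explicit in justifying that $C^{e'}$ occurs at both nodes via the identity $C^{e_F}\cnt{H}=C^{e'}\cnt{C^{e_F-e'}\cnt{H}}$, which the paper leaves implicit.
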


\begin{proof}
    The proof of this lemma is fairly straightforward from the steps of \cref{alg:VertSyncOccurrences}. First, note that for any quintuple $(u, v, q_L, q_R, e')$ added to $S$ in step~\ref{ln:6.10}, there must be some quadruple $(u, q_L, q_R, e_F) \in \mathcal{C}(F)$ where by definition of $\mathcal{C}(F)$, $e_F \geq 16k$ and $C^{e_F}$ occurs at node $u$ where context $C = (\str{F}[o(u) \dd o(u) + q_L), \str{F}(c(u) - q_R \dd c(u)])$. Furthermore, by definition of $\mathsf{ORS}_C(u)$ any quadruple $(v, q_L^G, q_R^G, e_G)$ returned in step~\ref{ln:6.7} must also have an occurrence of $C^{e_G}$ at node $v$ where $|o(u) - o(v)| \leq 2k, |c(u) - c(v)| \leq 2k$. Moreover, $(v, q_L, q_R, e_G)$ must be in $\mathcal{C}(G)$, and so, $e_G \geq 16k$. So, since in step~\ref{ln:6.9} we set $e' = \min(e_F, e_G)$, we have that $e' \geq 16k$ and there is a $2k$-synchronized occurrence of $C^e$ at nodes $u$ in $F$ and $v$ in $G$.
\end{proof}

\begin{algorithm}
    $S \leftarrow \mathsf{VertPeriods}(F, G)$\;\label{ln:7.1}
    $S' \leftarrow \emptyset$\;
    \For{$(u_F, u_G, q_L, q_R, e) \in S$}{
        $S' \leftarrow S' \cup \{(o(u_F), o(u_G), q_L, e),(c(u_F) - q_R\cdot e+1, c(u_G) - q_R\cdot e+1, q_R, e)\}\label{ln:7.4} $ \;
    }
    $S' \leftarrow \textsf{SortByStartingIndex}(S')$\;\label{ln:7.5}
    $s_F, s_G \leftarrow \varepsilon$\;
    $i_F, i_G \leftarrow 0$\;
    \For{$(\ell_F, \ell_G, q, e) \in S'$}{
        $s_F \leftarrow s_F \cdot \str{F}[i_F \dd \ell_F)$\;
        $s_G \leftarrow s_G \cdot \str{G}[i_G \dd \ell_G)$\;
        $i_F \leftarrow \ell_F + q (e - 14k)$\;\label{ln:7.11}
        $i_G \leftarrow \ell_G + q (e - 14k)$\;\label{ln:7.12}
    }
    $s_F \leftarrow s_F \cdot \str{F}[i_F \dd |\str{F}|)$ \;
    $s_G \leftarrow s_G \cdot \str{G}[i_G \dd |\str{G}|)$ \; 
    \KwRet{$s_F, s_G$}\;
    \caption{$\mathsf{VertSyncReductions}(F, G)$}
    \label{alg:VertSyncReductions}
\end{algorithm}

\begin{lemma}
    Consider forests $F, G$ and a context $C = (C_L, C_R)$ such that $0<|C_L|, |C_R| \leq 4k$ and, for some integer exponent $e\ge 10k$,
    the context $C^e$ has $2k$-synchronized occurrences in $F, G$ at nodes $u$ and $v$, respectively. This means that
    \begin{align*}
    \str{F} = \str{F}[0 \dd o_F(u)) \cdot C_L^e \cdot \str{F}[o_F(u)+e|C_L| \dd c_F(u)-e|C_R|] \cdot C_R^e \cdot \str{F}(c_F(u) \dd |\str{F}|), \\ 
    \str{G} = \str{G}[0 \dd o_G(v)) \cdot C_L^e \cdot \str{G}[o_G(v)+e|C_L| \dd c_G(v)-e|C_R|] \cdot C_R^e \cdot \str{G}(c_G(v) \dd |\str{G}|).
    \end{align*}
    We define $F', G'$ so that the following holds for some exponent $e' \ge 10k$:
    \begin{align*}
    \str{F'} = \str{F}[0 \dd o_F(u)) \cdot C_L^{e'} \cdot \str{F}[o_F(u)+e|C_L| \dd c_F(u)-e|C_R|] \cdot C_R^{e'} \cdot \str{F}(c_F(u) \dd |\str{F}|), \\ \str{G'} = \str{G}[0 \dd o_G(v)) \cdot C_L^{e'} \cdot \str{G}[o_G(v)+e|C_L| \dd c_G(v)-e|C_R|] \cdot C_R^{e'} \cdot \str{G}(c_G(v) \dd |\str{G}|). 
    \end{align*}
    Then, $\ted_{\le k}(F, G) = \ted_{\le k}(F', G')$.
    \label{lem:vertred}
\end{lemma}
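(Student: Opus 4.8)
The plan is to mimic the structure of the proof of Lemma~\ref{lem:horizreductions}, but now keeping track of the two periodic sides $C_L^e$ and $C_R^e$ simultaneously. Write $p_L := |C_L| \le 4k$ and $p_R := |C_R| \le 4k$, and assume WLOG that both $C_L$ and $C_R$ are such that $C_L \cdot C_R$ is primitive as a context (i.e.\ $C$ is not itself a power of a shorter context; otherwise pass to the primitive root, exactly as in the horizontal case). Fix an optimal tree alignment $\A$ with $\ted_\A(F,G) = \ted(F,G) \le k$. The first and main step is to prove a two-sided analogue of the Claim inside Lemma~\ref{lem:horizreductions}: there exist small nonnegative integers $i, j \le c\cdot k$ (for a small constant $c$, say $c=5$) such that $\A$ \emph{simultaneously} matches the $i$-th copy of $C_L$ (inside the left run) with a copy of $C_L$ in $\str{G}$ at a position of the form $o_G(v) + i' p_L$, and matches the $j$-th-from-the-bottom copy of $C_R$ (inside the right run) with a copy of $C_R$ in $\str{G}$ at a position $c_G(v) - j' p_R$, where moreover these two matched copies are \emph{consistent}: they bound a single context $C^{d}$ in $\str{F}$ and a single context $C^{d'}$ in $\str{G}$. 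The consistency is the new subtlety compared to the horizontal case, because a copy of $C_L$ and a copy of $C_R$ in a parentheses string need not be "vertically aligned" --- they are only so if the number of unmatched parentheses between them works out, which is exactly why the definition of $\mathcal{C}(F)$ forces $d_L, d_R$ and $d = \lcm(d_L,d_R)$ into the picture.

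**The key steps** are then as follows. (1) Using that $C_L$ is balanced-with-$C_R$ and $\A$ has cost $\le k$, among the first $k+1$ copies of $C_L$ in $\str{F}$ at least one, say the $a$-th, is matched by $\A$ to a length-$p_L$ fragment of $\str{G}$; the $2k$-synchronization bound plus the non-crossing property forces this fragment to lie inside the left run of $G$ and (by primitivity of $C_L$, or of the context) to start at an index $\equiv o_G(v) \pmod{p_L}$ and within $\Oh(k)$ copies of $o_G(v)$. Symmetrically for $C_R$ from the bottom. (2) Argue that the $C_L$-copy and the $C_R$-copy obtained this way can be chosen consistently: because $\A$ is a tree alignment, once it matches $\str{F}[o_F(u)]$-region copies of the left side, the corresponding nodes in $G$ are pinned down, and the depth bookkeeping (the $d_L, d_R, d$ used in \cref{alg:ComputeContext}) guarantees that a valid $C_L$-copy and a valid $C_R$-copy differ by a multiple of the context period $d$ in depth; if the first choices are not consistent, shift one of them by at most $d \le 4k$ copies --- this is where the $e \ge 10k$ slack in the hypothesis is consumed ($\Oh(k)$ for each of the four shifts: top-left, top-right, bottom-left, bottom-right). (3) Having pinned a context $C^{d_F}$ in the subtree of $u$ and $C^{d_G}$ in the subtree of $v$ that $\A$ matches perfectly on their boundary copies, set $d := \min(d_F, d_G) \ge e - \Oh(k) \ge 0$ and build $\A'$ that (i) agrees with $\A$ outside the two runs and in the "core" $\str{F}[o_F(u)+e|C_L| \dd c_F(u)-e|C_R|]$, (ii) matches $C_L^{d}$ to $C_L^{d}$ and $C_R^{d}$ to $C_R^{d}$, and (iii) deletes the surplus copies $C_L^{d_F - d}$, $C_R^{d_F - d}$ and inserts/deletes on the $G$ side symmetrically. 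Check $\A' \in \ta(F,G)$: every node of $F$ has both its parentheses either outside the reduced region, or in the core (handled as $\A$), or inside a single copy of $C_L$ resp.\ $C_R$ (deleted or matched as a block) --- crucially a node whose opening parenthesis is in the left run and whose closing parenthesis is in the right run is a node \emph{on the periodic path}, and such a node lives in some $C$ boundary that $\A'$ handles consistently. Check $\cost(\A') \le \cost(\A)$: the two alignments differ only on how they align $C_L^{d_F}$-vs-$C_L^{d_G}$ and $C_R^{d_F}$-vs-$C_R^{d_G}$ together with the small shift fragments, and $\A'$ realizes the optimal alignment of those blocks (pure matching plus the unavoidable length-difference deletions). (4) Finally, reducing the exponent $e \mapsto e'$ is, from the point of view of $\A'$, just changing $d \mapsto d + (e' - e) \ge 0$ in the perfectly-matched blocks, which does not change $\cost(\A')$; hence $\ted(F',G') \le \ted(F,G)$, and the reverse inequality is the same argument with $(F,G)$ and $(F',G')$ swapped, giving $\ted_{\le k}(F,G) = \ted_{\le k}(F',G')$.

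**The main obstacle** is step (2) --- establishing the \emph{consistency} of the two one-sided matchings so that they really bound a single context in each forest. In the horizontal lemma there is only one periodic string $Q^e$, so "the alignment matches some copy of $Q$" immediately gives a usable boundary; here the left and right runs are a priori independent periodic substrings, and the whole content of the context machinery (the $\lcm(d_L,d_R)$ period, the LCA-based exponent in \cref{alg:ComputeContext}, Figure~\ref{fig:vertical_diverge}) exists precisely to certify that a genuine \emph{context} $C$ of small size sits there. I expect the argument to go: the $2k$-synchronized occurrence of $C^e$ at $u$ and $v$ means the path of $e$ nodes below $u$ (resp.\ $v$) has, at each level, the same left-siblings forest and right-siblings forest up to the context period; an optimal $\A$ of cost $\le k$ cannot afford to mis-align more than $k$ of these $e \ge 10k$ levels, so there is a level $t \in [\Oh(k) \dd e - \Oh(k)]$ at which $\A$ matches the $t$-th path node of $F$ to the $t$-th path node of $G$ exactly, together with its entire left and right sibling forests --- and \emph{that} single node gives simultaneously a matched copy of $C_L$ on its left and a matched copy of $C_R$ on its right, automatically consistent. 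This reduces (2) to a level-counting/pigeonhole argument on the periodic path rather than a delicate parentheses-surgery argument, and I would phrase the lemma's proof around path nodes from the start, invoking the context structure only to know that such a periodic path of length $e$ exists. The remaining parts are routine adaptations of Lemma~\ref{lem:horizreductions} and \cref{lem:optIsGreedy}-style case analyses.
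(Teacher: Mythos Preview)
Your overall plan and steps~(3)--(4) are correct and match the paper closely; the paper also works with the path nodes $u_i,v_i$ (defined by $o_F(u_i)=o_F(u)+i|C_L|$ and $c_F(u_i)=c_F(u)-i|C_R|$, and similarly for $v_i$) from the start. The gap is entirely in how you obtain the two-sided Claim.

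Two concrete problems. First, pigeonholing on string copies of $C_L$ and then invoking ``primitivity of $C_L$'' does not work: $C_L$ need not be primitive even when the context $C$ is (e.g.\ $C_L=\op_a\op_a$, $C_R=\cl_a\op_b\cl_b\cl_a$), so nothing forces the matched fragment in $\str{G}$ to start at a position $\equiv o_G(v)\pmod{p_L}$. Your parenthetical ``or of the context'' is the right instinct, but context-primitivity is a statement coupling $C_L$ and $C_R$; it cannot be applied to $C_L$ in isolation, which is why the first-$C_L$-then-$C_R$-then-consistency order of operations does not accommodate it. Second, your fallback claim that ``there is a level $t$ at which $\A$ matches the $t$-th path node of $F$ to the $t$-th path node of $G$'' is false as stated: $\A$ can shift the whole path by one (insert or delete a single context layer), after which $u_t$ is aligned to $v_{t\pm 1}$ for every $t$ and no level has $u_t\simeq_\A v_t$. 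What pigeonholing on the $k{+}1$ disjoint context occurrences actually gives is that the occurrence at some $u_{i_F}$ is matched perfectly to \emph{some} node $w\in V_G$; the remaining work is to show that $w$ is a path node $v_{i_G}$ rather than an intermediate node strictly between two consecutive $v_i$'s.

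Here is what the paper does. Pigeonhole on the $k{+}1$ disjoint occurrences of the \emph{context} $C$ at $u_0,\ldots,u_k$ (this is your ``level'' pigeonhole, but with the correct conclusion): one of them, at $u_{i_F}$, is perfectly matched by $\A$ to some node $w$. Left--right consistency is now automatic, since the same $w$ receives both the $C_L$-part and the $C_R$-part. The $2k$-synchronization bounds place $o_G(w)\in[o_G(v),\,o_G(v)+5k|C_L|]$ and $c_G(w)\in[c_G(v)-5k|C_R|,\,c_G(v)]$. Suppose for contradiction that $w$ lies strictly between $v_i$ and $v_{i+1}$ on the path, at depth $\ell'\in(0,\ell)$ below $v_i$ (where $\ell$ is the depth of one context layer). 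Because $C$ occurs both at $v_i$ and at $w$, writing $C_L=AB$ with $|A|=o_G(w)-o_G(v_i)$ yields $AB=BA$; hence $C_L$ is a power of a primitive string whose number of unmatched opening parentheses divides $\gcd(\ell,\ell')$. The symmetric argument on the right gives the same for $C_R$. Combining, $C$ itself is a context power of exponent $\ell/\gcd(\ell,\ell')>1$, contradicting primitivity. This depth-coupled use of context-primitivity is the idea missing from your proposal; once $w=v_{i_G}$ is established (and symmetrically at the bottom), your steps~(3)--(4) go through exactly as in the paper.
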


\begin{proof}
    We assume without loss of generality that $C$ is primitive (if $C$ can be expressed as an integer power of a smaller context, we should consider that context instead) and denote $q_L := |C_L| \leq 4k$ and $q_R := |C_R| \leq 4k$. For $i\in [0\dd e)$, let $u_i$ be the node of $F$
    with $o_F(u_i)=o_F(u)+iq_L$ (and $c_F(u_i)=c_F(u)-iq_R$) and let $v_i$ be the node of $G$ with $o_G(v_i)=o_G(v)+iq_L$ (and $c_G(v_i)=c_G(v)-iq_R$).
    Moreover, let $\A$ be an optimal tree alignment such that $\ted(F, G) = \ted_{\A} (F, G) \le k$.
   
    \begin{claim}
        There exist $i_F,i_G\in [0\dd 5k]$ such that \[(o_F(u)+i_F\cdot q_L, o_G(v)+i_G\cdot q_L),(1+c_F(u)-i_F\cdot q_R, 1+c_G(v)-i_G\cdot q_R) \in \A.\]
        Moreover, there exist $j_F,j_G\in [0\dd 5k]$ such that \[(o_F(u)+(e-j_F)q_L, o_G(v)+(e-j_G)q_L),(1+c_F(u)-(e-j_F)q_R, 1+c_G(v)-(e-j_G)q_R)\in \A.\]
    \end{claim}
    \begin{proof}
        Let $(x_F,x_G)\in \A$ be the leftmost element of $\A$ such that $x_F \ge o_F(u)$ and $x_G \ge o_G(v)$.
        By symmetry between $F$ and $G$, we may assume without loss of generality that $x_F = o_F(u)$.
        The context $C$ occurs at each of the nodes $u_0,\ldots,u_k$ and, since the occurrences are disjoint, the alignment $\A$ (of cost at most $k$) must match one of these occurrences perfectly.
        We pick the index $i_F \in [0\dd k]$ of one such perfectly matched occurrence and denote the node matched to $u_{i_F}$ by $w$.
        In particular, $\str{F}[o_F(u_{i_F})\dd o_F(u_{i_F})+q_L)\simeq_{\A} \str{G}[o_G(w)\dd o_G(w)+q_L)$
        and $\str{F}(c_F(u_{i_F})-q_R\dd c_F(u_{i_F})]\simeq_{\A} \str{G}(c_G(w)-q_R\dd c_G(w)]$.
        Since $(x_F,x_G)\in \A$, we must have $o_G(w)\ge o_G(v)$ by the non-crossing property of $\A$.
        At the same time, $o_G(w)\le o_F(u_{i_F}) + 2k \le o_F(u)+kq_L + 2k \le o_G(v)+kq_L + 4k \le o_G(v)+5kq_L$.
        Similarly, $c_G(w) \ge c_G(v)-5kq_R$, which also implies $c_G(w) \le c_G(v)$.

        Our next goal is to show that $w=v_{i_G}$ for some $i_G\in [0\dd 5k]$.
        For a proof by contradiction, suppose that $o_G(v_i)<o_G(w)<o_G(v_{i+1})$ for some $i\in [0\dd 5k)$.
        Due to $c_G(w)> c_G(v)-5kq_R$, this also implies that $c_G(v_i)>c_G(w)>c_G(v_{i+1})$,
        i.e., that $w$ is a node on the path between $v_i$ and $v_{i+1}$.
        Suppose that the length of this path is $\ell$ and the node $w$ is at distance $\ell'$ from $v_i$.
        Hence,  $\str{G}[o_G(v_i)\dd o_G(w))$ has $\ell'$ unmatched opening parentheses out of the $\ell$  unmatched opening parentheses in $C_L$. 
        Moreover, $\str{G}[o_G(v_i)\dd o_G(w))\cdot \str{G}[o_G(w)\dd o_G(v_{i+1}))=C_L = \str{G}[o_G(w)\dd o_G(v_{i+1}))\cdot \str{G}[o_G(v_i)\dd o_G(w))$,
        and thus there is a primitive string $Q_L$ such that $\str{G}[o_G(w)\dd o_G(v_{i+1}))$ and $\str{G}[o_G(v_i)\dd o_G(w))$ are both powers of $Q_L$.
        The number of unmatched opening parentheses is $Q_L$ must be a common divisor of $\ell$ and $\ell'$,
        i.e., $C_L$ can be expressed as a string power with exponent $\ell/\gcd(\ell,\ell')$.
        A symmetric argument shows that $C_R$ can be expressed as a string power with exponent $\ell/\gcd(\ell,\ell')$.
        Overall, we conclude that $C$ can be expressed as a context power with exponent $\ell/\gcd(\ell,\ell')$,
        contradicting the primitivity of $C$.
        Hence, $w=v_{i_G}$ for some $i_G\in [0\dd 5k]$ holds as claimed and, in particular,
        $o_G(w) = o_G(v)+i_G q_L$ and $c_G(w)=c_G(v)-i_Gq_R$.

        The proof of the second part of the claim is analogous.
    \end{proof}

    Observe that $\str{F}[o_F(u)+i_Fq_L\dd o_F(u)+(e-j_F)q_L)=C_L^{d_F}$ and $\str{F}(c_F(u)-(e-j_F)q_R\dd c_F(u)-i_F q_L]=C_R^{d_F}$
    for $d_F := e-j_F-i_F$. Symmetrically, 
    $\str{G}[o_G(v)+i_Gq_L\dd o_G(v)+(e-j_G)q_L)=C_L^{d_G}$ and $\str{G}(c_G(v)-(e-j_G)q_R\dd c_G(v)-i_G q_L]=C_R^{d_G}$ for $d_G := e-j_G-i_G$.
    We denote $d=  \min(d_F,d_G)$, observe that $d \ge e-10k \ge 0$, and construct a tree alignment $\A'$ so that it:
    \begin{itemize}
        \item aligns $\str{F}[0\dd o_F(u)+i_Fq_L)$ with $\str{G}[0\dd o_G(v)+i_Gq_L)$ in the same way as $\A$ does;
        \item matches $\str{F}[o_F(u)+i_Fq_L\dd o_F(u)+(i_F+d)q_L)=C_L^d$ with $\str{G}[o_G(v)+i_Gq_L\dd o_G(v)+(i_G+d)q_L)=C_L^d$;
        \item deletes $\str{F}[o_F(u)+(i_F+d)q_L\dd o_F(u)+(e-j_F)q_L)=C_L^{d_F-d}$ and $\str{G}[o_G(v)+(i_G+d)q_L\dd o_G(v)+(e-j_G)q_L)=C_L^{d_G-d}$;
        \item aligns $\str{F}[o_F(u)+(e-j_F)q_L\dd c_F(u)-(e-j_F)q_R]$ with $\str{G}[o_G(v)+(e-j_G)q_L\dd c_G(v)-(e-j_G)q_R]$ in the same way as $\A$ does;
        \item deletes $\str{F}(c_F(u)-(e-j_F)q_R\dd c_F(u)-(i_F+d)q_R]=C_R^{d_F-d}$ and $\str{G}(c_G(v)-(e-j_G)q_R\dd c_G(v)-(i_G+d)q_R]=C_R^{d_G-d}$;
        \item matches $\str{F}(c_F(u)-(i_F+d)q_R\dd c_F(u)-i_Fq_R]=C_R^{d}$ with $\str{G}(c_G(v)-(i_G+d)q_R\dd c_G(v)-i_Gq_R]=C_R^{d}$;
        \item aligns $\str{F}(c_F(u)-i_Fq_R\dd |\str{F}|)$ with $\str{G}(c_G(v)-i_Gq_R\dd |\str{G}|)$ in the same way as $\A$ does.
    \end{itemize}
    This definition makes it clear that $\A'$ is an edit-distance (string) alignment.
    In terms of the forests $F$ and $G$, the alignment $\A'$ can be interpreted so that it:
    \begin{itemize}
        \item perfectly matches the occurrences of $C$ at nodes $u_{i_F},\ldots, u_{i_F+d-1}$ to the occurrences of $C$ at nodes $v_{i_G},\ldots v_{i_G+d-1}$, respectively;
        \item deletes the occurrences of $C$ at nodes $u_{i_F+d},\ldots u_{i_F+d_F-1}$ and $v_{i_G+d},\ldots v_{i_G+d_G-1}$;
        \item handles the remaining parts of $F$ and $G$ in the same way as $\A$ does.
    \end{itemize}
    This interpretation makes it clear that $\A'$ is a tree alignment.
    Moreover, $\A$ and $\A'$ only differ in how they align $C^{d_F}$ to $C^{d_G}$, and $\A'$ provides an optimum alignments of these contexts.

    Now, if the exponent $e$ of the context $C^e$ is modified to $e'\ge 10k$, we can interpret this as modifying the exponent $d$ of the contexts $C^d$ matched perfectly by $\A'$ to $d' = d+e'-e\ge 0$. Thus, $\A'$ can be trivially adapted without modifying its cost, and hence
    $\ted(F',G')\le \ted_{\A'}(F,G)= \ted(F,G)$. The converse inequality follows by symmetry between $(F,G)$ and $(F',G')$.
\end{proof}

\begin{proposition}[Avoiding Vertical $k$-Periodicity]\label{prp:ver}
There exists an $\Oh(n\log n)$-time algorithm that, given labeled forests $F,G$ of total size $n$ and an integer $k\in \Zp$, produces forests labeled forests $F',G'$
that satisfy $\ted_{\le k}(F,G)=\ted_{\le k}(F',G')$ and, moreover, avoid both synchronized horizontal $k$-periodicity and synchronized vertical $k$-periodicity.
\end{proposition}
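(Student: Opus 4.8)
The plan is to build the output by composition: first invoke \cref{prp:hor} on the input to obtain labeled forests $F_1,G_1$ that already avoid synchronized horizontal $k$-periodicity and satisfy $\ted_{\le k}(F,G)=\ted_{\le k}(F_1,G_1)$, and then run $\mathsf{VertSyncReductions}$ (\cref{alg:VertSyncReductions}) on $F_1,G_1$ to produce $F',G'$. I would then argue three things: (i) the vertical step preserves $\ted_{\le k}$; (ii) $F',G'$ avoid synchronized vertical $k$-periodicity; and (iii) the vertical step does not resurrect any horizontal periodicity, so $F',G'$ still avoid synchronized horizontal $k$-periodicity. Claim (i) is mostly bookkeeping on top of the machinery already developed: by \cref{lem:vertperset}, every quintuple $(u,v,q_L,q_R,e)\in S$ witnesses a genuine $2k$-synchronized occurrence of a context $C^e$ with $|C_L|,|C_R|\le 4k$ and $e\ge 16k$; \cref{alg:VertSyncReductions} decrements the two flanking powers $C_L^e$ and $C_R^e$ down to exponent $14k$, and \cref{lem:vertred} (applicable since $16k\ge 10k$ and $14k\ge 10k$) gives that each such reduction leaves the tree edit distance unchanged. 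Since, by \cref{cor:boundedperiodoverlap}, no two reduced blocks share a starting index, the reductions act on disjoint regions and the equalities chain together to give $\ted_{\le k}(F_1,G_1)=\ted_{\le k}(F',G')$.

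The hard part will be claims (ii) and (iii), which mirror the structural analysis in the proof of \cref{prp:hor} but are heavier because a context has \emph{two} flanking periodic substrings $C_L^e$ and $C_R^e$, living in two different runs of the parenthesis string, rather than a single block $Q^e$. For (ii), suppose for contradiction that $F',G'$ contained a $2k$-synchronized occurrence of a context power $C^{18k}$ with $|C_L|,|C_R|\le 4k$; then $C_L^{18k}$ (resp.\ $C_R^{18k}$) sits inside a run of period at most $4k$ and, because we only reduce occurrences with exponent $\ge 16k$, \cref{cor:boundedperiodoverlap} bounds the overlap of any two relevant runs by $8k$. As in the horizontal proof, processing $\mathcal C(F)$ in sorted order of opening index and advancing the counter $i$ of \cref{alg:VertSyncOccurrences} only past a reduced block forces the algorithm to encounter the pair of runs carrying this context, hence to reduce it — contradiction. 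For (iii), a newly created horizontal period $Q^{18k}$ in $\str{F'}$ would have $|Q^{18k}|\ge 18k$ yet overlap each reduced flanking block in at most $8k$ positions, so it meets at most two reduced blocks; splicing the untouched stretch between them (unchanged by the reductions) shows $Q^{18k}$ was already a run in $\str{F}_1$, contradicting that $F_1,G_1$ avoid horizontal periodicity. A subtlety worth flagging is that the two halves $C_L^e$ and $C_R^e$ must be reduced \emph{simultaneously and by the same amount} for $\str{F'}$ to remain a valid balanced parenthesis string — this is exactly why the unit of reduction is the quintuple and why \cref{alg:VertSyncReductions} only splits it into the two synchronized sub-blocks in step~\ref{ln:7.4}, after the pairing is fixed; I would make this invariant explicit.

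For the running time, I would charge $\Oh(n\log n)$ to $\mathsf{ComputeContexts}$ via \cref{lem:computeC}; building the per-context $\ORS$ structures costs $\Oh(n\sqrt{\log n})$ and the $\Oh(n/k)$ many $\ORS$ queries cost $\Oh(\log\log n)$ each by \cref{lem:ORS}; sorting $\mathcal C(F)$ and $S'$ by starting index is a radix sort in $\Oh(n)$ since coordinates are in $[0\dd 2n)$; and the copy loop of \cref{alg:VertSyncReductions} touches each position of $\str{F}_1,\str{G}_1$ at most once, again because no two reduced blocks share a starting index. Composing with the $\Oh(n)$-time horizontal step of \cref{prp:hor}, the total is $\Oh(n\log n)$, giving the claimed bound; since the analysis above shows no horizontal periodicity is created, no further cleanup pass is needed.
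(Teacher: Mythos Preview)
Your proposal is correct and follows essentially the same route as the paper: run the vertical reduction (\cref{alg:VertSyncReductions}) after the horizontal one, use \cref{lem:vertperset} and \cref{lem:vertred} for distance preservation, and argue via \cref{cor:boundedperiodoverlap} that neither surviving vertical periodicity nor new horizontal periodicity can appear in the output; the paper leaves the composition with \cref{prp:hor} implicit, whereas you spell it out, which is an improvement. Two minor slips to fix: the definition of avoiding synchronized vertical $k$-periodicity uses exponent $16k$, not $18k$, so your contradiction hypothesis should read $C^{16k}$; and the number of $\ORS$ queries is $\Oh(n)$ rather than $\Oh(n/k)$ (since $|\mathcal C(F)|$ can be linear), but at $\Oh(\log\log n)$ per query this is still comfortably within $\Oh(n\log n)$.
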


\begin{proof}
    We consider \cref{alg:VertSyncReductions} to prove this proposition. First we find $S = \mathsf{VertPeriods(F, G)}$ in step~\ref{ln:7.1}; by Lemma \ref{lem:vertperset}, any $(u, v, q_L, q_R, e) \in S$ represents a $2k$-synchronized occurrence of $C^e$ where $C_L = \str{F}[o(u) \dd o(u) + q_L), C_R = \str{G}(c(u) - q_R \dd c(u)], C = (C_L, C_R)$ at nodes $u$ and $v$. In step~\ref{ln:7.4}, we identify the start of the specific periodic substrings we want to reduce the exponent of, and construct a set $S'$ with quadruples containing these starting indices, the length of the periods, and the exponent to be reduced. In step~\ref{ln:7.5} we sort these quadruples by their indices and iterate through them in order from left to right. For each quadruple $(\ell_F, \ell_G, q, e)$, in step~\ref{ln:7.11} and step~\ref{ln:7.12}, we move the start of these synchronized occurrences ahead by $q(e - 14k)$. Clearly, this is equivalent to shortening the power $e$ of the synchronized occurrences of $C$ to a power of $14k$ exactly since $$\frac{(\ell_F + qe) - (\ell_F + q(e - 14k))}{q} = 14k.$$  We also note that by construction in \cref{alg:VertSyncOccurrences}, no two quintuples of $S$ will represent overlapping synchronized occurrences with more than $8k \leq 14k$ overlapping characters in either the left or right part of the context (see step~\ref{ln:6.11}).  Therefore, 
    since $C_L \cdot C_R$ is a balanced string, we can construct forests $F', G'$ such that $\str{F'} = s_F, \str{G'} = s_G$ and by Lemma \ref{lem:vertred}, $\ted_{\le k}(F, G) = \ted_{\le k}(F', G')$.
    
    Now, we want to show that $F', G'$ avoid vertical $k$-periodicity. Assume for contradiction that there is some $2k$-synchronized occurrence $C^{16k}$ in $F', G'$. If the occurrence already was present in $F', G'$, there must have been some maximal $2k$-synchronized occurrence $C^e$ for $e \geq 16k$ between nodes $u, w \in F$ and $v, x \in G$. \Cref{alg:VertSyncOccurrences} iterates over all nodes of $F$ with a vertical period and finds any synchronized occurrences of the period in $G$.  Note that when a synchronized occurrence with power $e'$ and context length $q_L^F$ is found, we skip the $(e' - 8k)q_L^F$ nodes, i.e. we still check the last $8k$ nodes of every vertical period. By \cref{cor:boundedperiodoverlap}, there is at most an $8k$ overlap between any two maximal runs of period length at most $4k$, we know that we will check node $u$ since it is the start of a maximal vertical period. Therefore, as shown previously, we will find the occurrence of $C^e$ and reduce its power to $14k$ and so no such $C^{16k}$ will occur in $F', G'$.
    
    We have shown that if a $2k$-synchronized $C^{16k}$ occurs in $F, G$, we will reduce it.  However, it is possible that such an occurrence comes as a result of other reductions. In this case, by \cref{cor:boundedperiodoverlap}, such a synchronized occurrence in $F', G'$ can overlap at most two reduced vertical periods $C_1^{14k}, C_2^{14k}$ since both contexts had powers of at least $16k$ in $F, G$. Since the overlap with each context is at most $8k$ and we leave $14k$ repetitions of $C_1$ and $C_2$ in $F', G'$, any context $C^{16k}$ in $F', G'$ must have also occurred in $F, G$. As shown previously, $C^{16k}$ must have been reduced to $C^{14k}$, which is a contradiction. Therefore, all $2k$-synchronized occurrences of vertical periods have been reduced in $F', G'$ to a power of $14k$. Note that by the same argument, no new horizontal periodicity can occur in $F', G'$ as well.
    
   \cref{alg:VertSyncOccurrences} takes $\Oh(n\log n)$ time to compute $S$ by Lemmas \ref{lem:computeC} and \ref{lem:ORS}. We add one quintuple to $S$ per node in $F$, so $|S| \leq n$. Therefore, \cref{alg:VertSyncReductions} takes time $\Oh(n\log n)$ to compute forests $F', G'$.
\end{proof}

\section{Full Periodicity Reduction}\label{sec:fur}

\begin{lemma}\label{lem:fur}
Let $\lab$ be a joint labeling of forests $\F,\G$ resulting in labeled forests $F,G$,
and let $\hlab = \CR{\look{\lab}{8k}}{2k}$ for some $k\in \Zp$.
If $F,G$ avoid both synchronized horizontal $k$-periodicity and synchronized vertical $k$-periodicity,
then $\str[\hlab]{\F}$ and $\str[\hlab]{\G}$ avoid $2k$-synchronized $(20k+2)$-powers with root at most $4k$.
\end{lemma}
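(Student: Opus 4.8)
The plan is to argue by contradiction. Suppose $\str[\hlab]{\F}$ and $\str[\hlab]{\G}$ fail to avoid $2k$-synchronized $(20k+2)$-powers with root at most $4k$: there is a non-empty string $R$ with $|R|=p\le 4k$ such that $R^{20k+2}=\str[\hlab]{\F}[x\dd x+p(20k+2))=\str[\hlab]{\G}[y\dd y+p(20k+2))$ for some $x,y$ with $|x-y|\le 2k$. I aim to produce a synchronized horizontal or vertical $k$-periodicity in $F,G$. First I would record that $\hlab=\CR{\look{\lab}{8k}}{2k}$ is a refinement of $\look{\lab}{8k}$, hence of $\lab$. Consequently, on the range $[x\dd x+p(20k+2))$ (and on $[y\dd\cdot)$) the bracket sequence $\str{\F}$ is periodic with period $p$, the $\look{\lab}{8k}$-labels are periodic, and — since $\str[\hlab]{\F}[x+i]=R[i\bmod p]=\str[\hlab]{\G}[y+i]$ for all $i$ — both the $\lab$-labels and the $\look{\lab}{8k}$-labels of $\F$ and $\G$ agree on these two ranges position-by-position. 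Let $a\in[-p\dd p]$ be the bracket imbalance $\#\op-\#\cl$ of one period of $R$.

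\textbf{Balanced case: $a=0$.}
Here a cyclic rotation of the bracket projection of the period is a balanced parenthesis string (rotate to start just after a prefix of minimum imbalance). Shifting the window by the corresponding $\alpha\in[0\dd p)$ \emph{simultaneously} in $\F$ and $\G$ preserves the $2k$-synchronization and yields $2k$-synchronized occurrences of $Q^{20k+1}$ in $\str[\hlab]{\F},\str[\hlab]{\G}$, where $Q=\str[\hlab]{\F}[x+\alpha\dd x+\alpha+p)$ has balanced bracket projection and $|Q|=p\le 4k$. Projecting labels from $\hlab$ down to $\lab$ (using that $\hlab$ refines $\lab$) keeps the occurrences and the synchronization, so $\str{F}$ and $\str{G}$ have $2k$-synchronized occurrences of a balanced string to the power $20k+1\ge 18k$ with root $\le 4k$, contradicting that $F,G$ avoid synchronized horizontal $k$-periodicity.

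\textbf{Unbalanced case: $a\ne 0$.}
By reversal symmetry assume $a>0$. Then the periodic fragment of $\str{\F}$ descends the tree: its surplus opening parentheses — those whose matching closing parenthesis lies to the right of $x+p(20k+2)$ — occupy consecutive depths and form a parent–child path $w_0\supset w_1\supset\cdots$ of length at least $(20k+2)a$ in $\F$; the left-sibling subtrees hanging off this path, together with the path's own opening parentheses, are exactly $R^{20k+2}$, so they repeat along the path with period $a$. The depth-$8k$ look-ahead (usable since $|a|\le p\le 4k<8k$) then forces $\Sub_{<8k}(w_t)=\Sub_{<8k}(w_{t+a})$, which makes the right-sibling subtrees hanging off the path repeat with the same period, and — because $\str[\look{\lab}{8k}]{\F}$ and $\str[\look{\lab}{8k}]{\G}$ agree on the two ranges — makes them agree with the subtrees hanging off the parallel path $\hat w_0\supset\hat w_1\supset\cdots$ in $\G$. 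Separately, the $2k$-compatibility refinement inside $\hlab$ is what forces the $o$- and $c$-coordinates of nearby path nodes to increase by only $O(k)$ per period: otherwise consecutive surplus opens would land in distinct $\hlab$-classes and the fragment would not be $\hlab$-periodic at all. Thus one period of the left hanging structure and one period of the right hanging structure assemble into a context $C=(C_L,C_R)$ with $|C_L|,|C_R|\le 4k$, and a path node $u\approx w_0$ (so $|o_F(u)-x|\le 2k$, while $|c_F(u)-c_G(v)|\le 2k$ for the matching node $v$ in $\G$, since the right hanging subtrees of the two paths agree) witnesses a $2k$-synchronized occurrence of $C^{e}$ for some $e\ge 16k$, contradicting that $F,G$ avoid synchronized vertical $k$-periodicity.

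\textbf{Main obstacle.}
The balanced case is routine bookkeeping with cyclic rotations; essentially all of the difficulty is the unbalanced case, and there the delicate point is the combined use of the two ingredients of $\hlab=\CR{\look{\lab}{8k}}{2k}$. The depth-$8k$ look-ahead makes the subtrees hanging off the descending path (both sides of the path, and in both forests) periodic and mutually synchronized, whereas the $2k$-compatibility refinement is exactly what bounds the \emph{sizes} of these hanging subtrees (equivalently, the per-period increments of the $o$- and $c$-coordinates), which is what produces a context with $|C_L|,|C_R|\le 4k$ and with its opening \emph{and} closing endpoints $2k$-synchronized between $\F$ and $\G$. Formalizing ``the surplus opens trace a path whose hanging subtrees are periodic,'' and making the numerology line up through these reductions ($8k$ look-ahead, $2k$ compatibility, exponent $20k+2$ surviving down to $18k$ in the balanced case and $16k$ in the unbalanced case, periods $\le 4k$, offsets $\le 2k$), is where the real work lies.
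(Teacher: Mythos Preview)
Your balanced case is essentially the paper's, and your overall plan for the unbalanced case---identify a descending path of surplus opens, build a context from the left and right hanging material, and use the two ingredients of $\hlab$ to control it---is the right one. But the key technical step is genuinely missing, and the argument you sketch for it does not work as stated.

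You write that the $2k$-compatibility refinement ``forces the $o$- and $c$-coordinates of nearby path nodes to increase by only $O(k)$ per period: otherwise consecutive surplus opens would land in distinct $\hlab$-classes.'' This inference is invalid. Having the same $\hlab$-label means only that the two nodes lie in the same \emph{connected component} of the $2k$-compatibility graph; it does not bound $|c(w_t)-c(w_{t+a})|$, because the compatibility path connecting them could be long and wander. The paper closes this gap with a careful argument (its Claim~2) that traces the compatibility path from $u_i$ toward its descendant, takes the last node $u'$ on the path that is still an ancestor of $u_i$, looks at the next node $u''$, and eliminates every possible location of $o_\F(u'')$ by a case analysis. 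That case analysis in turn relies on a ``trapping'' fact (the paper's Claim~1): any node sharing the $\hlab$-label of some $u_i$ and whose opening parenthesis lies in the periodic window must itself be one of the $u_{i'}$. Proving Claim~1 uses that one has passed to the \emph{shortest} period~$q$ of the periodic fragment---something your sketch never does (you work with the given~$p$ throughout). Without this primitivity step, you cannot rule out that the compatibility path visits nodes at intermediate offsets.

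A related gap is your cross-forest synchronization. You assert $|c_F(u)-c_G(v)|\le 2k$ ``since the right hanging subtrees of the two paths agree,'' but equality of $\Sub_{<8k}$ only tells you the subtrees agree to depth~$8k$; it says nothing about the absolute $c$-coordinates. The paper instead argues directly from compatibility: since $\hlab(v_i)=\hlab(u_i)$, the node $v_i$ is $2k$-compatible with \emph{some} $u'\in V_\F$; Claim~1 then forces $u'=u_j$ for some $j$, and because the $o$- and $c$-coordinates of the $u_t$'s and $v_t$'s form arithmetic progressions, one obtains a fixed shift $\delta$ with $|o_\F(u_t)-o_\G(v_{t+\delta})|,\,|c_\F(u_t)-c_\G(v_{t+\delta})|\le 2k$ for all relevant~$t$. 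Finally, note that your use of the look-ahead to make the right-hanging subtrees repeat is circular in the order you present it: look-ahead only sees depth~$<8k$, so it certifies full repetition of the right-siblings only \emph{after} you know those siblings have small size (i.e., after the $|C_R|\le 4k$ bound), not before.
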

\begin{proof}
For a proof by contradiction, suppose that there is a string $\hQ$ of length $\hq\in (0\dd 4k]$ such that $\hQ^{20k+2}=\str[\hlab]{\F}[x\dd x+(20k+2)\hq)=
\str[\hlab]{\G}[y\dd y+(20k+2)\hq)$ for some positions $x,y$ with $|x-y|\le 2k$.
Note that $\hlab$ is a refinement of $\look{\lab}{8k}$,
which, in turn, is a refinement of $\lab':= \look{\lab}{4k}$.
In particular, $\str[\lab']{\F}[x\dd x+(20k+2)\hq)=\str[\lab']{\G}[y\dd y+(20k+2)\hq)$ is a string with period $\hq$.
Let $q$ be the shortest period of this string and let $Q$ be the underlying string period.

If $Q$ contains the same number of opening and closing parentheses, then, since $Q^2$ occurs in a balanced string $\str[\lab']{\F}$,
we conclude that some cyclic rotation of $Q$ is balanced, and therefore $F,G$ do not avoid synchronized horizontal $k$-periodicity (because $\lab'$ is a refinement of $\lab$).
Thus, by symmetry, we may assume without loss of generality that $Q$ contains more opening than closing parentheses.
Suppose that $Q[\delta]$ is the leftmost unmatched opening parenthesis within $Q$
and define nodes $u_0,\ldots,u_{20k+1}\in V_\F$ such that $o_\F(u_i)  = x +\delta + iq$
and nodes $v_0,\ldots,v_{20k+1}\in V_\G$ such that $o_\G(v_i) = y+\delta+iq$.
Observe that $c_\F(u_0) > \cdots > c_\F(u_{20k+1}) > o_\F(u_{20k+1})$
and $c_\G(v_0) > \cdots > c_\G(v_{20k+1}) > o_\G(v_{20k+1})$.

Next, we prove two claims regarding $2k$-compatibility with respect to $\look{\lab}{8k}$, 
which we simply refer to as compatibility in the remainder of the proof.
\begin{claim}\label{clm:one}
If there exists $i\in [0\dd 20k]$ and a node $u'\in V_\F$ with $o_\F(u')\in [o_\F(u_0)\dd o_\F(u_{20k})]$
such that $\hlab(u_i)=\hlab(u')$, then $u'=u_{i'}$ for some $i'\in [0\dd 20k]$.
\end{claim}
\begin{proof}
Note that $\hlab(u_i)=\hlab(u')$ implies $\str[\lab]{\Sub_{< 8k}(u')}=\str[\lab]{\Sub_{< 8k}(u_i)}$ and $\str[\lab']{\Sub_{< 4k}(u')}=\str[\lab']{\Sub_{< 4k}(u_i)}$. We thus have $\str[\lab']{\F}[o_\F(u')\dd o_\F(u')+q)=\str[\lab']{\F}[o_\F(u_i)\dd o_\F(u_i)+q)$ because the common size of $\Sub_{< 4k}(u')$ and $\Sub_{< 4k}(u_i)$ is at least $4k\ge q$.
Since $q$ is the shortest period of $\str[\lab']{\F}[o_\F(u_0)\dd o_\F(u_{20k})+q)$, we conclude that $o_\F(u')-o_\F(u_0)$ is a multiple of $q$,
and hence $u'=u_{i'}$ for some $i'\in [0\dd 20k]$.
\end{proof}

\begin{claim}\label{clm:two}
For all $i\in [0\dd 20k)$, we have $c_\F(u_i)\le c_\F(u_{i+1})+4k$ 
and $c_\G(v_i)\le c_\G(v_{i+1})+4k$ 
\end{claim}
\begin{proof}
We focus on the claim regarding $\F$ (the claim regarding $\G$ is symmetric).
For a proof by contradiction, suppose that $c_\F(u_i)>  c_\F(u_{i+1})+4k$ holds for some $i\in [0\dd 20k)$.
By construction, the node $u_i$ shares the $\hlab$-label with its descendant whose opening parenthesis is located at position $o_\F(u_i)+\hq$.
Consider the underlying path in the compatibility graph, let $u'\in V_\F$ be the last node on this path that is an ancestor of $u_i$ (possibly $u' = u_i$), and let $u''\in V_\F$ be the subsequent node of $\F$ on this path.
By definition of compatibility, we have $|o_\F(u')-o_\F(u'')|\le 4k$ and $|c_\F(u')-c_\F(u'')|\le 4k$.
Moreover, since $u'$ is an ancestor of $u_i$, we have $o_\F(u')\le o_\F(u_i)$ and $c_\F(u')\ge c_\F(u_i)$.
We conclude the proof by deriving a contradiction for every possible location of $o_\F(u'')$.
\begin{itemize}
    \item If $o_\F(u'')\le o_\F(u_i)$, then either $c_\F(u'')\le o_\F(u_i)$, which implies $c_\F(u'') \le o_\F(u_i) < o_\F(u_{i+1}) < c_\F(u_{i+1}) < c_\F(u_i)-4k \le c_\F(u')-4k \le c_\F(u'')$ (a contradiction) or $c_\F(u'')\ge c_\F(u_i)$, which means that $u''$ is an ancestor of $u_i$ and contradicts the choice of $u'$.
    \item If $o_\F(u'')\in (o_\F(u_i)\dd o_\F(u_{i+1}))$, then a contradiction follows from \cref{clm:one}

    \item If $o_\F(u'')\in [o_\F(u_{i+1})\dd c_\F(u_{i+1})]$, then $u''$ is an ancestor of $u_{i+1}$, which means that $c_\F(u'')\le c_\F(u_{i+1})
    < c_\F(u_i)-4k \le c_\F(u')-4k \le c_\F(u'')$ (a contradiction).
    \item If $o_\F(u'')> c_\F(u_{i+1})$, then $o_\F(u'')>c_\F(u_{i+1}) > o_\F(u_{i+1})+4k > o_\F(u_i)+4k \ge o_\F(u')+4k \ge o_\F(u'')$, which is also a contradiction.\qedhere
\end{itemize}
\end{proof}

The nodes $(u_i)_{i\in [0\dd 20k]}$ and $(v_i)_{i\in [0\dd 20k]}$ share the same $\lab'$-label, so the subtrees $\str[\lab]{\Sub_{< 4k}(u_i)}$ and $\str[\lab]{\Sub_{< 4k}(v_i)}$ are all isomorphic.
Consequently, by \cref{clm:two}, the context $C:=(C_L,C_R):=(\str[\lab]{\F}[o_\F(u_0)\dd o_\F(u_1)), \str[\lab]{\F}(c_\F(u_1)\dd c_\F(u_0)])$ satisfies $0 < |C_L|,|C_R| < 4k$ and occurs at all nodes $(u_i)_{i\in [0\dd 20k)}$ and $(v_i)_{i\in [0\dd 20k)}$. Its power $C^{20k}$ occurs in $F$ at node $u_0$ and in $G$ at node $v_0$.

Observe that, for every $i\in [0\dd 20k]$, the node $v_{i}$ must be compatible with some $u'\in V_\F$.
If  $i\in [4k\dd 16k]$, then $o_\F(u')\ge o_\G(v_{i})-2k \ge o_\F(u_i)-4k \ge o_\F(u_{i-4k})$ and $o_\F(u')\le o_\G(v_{i})+2k \le o_\F(u_i)+4k \le o_\F(u_{i+4k})$, so $o_\F(u')\in [o_\F(u_{i-4k})\dd o_\F(u_{i+4k})]$.
By \cref{clm:one}, this means that $u'=u_j$ for some $j\in [i-4k\dd i+4k]$.
In particular, $|o_\F(u_i)-o_\G(v_{j})|\le 2k$ and $|c_\F(u_i)-c_\G(v_{j})|\le 2k$.
Since $(o_\F(u_t))_{t\in [0\dd 20k)}$ and $(o_\G(v_t))_{t\in [0\dd 20k)}$ form arithmetic progressions with difference $|C_L|$
and $(c_\F(u_t))_{i\in [0\dd 20k)}$ and $(c_\G(v_i))_{t\in [0\dd 20k)}$ form arithmetic progressions with difference $-|C_R|$,
we conclude that there exists $\delta:=j-i\in [-4k\dd 4k]$ such that $|o_\F(u_t)-o_\G(v_{t+\delta})|\le 2k$
and $|c_\F(u_t)-c_\G(v_{t+\delta})|\le 2k$ hold whenever $t,t+\delta\in [0\dd 20k]$.
This means that $C^{16k}$ has $2k$-synchronized occurrences in $F$ and $G$ (at nodes $u_0, v_{\delta}$ if $\delta \ge 0$, and at nodes $u_{-\delta}, v_0$ otherwise), contradicting the assumption that $F,G$ avoid synchronized vertical $k$-periodicity.
\end{proof}

\begin{proposition}\label{prp:fur}
    There exists a randomized algorithm that, given forests $F,G$ and a threshold $k\in \Zp$,
    produces forests $F',G'$ and an alignment $\A: \str{F'}\leadsto \str{G'}$ such that:
    \begin{itemize}
        \item $\ted_{\le k}(F,G)=\ted_{\le k}(F',G')$, and 
        \item $|\A \triangle \B| \le 4928k^4$ for every alignment $\B\in \ta_k(F',G')$.
    \end{itemize}
The running time is $\Oh(n\log n+k^3)$ and the algorithm is correct w.h.p.
\end{proposition}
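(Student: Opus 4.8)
The plan is to push $F,G$ through the periodicity reduction of \cref{prp:ver}, reinterpret the resulting absence of synchronized periodicity as the absence of long synchronized powers in a suitably refined parenthesis encoding, and then feed this into the ``almost everything is common'' estimate \cref{lem:sheep}, with \cref{lem:build_greedy} playing the role of the promised $\Oh(n+k^3)$-time DP. Concretely, first I would run the algorithm of \cref{prp:ver} on $F,G$, obtaining in $\Oh(n\log n)$ time labeled forests $F',G'$ (whose surviving nodes keep their labels; write $\lab$ for the resulting joint labeling) such that $\ted_{\le k}(F,G)=\ted_{\le k}(F',G')$ --- which already establishes the first bullet --- and such that $F',G'$ avoid both synchronized horizontal and synchronized vertical $k$-periodicity. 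Next I would compute the refinement $\hlab:=\CR{\look{\lab}{8k}}{2k}$, invoking \cref{lem:buildLookahead} ($\Oh(n)$ time, randomized, correct w.h.p.) for $\look{\lab}{8k}$ and then \cref{lem:buildC} ($\Oh(n\log n)$ time) for the $2k$-compatibility refinement, and set $X:=\str[\hlab]{\F'}$, $Y:=\str[\hlab]{\G'}$, strings of length $\Oh(n)$. By \cref{lem:fur}, $X$ and $Y$ avoid $2k$-synchronized $(20k+2)$-powers with root at most $4k$.

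To produce $\A$, I would run the DP of \cref{lem:build_greedy} on $X,Y$ with cost threshold $K:=16k^2$ and width threshold $w:=2k$, at a cost of $\Oh(|X|+|Y|+Kw)=\Oh(n+k^3)$; if it returns a witness I take it as $\A$, and if it reports $\aa_{K,w}(X,Y)=\emptyset$ I let $\A$ be the trivial alignment deleting $\str{F'}$ and then inserting $\str{G'}$. Since $\str{F'}$ and $X=\str[\hlab]{\F'}$ have identical lengths and parenthesis skeletons and differ only in node labels (same for $\str{G'}$ and $Y$), a sequence of index pairs is an alignment $X\leadsto Y$ exactly when it is an alignment $\str{F'}\leadsto\str{G'}$, so $\A$ is a legitimate alignment $\str{F'}\leadsto\str{G'}$, as required.

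For the second bullet, fix any $\B\in\ta_k(F',G')$: it is a tree alignment with $\ted_{\lab,\B}(\F',\G')\le k$, hence its width (a label-independent quantity) is at most $2\ted_{\lab,\B}(\F',\G')\le 2k=w$; by \cref{lem:boundLookahead} with depth $8k$ we get $\ted_{\look{\lab}{8k},\B}(\F',\G')\le 8k\cdot k=8k^2$, and since the width is at most $2k$, \cref{obs:boundCompatibility} gives $\ted_{\hlab,\B}(\F',\G')=\ted_{\look{\lab}{8k},\B}(\F',\G')\le 8k^2$, i.e.\ $\ed_\B(X,Y)\le 16k^2=K$. Thus $\B\in\aa_{K,w}(X,Y)$; in particular $\aa_{K,w}(X,Y)$ is nonempty whenever $\ta_k(F',G')\ne\emptyset$, so in that case \cref{lem:build_greedy} did return a witness and $\A\in\ga_{K,w}(X,Y)\subseteq\aa_{K,w}(X,Y)$. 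Applying \cref{lem:sheep} to $\A,\B\in\aa_{K,w}(X,Y)$ --- legitimate because $X,Y$ avoid $2k$-synchronized $(20k+2)$-powers with root at most $4k=2w$ --- bounds $|\A\sm\B|$ and, symmetrically, $|\B\sm\A|$ by $7wKe$ with $e=20k+2$, whence $|\A\triangle\B|=\Oh(k^4)$; a careful accounting with $w=2k$, $K=16k^2$, and $e=20k+2$ yields the claimed bound $4928k^4$. If $\ta_k(F',G')=\emptyset$, the second bullet is vacuous. Summing the running times of the phases gives $\Oh(n\log n+k^3)$, and the only randomness sits in \cref{lem:buildLookahead}, so the algorithm is correct w.h.p.

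The step I expect to be the main obstacle is the bookkeeping tying together the two layers of refinement and the factor $2$ from the parenthesis encoding: one must be certain that replacing $\lab$ by $\hlab=\CR{\look{\lab}{8k}}{2k}$ inflates the string-edit cost of every tree alignment of interest by only a controlled $\Oh(k)$ factor --- so that the thresholds $K=\Oh(k^2)$ and $w=\Oh(k)$ are valid for all of $\ta_k(F',G')$, not merely for optimal alignments --- and that \cref{lem:fur} supplies exactly the power-avoidance hypothesis \cref{lem:sheep} needs at precisely these parameters. A smaller but still necessary check is that the DP of \cref{lem:build_greedy} cannot fail while $\ta_k(F',G')$ is nonempty, which is what makes the fallback to the trivial alignment harmless.
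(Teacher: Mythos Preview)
Your proposal is correct and mirrors the paper's proof step for step: reduce via \cref{prp:ver}, refine to $\hlab=\CR{\look{\lab}{8k}}{2k}$ using \cref{lem:buildLookahead} and \cref{lem:buildC}, invoke \cref{lem:fur} for the power-avoidance hypothesis, run \cref{lem:build_greedy} with $K=16k^2$ and $w=2k$, and finish with \cref{lem:sheep}. One minor arithmetic point (shared with the paper's own write-up): summing both directions of \cref{lem:sheep} gives $|\A\triangle\B|\le 2\cdot 7wKe$, which at $w=2k$, $K=16k^2$, $e=20k+2$ is closer to $9856k^4$ than $4928k^4$, so your ``careful accounting'' sentence overstates slightly---but this does not affect the method or the $\Oh(k^4)$ conclusion.
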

\begin{proof}
The forests $F',G'$ are produced by horizontal (\cref{prp:hor}) and vertical (\cref{prp:ver}) periodicity reduction;
this guarantees $\ted_{\le k}(F,G)=\ted_{\le k}(F',G')$.
Let $\F',\G'$ be the underlying unlabeled forests and let $\lab'$ be their joint labeling.
We use \cref{lem:buildLookahead,lem:buildC} to construct $\hlab = \CR{\look{\lab'}{8k}}{2k}$ and strings $\str[\hlab]{\F'}$, $\str[\hlab]{\G'}$.
Note that the width of any $\B\in \ta_k(F',G')$ does not exceed $2k$ and, by \cref{lem:boundLookahead,obs:boundCompatibility}, we have $\ed_{\B}(\str[\hlab]{\F'},\str[\hlab]{\G'})\le 2k\cdot 8k \le 16k^2$. Thus, $\B\in \aa_{16k^2,2k}(\str[\hlab]{\F'},\str[\hlab]{\G'})$. 
We construct $\A$ in $\Oh(n+k^3)$ time using \cref{lem:build_greedy} as an arbitrary alignment in $\ga_{16k^2,2k}(\str[\hlab]{\F'},\str[\hlab]{\G'})$; if there is no such alignment, then $\aa_{16k^2,2k}(\str[\hlab]{\F'},\str[\hlab]{\G'})=\emptyset$ and hence $\ta_k(F',G')=\emptyset$.
By \cref{lem:fur,lem:sheep}, we have $|\A \triangle \B| \le 7 \cdot 2k \cdot 16k^2 \cdot (20k+2) \le 4928k^4$.
\end{proof}

\section{Main Algorithm}
\subsection{Partial Forest Matching}\label{lem:partial}

Let $F$ and $G$ be labeled forests. 
We say that a set  $M \sub V_F\times V_G$ is \emph{non-crossing} if the set $\bigcup_{(u,v)\in M}\{(o_F(u),o_F(v)),\allowbreak (c_F(u),c_F(v))\}\sub \Zz\times \Zz$ is non-crossing. Furthermore, we say that $M\sub V_F\times V_G$ is a \emph{non-crossing matching} of forests $F,G$ if 
$\bigcup_{(u,v)\in M}\{(o_F(u),o_F(v)),\allowbreak (c_F(u),c_F(v))\}$ is a non-crossing matching of $\str{F},\str{G}$.

We say that a tree alignment $\A\in \ta(F,G)$ \emph{aligns} $u\in V_F$ with $v\in V_G$, denoted $u \sim_{\A} v$, if $\str{F}[o_F(u)] \sim_{\A} \str{G}[o_G(v)]$
and  $\str{F}[c_F(u)] \sim_{\A} \str{G}[c_G(v)]$. If, additionally, $u$ and $v$ have the same labels,
we say that $\A$ \emph{matches} $u$ and $v$, denoted $u \simeq_{\A} v$.
For a set $M\sub V_F\times V_G$, we define $\ta^M(F,G) = \{\A \in \ta(F,G): u \simeq_{\A} v \text{ for each }(u,v)\in M\}$;
note that $\ta^M(F,G)=\emptyset$ unless $M$ is a non-crossing matching.
Moreover, we denote $\ted^M(F,G) = \min_{\A \in \ta^M(F,G)} \ted_{\A}(F,G)$.

In this section, we consider the problem of computing $\ted^M(F,G)$ given labeled forests $F,G$ and a non-crossing matching $M\sub V_F\times V_G$.
The following lemma provides a reduction that restricts the heights of  $F,G$ and lets us assume that $M\sub L_F\times L_G$,
where $L_F\sub V_F$ is the set of leaves of $F$ and $L_G\sub V_G$ is the set of leaves of $G$.

\begin{lemma}\label{lem:reduceHeight}
    There exists a linear-time algorithm that, given labeled forests $F,G$ and a non-crossing matching $M\sub  V_F\times V_G$,
    produces labeled forests $F',G'$ and a non-crossing matching $M'\sub L_{F'}\times L_{G'}$ 
    such that:
    \begin{itemize}
        \item $\ted^{M'}(F',G') = \ted^{M}(F,G)$;
        \item $|F'|= |M| + |F|$, $|G'|= |M|+|G|$, and $|M'|= 2|M|$;
        \item if the height of $F'$ ($G'$) is $h> 1$, then there is an $(h-1)$-node top-down path in $F$ (respectively, $G$) avoiding nodes participating in $M$.
    \end{itemize}
\end{lemma}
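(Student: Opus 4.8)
The plan is to realize the reduction in two conceptual moves that together turn an arbitrary non-crossing matching $M\sub V_F\times V_G$ into one living on the leaves, while (i) not changing $\ted^M$, (ii) controlling sizes, and (iii) exposing long unmatched top-down paths. First I would build an intermediate pair $\hat F,\hat G$ by \emph{partitioning along $M$}: mark every node appearing in $M$, and for each node $w\in V_F$ assign it to its nearest marked \emph{proper} ancestor (or to $\bot_F$ if none exists); this groups $V_F$ into blocks, one per marked node together with the set of roots not dominated by any mark. For each block $S$, form a tree by deleting from $F$ all nodes in $V_F\sm S$ (in the forest-edit sense, which splices children up), preserving sibling order; concatenating these trees in left-to-right order over the blocks gives $\hat F$, and analogously $\hat G$. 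The point of choosing the \emph{proper} ancestor is exactly that each marked node becomes the root of its own block-tree, hence a leaf of $\hat F$ once we... wait — it is the \emph{root}, not a leaf. To fix this I would instead send each marked node to the block of its nearest marked proper ancestor \emph{as a leaf}: i.e., a marked node $u$ belongs to the block headed by the nearest marked proper ancestor of $u$, and within that block $u$ is kept but all of $u$'s descendants are removed, so $u$ is a leaf of the block-tree. This is what the overview describes, and it is the right bookkeeping.

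The second move handles the fact that the block-trees, read off independently, need not reconstitute $F$ in a way that preserves $\ted^M$: consecutive block-trees that were nested or adjacent in $F$ can, after separation, be aligned more cheaply than the original constraint allows. To repair this I would insert, between each pair of consecutive block-trees in $\hat F$ (and the corresponding pair in $\hat G$), a single fresh node with a brand-new unique label, and add that pair of new nodes to $\hat M$. Since these sentinel labels are unique and appear in $\hat M$, any alignment in $\ta^{\hat M}(\hat F,\hat G)$ must match them to each other, which forces the alignment to respect the original block boundaries; conversely any $M$-respecting alignment of $F,G$ decomposes block by block and lifts to an $\hat M$-respecting alignment of $\hat F,\hat G$ of the same cost, because each tree edit inside a block is mimicked verbatim and the sentinels are matched for free. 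That gives $\ted^{\hat M}(\hat F,\hat G)=\ted^M(F,G)$. Counting: each marked node contributes one node to $\hat F$ beyond the original copies, and the at most $|M|+1$ sentinels add one more; a careful accounting (as in the overview) yields $|F'|=|M|+|F|$, $|G'|=|M|+|G|$, $|M'|=2|M|$ once we set $(F',G',M')=(\hat F,\hat G,\hat M)$. The height claim follows because, by construction, along any root-to-leaf path of $\hat F$ the internal (non-leaf, non-sentinel) nodes correspond to an ancestor chain in $F$ none of whose members is marked except possibly the block head, so a path with $h$ nodes in $F'$ yields an $(h-1)$-node top-down chain in $F$ avoiding $M$.

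So the ordered steps are: (1) mark $M$-nodes; (2) assign each node to the block of its nearest marked proper ancestor, keeping marked nodes as leaves of their block; (3) materialize block-trees via forest deletions, preserving order, to get $\hat F,\hat G$; (4) interleave unique-label sentinel nodes between consecutive block-trees and add them to $\hat M$; (5) verify $\ta^{\hat M}(\hat F,\hat G)$ forces sentinel matching and block-boundary respect, hence the two-way cost inequalities; (6) tally sizes and read off the height statement; (7) check that every step is linear-time (the partition is a single DFS maintaining a stack of nearest marked ancestors; the deletions and concatenations are linear; the sentinel insertion is linear). The main obstacle I expect is step (5), specifically the \emph{converse} direction: showing that an optimal $\hat M$-alignment of $\hat F,\hat G$, which a priori could shuffle block-trees or align a block-tree of $\hat F$ across two block-trees of $\hat G$, is in fact forced by the sentinels to act block-by-block, so that it pulls back to an $M$-alignment of $F,G$ of no greater cost. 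This requires arguing that matching the unique sentinels pins the alignment into a ``staircase'' of independent blocks, and then that reversing the deletions/insertions that built the block-trees turns an alignment inside a block into a genuine tree alignment of the corresponding induced subforests of $F$ and $G$ that still matches all of $M$ — the delicate point being that a node deleted when forming a block must be deletable in the lifted alignment without disturbing any $M$-matched node, which holds precisely because such nodes are, by the nearest-marked-proper-ancestor rule, not in $M$ and have all their $M$-descendants (if any) in strictly lower blocks. Getting this lifting argument clean, rather than the size arithmetic, is where the real work lies.
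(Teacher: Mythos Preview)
Your construction is exactly the paper's: partition each forest by nearest marked \emph{proper} ancestor (so every marked node becomes a leaf of its block), then interleave one sentinel singleton before each of the $|M|$ non-root blocks and add those sentinel pairs to $M'$. Where you diverge is in the equality argument, and there the paper is simpler than the lifting you worry about. Instead of pulling an $M'$-alignment of $F',G'$ back to $F,G$, the paper observes directly that for any $\A\in\ta^M(F,G)$ and any aligned pair $u\sim_\A v$, the set $M\cup\{(u,v)\}$ is non-crossing, which forces $u\in V_{F_i}$ and $v\in V_{G_i}$ for the \emph{same} index $i$; hence $\ted^M(F,G)=\sum_{i=0}^{m}\ted^{M_i}(F_i,G_i)$ outright. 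The identical sum computes $\ted^{M'}(F',G')$ because the sentinels, being in $M'$, pin the block boundaries. So your ``main obstacle'' dissolves once you factor through this sum; no alignment lifting or reversal of deletions is needed. Two minor notes: the paper's sentinels all share one label (unique labels also work, since they are in $M'$ anyway), and your size remark is garbled --- marked nodes are not duplicated; the exactly $|M|$ sentinels alone give $|F'|=|F|+|M|$.
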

\begin{figure}[ht]
    \centering
    \scalebox{0.63}{\input{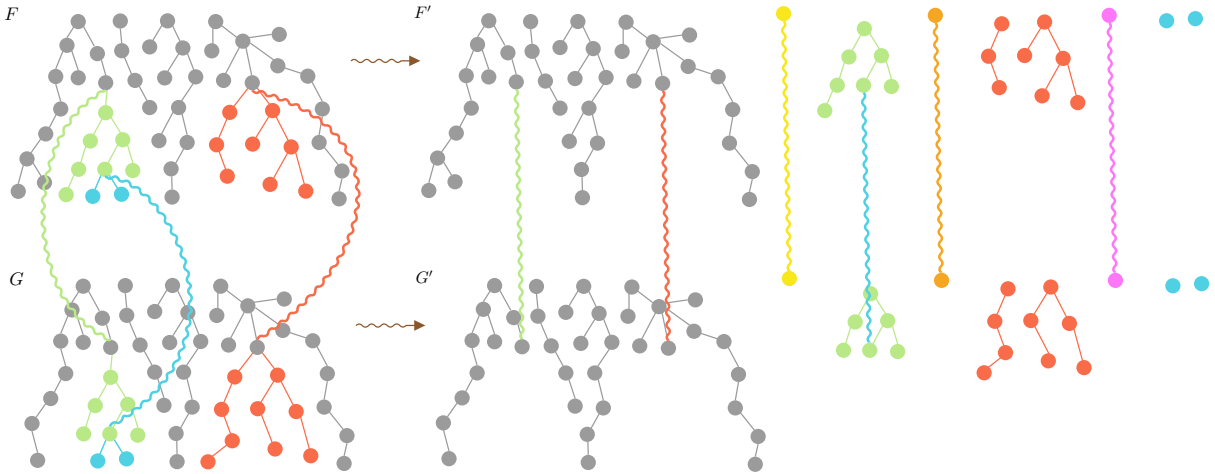}}
    \caption{An illustration of the transformation implemented in \cref{lem:reduceHeight}. The non-crossing matching is represented using colorful wavy lines. In the input forests $F$ and $G$, each node inherits its color from the nearest marked ancestor.}%
    \label{fig:partial_forest}
\end{figure}

\begin{proof}
Let us denote $M = \{(u_i,v_i) : i\in [1\dd m]\}$
and \emph{mark} all nodes participating in $M$. 
We decompose $V_F = \bigcup_{i=0}^m V_{F_i}$ so that $u\in V_{F_0}$ if $u$ does not have any proper marked ancestor
and, for every $i\in [1\dd m]$, we have $u\in V_{F_i}$ if $u_i$ is the nearest proper marked ancestor of $u$.
We further define forests $F_0,\ldots, F_m$ so that $F_i$ is obtained from $F$ by deleting all vertices in $V_F \sm V_{F_i}$.

As far as the implementation is concerned, a top-down traversal of $F$ allows classifying each node $u\in V_F$ into one of the classes $V_{F_i}$.
This is because every root node belongs to $V_{F_0}$, every node with a marked parent $u_i$ belongs to the class $V_{F_i}$,
and every other node belongs to the same class as its unmarked parent.
Consequently, the forest $F$ can be decomposed into the forests $F_0,\ldots,F_m$ in linear time.
Symmetrically, we decompose the forest $G$ into analogously defined forests $G_0,\ldots,G_m$.

Due to the fact that $M$ is non-crossing, this yields a decomposition $M = \bigcup_{i=0}^m M_i$, where $M_i = M \cap (V_{F_i}\times V_{G_i})$. 
Furthermore, if $u\sim_{\A} v$ holds for an alignment $\A \in \ta^M(F,G)$,
then the set $M \cup \{(u,v)\}\sub V_F\times V_G$ is non-crossing, and thus there exists $i\in [0\dd m]$ such that $u\in V_{F_i}$ and $v \in V_{G_i}$.
Consequently, we have $\ted^M(F,G)=\sum_{i=0}^m \ted^{M_i}(F_i,G_i)$.

In the second step of the algorithm, we create nodes $\hu_1,\ldots,\hu_m$ and $\hv_1,\ldots,\hv_m$, all sharing the same label.
The forest $F'$ is obtained as a (horizontal) concatenation $F' := F_0 \cdot \hu_1 \cdot F_1  \cdots  \hu_m \cdot F_m$,
where each node $\hu_i$ is interpreted as a single-node forest.
Symmetrically, $G' := G_0 \cdot \hv_1 \cdot G_1  \cdots  \hv_m \cdot G_m$.
Finally, we set $M' := M \cup \{(\hu_i,\hv_i) : i\in [1\dd m]\}$.
The forests $F',G'$ and the set $M'$ can be easily constructed in linear time.

The construction trivially yields $|F'|=|F|+|M|$, $|G'|=|G|+|M|$, and  $|M'|=2|M|$.
Since the separator nodes are included in $M'$, we further have $\ted^{M'}(F',G')=\sum_{i=0}^m \ted^{M_i}(F_i,G_i)=\ted^M(F,G)$.
The separator nodes are leaves and $M_i \sub L_{F_i}\times L_{G_i}$ holds for each $i\in [0\dd m]$ (by definition of the decompositions of $F$ and $G$),
so we have $M' \sub L_{F'}\times L_{G'}$.
Finally, observe that if the height of $F'$ is $h>1$,
then one of the forests $F_i$ has an $h$-node root-to-leaf path.
Trimming the leaf, we obtain an $(h-1)$-node top-down path avoiding marked nodes.
By construction of $F_i$, this path is also present in $F$.
A symmetric reasoning lets us characterize the height of $G'$.
\end{proof}

The following reduction prunes unnecessary leaves.
This is useful if $M\sub  L_F\times L_G$ is very large.

\begin{lemma}\label{lem:reduce}
    There exists a linear-time algorithm that, given labeled forests $F,G$ and a non-crossing matching $M\sub  L_F\times L_G$
    produces  labeled forests $F',G'$ and a non-crossing matching $M'\sub L_{F'}\times L_{G'}$ 
    such that:
    \begin{itemize}
        \item $F'$ and $G'$ are obtained by deleting some leaves in $F$ and $G$, respectively;
        \item $M' = M\cap (V_{F'}\times V_{G'})=M\cap (V_{F}\times V_{G'})=M\cap (V_{F'}\times V_{G})$ satisfies $|M'|\le \frac25(|F'|+|G'|+1)$;
        \item $\ted^{M'}(F',G')=\ted^{M}(F,G)$.
    \end{itemize}
\end{lemma}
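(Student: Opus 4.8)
The goal is to prune "redundant" matched leaves so that the matching becomes small relative to the forest size, while preserving $\ted^M$. Since $M\sub L_F\times L_G$ is non-crossing and both endpoints are leaves, consecutive matched leaves give rise to a natural "representative" structure. First I would set up the bookkeeping: the non-crossing matching $M$, viewed through the parentheses representations, corresponds to a set of matched pairs of leaves; call a matched leaf $\hat u$ (with partner $\hat v$) \emph{redundant} if its immediate left sibling in $F$ is also matched, to the immediate left sibling of $\hat v$ in $G$. (Here "left sibling" should be read relative to the common parent in $F$, respectively $G$, with the understanding that if $\hat u$ is the first child of its parent we also check the situation where $\hat u$ is the first node of a tree in the forest — as in the $\hat M$-reduction sketch of \cref{lem:partial}, the precise notion of "immediate predecessor among matched leaves" is what matters.) I would then let $F'$ and $G'$ be obtained from $F$ and $G$ by deleting all redundant matched leaves, and set $M' = M\cap(V_{F'}\times V_{G'})$.

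\textbf{Step 1: the three descriptions of $M'$ agree.} Because only leaves are deleted and only on the $F$-side or $G$-side of a redundant pair $(\hat u,\hat v)$ — and because $\hat u$ is redundant iff $\hat v$ is (the condition is symmetric: $\hat u$'s left sibling is matched to $\hat v$'s left sibling) — a pair of $M$ survives the deletion on the $F$-side iff it survives on the $G$-side. Hence $M\cap(V_{F'}\times V_G) = M\cap(V_F\times V_{G'}) = M\cap(V_{F'}\times V_{G'})$, and all equal $M'$. This is a routine case check once the symmetry of "redundant" is pinned down.

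\textbf{Step 2: $\ted^{M'}(F',G') = \ted^M(F,G)$.} The inequality $\le$ is the easy direction: any $\A\in\ta^{M'}(F',G')$ of cost $\le k$ can be extended to $\ta^M(F,G)$ by reinserting each deleted redundant leaf $\hat u$ immediately to the right of the image of its left sibling $u$ (which, by redundancy, is matched by $\A$ to $v$, the left sibling of $\hat v$), matching $\hat u$ to $\hat v$; this adds no cost and — crucially — preserves the non-crossing/tree-alignment property exactly because $u\simeq_\A v$ acts as a "representative anchor" for $(\hat u,\hat v)$, so the reinserted pair slots in consistently with both pre-order and post-order. For $\ge$: given $\A\in\ta^M(F,G)$ of cost $\le k$, deleting from $\A$ the two parenthesis-pairs of every redundant leaf on both sides yields a tree alignment of $F',G'$ that still matches every pair in $M'$, with cost not larger; here one must check that removing a matched-leaf pair from a tree alignment again gives a tree alignment (it does, since a leaf's two parentheses are adjacent and matched to an adjacent matched pair). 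I would present this via the parentheses representation, invoking the interpretation of tree alignments from \cref{def:ta}.

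\textbf{Step 3: the size bound $|M'|\le \tfrac25(|F'|+|G'|+1)$.} This is the combinatorial heart — and the main obstacle. After pruning, every matched leaf in $F'$ (except possibly one "leftmost" per tree) has an \emph{unmatched} immediate left sibling, and symmetrically in $G'$. I would argue by a charging scheme over the forest $F'$ (the sketch before \cref{lem:partial} claims exactly $|\hat M|\le \tfrac25(|\hat F|+|\hat G|+1)$ in the analogous situation): assign to each matched leaf $\hat u\in F'$ a charge that "pays for" $\hat u$ itself plus the unmatched node immediately to its left (or an ancestor/root separator, accounting for the "$+1$"), so that two forest nodes of $F'$ (and, by the symmetric argument, two of $G'$) are blamed per matched leaf but with a factor improvement because each matched leaf is counted once on each side: $2|M'| \le$ (number of "blamed" slots in $F'$) $+$ (number in $G'$), and a careful accounting of which nodes can be blamed at most how many times yields the $2/5$ ratio rather than a weaker $1/2$. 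The delicate points are (i) handling first-children and first-nodes-of-trees — these produce the additive $+1$ via a per-component correction — and (ii) ensuring the charging is injective enough to get $2/5$; I expect to need to also charge the matched leaf against part of the \emph{parent} chain, exploiting that a run of matched leaves under one parent has been collapsed to a single matched leaf plus (at least) one unmatched leaf plus the parent node. Once the charging inequality is established, $|M'|\le\tfrac25(|F'|+|G'|+1)$ follows by arithmetic. Finally, $M'\sub L_{F'}\times L_{G'}$ is immediate since $M\sub L_F\times L_G$ and deleting leaves keeps the surviving matched nodes as leaves, and "$F'$, $G'$ obtained by deleting leaves" holds by construction.
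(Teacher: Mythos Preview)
Your construction (delete a matched leaf pair $(\hat u,\hat v)$ exactly when both have immediate left siblings that themselves form a matched pair) and your Steps~1--2 match the paper's proof essentially verbatim, including the key point that reinserting a redundant pair cannot create a crossing because any such crossing would already cross the surviving left-sibling pair $(u,v)$.

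Step~3 is where your sketch does not yet work. Your description---``two forest nodes of $F'$ are blamed per matched leaf, with a factor improvement because each matched leaf is counted once on each side''---has the charging running in the wrong direction and, as you note yourself, would at best give a $\tfrac12$ bound; the speculative appeal to ``the parent chain'' and ``per-component corrections'' does not close the gap. The paper's counting goes the other way: for each surviving pair $(u',v')\in M'$, observe that \emph{at least one} of $u',v'$ is either (i) the leftmost root of its forest, (ii) the leftmost child of its parent, or (iii) has an immediate left sibling not participating in $M'$. In cases (ii) and (iii), charge the pair to that parent or that left sibling; since $M'\subseteq L_{F'}\times L_{G'}$, the parent is never in $M'$, so the charged node is always unmatched. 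Each unmatched node is charged at most \emph{twice}---once as somebody's parent (by its leftmost child) and once as somebody's left sibling (by its immediate right sibling). Case~(i) contributes the additive constant. This yields
\[
|M'|\le 2\bigl(1+|F'|+|G'|-2|M'|\bigr),
\]
which rearranges to $|M'|\le\tfrac25(|F'|+|G'|+1)$. No parent chain and no per-component bookkeeping are needed---just the single observation that an unmatched node can be blamed by at most one leftmost child and at most one right sibling.
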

\begin{proof}
    The algorithm identifies a subset $\hM$ of \emph{redundant leaf pairs}
    and constructs $(F',G',M')$ so that $M'=M\sm \hM$
    whereas the forests $F'$ and $G'$ are obtained be deleting all nodes participating in $\hM$.
    Specifically, $\hM$ contains all pairs $(\hu,\hv)\in M$ such that $\hu$ and $\hv$ have immediate left siblings $u$
    and $v$, respectively, such that $(u,v)\in M$.
    The construction trivially yields $\ted^{M'}(F',G')\le \ted^{M}(F,G)$.
    As for the inverse inequality, we shall prove that every alignment $\A'\in \ted^{M'}(F',G')$,
    can be converted to an alignment $\A\in \ted^{M}(F,G)$ of the same cost.
    For this, we simply extend $\A'$ so that $\hu \simeq_{\A} \hv$ holds for all $(\hu,\hv)\in \hM$.
    A simple inductive argument (reintroducing $(\hu,\hv)\in \hM$ in the left-to-right order)
    shows that this does not introduce any crossings among the aligned pairs of vertices.
    This is because any existing pair crossing $(\hu,\hv)$ would also cross the pair $(u,v)$ of immediate left siblings.

    Next, we note that if $(u',v')\in M'$, then one of the following cases holds:
    \begin{itemize}
        \item $u'$ or $v'$ is the leftmost root of $F'$ (resp. $G'$);
        \item $u'$ or $v'$ is the leftmost child of its parent (and the parent is not participating in $M'$);
        \item $u'$ or $v'$ has an immediate left sibling that is not participating in $M'$.
    \end{itemize}
    The number of nodes not participating in $M'$ is $|F'|+|G'|-2|M'|$,
    and each of them can be charged twice (by its right sibling and by its leftmost child).
    Hence, $|M'|\le 2(1+|F'|+|G'|-2|M'|)$, and this simplifies to $|M'|\le \frac25(|F'|+|G'|+1)$.

    Finally, we observe that $F'$, $G'$, and $M'$ can be easily constructed in linear time.
\end{proof}

The last result of this section reduces computing $\ted^M_{\le k}(F,G)$ to computing $\ted_{\le k}(F',G')$
for some $F'$~and~$G'$.

\begin{lemma}\label{lem:gadget}
    There exists an algorithm that, given labeled forests $F,G$, a non-crossing matching $M\sub  L_F\times L_G$, and an integer $k\in \Zz$,
    produces labeled forests $F',G'$ satisfying the following conditions:
    \begin{itemize}
        \item $\ted_{\le k}(F',G')=\ted^{M}_{\le k}(F,G)$;
        \item $|F'|= |F|+(k+1)|M|$ and $|G'|= |G|+(k+1)|M|$;
        \item The height of $F'$ ($G'$) does not exceed the height of $F$ (resp., $G$) plus one.
    \end{itemize}
The running time of the algorithm is $\Oh(|F'|+|G'|)$.
\end{lemma}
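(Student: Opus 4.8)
The plan is the natural ``gadget'' construction. Write $M=\{(u_1,v_1),\dots,(u_m,v_m)\}$ with $m=|M|$; we may assume $M$ is an honest (label‑respecting) non‑crossing matching, so $\lab(u_i)=\lab(v_i)$ for all $i$ (otherwise $\ta^M(F,G)=\emptyset$ and there is nothing to do). For each $i$ introduce $k+1$ fresh pairwise‑distinct symbols $\ell_{i,1},\dots,\ell_{i,k+1}$ not occurring in $F$, $G$, or any other gadget, and attach below the leaf $u_i$ of $F$ a left‑to‑right sequence of $k+1$ new leaf children labelled $\ell_{i,1},\dots,\ell_{i,k+1}$; this produces $F'$. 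Produce $G'$ by attaching the identical gadget below each $v_i$. This is computable in $\Oh(|F|+|G|+(k+1)m)=\Oh(|F'|+|G'|)$ time, gives $|F'|=|F|+(k+1)|M|$ and $|G'|=|G|+(k+1)|M|$ exactly, and since gadget nodes lie only below leaves of $F$ (resp.\ $G$), every root‑to‑leaf path of $F'$ extends a root‑to‑leaf path of $F$ by at most one node, whence $\hgt(F')\le\hgt(F)+1$ and symmetrically for $G'$. It remains to prove $\ted_{\le k}(F',G')=\ted^M_{\le k}(F,G)$, for which it suffices to show that whenever one side is $\le k$ the other is at most it (the case that both are $\infty$ then follows).

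\textbf{From $(F,G)$ to $(F',G')$.}
Given $\A\in\ta^M(F,G)$ of cost $\le k$, build an alignment $\A'$ of $\str{F'}$ onto $\str{G'}$ by following $\A$, except that when $\A$ matches the opening (resp.\ closing) parenthesis of some $u_i$ to that of $v_i$ — which it does because $u_i\simeq_\A v_i$ — we insert in $\A'$, right after the matched opening parentheses, the diagonal run that matches the gadget of $u_i$ to the gadget of $v_i$ character by character (the two gadgets are identical). Since the gadget of $u_i$ occupies exactly $\str{F'}[o(u_i)+1\dd c(u_i))$ and likewise for $v_i$, these new pairs create no crossings, $\A'$ is a tree alignment of $F',G'$, and its cost equals that of $\A$ because the new pairs are matches. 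Hence $\ted(F',G')\le\ted^M(F,G)$.

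\textbf{From $(F',G')$ to $(F,G)$.}
Fix an optimal tree alignment $\A$ of $F',G'$ of cost $c:=\ted_\A(F',G')\le k$. The heart of the proof is the claim that there is a tree alignment $\A'$ of $F',G'$, still of cost $c$, such that for every $i$ we have $u_i\simeq_{\A'}v_i$ and all $k+1$ gadget children of $u_i$ are matched to the corresponding gadget children of $v_i$. Granting this, delete from $\A'$ all matched gadget‑child pairs (both the opening‑ and closing‑parenthesis pairs); removing matched leaf pairs preserves the consistency conditions of \cref{def:ta}, so the result is a tree alignment of $F,G$ of cost $c$, and since each $u_i\simeq_{\A'}v_i$ it lies in $\ta^M(F,G)$, giving $\ted^M(F,G)\le c$. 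To prove the claim I would process the gadgets one at a time, keeping a tree alignment of cost $\le c$ in which ever more gadgets are fully matched. At a not‑yet‑processed gadget $i$: at most $c\le k$ nodes of $F'$ are deleted or substituted, so at least one of the $k+1$ gadget children of $u_i$ is matched; bearing a globally unique label, it must be matched to the corresponding gadget child of $v_i$. Since $\Sub_{F'}(u_i)$ and $\Sub_{G'}(v_i)$ are identical labelled ``stars'' and this matched pair sits at corresponding positions, $\str{F'}[o(u_i)\dd c(u_i)]=\str{G'}[o(v_i)\dd c(v_i)]$. Using the matched pair as an anchor, re‑route the alignment so that it matches these two equal blocks diagonally, leaves the alignment unchanged outside a window $[x_L\dd x_R]\times[y_L\dd y_R]$ chosen exactly as in the proof of \cref{lem:optIsGreedy}, and bridges the two junctions by deletions on one side and insertions on the other; the verification that this is a tree alignment of no larger cost is the same bookkeeping as in that proof, specialised to a depth‑$2$ common block (the cost can only drop, since the deleted/inserted fragments removed were at least as costly before and the new block match is free). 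Now gadget $i$ is fully matched and $u_i\simeq v_i$.

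\textbf{Main obstacle.}
The step I expect to require the most care is showing that re‑routing at gadget $i$ does not undo progress at an already‑processed gadget $i'$. Here the cost bound does the work: the fragment of $\str{F'}$ deleted by the re‑routing is entirely deleted in the new alignment, so it has length at most $2c\le 2k$, which is strictly less than the $2(k+2)$ characters of any gadget block; moreover distinct gadget blocks are disjoint balanced blocks, and a fully‑matched gadget $i'$ forces the current alignment to pass diagonally through its block, so the boundary point $(x_L,y_L)$ cannot lie strictly inside it. These two facts together force the re‑routing window to be disjoint from every previously‑processed gadget block, so those gadgets remain fully matched; iterating over all $i$ establishes the claim. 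The running time is dominated by the $\Oh(|F'|+|G'|)$ construction.
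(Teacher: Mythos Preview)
Your construction and the easy direction match the paper exactly. For the converse, however, the paper takes a much shorter route that avoids iterative re-routing altogether. It invokes \cref{lem:optIsGreedy} once to choose an optimal $\A'\in\ta(F',G')$ that is greedy with respect to the full look-ahead labeling. For each $i$: cost $\le k$ forces some $u_{i,\hat\jmath}\simeq_{\A'}v_{i,\hat\jmath}$; greediness starting from $c_{F'}(u_{i,\hat\jmath})$ (where the subsequent look-ahead labels agree through $c_{F'}(u_i)$) yields $\str{F'}[c_{F'}(u_{i,\hat\jmath})\dd c_{F'}(u_i)]\simeq_{\A'}\str{G'}[c_{G'}(v_{i,\hat\jmath})\dd c_{G'}(v_i)]$; tree-alignment consistency then gives $\str{F'}[o_{F'}(u_i)]\simeq_{\A'}\str{G'}[o_{G'}(v_i)]$; and a second sweep of greediness matches the entire star. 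Because $\A'$ is globally greedy, this holds for every $i$ at once, so there is no interference to manage and no need to re-route.

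Your iterative scheme can also be made to work, but the non-interference justification you give is not the right one. The length bound $2c\le 2k<2(k{+}2)$ only shows the bridging fragment cannot \emph{contain} a previously processed gadget block; it does not rule out a partial overlap, and ``boundary point not strictly inside the block'' does not close that gap either. What actually pins the window down is plain monotonicity together with the diagonal passage you already noted: if gadget $i'$ precedes gadget $i$, the full match puts $(c_{F'}(u_{i'}){+}1,\,c_{G'}(v_{i'}){+}1)$ into the current alignment, and since the left window boundary has $x_L=o_{F'}(u_i)$ or $y_L=o_{G'}(v_i)$ (both strictly past the $i'$-block on their respective sides because $M$ is non-crossing), monotonicity forces the remaining coordinate past the $i'$-block as well; the right side is symmetric. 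The length bound plays no role. With that correction your argument goes through, but the paper's approach buys you all of this for free by delegating to a single invocation of \cref{lem:optIsGreedy}.
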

\begin{proof}
    Let $M = \{(u_i,v_i) : i\in [1\dd m]\}$.
    For each $i\in [1\dd m]$, we attach $k+1$ new nodes, $u_{i,0},\ldots,u_{i,k}$,
    as children of $u_i$ and $k+1$ new nodes, $v_{i,0},\ldots,v_{i,k}$, as children of $v_i$.
    The nodes $u_{i,j}$ and $v_{i,j}$ share a unique label $\$_{i,j}$ that is not present anywhere
    else in the constructed forests $F',G'$.
    This completes the description of the constructed forests $F',G'$.
    It is easy to see that $F',G'$ can be constructed in time proportional to their sizes,
    which are $|F|+(k+1)|M|$ and $|G|+(k+1)|M|$, respectively.
    Moreover, the height of $F'$ (resp. $G'$) may increase compared to the height of $F$ ($G$) by at most one
    (since we attach new leaves only to existing nodes).

    Thus, it remains to prove that $\ted^{M}_{\le k}(F,G)=\ted_{\le k}(F',G')$.
    For this, let us first observe that $\ted(F',G')\le \ted^M(F,G)$.
    This is because, any $\A \in \ta^M(F,G)$
    can be extended to an alignment $\A' \in \ta(F',G')$ such that $u_{i,j}\simeq_{\A'} v_{i,j}$ holds for all $i\in [1\dd m]$
    and $j\in [0\dd k]$. If any of the newly matched pairs $(u_{i,j},v_{i,j})$ crossed some other pair of vertices aligned by $\A'$,
    already $(u_i,v_i)$, with $u_i \simeq_{\A} v_i$, would cross that pair.
    As for the converse inequality, consider an alignment $\A' \in \ta(F',G')$ of cost at most $k$.
    By \cref{lem:optIsGreedy}, we can choose $\A'$ so that it can be interpreted as a greedy alignment with respect to (unbounded) look-ahead labels.
    We shall prove that, for each $i\in [1\dd m]$, we have $u_i \simeq_{\A'} v_i$ and $u_{i,j}\simeq_{\A'} v_{i,j}$ for $j\in [0\dd k]$.
    As a result, a matching $\A \in \ta^M(F,G)$ of the same cost can be obtained by restricting $\A'$ to the nodes of $F$ and $G$.

\newcommand{\jh}{\hat{\jmath}}
    Let us fix $i\in [1\dd m]$. As the cost of $\A'$ is at most $k$, this alignment must match the node $u_{i,\jh}$ for some $\jh\in [0\dd k]$.
    The only vertex in $G'$ sharing the label with $u_{i,\jh}$ is $v_{i,\jh}$, so we must have $u_{i,\jh} \simeq_{\A'} v_{i,\jh}$
    and, in particular, $\str{F'}[c_{F'}(u_{i,\jh})]\simeq_{\A'} \str{G'}[c_{G'}(v_{i,\jh})]$.
    Furthermore, observe that the nodes $(u_i,v_i)$ and the nodes $(u_{i,j},v_{i,j})$ for $j\in [0\dd k]$ not only share their regular labels,
    but also their (unbounded) look-ahead labels. Consequently, the greedy nature of $\A'$ yields 
    $\str{F'}[c_{F'}(u_{i,\jh})\dd c_{F'}(u_i)]\simeq_{\A'} \str{G'}[c_{G'}(v_{i,\jh})\dd c_{G'}(v_i)]$.
    Since $\A'$ is a tree alignment, $\str{F'}[c_{F'}(u_i)]\simeq_{\A'} \str{G'}[c_{G'}(v_i)]$
    implies $\str{F'}[o_{F'}(u_i)]\simeq_{\A'} \str{G'}[o_{G'}(v_i)]$.
    Using the greedy nature of $\A'$ once again, we finally conclude that
    $\str{F'}[o_{F'}(u_i)\dd c_{F'}(u_i)]\simeq_{\A'} \str{G'}[o_{G'}(v_i)\dd c_{G'}(v_i)]$.
    Thus, $u_i \simeq_{\A'} v_i$ and $u_{i,j}\simeq_{\A'} v_{i,j}$ for $j\in [0\dd k]$ hold as claimed.
\end{proof}

We conclude with a corollary that applies \cref{lem:reduceHeight,lem:reduce,lem:gadget} one after another.

\begin{corollary}\label{cor:partial}
    There exists an algorithm that, given labeled forests $F,G$, a non-crossing matching $M\sub  V_F\times V_G$,
    and an integer $k\in \Zz$, produces labeled forests $F',G'$ such that:
    \begin{itemize}
        \item $\ted_{\le k}(F',G') = \ted^{M}_{\le k}(F,G)$;
        \item $|F'|+|G'| = \Oh(\min(|F|+|G|+k|M|, 1+(k+1)(|F|+|G|-2|M|)))$;
        \item if the height of $F'$ (or $G'$) is $h> 2$, then there is an $(h-2)$-node top-down path in $F$ (respectively, $G$) avoiding nodes participating in $M$.
    \end{itemize}
    The running time of the algorithm is $\Oh(|F|+|G|+|F'|+|G'|)$.
\end{corollary}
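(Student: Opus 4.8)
The plan is to compose the three reductions \cref{lem:reduceHeight}, \cref{lem:reduce}, and \cref{lem:gadget} in that order, tracking how the size and height parameters evolve and how the matching restricts along the way. First I would apply \cref{lem:reduceHeight} to $(F,G,M)$, obtaining labeled forests $F_1,G_1$ and a non-crossing matching $M_1\sub L_{F_1}\times L_{G_1}$ with $\ted^{M_1}(F_1,G_1)=\ted^M(F,G)$, sizes $|F_1|=|F|+|M|$, $|G_1|=|G|+|M|$, $|M_1|=2|M|$, and the key structural guarantee that if the height of $F_1$ is $h>1$ then $F$ has an $(h-1)$-node top-down path avoiding $M$ (symmetrically for $G_1$). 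Then I would apply \cref{lem:reduce} to $(F_1,G_1,M_1)$ — note its hypothesis $M_1\sub L_{F_1}\times L_{G_1}$ is exactly what \cref{lem:reduceHeight} delivers — producing $F_2,G_2$ obtained by deleting leaves from $F_1,G_1$, together with $M_2=M_1\cap(V_{F_2}\times V_{G_2})$ satisfying $|M_2|\le \frac25(|F_2|+|G_2|+1)$ and $\ted^{M_2}(F_2,G_2)=\ted^{M_1}(F_1,G_1)$. Finally I would apply \cref{lem:gadget} to $(F_2,G_2,M_2,k)$ — again the hypothesis $M_2\sub L_{F_2}\times L_{G_2}$ holds since deleting leaves from a leaf-matching keeps it a leaf-matching — to get $F',G'$ with $\ted_{\le k}(F',G')=\ted^{M_2}_{\le k}(F_2,G_2)$, sizes $|F'|=|F_2|+(k+1)|M_2|$ and $|G'|=|G_2|+(k+1)|M_2|$, and height of $F'$ (resp.\ $G'$) at most one more than that of $F_2$ (resp.\ $G_2$). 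Chaining the distance equalities (and the $\le k$ truncations, which commute with the equalities since all intermediate quantities agree) gives $\ted_{\le k}(F',G')=\ted^M_{\le k}(F,G)$.

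For the size bound I would argue the two terms of the $\min$ separately. For the $\Oh(|F|+|G|+k|M|)$ term: $|F_2|+|G_2|\le |F_1|+|G_1|=|F|+|G|+2|M|$ (leaf deletions only shrink), and $|M_2|\le |M_1|=2|M|$, so $|F'|+|G'|=|F_2|+|G_2|+2(k+1)|M_2|\le |F|+|G|+2|M|+4(k+1)|M| = \Oh(|F|+|G|+k|M|)$. For the $\Oh(1+(k+1)(|F|+|G|-2|M|))$ term: here I would use \cref{lem:reduce}'s bound $|M_2|\le\frac25(|F_2|+|G_2|+1)$ to control everything in terms of the number of \emph{unmatched} nodes. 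Writing $s:=|F_2|+|G_2|-2|M_2|$ for that count, the bound gives $|M_2|\le \frac25(2|M_2|+s+1)$, i.e.\ $\frac15|M_2|\le\frac25(s+1)$, so $|M_2|\le 2(s+1)$ and hence $|F_2|+|G_2|=s+2|M_2|\le 5s+4 = \Oh(s+1)$. Then $|F'|+|G'| = |F_2|+|G_2|+2(k+1)|M_2| = \Oh(s+1)+\Oh((k+1)(s+1)) = \Oh((k+1)(s+1))$. Finally I need $s+1 = \Oh(|F|+|G|-2|M|+1)$: since \cref{lem:reduceHeight} adds exactly $|M|$ matched leaves to each side and $2|M|$ matched nodes to $M_1$, the unmatched count is preserved, $|F_1|+|G_1|-2|M_1| = |F|+|G|-2|M|$; \cref{lem:reduce} only removes matched pairs and their nodes, so it cannot increase the unmatched count, $s\le |F|+|G|-2|M|$. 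This yields the second term of the $\min$.

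The height statement needs a little care because each reduction shifts the threshold by one. If the height of $F'$ is $h>2$, then by \cref{lem:gadget} the height of $F_2$ is $\ge h-1>1$; since $F_2$ is obtained from $F_1$ by deleting leaves, the height of $F_1$ is also $\ge h-1$, in fact I would phrase \cref{lem:reduceHeight}'s conclusion as: the height of $F_1$ being $\ge h-1$ with $h-1>1$ forces an $(h-2)$-node top-down path in $F$ avoiding $M$ (apply the lemma's third bullet with its "$h$" set to $h-1$). That gives exactly the claimed $(h-2)$-node path; symmetrically for $G'$ and $G$. Wait — I should double check the leaf-deletion step does not destroy a deep path: deleting leaves can only shorten root-to-leaf paths, so the height of $F_2$ is at most that of $F_1$, which is the direction I need (height of $F_1 \ge$ height of $F_2 \ge h-1$). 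The running time is $\Oh(|F|+|G|)$ for \cref{lem:reduceHeight}, $\Oh(|F_1|+|G_1|)=\Oh(|F|+|G|+|M|)=\Oh(|F|+|G|)$ for \cref{lem:reduce}, and $\Oh(|F'|+|G'|)$ for \cref{lem:gadget}, summing to $\Oh(|F|+|G|+|F'|+|G'|)$ as claimed.

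The main obstacle I anticipate is not any single deep argument but bookkeeping: making sure the leaf-matching hypotheses are satisfied at each stage (they are, by the explicit conclusions of the preceding lemmas), that the $\le k$ truncation genuinely commutes through the exact equalities (it does, since $\ted^{M}_{\le k}$ is a deterministic function of $\ted^M$), and — the most delicate point — that the two bounds in the $\min$ are each derived using the \emph{right} intermediate inequality: the first from the crude $|M_2|\le 2|M|$, the second from \cref{lem:reduce}'s $\frac25$-bound together with the invariance of the unmatched-node count under \cref{lem:reduceHeight} and its monotonicity under \cref{lem:reduce}. I would present the proof as three displayed applications followed by two short size computations and the height/runtime remarks.
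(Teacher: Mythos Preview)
Your proposal is correct and follows essentially the same approach as the paper: chain \cref{lem:reduceHeight}, \cref{lem:reduce}, and \cref{lem:gadget}, then derive the first size bound from $|M_2|\le 2|M|$ and the second from the $\tfrac25$-bound together with the invariance of the unmatched-node count. Your algebraic unpacking of the second bound via $s=|F_2|+|G_2|-2|M_2|$ is equivalent to the paper's manipulation, and your height and running-time arguments match as well.
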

\begin{proof}
    As hinted above, the algorithm behind \cref{cor:partial} simply chains the procedures underlying \cref{lem:reduceHeight,lem:reduce,lem:gadget}.
    Let us denote the intermediate results by $(\hF,\hG,\hM)$ and $(\bF,\bG,\bM)$, respectively.
    Due to $\ted_{\le k}(F',G')=\ted^{\bM}_{\le k}(\bF,\bG)$
    and $\ted^{\bM}(\bF,\bG)=\ted^{\hM}(\hF,\hG)=\ted^M(F,G)$, we conclude that $\ted_{\le k}(F',G') = \ted^{M}_{\le k}(F,G)$
    holds as claimed.
    As for the heights, observe that if the height of $F'$ is $h>2$,
    then the heights of $\bF$ are $\hF$ are at least $h-1>1$. By \cref{lem:reduceHeight}, this means that $F$ contains a top-down path of at least $h-2$ nodes none of which participate in $M$. A symmetric argument can be used to bound the height of $G'$.

    It remains to bound $|F'|+|G'|$.
    Applying \cref{lem:gadget}, \cref{lem:reduce}, and \cref{lem:reduceHeight} in subsequent steps,
    we obtain
    \begin{multline*}
        |F'|+|G'| \le |\bF|+|\bG|+2(k+1)|\bM| \le |\hF|+|\hG|+2(k+1)|\hM| \le |F|+|M|+|G|+|M|+4(k+1)|M| \\= \Oh(|F|+|G|+k|M|).
    \end{multline*}
    However, we further have
    \begin{multline*}
    2|\bM|\le |\bF|+|\bG| \le |\bF|+|\bG| - 10|\bM|+10|\bM| \le |\bF|+|\bG| - 10|\bM|+4(|\bF|+|\bG|+1) \\ =
     5(|\bF|+|\bG|-2|\bM|)+4 = 5(|\hF|+|\hG|-2|\hM|)+4 = 5(|F|+|G|-2|M|)+4,
    \end{multline*}
    and thus 
    \[ |F'|+|G'| \le |\bF|+|\bG|+2(k+1)|\bM| \le (k+2)(|\bF|+|\bG|) = \Oh(1+(k+1)(|F|+|G|-2|M|)).\]
    Consequently, we indeed have  $|F'|+|G'| = \Oh(\min(|F|+|G|+k|M|, 1+(k+1)(|F|+|G|-2|M|)))$.
\end{proof}

\subsection{Tree Edit Distance of Shallow Forests}\label{sec:smallh}
\newcommand{\MP}{\ensuremath{M_{\mathsf{P}}}}

\begin{theorem}\label{thm:smallh}
    There exists a randomized algorithm that, given forests $F,G$ of height at most $h\in \Zp$
    and a threshold $k\in \Zp$, computes $\ted_{\le k}(F,G)$ in $\Oh(n\log n+h^2 k^7 \log(hk))$ time correctly with high probability.
    \end{theorem}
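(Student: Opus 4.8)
The plan is to combine the periodicity-reduction machinery with the greedy-matching strategy and the partial-forest reductions of \cref{lem:partial}. First I would apply \cref{prp:fur} to the input forests $F,G$, obtaining in $\Oh(n\log n+k^3)$ time forests $F',G'$ with $\ted_{\le k}(F,G)=\ted_{\le k}(F',G')$ that avoid synchronized horizontal and vertical $k$-periodicity, together with a reference alignment $\tilde{\A}$ of $\str{F'}$ and $\str{G'}$ differing from every $\B\in\ta_k(F',G')$ in $\Oh(k^4)$ positions. Crucially, horizontal and vertical reductions may only delete entire subtrees or balanced blocks, so they cannot increase the height; thus $F',G'$ still have height at most $h$. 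Next I would invoke \cref{prp:fur}'s labeling $\hlab=\CR{\look{\lab'}{8k}}{2k}$ and \cref{lem:fur}, which guarantees that $\str[\hlab]{\F'},\str[\hlab]{\G'}$ avoid $2k$-synchronized $(20k+2)$-powers with root at most $4k$. By \cref{lem:almostEverythingIsCommon} (with $w=2k$, $e=20k+2$, cost bound $16k^2$ coming from \cref{lem:boundLookahead,obs:boundCompatibility} as in the proof of \cref{prp:fur}), there is a non-crossing matching $M$ of $\str[\hlab]{\F'},\str[\hlab]{\G'}$ with $|M|\ge 2|F'|-\Oh(k^2\cdot k\cdot k) = 2|F'|-\Oh(k^4)$ such that $M\sub\mtch_{\A}$ for every greedy alignment $\A\in\ga_{16k^2,2k}$; by \cref{lem:optIsGreedy} every optimal tree alignment can be taken greedy, so $M$ is matched by some optimal $\B\in\ta_k(F',G')$. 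Here is where the height bound enters: a matched pair $(x,y)$ with $\str[\hlab]{\F'}[x]=\texttt{(}$ and $\str[\hlab]{\F'}[y']$ its mate closing parenthesis both lying in $M$ corresponds to a matched node pair $(u,v)$ whose $\hlab$-labels agree, i.e.\ $\Sub_{<8k}(u)\cong\Sub_{<8k}(v)$; since the heights are at most $h$, this means $\Sub(u)\cong\Sub(v)$ outright. I would translate $M$ into a non-crossing node matching $\MP\sub V_{F'}\times V_{G'}$ of size $|V_{F'}|-\Oh(h k^4)$ (each of the $\Oh(k^4)$ breakpoint positions can kill at most $\Oh(h)$ enclosing node pairs), and argue as in the technical overview that in fact $|V_{F'}|-|\MP|=\Oh(h^2k^4)$ after also excluding node pairs whose subtrees fail to be identical — this last refinement uses that $F',G'$ avoid synchronized horizontal $k$-periodicity, so the number of distinct-subtree discrepancies concentrated near a breakpoint is $\Oh(hk^4)$ rather than $\Oh(h^2)$. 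The construction of $\MP$ runs in $\Oh(n+hk^2)$ time via \cref{lem:build_greedy} and a linear scan.

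Given $\MP$, which is common to an optimal tree alignment (so $\ted^{\MP}_{\le k}(F',G')=\ted_{\le k}(F',G')$), I would apply \cref{cor:partial} with parameter $k$ to produce forests $F'',G''$ with $\ted_{\le k}(F'',G'')=\ted^{\MP}_{\le k}(F',G')=\ted_{\le k}(F',G')$. The size bound in \cref{cor:partial} gives $|F''|+|G''|=\Oh(1+(k+1)(|F'|+|G'|-2|\MP|))=\Oh(h^2k^5)$, and the height bound gives that $\hgt(F''),\hgt(G'')$ are bounded by $2$ plus the length of the longest $\MP$-avoiding top-down path in $F'$ (or $G'$); since $F',G'$ have height at most $h$, so do $F'',G''$, hence certainly $\hgt(F''),\hgt(G'')=\Oh(h)$. (We do not need a sharper height bound here; the size bound is what matters.) Finally I would run the $\tOh(Nk^2)$-time algorithm of Akmal and Jin~\cite{DBLP:conf/icalp/AkmalJ21} on $F'',G''$ with $N=|F''|+|G''|=\Oh(h^2k^5)$, computing $\ted_{\le k}(F'',G'')$ in time $\tOh(h^2k^7)$, more precisely $\Oh(h^2k^7\log(hk))$. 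Chaining the running times, the total is $\Oh(n\log n+k^3)+\Oh(n+hk^2)+\Oh(|F'|+|G'|+|F''|+|G''|)+\Oh(h^2k^7\log(hk))=\Oh(n\log n+h^2k^7\log(hk))$, as claimed. Correctness is with high probability because both \cref{prp:fur} (Karp--Rabin fingerprints) and the Akmal--Jin algorithm are Monte Carlo, and \cref{lem:almostEverythingIsCommon} is deterministic once the labels are fixed.

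The main obstacle I anticipate is making the bound $|V_{F'}|-|\MP|=\Oh(h^2k^4)$ rigorous, rather than the weaker $\Oh(hk^4)$ one gets naively from lifting the $\Oh(k^4)$ string-level discrepancies to node pairs. The subtlety is that a single breakpoint in $\str[\hlab]{\F'}$ can be enclosed by up to $h$ node pairs, and additionally, among node pairs $(u,v)$ that \emph{are} matched by the greedy alignment with equal $\hlab$-label, one must still rule out those where $\Sub(u)\ne\Sub(v)$ as distinct forests — but equal $\hlab$-label together with height $\le h$ forces $\Sub_{<8k}(u)=\Sub_{<8k}(v)$ and $8k\ge h$ only when $h\le 8k$, which need not hold. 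The resolution is that $\hlab$ refines the \emph{unbounded} look-ahead label whenever we additionally know (from $d=8k$ being used only as a technical device in \cref{lem:fur}) — wait, more carefully: here I should instead refine the labeling by the full-height look-ahead $\look{\lab}{h}$ as in \cref{lem:optIsGreedy}, noting by \cref{lem:boundLookahead} that this only inflates the string-edit cost by a factor $h$, giving cost bound $\Oh(hk\cdot h)=\Oh(h^2k)$... I would need to balance the choice of look-ahead depth against the blow-up in \cref{lem:almostEverythingIsCommon}, and the cleanest route is to use depth exactly $h$, get discrepancy bound $\Oh(w k^2 e)=\Oh(k\cdot k^2\cdot hk)=\Oh(h k^4)$ string positions each lifting to $\le h$ node pairs, hence $\Oh(h^2k^4)$, while separately arguing (via the avoidance of horizontal $k$-periodicity) that the subtree-equality refinement costs only another $\Oh(h^2k^4)$. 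Getting all these constants and the interplay between the two labelings to line up is the delicate part; everything else is bookkeeping over results already established.
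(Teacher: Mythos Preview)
Your overall architecture is right—periodicity reduction, then a large greedy matching common to all optimal alignments, then \cref{cor:partial}, then Akmal--Jin—but you are overcomplicating the middle step, and the obstacle you anticipate is self-inflicted by choosing the wrong labeling.

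The paper does \emph{not} invoke \cref{prp:fur} here. It applies only the horizontal reduction of \cref{prp:hor} and then works with the depth-$h$ look-ahead labeling $\hlab=\look{\lab}{h}$ directly (not $\CR{\look{\lab}{8k}}{2k}$). This single choice dissolves every difficulty you flag. First, with $\hlab=\look{\lab}{h}$, two nodes on the same root-to-leaf path have distinct $\hlab$-labels (their full subtrees differ), so no unbalanced string $Q$ can have even $Q^2$ occur in $\str[\hlab]{\F}$; vertical periodicity is therefore impossible and \cref{prp:ver} is unnecessary. Second, \cref{lem:optIsGreedy} is stated precisely for $\hlab=\look{\lab}{h}$, so you get an optimal tree alignment that is greedy with respect to $\hlab$ for free; your invocation of \cref{lem:optIsGreedy} for the depth-$8k$ compatibility labeling is not justified by that lemma. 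Third, and most importantly, equal $\hlab$-labels now mean identical full subtrees, so there is no ``subtree-equality refinement'' to worry about at all.

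The $h^2k^4$ bound then falls out immediately at the string level, with no lifting factor. By \cref{lem:boundLookahead} the optimal alignment has string cost at most $2hk$ under $\hlab$, so in \cref{lem:almostEverythingIsCommon} the relevant parameters are width $w=2k$, exponent $e=18k$ (from horizontal avoidance), and cost parameter $2hk$; the number of missing positions is $15\cdot 2k\cdot(2hk)^2\cdot 18k=\Oh(h^2k^4)$. Defining $M=\{(u,v):(o_\F(u),o_\G(v))\in\MP\text{ and }(c_\F(u),c_\G(v))\in\MP\}$ gives $|V_F|-|M|\le 2|F|-|\MP|=\Oh(h^2k^4)$ directly—each missing node contributes at most two missing string positions, no factor of $h$ enters. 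Your attempt to get $h^2$ as $h\times h$ (one $h$ from an alleged lifting, one from a separate subtree argument) is the wrong decomposition; the paper gets $h^2$ from $(2hk)^2$ inside the cost bound.
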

    \begin{proof}
    Let $\F$ and $\G$ be the underlying unlabeled forests and let $\lab$ be their joint labeling. 
    Using \cref{prp:hor}, at the cost of $\Oh(n)$ time, we can assume without loss of generality
    that $F,G$ avoid synchronized horizontal $k$-periodicity,
    i.e., there is no balanced string $Q$ of length $|Q|\le 4k$ such $Q^{18k}$ has $2k$-synchronized occurrences in $\str[\lab]{\F}$ and $\str[\lab]{\G}$.
    In the next step, we construct a labeling $\hlab$ equivalent to $\look{\lab}{h}$ (using \cref{lem:buildLookahead})
    as well as the strings $\str[\hlab]{\F}$ and $\str[\hlab]{\G}$.
    Since $\hlab$ is a refinement of $\lab$, we conclude 
    that there is no balanced string $Q$ of length $|Q|\le 4k$ such $Q^{18k}$ has $2k$-synchronized occurrences in $\str[\hlab]{\F}$ and $\str[\hlab]{\G}$.
    Moreover, $\hlab$ assigns distinct labels to nodes in any root-to-leaf path; 
    thus, for any unbalanced string $Q$, even $Q^2$ cannot occur in  $\str[\hlab]{\F}$ or $\str[\hlab]{\G}$.
    Consequently, $\str[\hlab]{\F}$ or $\str[\hlab]{\G}$ 
    avoid $2k$-synchronized $18k$-powers of length at most $4k$.
    This lets us apply \cref{lem:almostEverythingIsCommon} to build a set $\MP$ of size $|\MP| \ge 2|F| - 15\cdot (18k)\cdot (2hk)^2 \cdot 2k \ge  2|F|-\Oh(h^2 k^4)$ such that $\MP\sub \mtch_\A(\str[\hlab]{\F},\str[\hlab]{\G})$ holds for every $\A \in \ga_{2hk,2k}(\str[\hlab]{\F},\str[\hlab]{\G})$; the construction of $\MP$ costs $\Oh(n+hk^2)$ time.
    \begin{claim}\label{clm:shallow}
    Let $M = \{(u,v)\in V_F\times V_G: (o_F(u),o_G(v))\in \MP\text{ and }(c_F(u),c_G(v))\in \MP\}$. 
    If $\ted(F,G)\le k$, then  $M$ is a non-crossing matching of size $|M|\ge |F|-\Oh(h^2 k^4)$ and, moreover, $\ted^M(F,G)=\ted(F,G)$.
    \end{claim}
    \begin{proof}
    By \cref{lem:optIsGreedy}, there is an optimum alignment $\A \in \ta(F,G)$
    of cost $\ted_{\A}(F,G)=\ted(F,G)$ such that $\A \in \ga(\str[\hlab]{\F},\str[\hlab]{\G})$.
    If $\ted(F,G)\le k$, then the width of $\A$ is at most $2k$ and, by \cref{lem:boundLookahead}, its cost $\ed_{\A}(\str[\hlab]{\F},\str[\hlab]{\G})$ does not exceed $2hk$. Hence, $\A\in \ga_{2hk,2k}(\str[\hlab]{\F},\str[\hlab]{\G})$, which means that $\MP \sub \mtch_\A(\str[\hlab]{\F},\str[\hlab]{\G})$ and, in particular, $\A \in \ta^M(F,G)$. 
    Consequently, $\ted^M(F,G)=\ted(F,G)$ and the number of vertices $u\in V_\F$ not participating in $M$ does not exceed $2|F|-|\MP|=\Oh(h^2 k^4)$.
    \end{proof}
    In the light of \cref{clm:shallow}, we construct $M$ and check whether it is a non-crossing matching of size $|M| \ge |F|-\Oh(h^2k^4)$; if it is not, we report that $\ted(F,G)>k$.
    The remaining task is to compute $\ted^M_{\le k}(F,G)$.
    For this, we use \cref{cor:partial}, which reduces this problem to computing $\ted_{\le k}(F',G')$, where $|F'|+|G'|=\Oh(k\cdot h^2 k^4)$
    because $|F|+|G|-2|M| = \Oh(h^2k^4)$; this reduction costs $\Oh(n+h^2k^5)$ time.
    As for the final step of determining $\ted_{\le k}(F',G')$, we employ the algorithm of~\cite{DBLP:conf/icalp/AkmalJ21},
    whose running time is $\Oh(h^2k^5\cdot k^2 \cdot \log(h^2k^5)) = \Oh(h^2k^7 \log (hk))$.
    Overall, our procedure takes $\Oh(n + h^2 k^7 \log(hk))$ time.
    \end{proof}

\subsection{Level Sampling}\label{sec:sampl}

\main*

\SetKwFunction{TED}{TreeEditDistance}
\SetKwFunction{PR}{FullPeriodicityReduction}
\SetKwFunction{PFM}{PartialMatchingReduction}
\SetKwFunction{STED}{ShallowTreeEditDistance}
\begin{algorithm}
\KwIn{labeled forests $F,G$, integer threshold $k\in \Zp$}
\KwOut{$\ted_{\le k}(F,G)$}
$(F',G',\A):=\PR(F,G,k)$\tcp*{\cref{prp:fur}}
$d:=\infty$\;
$h:=19716k^4$\;
\For{$i:=0$ \KwSty{to} $\Theta(\log (|F|+|G|))$}{
    Pick $r_i \in [0\dd h)$ uniformly at random\;
    Mark nodes in $F',G'$ at depths $\equiv r_i \pmod{h}$\;
    $M_i:=\{(u,v)\in V_{F'}\times V_{G'} : \str{F'}[o_{F'}(u)]\simeq_{\A} \str{G'}[o_{G'}(v)]\text{, and }\hspace{4cm}\allowbreak\phantom{blablabla}\str{F'}[c_{F'}(u)]\simeq_{\A} \str{G'}[c_{G'}(v)]\text{, and $u$ or $v$ is marked}\}$\;
    \If{$|M_i|\le \frac{4}{h}(|F'|+|G'|)$ \KwSty{and} all marked nodes participate in $M_i$}{
    $(F_i,G_i):=\PFM(F',G',M_i,k)$\tcp*{\cref{cor:partial}}
    $d_i := \STED(F_i,G_i,k)$\tcp*{\cref{thm:smallh}}
    $d := \min(d, d_i)$\;
    }
}
\Return{$d$}\;
\caption{\protect\TED{$F,G,k$}}
\end{algorithm}

\begin{proof}
As a first step, we apply \cref{prp:fur}, which produces forests $F',G'$ and an alignment  $\A: \str{F'}\leadsto \str{G'}$.
Next, we pick $h:= 19716k^4$ and draw $s = \Theta(\log n)$ uniformly random values $r_1,\ldots,r_s\in [0\dd h)$.
As described below, each of these values $r_i$ is either discarded or results in an upper bound on $\ted_{\le k}(F',G')$.
We mark all nodes in $F'$ and $G'$ whose depths are congruent to $r_i$ modulo $h$.
We then attempt using $\A$ to construct a non-crossing matching $M_i$ in which all marked nodes participate.
For this, we add to $M_i$ every pair of nodes $(u,v)\in V_{F'}\times V_{G'}$ such that  $\str{F'}[o_{F'}(u)]\simeq_{\A} \str{G'}[o_{G'}(v)]$, $\str{F'}[c_{F'}(u)]\simeq_{\A} \str{G'}[c_{G'}(v)]$, and $u$ or $v$ is marked.
By construction, this guarantees that $M_i$ is a non-crossing matching.
However, we discard $M_i$ if  $|M_i|>\frac{4}{h}(|F'|+|G'|)$
or there exists a marked node that does not participate in $M_i$.
If $M_i$ is not discarded, we use \cref{cor:partial} to build forests $F_i,G_i$ such that $\ted_{\le k}(F_i,G_i)=\ted^{M_i}_{\le k}(F',G')$,
and then we compute $d_i:=\ted_{\le k}(F_i,G_i)$ using \cref{thm:smallh}.
The final answer $d$ is defined as the minimum among the values $d_i$ computed for non-discarded matchings $M_i$.

Let us analyze the correctness of this approach.
\cref{prp:fur} provides the following guarantees with high probability:
\begin{itemize}
    \item $\ted_{\le k}(F,G)=\ted_{\le k}(F',G')$,
    \item for every alignment $\B\in \ta_k(F',G')$, we have $|\A \triangle \B| \le 4928k^4$.
\end{itemize}
Due to $\ted^{M_i}(F',G')\ge \ted(F',G')$, the reported value $d$ must clearly satisfy $d \ge \ted_{\le k}(F',G')= \ted_{\le k}(F,G)$.
The main challenge is to prove the converse inequality.
This task is trivial if $\ted(F,G)>k$.
Otherwise, it boils down to showing that, with high probability,
there exists $i\in [1\dd s]$ such that $d_i \le \ted_{\le k}(F,G)$.
Let us fix an optimum alignment $\B\in \ta(F',G')$.
We shall first prove that $d_i \le \ted_{\le k}(F,G)$ holds conditioned on the following events:
\begin{itemize}
    \item $|M_i|\le \frac{4}{h}(|F'|+|G'|)$;
    \item the alignment $\B$ that does not make any edits on the marked nodes
    and the parentheses corresponding to the marked nodes do not contribute to $\A \triangle \B$.
\end{itemize}
    Since $\B$ does not make any edits on the marked nodes, all the marked nodes participate in the following non-crossing matching: $\{(u,v)\in V_{F'}\times V_{G'} : u \simeq_{\B} v \text{ and } u\text{ or }v\text{ is marked}\}$.
Furthermore, this matching is equal to $M_i$ because the parentheses corresponding to marked nodes do not contribute to $\A \triangle \B$.
Consequently, $\ted^{M_i}(F',G')\le \ted_{\B}(F',G') \le k$,
and thus $d_i = \ted_{\le k}(F_i,G_i) = \ted^{M_i}_{\le k}(F',G') = \ted_{\le k}(F',G') = \ted_{\le k}(F,G)$.
It remains to prove that the favorable events hold with high probability for at least one $i\in [1\dd s]$.
For this, we analyze the complementary bad events.
For a random remainder modulo $h$, each individual node in $V_{F'}\cup V_{G'}$ is marked with probability $\frac{1}{h}$.
Hence, in expectation, the number of marked nodes is $\frac{1}{h}(|F'|+|G'|)$.
Given that each pair in $M_i$ contains a marked node, this means that $\Exp[|M_i|]\le \frac{1}{h}(|F'|+|G'|)$.
By Markov's inequality, we conclude that $|M_i| > \frac{4}{h}(|F'|+|G'|)$ holds with probability at most $\frac14$.
Moreover, there are at most $2k$ nodes affected by the edits of $\B$
and at most $9856k^4$ further nodes may contribute to $\B\triangle \A$.
By the union bound, at least one of these nodes is marked with probability at most $\frac{2k+9856k^4}{h} \le \frac12$.
Overall, for each $i\in [1\dd s]$, the probability of the bad events does not exceed $\frac{1}{4}+\frac12 = \frac34$.
Due to $s=\Theta(\log n)$, the probability that the bad events hold for all $i\in [1\dd s]$ does not exceed $(\frac{3}{4})^s = n^{-\Theta(1)}$.
Thus, the algorithm is correct with high probability.

Let us complete the proof with the running time analysis.
Applying \cref{prp:fur} costs $\Oh(n\log n+k^3)$ time.
For each $i\in [1\dd s]$, the set $M_i$ can be constructed and verified in $\Oh(n)$ time.
If $r_i$ is not discarded, then $|M_i|=\Oh(\frac{1}{h}(|F'|+|G'|))=\Oh(\frac{n}{k^4})$
and neither $F'$ nor $G'$ contains an $h$-node top-down path avoiding nodes participating in $M_i$.
Consequently, the application of \cref{cor:partial} takes $\Oh(n)$ time and produces forests of size $\Oh(n)$
and height at most $h+1 = \Oh(k^4)$.
The final application of \cref{thm:smallh} thus costs $\Oh(n + k^{15} \log k)$ time.
Since all steps, except for the preprocessing of \cref{prp:fur}, are repeated $\Oh(\log n)$ times,
the overall time complexity of our algorithm is $\Oh(n\log n + k^{15}\log k\log n)$.
\end{proof}

\section{Conclusion}
This paper gives an $\tOh(n + \poly(k))$-time algorithm for bounded tree edit distance, solving an open problem posed in~\cite{DBLP:conf/icalp/AkmalJ21,Xiao21}.
Multiple natural improvements and extensions of this result might be feasible. An immediate direction for future work is to significantly reduce the polynomial dependency on~$k$, which currently stays at $k^{15}$, akin to the recent progress for Dyck edit distance~\cite{F:22}. Another open question is whether the weighted version of tree edit distance admits an $\tOh(n + \poly(k))$-time algorithm (assuming that the cost of each edit is at least one). To the best of our knowledge, no such algorithm is known even for weighted string edit distance.

\section*{Acknowledgment}
Barna Saha and Tomasz Kociumaka were partly supported by NSF 1652303, 1909046, and HDR TRIPODS Phase II grant 2217058. MohammadTaghi Hajiaghayi and Jacob Gilbert were partly supported by NSF CCF grants 2114269 and 2218678.

\bibliographystyle{alphaurl}
\bibliography{ted}
\end{document}